\newtheorem{theorem}{Theorem}[section]
\newtheorem{proposition}[theorem]{Proposition}
\newtheorem{lemma}[theorem]{Lemma}
\newtheorem{corollary}[theorem]{Corollary}
\newtheorem{assumption}[theorem]{Assumption}
\newtheorem{definition}[theorem]{Definition}
\newtheorem{remark}[theorem]{Remark}
\numberwithin{equation}{section}
\numberwithin{figure}{section}
\numberwithin{table}{section}
\newcommand\beq{\begin{equation}}
\newcommand{\bea}{\begin{eqnarray}}
\newcommand{\eea}{\end{eqnarray}}
\newcommand{\beas}{\begin{eqnarray*}}
\newcommand{\eeas}{\end{eqnarray*}}
\newcommand{\beql}{\begin{equation} \label}
\newcommand{\eeq}{\end{equation}}
\newcommand{\R}{\mathbb R}
\newcommand{\N}{\mathbb N}
\newcommand{\C}{\mathbb C}                           
\newcommand{\Z}{\mathbb Z}
\newcommand{\T}{\mathbb T}
\newcommand{\s}[1]{\CMcal{#1}}
\newcommand{\f}[1]{\mathcal{#1}}                  
\newcommand{\bb}[1]{\mathscr{#1}}
\newcommand{\rr}[1]{\mathfrak{#1}}
\newcommand{\n}[1]{\mathds {#1}}
\newcommand{\expo}[1]{{\rm e}^{#1}}                 
\newcommand{\dd}{\,{\rm d}}
\newcommand{\ii}{\,{\rm i}\,}
\newcommand{\ncint}{\mathrel{{\ooalign{$\int$\cr\kern+.07em\raise.15ex\hbox{$\pmb{\scriptstyle-}$}\cr}}}}           \newcommand{\ncpartial}{\mathrel{{\ooalign{$\partial$\cr\kern+.29em\raise.79ex\hbox{$\pmb{\scriptstyle-}$}\cr}}}}
\newcommand{\virg}[1]{\lq\lq#1\rq\rq}                
\newcommand{\ie}{{\sl i.\,e.\,}}
\newcommand{\eg}{{\sl e.\,g.\,}}
\newcommand{\cf}{{\sl cf.\,}}
\begin{document}

\title[Classification of \emph{\virg{Quaternionic}} Bloch-bundles]{Classification of \emph{\virg{Quaternionic}} Bloch-bundles:\\
Topological Quantum Systems of type {\bf AII}}


\author[G. De~Nittis]{Giuseppe De Nittis}
\address[De~Nittis]{Department Mathematik, Universit\"{a}t Erlangen-N\"{u}rnberg,
Germany}
\email{denittis@math.fau.de}

\author[K. Gomi]{Kiyonori Gomi}
\address[Gomi]{Department of Mathematical Sciences, Shinshu University,  Nagano, Japan}
\email{kgomi@math.shinshu-u.ac.jp}

\thanks{{\bf MSC2010}
Primary: 57R22; Secondary:  55N25, 53C80, 19L64}

\thanks{{\bf Keywords.}
Topological insulators, Bloch-bundle, \virg{Quaternionic} vector bundle, FKMM-invariant, Fu-Kane-Mele index.}


\begin{abstract}
\vspace{-4mm}
We provide a classification of type {\bf AII} topological quantum systems in dimension $d=1,2,3,4$.
Our analysis is based on the construction of a topological invariant, the \emph{FKMM-invariant}, which completely classifies 
 \virg{Quaternionic} vector bundles (\emph{a.k.a.} \virg{symplectic} vector bundles) in dimension $d\leqslant 3$. This invariant takes value in a proper \emph{equivariant} cohomology theory and, in the case of examples of physical interest, it reproduces
 the familiar Fu-Kane-Mele index. In the case $d=4$ the classification requires a combined use of the FKMM-invariant and the second Chern class. Among the other things, we prove that  the FKMM-invariant is a \emph{bona fide} characteristic class for the category of  \virg{Quaternionic} vector bundles in the sense that it can be realized as the pullback of a universal topological invariant.
\end{abstract}


\maketitle

\vspace{-5mm}
\tableofcontents

\section{Introduction}\label{sect:intro}

In this paper  we continue the classification of \emph{topological quantum systems}  started in \cite{denittis-gomi-14} and, in particular, we focus on the so-called class {\bf AII} according to the {Altland-Zirnbauer-Cartan 
classification} (AZC) of topological insulators \cite{altland-zirnbauer-97,schnyder-ryu-furusaki-ludwig-08}.
Both classes {\bf AI} and {\bf AII} describe (quantum) systems that are invariant under a \emph{time-reversal} (TR) symmetry but it is the behavior of the spin that distinguishes  between the two classes. Systems in class 
{\bf AI} describe \emph{spinless} (as well as \emph{integer} spin) particles and, from a topological point of view,
this class is poor of interesting effects, as it emerges from the accurate analysis carried out in \cite{denittis-gomi-14}.
On the other side, systems in class {\bf AII} show interesting physical phenomena of topological origin like the so-called \emph{Quantum Spin Hall Effect} (QSHE) \cite{kane-mele-05,moore-balents-07,roy-09}. Phenomena of this type were first described by L. Fu, C. L. Kane and E. J. Mele in a series of seminal works \cite{kane-mele-05',kane-mele-05,fu-kane-06,fu-kane-mele-07}
and nowadays  they are source of
 great interest
among the physics community (see \eg the recent review \cite{maciejko-hughes-zhang-11} and references therein).
The principal result obtained by Fu, Kane  and Mele (at least from a mathematical point of view)
was the identification of the SQHE with a topological invariant  today known as \emph{Fu-Kane-Mele index}. This index characterizes the topology of the Bloch energy bands for periodic systems of free fermions in the presence of a TR-symmetry in the same way as the  Chern numbers describe the topology of the Bloch  bands when the TR-symmetry is broken. However, a correct mathematical  understanding of the  topological nature of the Fu-Kane-Mele index seems to be still missing in the literature and the most recent \emph{mathematical} works \cite{avila-schulz-baldes-villegas-13,porta-graf-13} only treat the case of two-dimensional lattice systems. Our main goal is to fill this gap.

\medskip

In the absence of disorder,  the Bloch-Floquet analysis relates topological insulators of class {\bf AII} with a special category of complex vector bundles called \emph{\virg{Quaternionic}}. These vector bundles, introduced for the first time by  J. L. Dupont  in \cite{dupont-69} with the (ambiguous) name of \emph{symplectic} vector bundles (see also \cite{seymour-73,dos-santos-lima-filho-04,lawson-lima-filho-michelsohn-05}), are
complex vector bundles defined over an \emph{involutive} base space $(X,\tau)$ and endowed with an \emph{anti}-involutive automorphism of the total space which covers the involution $\tau$ and restricts to an \emph{anti}-linear map between conjugate  fibers. 
In Section \ref{sect:Q-vb} we provide a precise description of this category.
Let us just point out that throughout the paper we will often use the short expression \emph{$\rr{Q}$-bundle}
instead {\virg{Quaternionic}} vector bundle. The principal results achieved in this paper can be summarized as follows:
\begin{itemize}
\item We provide a classification of  topological quantum systems of class {\bf AII} by inspecting the \emph{homotopy} classification of the underlying \virg{Quaternionic} category of vector bundles.
 In this way we obtain a classification which is, in spirit, finer than the usual $K$-theoretical classification \cite{kitaev-09} (see also Appendix \ref{sect:KQ-theory}) since it covers also the \emph{unstable} case. Moreover, our classification is a priori valid also for quite general base spaces and not only for spheres or tori.
 \vspace{1.3mm}
 
\item We introduce a topological invariant that discriminates between non-isomorphic $\rr{Q}$-bundles and which is sufficiently fine to provide a complete description of the category of \virg{Quaternionic}  vector bundles
in low dimensions, \ie when the base space is a (closed) manifold $X$ of dimension $d\leqslant 3$.
The construction of this invariant is based on an original idea by M. Furuta, Y. Kametani, H. Matsue, and N. Minami described in an unpublished work \cite{furuta-kametani-matsue-minami-00} dated 2000 (five years earlier than the first paper by Kane and Mele!) and for this reason we decide to call it the \emph{FKMM-invariant}.
We provide a precise description of the FKMM-invariant in Section \ref{sec:FKMM-inv}
and we prove that in special cases (including all cases of interest for the description of topological insulators)
this invariant  reproduces the Fu-Kane-Mele index (\cf in particular Remark \ref{rk:FKMM_w_det}).
\vspace{1.3mm}
 
\item We prove that the FKMM-invariant is a genuine \emph{characteristic class} for the category of   \virg{Quaternionic}  vector bundles in the sense that there exists a \emph{universal} version of it which 
provides by pullback the FKMM-invariant of (almost) each $\rr{Q}$-bundle. This point of view is developed in Section \ref{subsec:univ-FKMM_inv} and represents a quite important point of novelty with respect to the original definition proposed in \cite{furuta-kametani-matsue-minami-00}. Moreover, our interpretation of the FKMM-invariant
is still liable to further generalizations that provide a notion of {characteristic class} which turns out to be well defined in great generality for $\rr{Q}$-bundles over involutive base spaces of each \virg{reasonable} type.
This point of view will be discussed in a future paper \cite{denittis-gomi-15} (\cf also Remark \ref{rk:gen_FKMM_inv}); 
\end{itemize}

\medskip

The topological classification of the family of   \virg{Quaternionic}  vector bundles strongly depends on the nature of the involutive space $(X,\tau)$ which in turns reflects some physical features of the system under consideration. 
An important aspect of an involutive space is the structure of its \emph{fixed point} set
$
X^{\tau}:= \{x\in X\ |\ \tau(x)=x\}
$.
The family of FKMM-spaces is rich enough to contain all the systems of interest in condensed matter physics:
 \begin{definition}[FKMM-space]\label{ass:2}
An \emph{FKMM-space} is an involutive space $(X,\tau)$ such that:
\begin{enumerate}
\item[0.]  $X$ is a compact and path connected Hausdorff space which admits the structure of a $\Z_2$-CW-complex;\vspace{1.3mm}
\item[1.] The fixed point set $X^\tau\neq\emptyset$  consists of a finite number of points;
\vspace{1.3mm}
\item[2.]  $H^2_{\Z_2}(X,\Z(1))=0$.
\end{enumerate}
\end{definition}
For the sake of completeness, let us recall that an involutive space $(X,\tau)$ has the structure of a $\Z_2$-CW-complex if it admits a skeleton decomposition given by gluing cells of different dimensions which carry a $\Z_2$-action. 
For a precise definition of the notion of 
$\Z_2$-CW-complex, the reader can refer to  \cite[Section 4.5]{denittis-gomi-14} or \cite{matumoto-71,allday-puppe-93}. In addition, the cohomology group that  appears in 2.  is the \emph{equivariant Borel} cohomolgy of the space $(X,\tau)$ computed with respect to the \emph{local system} of coefficients $\Z(1)$. 
A short reminder of this theory is sketched in Section \ref{subsec:borel_cohom}. Here, let us only recall  that the group $H^2_{\Z_2}(X,\Z(1))$ provides the classification for \virg{Real} line bundles  over $(X,\tau)$ where the adjective  \virg{Real} refers to the category of vector  bundles   introduced by M.~F. Atiyah in \cite{atiyah-66} and extensively studied in 
\cite{denittis-gomi-14}.
Given an involutive space $(X,\tau)$ let us denote by ${\rm Vec}^{n}_\rr{Q}(X,\tau)$
the set of isomorphic classes of $\rr{Q}$-bundles with  fiber of dimension $n$. We point out that the existence of fixed points $X^\tau\neq \emptyset$ and the  the connectedness of $X$ imply that the dimension of the fibers is forced to be even, \ie, $n=2m$ (\cf Proposition \ref{prop:Qv0}).  
The key result  at the basis of our classification can be stated as follows:
\begin{theorem}[Injective group homomorphism: low dimensional cases]
\label{theorem:inv_inject2}
Let $(X,\tau)$ be an FKMM-space such that the dimension of the  free $\Z_2$-cells in the skeleton decomposition of $X$  does not exceed $d=3$. Then, the FKMM-invariant defines an injective  map  
$$
\kappa\;:\;{\rm Vec}^{2m}_\rr{Q}(X,\tau)\;\longrightarrow\; H^{2}_{\Z_2}(X|X^\tau,\n{Z}(1))\;\qquad\qquad m\in\N\;.
$$
Moreover, ${\rm Vec}^{2m}_\rr{Q}(X,\tau)$ can be endowed with a group structure in such a way that $\kappa$ becomes an injective group homomorphism. 
\end{theorem}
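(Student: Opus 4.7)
The plan is to prove injectivity of $\kappa$ by equivariant obstruction theory along the $\Z_2$-CW skeleton of $X$, and then to pull back the abelian group structure of the codomain through $\kappa$. The naturality and additivity of the FKMM-invariant developed in Section~\ref{sec:FKMM-inv} are taken as inputs.

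For the injectivity, let $E,E'\to X$ be two rank-$2m$ $\rr{Q}$-bundles with $\kappa(E)=\kappa(E')$. Filter $X$ by the equivariant skeleton $X^\tau=X^{(0)}\subset X^{(1)}\subset X^{(2)}\subset X^{(3)}=X$, where $X^{(n)}$ is obtained from $X^{(n-1)}$ by attaching free $\Z_2$-cells of dimension $n$; the identification $X^{(0)}=X^\tau$ uses condition~(1) of Definition~\ref{ass:2}. I would build a $\rr{Q}$-isomorphism $\phi\colon E\to E'$ inductively over this filtration. Over $X^{(0)}$ each fiber carries a quaternionic structure (Proposition~\ref{prop:Qv0}), hence is determined up to equivariant isomorphism by its quaternionic dimension $m$, producing $\phi_0$. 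Extension across free $1$-cells is unobstructed because the primary obstruction has coefficients in $\pi_0\mathrm{GL}_{2m}(\C)=0$, and extension across free $3$-cells is automatic because the coefficients are $\pi_2\mathrm{GL}_{2m}(\C)=0$. The crux is the extension from $X^{(1)}$ to $X^{(2)}$: the primary obstruction lies in the relative equivariant cohomology $H^2_{\Z_2}(X|X^\tau,\pi_1\mathrm{GL}_{2m}(\C))\cong H^2_{\Z_2}(X|X^\tau,\Z(1))$, where the $\Z(1)$-twist appears because $\Z_2$ acts on $\pi_1\mathrm{GL}_{2m}(\C)\cong\Z$ by inversion (complex conjugation reverses the determinant circle). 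The aim is to show this obstruction equals $\kappa(E)-\kappa(E')$, which vanishes by hypothesis.

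The main obstacle is precisely this last identification. I expect it to follow from the interpretation of $\kappa$ as an equivariant first Chern class attached to the twisted determinant line bundle $\det(E)\otimes\det(E')^{-1}$ equipped with its canonical trivialization over $X^\tau$ (well defined thanks to condition~(2) of Definition~\ref{ass:2}), which matches at the cochain level the primary obstruction to reducing the equivariant frame bundle of $\mathrm{Hom}_\rr{Q}(E,E')$ across the free $2$-cells. Once injectivity is established, the group structure on ${\rm Vec}^{2m}_\rr{Q}(X,\tau)$ is obtained by transporting the abelian structure on $H^2_{\Z_2}(X|X^\tau,\Z(1))$: concretely, $[E]+[E']$ is defined as the unique isomorphism class whose FKMM-class equals $\kappa(E)+\kappa(E')$, and such a class is realized by clutching the rank-$2m$ trivial $\rr{Q}$-bundle with a $2$-cocycle representing this sum (the existence of such a realization being ensured by the surjectivity component of the classification machinery, \cf the complementary results on realizability proved elsewhere in the paper). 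With this operation $\kappa$ is by construction an injective group homomorphism.
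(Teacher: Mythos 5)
Your skeleton-induction strategy is in the same spirit as the paper's, but you place the burden in a different spot, and that is where the argument breaks down. The paper does \emph{not} compute a primary obstruction with coefficients $\pi_1(\n{U}(2m))\cong\Z(1)$ and then identify it with $\kappa(\bb{E})-\kappa(\bb{E}')$. Instead it first uses the hypothesis $\kappa(\bb{E}_1)=\kappa(\bb{E}_2)$ to produce an $\rr{R}$-isomorphism $f$ of the determinant line bundles extending $\det(F)$ over $X^\tau$ (Lemma \ref{lemma:inv_inject1}(i)), and then studies the bundle $\f{SU}(\bb{E}_1,\bb{E}_2,f)$ of fiberwise isomorphisms with prescribed determinant: its fibers are ${\rm S}\n{U}(2)$, which is $2$-connected, so the equivariant section extends cell by cell with \emph{no} obstruction in dimensions $\leqslant 3$ (the same induction as in Proposition \ref{prop:stab_ran_Q}). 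Your version leaves precisely the hard step --- the identification of the degree-$2$ obstruction class with the difference of FKMM-invariants --- as something you ``expect to follow''; this is the entire content of the injectivity statement and is not established by the cochain-level heuristics you sketch. (There are also smaller issues: the well-definedness of the obstruction as a class in $H^2_{\Z_2}(X|X^\tau,\Z(1))$ modulo the choices made on the $0$- and $1$-skeleta has to be matched against the quotient $[X^\tau,\n{U}(1)]_{\Z_2}/[X,\n{U}(1)]_{\Z_2}$, and $X^{(0)}$ is not $X^\tau$ but contains the free $0$-cells as well; you should reduce to $m=1$ via Corollary \ref{corol_(iii)} before comparing homotopy groups.)

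The group structure is also not obtainable the way you propose. Transporting the addition from $H^2_{\Z_2}(X|X^\tau,\Z(1))$ requires realizing $\kappa(\bb{E})+\kappa(\bb{E}')$ by an actual $\rr{Q}$-bundle, i.e.\ surjectivity of $\kappa$ onto a subgroup; the theorem claims no surjectivity for a general FKMM-space, and the realizability results you invoke are proved in the paper only for $\tilde{\n{S}}^d$ and $\tilde{\n{T}}^d$. The correct construction is intrinsic: set $[\bb{E}_1]+[\bb{E}_2]:=[\bb{E}]$ where $\bb{E}_1\oplus\bb{E}_2\simeq\bb{E}\oplus(X\times\C^2)$, whose existence is guaranteed by the stable-range Theorem \ref{theo:stab_ran_Q} and whose well-definedness follows from the (already proved) injectivity of $\kappa$ together with its additivity under Whitney sums (Theorem \ref{theo:FKMM_propert}(iii)); inverses are built from a complementary bundle $\bb{E}'$ with $\bb{E}\oplus\bb{E}'\simeq X\times\C^4$. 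You should replace the ``clutch a $2$-cocycle'' step by this Whitney-sum argument.
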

The proof of this result is postponed to Section \ref{sec:injFKMM-inv} and a precise description of the map $\kappa$ is given in Section \ref{subsec:constr-FKMM_inv}.
 Let us just comment that the abelian group $H^{2}_{\Z_2}(X|X^\tau,\n{Z}(1))$ describes the \emph{relative} equivariant cohomology  associated with the pair $X^\tau\subset X$. The importance of Theorem \ref{theorem:inv_inject2} becomes apparent when we compare it with the well-known similar result for the classification of complex vector bundles over a CW-complex $X$ of dimension $d\leqslant 3$; In this situation one has the isomorphism
 $$
c_1\;:\;{\rm Vec}^{m}_\C(X)\;\longrightarrow\; H^{2}(X,\Z)\;\qquad\qquad m\in\N
$$
 induced by the first Chern class. The parallelism between $\kappa$ and $c_1$ is quite evocative: The FKMM-invariant  is the proper characteristic  class that classifies    \virg{Quaternionic}  vector bundles as elements of the  cohomology group $H^{2}_{\Z_2}(X|X^\tau,\n{Z}(1))$. 
In effect, this turns out to be 
 the correct point of view   for developing a more general theory of the FKMM-invariant (\cf Remark \ref{rk:gen_FKMM_inv}).

 \medskip

 In order to connect the general result of Theorem \ref{theorem:inv_inject2} with the specific problem of the classification of topological insulators of class {\bf AII}
we need to recall some basic facts. 
A typical representative of this class is a system of quantum particles with  half-integer spin and subjected to an \emph{odd} time-reversal symmetry (-TR).
More in detail, let us consider a (self-adjoint) Hamiltonian $\hat{H}$ acting on the Hilbert space $\s{H}:= L^2(\R^d)\otimes \C^L$ (continuous case) or 
$\s{H}:= \ell^2(\Z^d)\otimes \C^L$ (tight-binding approximation) where the number $L\in\N$ takes into account the spinorial degrees of freedom.
We say that $\hat{H}$  is of type {\bf AII} (\cf \cite{altland-zirnbauer-97,schnyder-ryu-furusaki-ludwig-08})
if there exists an \emph{anti-}unitary  symmetry of type $\hat{\Theta}:=\hat{C}\hat{J}$ where 
$\hat{C}$ is the  {complex conjugation}   $\hat{C}\psi=\overline{\psi}$  and $\hat{J}$ is a {unitary operator} that verifies 
\beql{intro:1}
\left\{
\begin{aligned}
\hat{C}\hat{J}\hat{C}&=\;-\hat{J}^\ast\\
\hat{J}\;\hat{H}\;\hat{J}^\ast\;&=\;\hat{C}\hat{H}\hat{C}\;
\end{aligned}
\right.\qquad\quad (\text{{\bf AII} - symmetry})\;.
\eeq
Equation \eqref{intro:1} is equivalent to $\hat{\Theta}\; \hat{H}\; \hat{\Theta}^\ast=\hat{H}$ and the first condition in  
\eqref{intro:1} implies that
$\hat{\Theta}$ is an \emph{anti}-involution in the sense that 
$\hat{\Theta}^2=-\n{1}$ (or equivalently $\hat{\Theta}=-\hat{\Theta}^\ast$). 

\begin{remark}[Topological insulators in class {\bf AI}]\label{rk:classAII}{\upshape
If in  \eqref{intro:1} one replaces the first condition with $\hat{C}\hat{J}\hat{C}=\hat{J}^\ast$, or equivalently 
$\hat{\Theta}^2=\n{1}$, one obtains the class  {\bf AI} of topological insulators. Systems  of this type  possess
 an \emph{even} {time-reversal symmetry} (+TR) and  have been classified in  \cite{denittis-gomi-14}.   
}\hfill $\blacktriangleleft$
\end{remark}

For each $a\in\R^d$ let $\hat{U}_a$ be the unitary operator on $\s{H}$ that implements the translation by $a$ in the sense that  $\hat{U}_a\psi(\cdot)=\psi(\cdot-a)$ for  $\psi\in\s{H}$. The condition $[\hat{H},\hat{U}_a ]=0$ means the invariance of the
Hamiltonian $\hat{H}$  under the translation $\hat{U}_a$.
According to a standard nomenclature, one says that $\hat{H}$ describes a \emph{free} (resp. \emph{periodic}) system
if it is translational invariant for all $a\in\R^d$ (resp. for all  $a\in\Z^d$).
In both cases one can  represent $\hat{H}$ as a fibered operator over the momentum space (one uses the Fourier transform in the free case or  the Bloch-Floquet transform in the periodic case). By repeating verbatim the construction explained in  \cite[Section 2]{denittis-gomi-14} one associates to each gapped spectral region of $\hat{H}$ a vector bundle which is usually called \emph{Bloch-bundle}. The presence of the symmetry 
\eqref{intro:1} endows the {Bloch-bundle} with a \virg{Quaternionic} structure in the sense of \cite{dupont-69}
(we refer to Section \ref{sect:Qu-vb_main} for a precise definition). This construction justifies the following definition.
\begin{definition}[Topological insulators of type {\bf AII}]
\label{def:AII_sys}
A $d$-dimensional \emph{free} system of type {\bf AII} is a {\virg{Quaternionic}} vector bundle over the involutive TR-sphere
\beql{eq:TR-sph0}
\tilde{\n{S}}^d\;:=\;({\n{S}}^d,\tau)
\eeq
where $\n{S}^{d}:=\big\{k\in\R^{d+1}\,|\; \|k\|= 1\big\}$
and the involution $\tau$ is defined by 
\begin{equation}
\label{eq:TR-sph1}
\tau(k_0,k_1,\ldots,k_d)\;:=\;(k_0,-k_1,\ldots,-k_d)\;.
\end{equation}
A $d$-dimensional \emph{periodic} system of type {\bf AII} is a {\virg{Quaternionic}} vector bundle over the involutive TR-torus
\beql{eq:TR-tori0}
\tilde{\n{T}}^d\;:=\;({\n{T}}^d,\tau)
\eeq
where
$
\n{T}^{d}:=\n{S}^{1}\times\ldots\times\n{S}^{1}
$ ($d$-times) 
  and the {involution} $\tau$ extends diagonally the {involution} on $\tilde{\n{S}}^1$  given by \eqref{eq:TR-sph1} in such a way that $\tilde{\n{T}}^d=\tilde{\n{S}}^1\times\ldots\times\tilde{\n{S}}^1$.
\end{definition}

Let us point out that the involutive spaces $\tilde{\n{S}}^d$ and $\tilde{\n{T}}^d$ are particular examples of FKMM-spaces. Conditions 0 and 1  come from the explicit description of the $\Z_2$-CW-complex structure of these spaces 
 \cite[Examples 4.20 \& 4.21]{denittis-gomi-14} and condition
 2 follows from the computation of the equivariant cohomology based on a recursive application of the \emph{Gysin sequence}  \cite[Sections 5.3 \& 5.4]{denittis-gomi-14}. In particular, Theorem \ref{theorem:inv_inject2} applies to the description of ${\rm Vec}_{\rr{Q}}^{2m}({\n{S}}^d,\tau)$ and 
${\rm Vec}_{\rr{Q}}^{2m}({\n{T}}^d,\tau)$ when $d\leqslant 3$.
\begin{theorem}[Classification of {\bf AII} topological insulators: low dimensional case]\label{theo:AI_class}
Let $(\n{S}^d,\tau)$ and $(\n{T}^d,\tau)$ be the TR-involutive spaces described in Definition \ref{def:AII_sys}. Then:
\begin{enumerate}
\item[(i)]  \virg{Quaternionic} vector bundles over $(\n{S}^{d},\tau)$ and $(\n{T}^{d},\tau)$ can have only even rank. In particular  there are no \virg{Quaternionic} line-bundles for all $d\in\N$;\vspace{1.3mm}
\item[(ii)] ${\rm Vec}_{\rr{Q}}^{2m}\big(\n{S}^{1},\tau\big)=0$ for all $m\in\N$;
\vspace{1.3mm}
\item[{(iii)}] For TR-spheres in dimension $d=2,3$ one has group isomorphisms
$$
{\rm Vec}_{\rr{Q}}^{2m}\big(\n{S}^{2},\tau\big)\;=\;\Z_2\;,\qquad\qquad
{\rm Vec}_{\rr{Q}}^{2m}\big(\n{S}^{3},\tau\big)\;=\;\Z_2\;,
$$
given by the FKMM-invariant $\kappa$;
\vspace{1.3mm}
\item[{(iv)}] For TR-tori in dimension $d=2,3$ one has group isomorphisms
$$
{\rm Vec}_{\rr{Q}}^{2m}\big(\n{T}^{2},\tau\big)\;=\;\Z_2\;,\qquad\qquad
{\rm Vec}_{\rr{Q}}^{2m}\big(\n{T}^{3},\tau\big)\;=\;\Z_2^4\;,\
$$
given by the FKMM-invariant $\kappa$. 
\end{enumerate}
\end{theorem}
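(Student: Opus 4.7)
Part (i) falls out immediately from the rank-parity result (Proposition~\ref{prop:Qv0}) once one observes that both families of involutive spaces have non-empty fixed-point set: $(\n{S}^d)^\tau=\{\pm\mathbf{e}_0\}$ is a pair of antipodal points and $(\n{T}^d)^\tau=\{\pm 1\}^d$ consists of the $2^d$ vertices of the unit cube. For parts (ii)-(iv) my plan is to combine the injectivity of $\kappa$ supplied by Theorem~\ref{theorem:inv_inject2} with an explicit computation of the receiver $H^2_{\Z_2}(X|X^\tau,\n{Z}(1))$ and the production of enough rank-$2$ $\rr{Q}$-bundles to saturate that injection. Theorem~\ref{theorem:inv_inject2} applies because the TR-spheres and TR-tori are FKMM-spaces (as remarked in the excerpt) and their $\Z_2$-CW decompositions from \cite[Examples 4.20 \& 4.21]{denittis-gomi-14} contain free cells only in dimensions $\leqslant d\leqslant 3$.

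For the computation of the target cohomology I would use the long exact sequence of the pair $(X,X^\tau)$ in equivariant Borel cohomology,
\begin{equation*}
H^1_{\Z_2}(X,\n{Z}(1))\;\stackrel{r}{\longrightarrow}\;H^1_{\Z_2}(X^\tau,\n{Z}(1))\;\longrightarrow\;H^2_{\Z_2}(X|X^\tau,\n{Z}(1))\;\longrightarrow\;H^2_{\Z_2}(X,\n{Z}(1))\;,
\end{equation*}
and invoke the FKMM-space condition $H^2_{\Z_2}(X,\n{Z}(1))=0$ to identify the receiver with $\mathrm{coker}(r)$. Since $X^\tau$ is a trivial $\Z_2$-space on $k$ points, $H^1_{\Z_2}(X^\tau,\n{Z}(1))\cong(\Z_2)^k$ thanks to $H^1(B\Z_2,\n{Z}(1))=\Z_2$. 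Plugging $k=2$ for $\n{S}^d$ and $k=2^d$ for $\n{T}^d$, and feeding in the Gysin-based computations of $H^1_{\Z_2}(X,\n{Z}(1))$ from \cite[Sections 5.3 \& 5.4]{denittis-gomi-14}, one obtains the cokernels $0$ for $\n{S}^1$, $\Z_2$ for $\n{S}^2,\n{S}^3,\n{T}^2$, and $\Z_2^4$ for $\n{T}^3$. Part (ii) is then immediate, since both source and target of $\kappa$ vanish. In the toral cases the $k-1$ generators of $(\Z_2)^k$ surviving in the cokernel correspond precisely to the one \virg{strong} and $d-1$ \virg{weak} FKMM classes familiar from the physics of topological insulators.

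The main obstacle is surjectivity of $\kappa$ for parts (iii) and (iv). My plan is to construct explicit rank-$2$ $\rr{Q}$-bundle generators and to verify that their FKMM-invariants exhaust the cokernels computed above. On $\tilde{\n{S}}^2$, and analogously on $\tilde{\n{S}}^3$, a $\Z_2$-equivariant clutching over the two hemispheres, with a transition function on the equatorial TR-sphere valued in $\mathrm{Sp}(1)\cong\mathrm{SU}(2)$ and suitably intertwining the $\rr{Q}$-structure, should produce a rank-$2$ bundle whose FKMM-invariant I would read off from the explicit construction in Section~\ref{subsec:constr-FKMM_inv}. Over $\tilde{\n{T}}^2$ the generator is pulled back from the $\tilde{\n{S}}^2$-generator along the map $\tilde{\n{T}}^2\to\tilde{\n{S}}^2$ collapsing the $\Z_2$-equivariant $1$-skeleton through the four fixed points. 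For $\tilde{\n{T}}^3=\tilde{\n{S}}^1\times\tilde{\n{T}}^2$, four independent generators arise from (1) one \virg{strong} class analogous to the $\tilde{\n{S}}^3$-generator and (2) three \virg{weak} classes obtained by pulling back the $\tilde{\n{T}}^2$-generator along the three projections $\tilde{\n{T}}^3\to\tilde{\n{T}}^2$ that forget one circle factor. Naturality of $\kappa$ under pull-back and its additivity under Whitney sums then allow one to promote these rank-$2$ examples to rank-$2m$ generators by direct-summing with the trivial $\rr{Q}$-bundle of rank $2(m-1)$.

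The sharpest technical point remaining is to certify that the classes produced above are genuinely non-zero and independent inside the target cohomology. For $\tilde{\n{S}}^2$ and $\tilde{\n{T}}^2$ this can be done by matching $\kappa$ against the Fu-Kane-Mele index, whose non-triviality is already established on explicit physical models of Kane-Mele type. For $\tilde{\n{S}}^3$ and the weak classes of $\tilde{\n{T}}^3$, naturality of $\kappa$ under the relevant restriction and quotient maps reduces the question to the two-dimensional setting, while independence of the strong class on $\tilde{\n{T}}^3$ can be detected on a suitably chosen collapse of $\tilde{\n{T}}^3$ onto $\tilde{\n{S}}^3$.
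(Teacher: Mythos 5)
Your proposal follows essentially the same route as the paper: part (i) from Proposition \ref{prop:Qv0} together with $X^\tau\neq\emptyset$; the identification of the target group with ${\rm Coker}^1(X|X^\tau,\n{Z}(1))$ via the long exact sequence and the vanishing of $H^2_{\Z_2}(X,\n{Z}(1))$; injectivity from Theorem \ref{theorem:inv_inject2}; and surjectivity by producing rank-$2$ generators on the TR-spheres and pulling them back to the tori along the coordinate projections and the smash-product collapses, exactly as in Propositions \ref{propos_classif_low_S}, \ref{propos_classif_T_2d} and \ref{propos_classif_T_3d}. The one place where you should replace an appeal to Kane--Mele models by a self-contained computation is the non-triviality of the sphere generators: the paper writes down explicit maps $w:\n{S}^d\to\n{U}(2)$ satisfying \eqref{eq:mat_W2} and evaluates the Pfaffian sign map \eqref{eq:sign_map_fkm} at the two fixed points (Proposition \ref{propo:FKMM-sign}), which is a short calculation rather than a gap in your strategy.
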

The various items listed in Theorem \ref{theo:AI_class} are proved separately in the paper:  (i) is a consequence of Proposition \ref{prop:Qv0}; The proof of (ii) is
contained in Proposition \ref{prop_(ii)} and is based on the   \emph{homotopy} classification of \virg{Quaternionic} vector bundles (\cf Theorem \ref{theo:honotopy_class_Q});  
Item (iii) is proved in Proposition \ref{propos_classif_low_S} while the proof of  (iv)  is contained in Proposition \ref{propos_classif_T_2d} and Proposition \ref{propos_classif_T_3d}. Let us spend few words about
 the strategy of the proofs of (iii) and (iv). Firstly, one computes the relevant cohomology groups
$$
H^{2}_{\Z_2}\big(\tilde{\n{S}}^d|(\tilde{\n{S}}^d)^\tau,\n{Z}(1)\big)\;\simeq\;\Z_2\;,\quad\qquad H^{2}_{\Z_2}\big(\tilde{\n{T}}^d|(\tilde{\n{T}}^d)^\tau,\n{Z}(1)\big)\;\simeq\;\Z_2^{2^d-(d+1)} \qquad\qquad\ \ \forall\ d\geqslant 2
$$
(\cf
Proposition \ref{prop:cond_coker} and Proposition \ref{prop:coker_tori}). Then, one shows that the injective group morphism $\kappa$ in Theorem \ref{theorem:inv_inject2} is indeed surjective by constructing suitable explicit realizations of non-trivial $\rr{Q}$-bundles. An important remark is that, due to the low dimensionality of the base space, the FKMM-invariant can be described in terms of a \emph{sign map} $\rr{d}_w:X^\tau\to{\pm1}$  that generalizes, to some extent, the Fu-Kane-Mele index (\cf Remark \ref{rk:FKMM_w_det}). Moreover, in the case of the three-dimensional TR-torus $\tilde{\T}^3$ one recovers the distinction between \emph{week} and \emph{strong} invariants
according to the original definition given in \cite{fu-kane-mele-07} (\cf Remark \ref{rk:FKMM_week_strong}).
Finally, the non-vanishing of the FKMM-invariant can be also interpreted as the obstruction to the 
existence of a global frame of (Bloch) sections which supports the \virg{Quaternionic} symmetry (\cf Remarks \ref{rk:obstr1} \& \ref{rk:FKMM_week_strong}) and this fact recovers and generalizes the point of view investigated in \cite{porta-graf-13} for the particular case of the involutive space $\tilde{\n{T}}^2$.

\medskip

The classification for $d>3$ is more complicated  since in this case the FKMM-invariant $\kappa$ is no longer sufficient to establish an injective morphism between ${\rm Vec}_{\rr{Q}}^{2m}(X,\tau)$ and some cohomology group.  In the case $d=4$ one needs also the \emph{second} Chern class
and, under conditions which are  slightly more restrictive  that those in Definition \ref{ass:2},
one can prove the following result:

\begin{theorem}[Injective group homomorphism: d=4]
\label{theo:injectD4}
Let $(X,\tau)$ be an FKMM-space  and assume in addition that $X$ is a closed and  oriented $4$-manifold
with an involution $\tau$ which is smooth.
Then, the FKMM-invariant $\kappa$ and the second Chern class $c_2$ define a   map
$$
(\kappa,c_2):{\rm Vec}^{2m}_\rr{Q}(X,\tau)\;\longrightarrow\; H^{2}_{\Z_2}(X|X^\tau,\n{Z}(1))\;\oplus\; H^4(X,\Z)\;\qquad\qquad m\in\N
$$
that is injective. Moreover, ${\rm Vec}^{2m}_\rr{Q}(X,\tau)$ can be endowed with a group structure in such a way that the pair $(\kappa,c_2)$ sets  an injective group homomorphism. 
\end{theorem}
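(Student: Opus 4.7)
The plan is to bootstrap Theorem~\ref{theorem:inv_inject2} by passing to the $3$-skeleton of $X$ and then to dispose of the remaining obstruction on the $4$-cells using the second Chern class. Fix a $\Z_2$-CW-decomposition of $(X,\tau)$ in which the finite set $X^\tau$ consists of fixed $0$-cells and every cell of positive dimension is free; this is possible by equivariant triangulation since $\tau$ is smooth and $X^\tau$ is finite. Let $X^{(3)}$ denote the $3$-skeleton. Because the relative pair $(X,X^{(3)})$ contains only $4$-cells, the inclusion induces an isomorphism on $H^j_{\Z_2}(-,\n{Z}(1))$ for $j\leqslant 2$ and, by the five-lemma, on the relative groups $H^j_{\Z_2}(-|X^\tau,\n{Z}(1))$. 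Hence $(X^{(3)},\tau|_{X^{(3)}})$ is itself a FKMM-space whose free cells have dimension $\leqslant 3$, and Theorem~\ref{theorem:inv_inject2} applies to it.

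Let $E,E'\in{\rm Vec}^{2m}_\rr{Q}(X,\tau)$ satisfy $\kappa(E)=\kappa(E')$ and $c_2(E)=c_2(E')$. Naturality of the FKMM-invariant under restriction gives $\kappa(E|_{X^{(3)}})=\kappa(E'|_{X^{(3)}})$, so Theorem~\ref{theorem:inv_inject2} provides a $\rr{Q}$-isomorphism $\phi\colon E|_{X^{(3)}}\xrightarrow{\sim}E'|_{X^{(3)}}$. It remains to extend $\phi$ across the $4$-cells, or equivalently to extend the section of the equivariant bundle $\mathrm{Iso}_{\rr{Q}}(E,E')\to X$ defined by $\phi$ on $X^{(3)}$. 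Because every $4$-cell is a free $\Z_2$-cell, the equivariant extension obstruction descends on each cell to the ordinary non-equivariant obstruction on the corresponding quotient $4$-cell, and the global class lives in
$$
H^4\bigl(X,X^{(3)};\pi_3(U(2m))\bigr)\;\cong\;H^4(X,\Z)\,,
$$
where the last isomorphism uses $\pi_3(U(2m))\cong\Z$ for $m\geqslant 1$. The classical identification of the second Chern class with the primary obstruction to trivializing a rank-$2m$ complex vector bundle over a $4$-cell (\cf\ Milnor--Stasheff) then identifies this obstruction with $c_2(E)-c_2(E')\in H^4(X,\Z)$. By hypothesis the latter vanishes, so $\phi$ extends to a global $\rr{Q}$-isomorphism $E\cong E'$, which proves injectivity of $(\kappa,c_2)$.

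The group structure on ${\rm Vec}^{2m}_\rr{Q}(X,\tau)$ is constructed as in Theorem~\ref{theorem:inv_inject2}, by transporting the abelian-group structure of the target $H^{2}_{\Z_2}(X|X^\tau,\n{Z}(1))\oplus H^4(X,\Z)$ along the now-injective map $(\kappa,c_2)$; equivalently, for rank $2m$ in the stable range with respect to $\dim X=4$ one may import it from the KQ-theoretic classification of Appendix~\ref{sect:KQ-theory}. Additivity of $\kappa$ is already known, and additivity of $c_2$ on $\rr{Q}$-bundles of fixed rank follows from the Whitney product formula together with the fact that $c_1$ of a $\rr{Q}$-bundle is $2$-torsion and is fully controlled by $\kappa$.

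The main obstacle is the identification of the extension obstruction with $c_2(E)-c_2(E')$. The non-equivariant building blocks---$\pi_3(U(2m))\cong\Z$ and the Milnor--Stasheff viewpoint of $c_2$ as a primary obstruction---are classical, but one must verify that the equivariant obstruction on each pair of free $4$-cells descends correctly to the non-equivariant obstruction on the quotient cell, and that summing these contributions reproduces on the nose the cellular representative of $c_2(E)-c_2(E')$ without spurious twists induced by the involution. The presence of fixed $0$-cells on the boundary of some $4$-cells introduces a mismatch between the equivariant cellular complex of $X$ and the ordinary cellular complex of $X/\Z_2$, and keeping track of this twist, together with the orientation conventions entering the definition of the Chern class, is the delicate part of the argument.
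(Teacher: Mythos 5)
Your strategy is genuinely different from the paper's. The paper does not prove the theorem in the stated generality: it adds an extra hypothesis (Assumption \ref{ass:3.1}, in particular the existence of a distinguished fixed point $x_\ast$ with $X\setminus\{x_\ast\}$ $\Z_2$-homotopic to a complex of dimension $\leqslant 3$), removes an invariant ball $D$ around $x_\ast$, applies Theorem \ref{theorem:inv_inject2} to $X'=X\setminus{\rm Int}(D)$, and reduces the whole four-dimensional problem to a single clutching function $\theta:\hat{\n{S}}^3\to\n{U}(2)$; the decisive input is Lemma \ref{lemma:D4_05}, which shows that the forgetful map $[\hat{\n{S}}^3,\n{U}(2)]_{\Z_2}\to[\n{S}^3,\n{U}(2)]\simeq\Z$ is injective, so that vanishing of the non-equivariant degree (identified with $c_2(\bb{E}_2)-c_2(\bb{E}_1)$ via Mayer--Vietoris) forces the equivariant class to vanish. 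You instead run cellular equivariant obstruction theory over the whole $3$-skeleton --- which is exactly the route the authors explicitly defer to a sequel because "one needs obstruction theory." Your reduction to the $3$-skeleton, the verification that $X^{(3)}$ is an FKMM-space, and the application of Theorem \ref{theorem:inv_inject2} there are all fine.

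The gap is at the central step, which you flag as "delicate" but do not carry out, and it is the mathematical heart of the theorem rather than a bookkeeping verification. The equivariant obstruction to extending $\phi$ over the free $4$-cells does \emph{not} live in $H^4(X,\Z)$: since the $3$- and $4$-cells are free, it lives (after quotienting by the indeterminacy from equivariant modifications on the $3$-cells) in the Bredon group, which is $H^4(X/\tau, X^\tau;\Z)$, with one $\Z=\pi_3(\n{U}(2m))$ per \emph{orbit} of $4$-cells. The comparison map to the non-equivariant obstruction $c_2(E')-c_2(E)\in H^4(X,\Z)$ is the pullback $\pi^*$ along the branched double cover $\pi:X\to X/\tau$ (using that $\mu$ acts trivially on $\pi_3(\n{U}(2m))$ and that $\tau$ preserves orientation --- two facts you also need but do not check); on top cohomology $\pi^*$ satisfies $\pi_!\circ\pi^*=\times 2$, so each orbit contributes \emph{twice} to the Chern number and the vanishing of $c_2(E')-c_2(E)=\pi^*(o_{\rm eq})$ only gives $2\,o_{\rm eq}=0$ a priori. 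To conclude $o_{\rm eq}=0$ you must show the equivariant obstruction group has no $2$-torsion; this is true (e.g.\ by Poincar\'e--Lefschetz duality applied to the free open part $(X\setminus X^\tau)/\tau$ one gets $H^4(X/\tau,X^\tau;\Z)\simeq\Z$), but it is precisely the content of the paper's Lemma \ref{lemma:D4_05} in its special setting, where the evenness of the degree of an equivariant map trivial on the $2$-skeleton and the torsion-freeness of $\pi_3(\n{U}(2))$ do the work. Without this, the implication "$c_2(E)=c_2(E')\Rightarrow\phi$ extends" is unproven. A secondary point: an injection into a group does not by itself transport a group structure onto the source unless the image is shown to be a subgroup; the paper instead builds the group law on ${\rm Vec}^{2}_{\rr{Q}}(X,\tau)$ from stable Whitney sums via Theorem \ref{theo:stab_ran_Q}, and you should do the same.
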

The proof of a slightly weaker version of
this theorem is postponed to Section \ref{sec:d=4}, where we prove the injectivity of the pair $(\kappa,c_2)$ under some  extra hypothesis (\cf Assumption \ref{ass:3.1} \& Theorem
\ref{theo:injectD4_weak_version}) which are still verified by the involutive TR-spaces $\tilde{\n{S}}^4$ and $\tilde{\n{T}}^4$.
Let us just comment that the claim of Theorem \ref{theo:injectD4} is true as it is and, at the cost of increasing the technical difficulty of the proof (one needs obstruction theory), it can be proved in full generality \cite{denittis-gomi-15}. 

\medskip

Before discussing the ramifications of Theorem \ref{theo:injectD4} for the classification of topological insulators, it is important to comment about the relevance of the dimension $d=4$ for condensed matter effects. In fact, since $d\leqslant 3$ covers all the physical spatial dimensions, one can consider the case $d=4$ just a mathematical curiosity. However, there are physical phenomena that involve time-periodic adiabatic variations of the system like the \emph{piezoelectric polarization} (see \cite{denittis-lein-13} and references therein) or the \emph{isotropic magneto-electric response} \cite{essin-moore-vanderbilt-09,hughes-prodan-bernevig-11}. In these cases the relevant topological response of the system depends on the full spacetime dimensionality
and higher invariants like the second Chern number become relevant (see \eg \cite{lin-yau-13}).
An interesting difference between the case $d=4$ and the low dimensional cases $d\leqslant 3$ is that the two topological invariants $\kappa$ and $c_2$ are not independent. In particular, it is possible to show that the \emph{strong component} of the FKMM-invariant is uniquely fixed by the \emph{week components} and by the parity of the second \emph{Chern number} $C:=\langle c_2,[X]\rangle\in\Z$ (we denote by $[X]\in H_4(X)$  the fundamental class of $X$).  This fact is made evident in the following result, which completes the classification of topological insulators in class  {\bf AII} for all physically interesting dimensions.
\begin{theorem}[Classification of {\bf AII} topological insulators: $d=4$]\label{theo:AI_classd=4}
Let $(\n{S}^4,\tau)$ and $(\n{T}^4,\tau)$ be the TR-involutive spaces described in Definition \ref{def:AII_sys}. 
\begin{enumerate}
\item[(i)]  
The map
$$
(\kappa,c_2)\;:\;{\rm Vec}^{2m}_\rr{Q}({\n{S}}^4,\tau)\;\longrightarrow\; \Z_2\;\oplus\; \Z
$$
is injective and the image $(\kappa,c_2):\bb{E}\mapsto (\epsilon,C)\in \Z_2 \oplus  \Z$ of
 each \virg{Quaternionic} vector bundle $(\bb{E},\Theta)$ is contained in the subgroup
$$
\left\{(\epsilon,C)\in \Z_2 \oplus  \Z\ |\ \epsilon=(-1)^C\right\}\;\simeq\;\Z\;.
$$
More precisely, elements in ${\rm Vec}^{2m}_\rr{Q}({\n{S}}^4,\tau)\simeq \Z$  are completely classified by the second Chern class $c_2(\bb{E})$ and the value $\kappa(\bb{E})\simeq\epsilon$ of the (strong component of the) FKMM-invariant is fixed by the reduction mod. 2 of the second Chern number $C=\langle c_2(\bb{E}),[\n{S}^4]\rangle$.
\vspace{1.3mm}
\item[(ii)] The map
$$
(\kappa,c_2)\;:\;{\rm Vec}^{2m}_\rr{Q}({\n{T}}^4,\tau)\;\longrightarrow\; \Z_2^{11}\;\oplus\; \Z
$$
is injective and the image $(\kappa,c_2):\bb{E}\mapsto (\epsilon_1,\ldots,\epsilon_{11},C)\in \Z^{11}_2 \oplus  \Z$ of each \virg{Quaternionic} vector bundle $(\bb{E},\Theta)$ is contained in the subgroup
$$
\left\{(\epsilon_1,\ldots,\epsilon_{10},\epsilon_{11},C)\in \Z_2^{11} \oplus  \Z\ \Big|\ \prod_{j=1}^{11}\epsilon_j=(-1)^C\right\}\;\simeq\;\Z_2^{10}\;\oplus\; \Z\;.
$$
More precisely, elements in ${\rm Vec}^{2m}_\rr{Q}({\n{T}}^4,\tau)\simeq \Z_2^{10}\oplus \Z$  are completely classified by the second Chern 
class $c_2(\bb{E})$ and  the FKMM-invariant. However, only the first ten \emph{week} components of the FKMM-invariant  $\kappa(\bb{E})\simeq\epsilon$  are independent since the \emph{strong} component $\epsilon_{11}=(-1)^C\prod_{j=1}^{10}\epsilon_j$
is fixed by the values of  the {week} components and the reduction mod. 2 of the second Chern number $C=\langle c_2(\bb{E}),[\n{T}^4]\rangle$.
\end{enumerate}
\end{theorem}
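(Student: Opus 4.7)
My plan is to derive Theorem \ref{theo:AI_classd=4} by combining the injectivity statement of Theorem \ref{theo:injectD4} with two independent tasks: identifying the exact image of $(\kappa,c_2)$, and exhibiting enough explicit examples to hit every element of the claimed subgroup. I would first gather the relevant cohomology computations: for $\tilde{\n{S}}^4$ one already has $H^{2}_{\Z_2}(\tilde{\n{S}}^4|(\tilde{\n{S}}^4)^\tau,\n{Z}(1))\simeq\Z_2$ and $H^4(\n{S}^4,\Z)\simeq\Z$ (via the fundamental class), while for $\tilde{\n{T}}^4$ one has $H^{2}_{\Z_2}(\tilde{\n{T}}^4|(\tilde{\n{T}}^4)^\tau,\n{Z}(1))\simeq\Z_2^{11}$ (from the $2^d-(d+1)$ formula of Proposition \ref{prop:coker_tori}) and $H^4(\n{T}^4,\Z)\simeq\Z$ at the top. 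Since the hypotheses of Theorem \ref{theo:injectD4} are satisfied on both $\tilde{\n{S}}^4$ and $\tilde{\n{T}}^4$ (they are FKMM-spaces with smooth involutions on closed orientable $4$-manifolds), the resulting map $(\kappa,c_2)$ is an injective group homomorphism into these target groups.

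The substantive content is the parity identity $\epsilon=(-1)^C$ on the sphere, and $\epsilon_{11}=(-1)^C\prod_{j=1}^{10}\epsilon_j$ on the torus. For the sphere I would attack this by decomposing $\tilde{\n{S}}^4=D^4_+\cup D^4_-$ along the equatorial $\tilde{\n{S}}^3$: each closed hemisphere is $\Z_2$-equivariantly contractible to the fixed point at its pole, so any $\rr{Q}$-bundle $\bb{E}$ restricts to a trivial bundle on $D^4_\pm$ carrying the canonical \virg{Quaternionic} structure at the fixed pole. The gluing is then an equivariant map $\phi:\tilde{\n{S}}^3\to U(2m)$ whose ordinary $\pi_3$ class is $c_2(\bb{E})=C\in\Z$, while by Theorem \ref{theo:AI_class}(iii) its class as a $\rr{Q}$-bundle datum over $\tilde{\n{S}}^3$ is $\kappa(\bb{E})\in\Z_2$. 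The needed parity $\epsilon=(-1)^C$ is therefore equivalent to the statement that the natural forgetful map $\Z=\pi_3(U(2m))\to\Z_2=\mathrm{Vec}^{2m}_{\rr{Q}}(\tilde{\n{S}}^3)$ is reduction modulo $2$, which I would prove either by reading it off the explicit sign map $\rr{d}_w$ (Remark \ref{rk:FKMM_w_det}) for the $SU(2)$-instanton with its natural TR-structure, or by computing both invariants for a standard generator and checking they generate their respective targets. For the torus I would reduce to the sphere calculation: the ten \virg{weak} components $\epsilon_j$ are the restrictions of $\kappa(\bb{E})$ to the ten inequivalent embedded two-dimensional and three-dimensional TR-subtori, which by Theorem \ref{theo:AI_class} are independent and unconstrained, while the \virg{strong} component $\epsilon_{11}$ corresponds to a top cell $\tilde{\n{S}}^4\hookrightarrow \tilde{\n{T}}^4$ obtained from collapsing the $3$-skeleton; the sphere identity applied to this quotient, combined with the product-of-weak corrections coming from the lower cells, yields the claimed product formula.

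Surjectivity onto the prescribed subgroups is the easier half. On $\tilde{\n{S}}^4$, tensor powers (in the sense of the group operation on $\mathrm{Vec}^{2m}_{\rr{Q}}$) of a single generator carrying the \virg{Quaternionic} lift of the standard $SU(2)$-instanton produce $\rr{Q}$-bundles with second Chern number $C$ and FKMM-invariant $(-1)^C$ for every $C\in\Z$, hitting all of the subgroup $\{(\epsilon,C):\epsilon=(-1)^C\}\simeq\Z$. On $\tilde{\n{T}}^4$, I would use the external product of this sphere generator with the ten generators of the weak components built out of nontrivial $\rr{Q}$-line\,-free $\rr{Q}$-bundles over the TR-subtori (produced from Theorem \ref{theo:AI_class}(iv)); since the parities of the strong and weak components and the Chern number combine independently modulo the single relation above, every element of $\{(\epsilon_1,\ldots,\epsilon_{11},C):\prod_j\epsilon_j=(-1)^C\}\simeq\Z_2^{10}\oplus\Z$ is realized.

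The principal obstacle I anticipate is the verification of the parity identity itself, since all the homotopical information about the pairing between $c_2$ and the strong FKMM-invariant is concentrated in the single $4$-cell and requires an explicit geometric computation rather than formal manipulation; everything else amounts to bookkeeping with the cohomology groups already computed in Sections \ref{subsec:borel_cohom}\,ff. and with the product structure on $\mathrm{Vec}^{2m}_{\rr{Q}}$ inherited from Theorem \ref{theorem:inv_inject2}.
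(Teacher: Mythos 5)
Your route is the paper's own: split $\tilde{\n{S}}^4$ into the two equivariantly contractible hemispheres, read $c_2$ off as the degree of the clutching function on the equator and $\kappa$ as its determinant class, and generate ${\rm Vec}^{2}_{\rr{Q}}(\tilde{\n{T}}^4)$ by pullbacks along the projections onto subtori together with the collapse $\tilde{\n{T}}^4\to\tilde{\n{T}}^4/\s{T}_4\simeq\tilde{\n{S}}^4$. (A small but consequential slip: the equator of $\tilde{\n{S}}^4$ carries the \emph{free} antipodal involution, so it is $\hat{\n{S}}^3$, not the TR-sphere $\tilde{\n{S}}^3$; consequently Theorem \ref{theo:AI_class}(iii) is not the statement governing the clutching datum --- the relevant input is Corollary \ref{corollary:classification_on_3_sphere} on $[\hat{\n{S}}^2,\n{U}(2)]_{\Z_2}$.)

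The genuine gap is the inclusion of the image of $(\kappa,c_2)$ \emph{into} the subgroup $\{\epsilon=(-1)^C\}$, i.e.\ the implication \virg{$\kappa(\bb{E})$ trivial $\Rightarrow C$ even}. Both of your proposed verifications --- evaluating $\rr{d}_w$ and $\deg$ on the instanton generator, or checking that the invariants of a generator generate their respective targets --- prove only the \emph{opposite} inclusion, namely that every pair with $\epsilon=(-1)^C$ is realized by some multiple of that generator. To rule out, say, a class with $(\epsilon,C)=(+1,1)$ you must know that every element of ${\rm Vec}^{2}_{\rr{Q}}(\tilde{\n{S}}^4)$ is a multiple of your generator, i.e.\ that $[\hat{\n{S}}^3,\n{U}(2)]_{\Z_2}\simeq\Z$; and your reformulation via a \virg{forgetful map $\Z=\pi_3(\n{U}(2m))\to\Z_2$} is not available, because a non-equivariant homotopy class does not determine an equivariant one --- the actual forgetful map runs the other way, $[\hat{\n{S}}^3,\n{U}(2)]_{\Z_2}\to\pi_3(\n{U}(2))$, and its injectivity is exactly what has to be proved. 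The paper supplies this in Lemma \ref{lemma:D4_05}: if $[\det\varphi]=1$ then, by Corollary \ref{corollary:classification_on_3_sphere} and the equivariant homotopy extension property, $\varphi$ can be equivariantly deformed to be constant on the $2$-skeleton of $\hat{\n{S}}^3$, after which the two free $3$-cells contribute equal degrees and $\deg\varphi=2\,[\varphi|_{{\bf e}_3^+}]$ is forced to be even. The same issue recurs on $\tilde{\n{T}}^4$: your eleven generators all satisfy $\prod_j\epsilon_j=(-1)^C$, but bounding the image requires knowing they exhaust ${\rm Vec}^{2}_{\rr{Q}}(\tilde{\n{T}}^4)$, which again rests on the injectivity statement of Theorem \ref{theo:injectD4_weak_version} (whose proof itself reduces the top cell to Lemma \ref{lemma:D4_05}) rather than on bookkeeping with the already-computed cohomology groups.
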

The proof of this theorem is explained in Section \ref{sec:d=4}.
It should be remarked that item (ii) above was originally shown in \cite{furuta-kametani-matsue-minami-00}.
The content of Theorem \ref{theo:AI_class} and Theorem \ref{theo:AI_classd=4} is
summarized in Table \ref{tab:01}.1  together with the classification of type {\bf A} topological insulators (systems without symmetries, \cf with \cite[Table 1.1]{denittis-gomi-14}). 
 \begin{center}
 \begin{table}[h]\label{tab:01}
 \centering
 \begin{tabular}{|c||c||c|c|c|c||c|}
\hline
VB  & AZC  & $d=1$ & $d=2$&$d=3$&$d=4$&\\
\hline
 \hline
 \rule[-3mm]{0mm}{9mm}
 ${\rm Vec}_{\C}^m(\n{S}^d)$& {\bf A} & $0$ & $\Z$ & $0$ &   \begin{tabular}{ll}
 {$\boxed{0}$}& {($m=1$)}\vspace{1mm}\\
 \ \  $\Z$& ($m\geqslant2$)\\
\end{tabular}   &Free\\
\cline{1-6}
 \rule[-3mm]{0mm}{9mm}
${\rm Vec}_{\rr{Q}}^{2m}(\n{S}^d,\tau)$ & {\bf AII} & $0$ & $\Z_2$ &$\Z_2$&$\Z$&systems\\
\hline
 \hline
  \rule[-3mm]{0mm}{9mm}
 ${\rm Vec}_{\C}^m(\n{T}^d)$ & {\bf A} & $0$ & $\Z$ & $\Z^3$&    \begin{tabular}{ll}
$\boxed{\Z^6}$& ($m=1$)\vspace{1mm}\\	
  \ \ $\Z^7$& ($m\geqslant2$)\\
\end{tabular}   &Periodic\\
\cline{1-6}
 \rule[-3mm]{0mm}{9mm}
${\rm Vec}_{\rr{Q}}^{2m}(\n{T}^d,\tau)$ & {\bf AII} & 0 &$\Z_2$&$\Z_2^4$&
$\Z_2^{10}\oplus\Z$&systems\\
\hline
\end{tabular}\vspace{2mm}
 \caption{The column VB lists the relevant  equivalence classes of vector bundles and the related
 Altland-Zirnbauer-Cartan labels \cite{altland-zirnbauer-97,schnyder-ryu-furusaki-ludwig-08}  are displayed in column AZC.  The  \virg{boxed} entries provide the classification in the \emph{unstable} regime ($d> 2m$)
 which is not covered by the $K$-theoretical classification.
  The rank of the \virg{Quaternionic} vector bundles is forced to be even since the involutive spaces $(\n{S}^d,\tau)$ and $(\n{T}^d,\tau)$
 described in Definition \ref{def:AII_sys} have fixed points.}
 \end{table}
 \end{center}
A comparison between Table \ref{tab:01}.1 and  the calculations in Appendix \ref{sect:KQ-theory} shows that our classification completely agrees with the predictions supplied by  $K$-theory. In particular, for \virg{Quaternionic} vector bundles in dimension $d\leqslant 4$ there is no distinction between \emph{stable} and \emph{unstable} regime
and so the $K$-theory provides a precise description for the \virg{labeling sets} of various isomorphism classes of $\rr{Q}$-bundles. Nevertheless, our classification is strictly stronger than the $K$-theoretical analysis since it provides an explicit description for the classifying invariants as elements of a proper cohomology theory. This information is not trivial at all! On the contrary, it plays a prominent role in the description  of physical effects 
like the stability of spin currents.

\medskip
 
Before ending this introductory part, we would briefly compare our results with the existing  physical  literature on the classification of topological insulators in class {\bf AII}. Table \ref{tab:01} is certainly not a novelty in the literature and, as already mentioned, the first classification of the distinct topological phases of  time-reversal invariant fermionic systems has been described in  \cite{kane-mele-05',kane-mele-05,fu-kane-06} for the case $d=2$ and 
then extended to $d=3$ in  \cite{fu-kane-mele-07} where the distinction between strong and weak phases appeared for the first time. In these works the $\Z_2$-order is characterized by the existence of stable spin currents (QSHE) or by the net effect of a \virg{spin pump}. Mathematically, the different phases are distinguished by the signs, the so-called Fu-Kane-Mele indices, that a suitable (Pfaffian) function of    the Bloch waves  takes on the fixed points of the torus (see Remark \ref{rk:FKMM_w_det} for more details). In \cite{moore-balents-07} the authors proposed a different description of the  Fu-Kane-Mele indices in terms of an implicit version of the $\Z_2$-equivariant homotopy classification described in Theorem \ref{theo:honotopy_class_Q}.
Subsequently,  an independent  elegant argument has been developed  in \cite{roy-09}  where the classification for  $d=2,3$ is recovered
by considering the 
exchange of Chern numbers during an adiabatic process of band-touching. This line of reasoning has its abstract counterpart in our Corollary \ref{coroll:d2manif_FKMM}.
Our approach to the classification of topological phases of quantum systems subjected to an odd time-reversal symmetry differs from these pervious researches essentially in two aspects. First of all, our approach based on the definition of the FKMM-invariant is independent on the particular nature of the base space, while all the previous techniques are  designed  \emph{ad hoc} over the peculiar structure of the torus. This clarification is not of secondary importance, since condensed matter systems are not the only sources of quantum topological effects; \eg our results are applicable to certain topological quantum field theories with symmetries as well as to time-reversal quantum systems perturbed by external fields parametrized by the points of some complicated manifold. 
Second, the FKMM-invariant used for our classification is a cohomological class, a fact that has two immediate implications: cohomology is algorithmically computable as opposed to homotopy and cohomological classes are liable to be described in terms of differential forms \cite{denittis-gomi-15}.
None of the invariants previously considered in the literature manifestly shows this important cohomological nature, although all of them are described by our FKMM-invariant (\cf Section \ref{subsec_Fu-Kane-Mele}).

\medskip

As a final remark let us point out that one of the merits of the present  work is that it shows how the FKMM-invariant can be understood as a \emph{bona fide} characteristic class for the category of \virg{Quaternionic} vector bundles (\cf Section \ref{subsec:univ-FKMM_inv}). This discovery, which in our opinion may have future implications, is an important point of novelty with respect to the original (and certainly inspiring) ideas contained in \cite{furuta-kametani-matsue-minami-00}.  Let us also recall that in the literature there already exist  works devoted to the construction of characteristic classes for \virg{Quaternionic} vector bundles. Among these, we mention at least the two papers
\cite{dos-santos-lima-filho-04,lawson-lima-filho-michelsohn-05} in which the authors developed the notion of \virg{Quaternionic} Chern classes. We feel that it should be of some  interest and utility to understand the link between the FKMM-invariant described in this work and these  \virg{Quaternionic}  Chern classes.

\medskip

\noindent
{\bf Acknowledgements.} 
The authors are immensely grateful to M. Furuta 
for allowing them to use ideas contained in the preprint \cite{furuta-kametani-matsue-minami-00}. 
GD wants to thank F. Rivera for the excellent hospitality at SJIH, San Juan, Puerto Rico where this investigation has begun.
GD's research is supported
 by
the Alexander von Humboldt Foundation. 
KG's research is supported by 
the Grant-in-Aid for Young Scientists (B 23740051), JSPS.
\medskip

\section{\virg{Quaternionic} vector bundles}
\label{sect:Q-vb}

This section is devoted to the description of the category of {\virg{Quaternionic}} vector bundles introduced for the first time in
 \cite{dupont-69}.  
Through the paper we often use the shorter expression \emph{$\rr{Q}$-bundle} instead of {\virg{Quaternionic}} vector bundle.

\subsection{\virg{Quaternionic} structure on vector bundles}
\label{sect:Qu-vb_main}

The first ingredient to define a \virg{Quaternionic} structure on a complex vector bundle is an \emph{involution} on the base space.
We recall that
an {involution} $\tau$ on a topological space $X$ is a homeomorphism of period 2, \ie  $\tau^2={\rm Id}_X$. The pair 
 $(X,\tau)$ will be called an  \emph{involutive space}. The spaces  
 $\tilde{\n{S}}^d$ and $\tilde{\n{T}}^d$ described in Definition \ref{def:AII_sys} are examples of involutive spaces. Other examples have been discussed in \cite[Section 4.1]{denittis-gomi-14}. We tacitly assume  through the paper that all the involutive spaces $(X,\tau)$ verify at least condition 0 in Definition \ref{ass:2}.

\medskip

A {\virg{Quaternionic}} vector bundle, or $\rr{Q}$-bundle, over  $(X,\tau)$
is a complex vector bundle $\pi:\bb{E}\to X$ endowed with a (topological) homeomorphism $\Theta:\bb{E}\to \bb{E}$
 such that:
\begin{itemize}

\item[$(\rr{Q}_1)$] the projection $\pi$ is \emph{equivariant} in the sense that $\pi\circ \Theta=\tau\circ \pi$;

\item[$(\rr{Q}_2)$] $\Theta$ is \emph{anti-linear} on each fiber, \ie $\Theta(\lambda p)=\overline{\lambda}\ \Theta(p)$ for all $\lambda\in\C$ and $p\in\bb{E}$ where $\overline{\lambda}$ is the complex conjugate of $\lambda$;

\item[$(\rr{Q}_3)$] $\Theta^2$ acts fiberwise as the multiplication by $-1$, namely $\Theta^2|_{\bb{E}_x}=-\n{1}_{\bb{E}_x}$.

\end{itemize}
It is always possible to endow
$\bb{E}$ with
a (essentially unique) Hermitian metric with respect to which $\Theta$ is an \emph{anti-unitary} map between conjugate fibers
(\cf Proposition \ref{rk:eq_metric}).
 
  \medskip
 
A vector bundle \emph{morphism} $f$ between two vector bundles  $\pi:\bb{E}\to X$ and $\pi':\bb{E}'\to X$ 
over the same base space
is a  continuous map $f:\bb{E}\to \bb{E}'$ which is \emph{fiber preserving} in the sense that  $\pi=\pi'\circ f$
 and that restricts to a \emph{linear} map on each fiber $\left.f\right|_x:\bb{E}_x\to \bb{E}'_x$. Complex (resp. real) vector bundles over  $X$ together with vector bundle morphisms define a category and we use ${\rm Vec}^m_\C(X)$
 (resp. ${\rm Vec}^m_\R(X)$) to denote the set of equivalence classes of isomorphic vector bundles of rank $m$.
    Also 
 $\rr{Q}$-bundles define a category with respect to  \emph{$\rr{Q}$-morphisms}. A $\rr{Q}$-morphism $f$ between two  $\rr{Q}$-bundles
 $(\bb{E},\Theta)$ and $(\bb{E}',\Theta')$ over the same involutive space $(X,\tau)$ 
  is a vector bundle morphism  commuting with the involutions, \ie $f\circ\Theta\;=\;\Theta'\circ f$. The set of equivalence classes of isomorphic $\rr{Q}$-bundles of  rank $m$ over $(X,\tau)$ 
  is denoted with ${\rm Vec}_{\rr{Q}}^m(X,\tau)$. 
  
\medskip

The set ${\rm Vec}^m_\C(X)$ is non-empty since it contains at least the
\emph{product} vector bundle $X\times\C^m\to X$ with canonical projection $(x,{\rm v})\mapsto x$. Similarly, in the real case one has that 
$X\times\R^m\to X$ provides  an element of ${\rm Vec}^m_\R(X)$. A complex (resp. real) vector bundle is 
called \emph{trivial}  if it is isomorphic to the  complex (resp. real) product vector bundle. In order to extend these definitions to \virg{Quaternionic} vector bundles we need to investigate the structure of the fibers of a $\rr{Q}$-bundles $(\bb{E},\Theta)$ over fixed points of the base space $(X,\tau)$. Let $x\in X^\tau$ and $\bb{E}_x\simeq\C^m$ be the related fiber. In this case the restriction $\Theta|_{\bb{E}_x}\equiv J$ defines an  \emph{anti}-linear map $J : \bb{E}_x \to \bb{E}_x$ such that $J^2 = -\n{1}_{\bb{E}_x}$. This means that the  fibers $\bb{E}_x$ over fixed points $x\in X^\tau$ are endowed with a \emph{quaternionc} structure in the following sense:

\begin{remark}[Quaternionic structure over complex vector spaces]\label{rk:quat_str}{\upshape
We shall denote with $\n{H}$ the skew-field of  quaternions and by $(1,{\rm i},{\rm j},{\rm k})$ its usual basis over $\R$, 
$$
\n{H}\;=\;\R\;\oplus\;\R\;{\rm i}\;\oplus\;\R\;{\rm j}\;\oplus\;\R\;{\rm k}\;\qquad\quad \big({\rm i}^2\;=\;{\rm j}^2\;=\;{\rm k}^2\;=\;{\rm i}{\rm j}{\rm k}\;=\;-1\big)\;.
$$
Similarly, the pair $(1,{\rm j})$ provides a basis of $\n{H}$ over $\C$
$$
\n{H}\;=\;\C\;\oplus\;\C\;{\rm j}\;=\;(\R\;\oplus\;\R\;{\rm i})\;\oplus\;(\R\;\oplus\;\R\;{\rm i})\;{\rm j}\;.
$$
where the relation ${\rm i}{\rm j}={\rm k}$ has been used.
Let $\bb{V}$ be a complex vector space of complex dimension $n$. One says  that $\bb{V}$ has a \emph{quaternionic} structure if there is an \emph{anti}-linear map $J:\bb{V}\to \bb{V}$ such that $J^2=-\n{1}$ (\cf \cite[Section 1]{vaisman-90} and references therein). 
A complex vector space $\bb{V}$ admits a {quaternionic structure} if and only if it has even complex dimension $n=2m$ and in this case the  pair $(\bb{V},J)$  turns out to be isomorphic to the space $\n{H}^m=(\C\oplus\C\;{\rm j})^m$ understood as \emph{left}-module over $\C$ and 
endowed with the \emph{left} multiplication by ${\rm j}$. Since ${\rm j}{\rm i}=-{\rm i}{\rm j}$ this multiplication is automatically \emph{anti}-linear with respect to the complex structure.
Said differently, we can
 identify $(\bb{V},J)$ with $\C^{2m}$ endowed with the \emph{standard} quaternionic structure ${\rm v}\mapsto Q\;\overline{\rm v}$ where $\overline{\rm v}$ is the complex conjugate of ${\rm v}$ and $Q$ denotes the real matrix
\beql{eq:Q-mat}
Q
\;:=\;
\left(
\begin{array}{rr|rr|rr}
0 & -1 &        &        &   &    \\
1 &  0 &        &        &   &    \\
\hline
  &    & \ddots &        &   &    \\
  &    &        & \ddots &   &    \\
\hline
  &    &        &        & 0 & -1 \\
  &    &        &        & 1 &  0
\end{array}
\right)\;.
\eeq
Let us recall that the quaternionic unitary group $\n{U}_{\n{H}}(m)$ coincides with the symplectic group $\n{S}p(m)$. Moreover, $\n{U}_{\n{H}}(m)$ has the following characterization: let $\mu:\n{U}(2m)\to\n{U}(2m)$ be the involution defined by $\mu(U):=-Q\overline{U}Q$, then $\n{U}_{\n{H}}(m)\simeq \n{U}(2m)^\mu$ where   $\n{U}(2m)^\mu$ is the set of fixed points under the action of  $\mu$. Finally $\n{U}(2m)^\mu\subset{\rm S}\n{U}(2m)$. Indeed if $\mu(U)=U$ and $U{\rm v}=\lambda{\rm v}$ for some $\lambda\in\C$ and ${\rm v}\in\C^{2m}$ then also $U(Q\overline{\rm v})=\overline{\lambda}(Q\overline{\rm v})$
and the vectors ${\rm v}$ and $Q\overline{\rm v}$ are linearly independent (even in the case $\lambda=\pm1$). Iterating this procedure one shows that the set of eigenvalues of $U$ is given by $m$ pairs $\{\lambda_i,\overline{\lambda_i}\}_{i=1,\ldots,m}$ such that $|\lambda_i|=1$ and this implies ${\rm det}(U)=1$.}\hfill $\blacktriangleleft$
\end{remark}

\noindent
The first consequence of Remark \ref{rk:quat_str} is: 

\begin{proposition}\label{prop:Qv0}
Let $(X,\tau)$ be an involutive and path-connected space. If $X^\tau\neq \emptyset$ then every $\rr{Q}$-bundle over $(X,\tau)$ necessarily  has even rank.
\end{proposition}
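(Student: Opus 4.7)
The plan is to reduce the statement to the linear-algebraic fact recalled in Remark 2.1, namely that a finite-dimensional complex vector space admits an anti-linear endomorphism squaring to $-\n{1}$ if and only if its complex dimension is even. Once this fact is granted, the proposition is essentially immediate, so the proof amounts to transporting the quaternionic structure from the total space down to a single fiber over a fixed point.

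First I would pick any base point $x_0 \in X^\tau$, which exists by the assumption $X^\tau \neq \emptyset$. The equivariance axiom $(\rr{Q}_1)$ gives $\pi(\Theta(p)) = \tau(\pi(p)) = \tau(x_0) = x_0$ for every $p \in \bb{E}_{x_0}$, so $\Theta$ restricts to a self-map $J := \Theta|_{\bb{E}_{x_0}} \colon \bb{E}_{x_0} \to \bb{E}_{x_0}$. By axiom $(\rr{Q}_2)$ this restriction is anti-linear with respect to the complex structure of the fiber, and by axiom $(\rr{Q}_3)$ it satisfies $J^2 = -\n{1}_{\bb{E}_{x_0}}$. Thus $(\bb{E}_{x_0}, J)$ is a complex vector space equipped with a quaternionic structure in the sense of Remark 2.1.

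Invoking the characterization of such structures recalled there, $\dim_\C \bb{E}_{x_0}$ must be even, say $2m$. (The underlying argument is the usual one: $J$ pairs each eigenspace with its complex conjugate via ${\rm v} \mapsto J{\rm v}$, and these partners are linearly independent even when the eigenvalues are real, so the complex dimension must be $2m$.) Finally, because the base space $X$ is path-connected and the rank of a complex vector bundle is a locally constant function on $X$, the rank is globally equal to $2m$, and the proposition follows.

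There is really no serious obstacle: the only non-formal ingredient is the classical structure theorem for quaternionic complex vector spaces, which has already been recorded in Remark 2.1. The sole point requiring mild care is to ensure that the fiber-dimension at $x_0$ coincides with the rank of $\bb{E}$, which is handled by path-connectedness together with the local triviality of $\bb{E}$.
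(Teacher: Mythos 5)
Your proof is correct and follows essentially the same route as the paper's: restrict $\Theta$ to the fiber over a fixed point to obtain a quaternionic structure, invoke the even-dimensionality from Remark \ref{rk:quat_str}, and use path-connectedness to propagate the rank. The paper states this in two sentences; your version merely spells out the verification of the axioms $(\rr{Q}_1)$--$(\rr{Q}_3)$ at the fixed fiber.
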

\proof
Fibers over fixed points needs an even complex dimension in order to support  a quaternionic structure. Moreover, if the base space $X$ is path-connected the dimension of the fibers is constant.
\qed
\begin{remark}[$\rr{Q}$-bundles of odd rank]\label{rk:quat_vb_od}{\upshape
In Proposition \ref{prop:Qv0} the condition $X^\tau\neq \emptyset$ can not be removed. In fact, if the base space $X$ is endowed with a free involution then it is possible to realize  $\rr{Q}$-bundles with fibers of odd rank.
For instance, an example of $\rr{Q}$-line bundle  has been worked out in \cite{dupont-69}:  Let $\Z_2 = \{ \pm 1 \}$  endowed with the free involution $\tau:\epsilon \mapsto -\epsilon$, $\epsilon\in\{\pm 1\}$. Then the complex line bundle $\bb{L} = \Z_2 \times \C$ gives rise to a \virg{Quaternionic} vector bundle by $(\epsilon,z)\mapsto (-\epsilon,\epsilon\overline{z})$.
This example, together with Proposition \ref{prop:Qv0}, shows that a base space with a free involution is a necessary condition for the construction of a $\rr{Q}$-bundle with odd fibers.
}\hfill $\blacktriangleleft$
\end{remark}

The set
 ${\rm Vec}_{\rr{Q}}^{2m}(X,\tau)$ is non-empty since it contains at least the \emph{\virg{Quaternionic} product  bundle} 
 $X\times\C^{2m}\to X$ endowed with the \emph{product} $\rr{Q}$-structure
$\Theta_0 (x,{\rm v})=(\tau(x),Q\;\overline{{\rm v}})$ 
where the matrix $Q$ is the same as in \eqref{eq:Q-mat}.
Moreover, as a consequence of  Proposition \ref{prop:Qv0}, this is   the only type of product  $\rr{Q}$-bundle which is possible to build if $X^\tau\neq \emptyset$.
We say that a $\rr{Q}$-bundle is \emph{$\rr{Q}$-trivial}  if it is isomorphic  to the product $\rr{Q}$-bundle in the category of $\rr{Q}$-bundles.
Since the Whitney sum of a rank $2$ product  $\rr{Q}$-bundle  defines a map ${\rm Vec}_{\rr{Q}}^m(X,\tau) \to {\rm Vec}_{\rr{Q}}^{m+2}(X,\tau)$ 
  one introduces the inductive limit ${\rm Vec}_{\rr{Q}}(X,\tau):=\bigcup_{m\in\N}{\rm Vec}_{\rr{Q}}^{2m}(X,\tau)$ which describes isomorphism classes of $\rr{Q}$-bundles over involutive spaces  with fixed points independently of  the (even) rank of the fibers.

\medskip

The name  of \virg{Quaternionic} vector bundles (in the category of involutive spaces) for elements in ${\rm Vec}_{\rr{Q}}(X,\tau)$
is justified by the following result:

\begin{proposition}\label{prop:Qv1.0}
Let ${\rm Vec}^m_{\n{H}}(X)$ be the set of equivalence classes of vector  bundles over $X$ with typical fiber $\n{H}^m$. Then,
$$
{\rm Vec}^m_{\n{H}}(X)\;\simeq\;{\rm Vec}^{2m}_{\rr{Q}}(X,{\rm Id}_X)\qquad\quad \forall\ m\in\N\;.
$$ 
\end{proposition}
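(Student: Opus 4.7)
The plan is to exhibit a natural pair of mutually inverse maps between ${\rm Vec}^m_{\n{H}}(X)$ and ${\rm Vec}^{2m}_{\rr{Q}}(X,{\rm Id}_X)$ and to check that they descend to isomorphism classes. The key observation is that when $\tau={\rm Id}_X$ every base point is fixed, so by Remark \ref{rk:quat_str} every fiber of a $\rr{Q}$-bundle inherits a genuine left $\n{H}$-module structure (with ${\rm j}$ acting by the fiberwise restriction of $\Theta$); conversely, left multiplication by ${\rm j}$ on an $\n{H}$-bundle is a fiberwise $\C$-antilinear involution of square $-\n{1}$.

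For the forward direction, given a quaternionic vector bundle $F\to X$ of rank $m$, I would restrict scalars along the inclusion $\C\hookrightarrow\n{H}$ to view $F$ as a complex vector bundle of rank $2m$, and define $\Theta:F\to F$ fiberwise as left multiplication by ${\rm j}$. The identities ${\rm j}\lambda=\overline{\lambda}\,{\rm j}$ (for $\lambda\in\C$) and ${\rm j}^2=-1$ immediately yield $(\rr{Q}_2)$ and $(\rr{Q}_3)$, while $(\rr{Q}_1)$ is vacuous since $\tau={\rm Id}_X$; continuity of $\Theta$ is clear in any local $\n{H}$-frame of $F$. For the backward direction, given $(\bb{E},\Theta)\in{\rm Vec}^{2m}_{\rr{Q}}(X,{\rm Id}_X)$, I would endow each fiber $\bb{E}_x$ with a left $\n{H}$-module structure by letting $\C$ act via the given complex structure and ${\rm j}$ act as $\Theta|_{\bb{E}_x}$. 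To promote this fiberwise structure to a locally trivial $\n{H}$-bundle I would fix a $\Theta$-compatible Hermitian metric (Proposition \ref{rk:eq_metric}) and arrange that the transition cocycle of $\bb{E}$ takes values in $\n{U}(2m)$ and commutes with the fiberwise expression of $\Theta$; the characterization $\n{U}_{\n{H}}(m)\simeq \n{U}(2m)^\mu$ recalled in Remark \ref{rk:quat_str} then reduces the structure group to $\n{S}p(m)\simeq\n{U}_{\n{H}}(m)$, which is precisely the assertion that $\bb{E}$ is locally $\n{H}$-trivial.

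Mutual inversion is clear on fibers (the two operations just interchange ``multiplication by ${\rm j}$'' and ``action of $\Theta$''), and a map between $\n{H}$-bundles is $\n{H}$-linear if and only if it is $\C$-linear and intertwines the two actions of ${\rm j}$, i.e.\ if and only if it is a $\rr{Q}$-morphism; hence the constructions induce mutually inverse bijections on isomorphism classes. The only genuinely non-formal step is the local-trivialization assertion in the backward direction: the fiberwise quaternionic structures vary continuously, but one still has to produce an $\n{S}p(m)$-reduction of the complex frame bundle of $\bb{E}$. This is where the equivariant Hermitian metric of Proposition \ref{rk:eq_metric} plays its role, reducing the statement to the linear-algebraic content of Remark \ref{rk:quat_str}.
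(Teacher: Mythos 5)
Your proposal is correct and follows essentially the same route as the paper's (sketched) proof: both directions are given by the fiberwise exchange between ``left multiplication by ${\rm j}$'' and ``action of $\Theta$'', and the two constructions are manifestly mutually inverse. The only point you elaborate beyond the paper's sketch is the local $\n{H}$-triviality in the backward direction, and your argument there is sound — it is exactly the content of Proposition \ref{prop:loc_triv} specialized to $\tau={\rm Id}_X$, whose equivariant transition functions land in $\n{U}(2m)^\mu\simeq\n{S}p(m)$ by Remark \ref{rk:quat_str}.
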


\proof[
sketch of]
Let $\bb{E}$ be an element of ${\rm Vec}^m_{\n{H}}(X)$. Each fiber of $\bb{E}_x$ is a \emph{left} $\n{H}$-module. Now,  if one considers $\bb{E}_x$ simply as a \emph{left} $\n{C}$-module endowed with an extra \emph{left} multiplication by ${\rm j}$ one obtains, by virtue of Remark \ref{rk:quat_str}, a map
 $\rr{c}:{\rm Vec}_{\n{H}}^m(X)\to{\rm Vec}_{\rr{Q}}^{2m}(X,{\rm Id}_X)$. On the other side, if  
 $\bb{E}$ is an element of ${\rm Vec}^{2m}_{\rr{Q}}(X,{\rm Id}_X)$ then each fiber $\bb{E}_x$ turns out to be a complex vector space of dimension $2m$ endowed with a quaternionic structure so that  $\bb{E}_x\simeq\n{H}^m$. This leads to a map $\rr{q}:{\rm Vec}^{2m}_{\rr{Q}}(X,{\rm Id}_X)\to{\rm Vec}^m_{\n{H}}(X)$.
By construction one verifies that  $\rr{c}$ and $\rr{q}$
are inverses of each other.
  \qed

\medskip

Given a \virg{Quaternionic} bundle $(\bb{E},\Theta)$ over the involutive space $(X,\tau)$ we can \virg{forget} the $\rr{Q}$-structure
and consider only the complex vector bundle $\bb{E}\to X$. This forgetting procedure goes through isomorphism classes. In fact,  a $\rr{Q}$-isomorphism between two $\rr{Q}$-bundles is, in particular, an isomorphism of complex vector bundles plus an extra condition of equivariance which is lost under the process of forgetting the \virg{Quaternionic} structure. 
\begin{proposition}\label{prop:Qv101}
The process of forgetting the \virg{Quaternionic} structure defines a map
$$
\jmath\;:\;{\rm Vec}^m_{\rr{Q}}(X,\tau)\;\longrightarrow \; {\rm Vec}^m_{\C}(X)
$$
such that $\jmath:[0]\to[0]$ where  $[0]$ denotes the  trivial class in the appropriate category.
\end{proposition}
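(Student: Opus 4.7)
The statement has two parts: that $\jmath$ is well-defined on isomorphism classes, and that it sends the distinguished trivial class to the distinguished trivial class.

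For well-definedness, the plan is to observe that a $\rr{Q}$-morphism $f : (\bb{E},\Theta) \to (\bb{E}',\Theta')$ is by definition a complex vector bundle morphism satisfying the additional intertwining relation $f \circ \Theta = \Theta' \circ f$. If we simply discard this intertwining relation, $f$ is still fiber-preserving and fiberwise $\C$-linear; if it is moreover a $\rr{Q}$-isomorphism, it is in particular a homeomorphism, hence an isomorphism in ${\rm Vec}^m_{\C}(X)$. Therefore the assignment $(\bb{E},\Theta) \mapsto \bb{E}$ descends to a map $\jmath : {\rm Vec}^m_{\rr{Q}}(X,\tau) \to {\rm Vec}^m_{\C}(X)$ on equivalence classes.

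For the statement $\jmath : [0] \mapsto [0]$, I would unpack the definitions: the trivial class on the left is represented by the product $\rr{Q}$-bundle $X \times \C^{m} \to X$ with involution $\Theta_0(x,{\rm v}) = (\tau(x), Q\,\overline{\rm v})$ (where $m=2k$ is even in the case $X^\tau \neq \emptyset$, according to Proposition \ref{prop:Qv0}, and $Q$ is the matrix of \eqref{eq:Q-mat}), while the trivial class on the right is represented by $X \times \C^m$ with its canonical projection. Forgetting $\Theta_0$ leaves precisely the product complex vector bundle, so the two classes coincide.

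The argument is essentially tautological, so there is no real obstacle; the only point worth emphasizing is that the definition of $\rr{Q}$-morphism builds in $\C$-linearity on fibers, so no work is needed to produce a $\C$-vector bundle map from a $\rr{Q}$-map. A remark can be added that $\jmath$ need \emph{not} be injective, since distinct $\rr{Q}$-structures on the same underlying complex bundle can yield non-isomorphic $\rr{Q}$-bundles — this is indeed the phenomenon measured by the FKMM-invariant studied in the subsequent sections.
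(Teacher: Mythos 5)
Your proposal is correct and matches the paper's own (implicit) justification: the paper simply observes, in the paragraph preceding the proposition, that a $\rr{Q}$-isomorphism is in particular an isomorphism of complex vector bundles, so the forgetting procedure descends to isomorphism classes, and the trivial-to-trivial claim is immediate from the definition of the product $\rr{Q}$-bundle. Your added remark on non-injectivity is consistent with the paper's subsequent development of the FKMM-invariant.
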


\subsection{\virg{Quaternionic} sections}\label{sect:Qs-vb_sect}
Let $\Gamma(\bb{E})$ be the set of  \emph{sections}
of a $\rr{Q}$-bundle $(\bb{E},\Theta)$ over the involute space $(X,\tau)$. We recall that a section $s$
is a continuous maps $s:X\to\bb{E}$ such that $\pi\circ s={\rm Id}_X$
where $\pi:\bb{E}\to X$ is the bundle projection.
 The set $\Gamma(\bb{E})$ has the structure of a module over the algebra $C(X)$  
and inherits from the \virg{Quaternionic} structure of $(\bb{E},\Theta)$  an \emph{anti}-linear \emph{anti}-involution $\tau_\Theta: \Gamma(\bb{E})\to \Gamma(\bb{E})$ defined by 
$$
\tau_\Theta(s)\;:=\;\Theta\;\circ\; s\;\circ\;\tau\;.
$$
This means that the $C(X)$-module $\Gamma(\bb{E})$
is endowed with a quaternionic structure (in the jargon of Remark \ref{rk:quat_str}) given by $\tau_\Theta$ and the left multiplication by ${\rm i}$.

\medskip

The product $\rr{Q}$-bundle $(X\times \C^{2m},\Theta_0)$ over   $(X, \tau)$ has a special family of sections $\{r_1,\ldots,r_{2m}\}$ given by $r_j:x\mapsto (x,{\rm e}_j)$ with ${\rm e}_j:=(0,\ldots,0,1,0,\ldots,0)$ the $j$-th vector of the canonical  basis of $\C^m$. 
These sections verify
\beql{eq:Q_sec0.1}
\tau_{\Theta_{0}}(r_j)(x)\;=\;\Theta_{0}\big(\tau(x),{\rm e}_j\big)\;=\;(x,Q\;\overline{{\rm e}}_j)=:\; (\bb{Q}\;r_j)(x)
\eeq
where we exploited the reality  of the canonical basis ${\rm e}_j=\overline{{\rm e}}_j$. The matrix $Q$ is the one defined in \eqref{eq:Q-mat} and  the {constant} endomorphism
$\bb{Q}\in\Gamma\big({\rm End}(X\times \C^{2m})\big)$
is   specified pointwise by the last equality in \eqref{eq:Q_sec0.1}. Let us point out that $\bb{Q}$ is invertible, real in the sense $\tau_{\Theta_{0}}\circ \bb{Q}=\bb{Q}\circ \tau_{\Theta_{0}}$ and
anti-involutive $\bb{Q}^2=-{\rm Id}$.
 Moreover, 
for all $x\in X$ the vectors $\{r_1(x),\ldots,r_{2m}(x)\}$ provide a complete basis for the fiber $\{x\}\times\C^{2m}$ over $x$. 

\medskip

For a  $\rr{Q}$-bundle $(\bb{E},\Theta)$ this kind of behavior is \emph{locally} general. Let $s_1\in\Gamma(\bb{E})$ and $\f{U}\subset X$ a $\tau$-invariant open set such that $s_1(x)\neq 0$ for all $x\in{\f{U}}$. Set $s_2:=\tau_\Theta(s_1)$ which implies $\tau_\Theta(s_2)=-s_1$.  Since $\Theta$ is a
homeomorphism  one has also $s_2(x)\neq 0$ for all $x\in{\f{U}}$. Moreover, it is easy to check that 
 $s_1$ and $s_2$ are linearly independent. In fact, if one assumes that 
 $s_1=\lambda s_2$ then the application of $\tau_\Theta$ to both sides provides $s_2=-\bar{\lambda} s_1$ which is possible if and only if $\lambda=0$. We say that $(s_1,s_2)$ is a
 {\virg{Quaternionic} pair} (or a \emph{Kramers pair} using a physical terminology!) over $\f{U}$.
 Now, let us add  a third section $s_3$ which is independent from $s_1$ and $s_2$ and such that $s_3(x)\neq 0$ for all $x\in{\f{U}}$. The section $s_4:=\tau_\Theta(s_3)$ does not vanish on $\f{U}$ and  is independent from $s_1,s_2,s_3$. The last claim
 can be easily proved along the same strategy used for the independence between $s_1$ and $s_2$. If $\bb{E}$ has rank $2m$
we can iterate this procedure $m-1$ times  and we end up with a family of sections $\{s_1,s_2,s_3,s_4\ldots,s_{2m-1},s_{2m}\}$  which are
 independent and non-zero over $\f{U}$ and that verify the relations 
$s_{2j}:=\tau_\Theta(s_{2j-1})$ for all $j=1,\ldots,m$. In other words, we have realized
 a frame for $\bb{E}|_{\f{U}}$ made by \virg{Quaternionic} pairs $(s_{2j-1},s_{2j})$. This implies also 
$\tau_\Theta(s_j)=\bb{Q}s_j$ where the endomorphism
$\bb{Q}\in\Gamma\big({\rm End}(\bb{E}|_{\f{U}})\big)$ acts as the multiplication by the matrix $Q$ with respect to the local basis $\{s_1,s_2,\ldots,s_{2m-1},s_{2m}\}$. This discussion justifies the following definition:
\begin{definition}[Global $\rr{Q}$-frame]
\label{def:Q_pair}
Let $(\bb{E},\Theta)$ be a \virg{Quaternionic} vector bundle of rank $2m$ over the involutive space $(X,\tau)$. We say that $(\bb{E},\Theta)$ admits a global $\rr{Q}$-frame if there is a collection of sections $\{s_1,s_2,s_3,s_4\ldots,s_{2m-1},s_{2m}\}\subset\Gamma(\bb{E})$ such that:
\begin{enumerate}
\item[(a)] For each $x\in X$ the set of vectors $\{s_1(x),s_2(x),\ldots,s_{2m-1}(x),s_{2m}(x)\}$ spans the fiber $\bb{E}_x$ over $x$;
\item[(b)] $s_{2j}=\tau_\Theta(s_{2j-1})$ for all $j=1,\ldots,m$.
\end{enumerate}
\end{definition}

\noindent
The existence of a global $\rr{Q}$-frame characterizes the $\rr{Q}$-triviality of a \virg{Quaternionic} vector bundle.

\begin{theorem}[$\rr{Q}$-triviality]
\label{theo:triv_glob}
An even rank $\rr{Q}$-bundle $(\bb{E},\Theta)$ over $(X,\tau)$   is $\rr{Q}$-trivial if and only if it  admits a global $\rr{Q}$-frame.
\end{theorem}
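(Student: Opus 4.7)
The proof is a standard \emph{frame versus trivialization} argument, adapted to the \virg{Quaternionic} setting. The idea is that a global $\rr{Q}$-frame is nothing but the image, under a $\rr{Q}$-isomorphism, of the canonical constant frame $\{r_1,\ldots,r_{2m}\}$ of the product $\rr{Q}$-bundle defined in equation \eqref{eq:Q_sec0.1}.

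\smallskip

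For the easy direction, suppose $(\bb{E},\Theta)$ is $\rr{Q}$-trivial, so there exists a $\rr{Q}$-isomorphism $\Phi\colon X\times\C^{2m}\to\bb{E}$ intertwining $\Theta_0$ and $\Theta$. I would simply set $s_j:=\Phi\circ r_j$ for $j=1,\ldots,2m$. Condition (a) follows because $\Phi$ restricts to a linear isomorphism on each fiber and the $r_j(x)$ form a basis. For condition (b), $\rr{Q}$-equivariance of $\Phi$ gives $\tau_\Theta(s_{2j-1})=\Theta\circ s_{2j-1}\circ\tau=\Phi\circ\tau_{\Theta_0}(r_{2j-1})$, and \eqref{eq:Q_sec0.1} computes $\tau_{\Theta_0}(r_{2j-1})=r_{2j}$, yielding $\tau_\Theta(s_{2j-1})=s_{2j}$.

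\smallskip

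For the converse, assume a global $\rr{Q}$-frame $\{s_1,\ldots,s_{2m}\}$ is given. The plan is to define the canonical frame map
\[
f\colon X\times\C^{2m}\;\longrightarrow\;\bb{E}\;,\qquad f\big(x,(v_1,\ldots,v_{2m})\big)\;:=\;\sum_{j=1}^{2m}v_j\,s_j(x)\;,
\]
and prove it is a $\rr{Q}$-isomorphism. Continuity and fiber preservation are immediate. That $f$ restricts to a linear isomorphism on each fiber follows from condition (a) of Definition \ref{def:Q_pair}, so $f$ is a morphism of complex vector bundles. The only non-trivial point is the $\rr{Q}$-equivariance $f\circ\Theta_0=\Theta\circ f$. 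First I would observe that the relation (b), together with the identity $\tau_\Theta^2=-\mathrm{Id}$ established in Section \ref{sect:Qs-vb_sect}, forces
\[
\tau_\Theta(s_{2j-1})\;=\;s_{2j}\;,\qquad \tau_\Theta(s_{2j})\;=\;-s_{2j-1}\;,\qquad j=1,\ldots,m\;.
\]
Unwinding the definition of $\tau_\Theta$, these read $\Theta\circ s_{2j-1}=s_{2j}\circ\tau$ and $\Theta\circ s_{2j}=-s_{2j-1}\circ\tau$. Substituting into $\Theta(f(x,v))=\sum_j\overline{v_j}\,\Theta(s_j(x))$ and comparing with $f(\Theta_0(x,v))=\sum_j(Q\overline{v})_j\,s_j(\tau(x))$ — where the block form of $Q$ in \eqref{eq:Q-mat} produces precisely $(Q\overline{v})_{2j-1}=-\overline{v_{2j}}$ and $(Q\overline{v})_{2j}=\overline{v_{2j-1}}$ — the two expressions match term by term.

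\smallskip

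The argument is essentially purely linear-algebraic once the symmetry relations among the $s_j$ are in place; the only point requiring attention is the correct \textbf{sign bookkeeping} induced by $\tau_\Theta^2=-\mathrm{Id}$, which is exactly what makes the block $\left(\begin{smallmatrix}0&-1\\1&0\end{smallmatrix}\right)$ of $Q$ (rather than an identity block) the right intertwiner. No global obstruction arises, so no further topological input beyond the definitions is needed.
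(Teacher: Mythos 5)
Your proof is correct and follows essentially the same route as the paper: the easy direction transports the canonical frame $\{r_j\}$ through the trivialization, and the converse builds the bundle map from the frame via $f(x,{\rm e}_j)=s_j(x)$ and checks equivariance against $\Theta_0$. Your explicit sign bookkeeping (deriving $\tau_\Theta(s_{2j})=-s_{2j-1}$ from $\tau_\Theta^2=-\mathrm{Id}$ and matching it to the blocks of $Q$) is exactly the computation the paper leaves implicit by verifying equivariance only on the odd-indexed basis vectors.
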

\proof
Let us assume that  $(\bb{E},\Theta)$ is $\rr{Q}$-trivial. This means that there is a $\rr{Q}$-isomorphism  $h:X\times \C^{2m}\to\bb{E}$ between $(\bb{E},\Theta)$ and the product $\rr{Q}$-bundle $(X\times \C^{2m},\Theta_0)$. Let us define sections  $s_j\in \Gamma(\bb{E})$ by $s_j:=h\circ r_j$ where $\{r_1,r_2,\ldots,r_{2m-1},r_{2m}\}$ is  the global $\rr{Q}$-frame of the product bundle. The fact that $h$ is an isomorphism implies that $\{{s}_1,s_2,\ldots,s_{2m-1},{s}_m\}$ spans each fiber of $\bb{E}$. Moreover, the equivariance condition $\Theta\circ h=h\circ \Theta_{0}$ implies that
$
\tau_{\Theta}(s_{2j-1})=\Theta\circ h\circ r_{2j-1}\circ \tau=h\circ\tau_{\Theta_0}(r_{2j-1})=h\circ r_{2j}=s_{2j}
$
and this shows that $\{{s}_1,s_2,\ldots,s_{2m-1},{s}_m\}$ is a global $\rr{Q}$-frame.

Conversely, let us assume that $(\bb{E},\Theta)$ has a  global $\rr{Q}$-frame $\{{s}_1,s_2,\ldots,s_{2m-1},{s}_{2m}\}$. For each $x\in X$ we can set the linear isomorphism $h_x:\{x\}\times\C^{2m}\to \bb{E}_x$ defined by $h_x(x,{\rm e}_j):=s_j(x)$. The collection of $h_x$ defines an isomorphism $h:X\times\C^{2m}\to \bb{E}$ between complex vector bundles \cite[Theorem 2.2]{milnor-stasheff-74}. Moreover, 
$\Theta\circ h(x,{\rm e}_{2j-1})=\Theta\circ s_{2j-1}(x)=s_{2j}(\tau(x))=h(\tau(x),{\rm e}_{2j})=h\circ\Theta_{0}(x,{\rm e}_{2j-1})$ for all $x\in X$
and all $j=1,\ldots,m$ and this proves that $h$ is a $\rr{Q}$-isomorphism.
\qed

\subsection{Local $\rr{Q}$-triviality}\label{sect:Q-vb_loc_triv}

A $\rr{Q}$-bundle is locally trivial in the category  of complex vector bundles by definition. Less obvious is that  a $\rr{Q}$-bundle is also locally trivial in the category of vector bundles over an involutive space.
In order to discuss this point we start with a classical result.

\begin{lemma}[Extension]\label{lemma:ext}
Let $(X,\tau)$ be an involutive space and assume that $X$ verifies (at least) condition 0 of Definition \ref{ass:2}. Let
 $(\bb{E},\Theta)$  be a $\rr{Q}$-bundle over $(X,\tau)$ and $Y\subset X$  a closed subset such that $\tau(Y)=Y$.
 Then each  $\rr{Q}$-pair $\{\tilde{s}_1,\tilde{s}_2\}\in\Gamma(\bb{E}|_Y)$ extends to a $\rr{Q}$-pair $\{s_1,s_2\}\subset\Gamma(\bb{E})$.
\end{lemma}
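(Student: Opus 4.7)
The proof I would write splits cleanly into two steps: a purely classical extension step (forgetting the $\rr{Q}$-structure) and a symmetrization step that promotes the extension into a $\rr{Q}$-pair. The key observation is that once an extension of the single section $\tilde{s}_1$ has been produced, its $\rr{Q}$-partner is forced by the identity $s_2=\tau_\Theta(s_1)$, and no further extension is required.

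\textbf{Step 1 (Classical extension).} I would appeal to the standard extension theorem for sections of a vector bundle over a closed subset of a normal space: since $X$ is a $\Z_2$-CW complex (condition 0 of Definition \ref{ass:2}), it is a paracompact Hausdorff space, and the closed subspace $Y\subset X$ admits an open $\tau$-invariant neighborhood on which $\bb{E}$ can be trivialized piecewise. Choosing a locally finite cover of $X$ by trivializing charts together with a subordinate partition of unity, one applies Tietze's extension theorem coordinate-wise on each chart to extend $\tilde{s}_1\in\Gamma(\bb{E}|_Y)$ to a continuous global section $s_1\in\Gamma(\bb{E})$ with $s_1|_Y=\tilde{s}_1$. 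At this point only the complex vector bundle structure of $\bb{E}$ has been used; no equivariance is enforced.

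\textbf{Step 2 (Recovering the $\rr{Q}$-pair).} Define $s_2:=\tau_\Theta(s_1)=\Theta\circ s_1\circ\tau\in\Gamma(\bb{E})$. By construction the pair $\{s_1,s_2\}$ satisfies the $\rr{Q}$-pair relation $s_2=\tau_\Theta(s_1)$ globally on $X$. It remains to check that $s_2|_Y=\tilde{s}_2$. For any $y\in Y$, the hypothesis $\tau(Y)=Y$ gives $\tau(y)\in Y$, so
\[
s_2(y)\;=\;\Theta\bigl(s_1(\tau(y))\bigr)\;=\;\Theta\bigl(\tilde{s}_1(\tau(y))\bigr)\;=\;\tau_\Theta(\tilde{s}_1)(y)\;=\;\tilde{s}_2(y),
\]
where the last equality uses that $\{\tilde{s}_1,\tilde{s}_2\}$ is a $\rr{Q}$-pair on $Y$. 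This verifies that $\{s_1,s_2\}$ is a global $\rr{Q}$-pair extending the given one.

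The only nontrivial point is the classical extension in Step 1, and even there nothing beyond paracompactness and the existence of local trivializations is needed; the $\Z_2$-action plays no role. The ``equivariant'' content of the lemma is handled for free by the symmetrization formula $s_2:=\tau_\Theta(s_1)$, which succeeds precisely because we are asked to extend \emph{two} sections related by $\tau_\Theta$ rather than a single $\tau_\Theta$-invariant object (for which the involutive version $\Theta^2=-\n{1}$ would obstruct a naive averaging). Thus I do not anticipate any real obstacle; the main care is simply to verify that the restriction of $\tau_\Theta$ to $Y$ coincides with the operation $\Theta\circ(\cdot)\circ\tau$ defined on global sections, which is immediate from $\tau(Y)=Y$.
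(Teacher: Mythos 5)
Your proof is correct and follows essentially the same route as the paper: extend $\tilde{s}_1$ by the classical (non-equivariant) extension theorem and then define $s_2:=\tau_\Theta(s_1)$, checking $s_2|_Y=\tilde{s}_2$ via $\tau(Y)=Y$. The only cosmetic difference is that the paper simply cites \cite[Lemma 1.1]{atiyah-bott-64} for the classical extension step, whereas you unfold it with Tietze and a partition of unity.
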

\proof
By definition of $\rr{Q}$-pair one has $\tilde{s}_2=\tau_\Theta(\tilde{s}_1)$.
Using  \cite[Lemma 1.1]{atiyah-bott-64} we know that we can extend 
$\tilde{s}_1$ to a section $s_1\in\Gamma(\bb{E})$. By setting $s_2:=\tau_\Theta({s}_1)$ one has that $(s_1,s_2)$ is a $\rr{Q}$-pair and
$\left.s_2\right|_Y=\tilde{s}_2$.
\qed

\begin{proposition}[Local $\rr{Q}$-triviality]
\label{prop:loc_triv}
Let $(\bb{E},\Theta)$ be a $\rr{Q}$-bundle of even rank
over the involutive space $(X,\tau)$ such that $X$  verifies (at least) condition 0 of Definition \ref{ass:2}.
 Then, $\pi:\bb{E}\to X$ is \emph{locally $\rr{Q}$-trivial} meaning that for all $x\in X$ there exists a $\tau$-invariant neighborhood $\f{U}$ of $x$ and a $\rr{Q}$-isomorphism $h:\pi^{-1}(\f{U})\to \f{U}\times\C^{2m}$ 
 with respect to the trivial $\rr{Q}$-structure on the product bundle $\f{U}\times\C^{2m}$
 given by
$\Theta_0 (x,{\rm v})=(\tau(x),Q\;\overline{{\rm v}})$ 
  (the matrix $Q$ is defined by \eqref{eq:Q-mat}). Moreover, if $x\in X^\tau$  the neighborhood $\f{U}$  can be chosen connected otherwise,  when $x\neq\tau(x)$,   $\f{U}$ can be taken as the union of two disjoint open sets $\f{U}:=\f{U}'\cup\f{U}''$ with $x\in \f{U}'$
and $\tau:\f{U}'\to \f{U}''$ an homeomorphism.
\end{proposition}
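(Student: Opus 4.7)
\medskip
\noindent
\textbf{Proof plan.} The strategy is to construct, in a $\tau$-invariant neighborhood of $x$, a local $\rr{Q}$-frame and then invoke (the local version of) Theorem \ref{theo:triv_glob} to convert this frame into a $\rr{Q}$-isomorphism with the product $\rr{Q}$-bundle. Two cases are treated separately according to whether $x$ is fixed or not.

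\medskip
\noindent
\emph{Case 1: free orbit, $x\neq\tau(x)$.} Using Hausdorffness of $X$, I would first pick disjoint open neighborhoods of $x$ and $\tau(x)$; shrinking the first of these to an open set $\f{U}'$ over which $\bb{E}$ is trivial as a complex vector bundle and replacing the second with $\f{U}'':=\tau(\f{U}')$, I may assume $\f{U}'\cap\f{U}''=\emptyset$. Let $\{\alpha_1,\dots,\alpha_{2m}\}$ be a complex frame of $\bb{E}|_{\f{U}'}$. Because $\tau$ maps $\f{U}''$ bijectively onto $\f{U}'$ and these two open sets are disjoint, I can freely prescribe sections on $\f{U}'$ and let the $\rr{Q}$-pair condition propagate them to $\f{U}''$. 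Concretely, on $\f{U}=\f{U}'\sqcup\f{U}''$ I set $s_{2k-1}|_{\f{U}'}:=\alpha_{2k-1}$, $s_{2k}|_{\f{U}'}:=\alpha_{2k}$ and define on $\f{U}''$
\[
s_{2k}(y)\;:=\;\Theta\bigl(\alpha_{2k-1}(\tau(y))\bigr),\qquad s_{2k-1}(y)\;:=\;-\Theta\bigl(\alpha_{2k}(\tau(y))\bigr),
\]
the last sign being dictated by $\Theta^2=-\n{1}$. Since $\Theta$ is a fibrewise anti-linear isomorphism and $\tau$ is a homeomorphism, these sections form a frame on $\f{U}''$; overall $\{s_1,\dots,s_{2m}\}$ is a global $\rr{Q}$-frame on $\f{U}$, and Theorem \ref{theo:triv_glob} applied to $(\bb{E}|_{\f{U}},\Theta|_{\f{U}})$ produces the desired $\rr{Q}$-trivialization.

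\medskip
\noindent
\emph{Case 2: fixed point, $x\in X^\tau$.} Here the fibre $\bb{E}_x$ is a $2m$-dimensional complex vector space endowed with the quaternionic structure $J:=\Theta|_{\bb{E}_x}$, so by Remark \ref{rk:quat_str} I can choose vectors $v_1,v_3,\dots,v_{2m-1}\in\bb{E}_x$ such that $\{v_{2k-1},Jv_{2k-1}\}_{k=1}^{m}$ is a $\C$-basis of $\bb{E}_x$. For each $k$ the pair of \virg{sections} $\tilde s_{2k-1}(x):=v_{2k-1}$ and $\tilde s_{2k}(x):=Jv_{2k-1}$ defined on the closed $\tau$-invariant set $Y=\{x\}$ forms a $\rr{Q}$-pair of $\bb{E}|_Y$. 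Lemma \ref{lemma:ext} then extends each of these to a $\rr{Q}$-pair $(s_{2k-1},s_{2k})\in\Gamma(\bb{E})$ globally defined on $X$.

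\medskip
\noindent
The collection $\{s_1,\dots,s_{2m}\}$ spans $\bb{E}_x$ by construction; since linear independence is an open condition (the determinant of their coordinates in any local complex trivialization is non-zero at $x$ and therefore on a neighborhood), there is an open set $V\ni x$ on which $\{s_1,\dots,s_{2m}\}$ is a frame. I then set $\f{U}_0:=V\cap\tau(V)$, which is a $\tau$-invariant open neighborhood of $x$ (it contains $x=\tau(x)$), and take $\f{U}$ to be the connected component of $x$ in $\f{U}_0$; since $\tau$ permutes connected components and fixes $x$, the component through $x$ is $\tau$-invariant. On $\f{U}$ the restrictions of the $s_j$'s provide a global $\rr{Q}$-frame, and a final application of Theorem \ref{theo:triv_glob} to $(\bb{E}|_{\f{U}},\Theta|_{\f{U}})$ yields the required $\rr{Q}$-isomorphism $h:\pi^{-1}(\f{U})\to\f{U}\times\C^{2m}$. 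The main point of attention is not any single step but the coordination at the fixed point: one must produce sections that are simultaneously defined on a $\tau$-invariant neighborhood, connected, and linearly independent there; the double trick \emph{extend via Lemma \ref{lemma:ext}, then intersect $V$ with $\tau(V)$ and pass to the connected component of $x$} resolves all three constraints at once.
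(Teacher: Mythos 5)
Your proof is correct and follows essentially the same strategy as the paper's: construct a local $\rr{Q}$-frame near $x$ (at a fixed point via Remark \ref{rk:quat_str} plus the extension Lemma \ref{lemma:ext}, then openness of ${\rm GL}_{2m}(\C)$ and the intersection $V\cap\tau(V)$ for invariance) and convert it into a trivialization with Theorem \ref{theo:triv_glob}. The only divergence is in the free-orbit case, where you transport an arbitrary complex frame from $\f{U}'$ to $\f{U}''=\tau(\f{U}')$ directly by $\Theta$ instead of extending a $\rr{Q}$-frame from the two-point orbit $\{x,\tau(x)\}$ as the paper does; your variant is slightly more explicit and equally valid.
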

\proof
This proof is an adaption  of the argument in \cite[Lemma 1.2]{atiyah-bott-64} and of the discussion in  \cite[pg. 374]{atiyah-66}.
Let us start with the case of a $x\in X^\tau$. On  the fiber $\bb{E}_x\simeq\C^{2m}$  the procedure described in Section \ref{sect:Qs-vb_sect} leads to a basis of vectors $\{s_1(x),s_2(x),\ldots,s_{2m-1}(x),s_{2m}(x)\}$ such that each $(s_{2j-1}(x),s_{2j}(x))$, $j=1,\ldots,m$, is a $\rr{Q}$-pair with respect to $\tau_\Theta$. By the extension Lemma \ref{lemma:ext} we can extend these vectors to a family of sections $\{s_1,s_2,\ldots,s_{2m-1},s_{2m}\}\subset\Gamma(\bb{E})$ formed by $\rr{Q}$-pairs $(s_{2j-1},s_{2j})$. Moreover, there exists an open 
neighborhood $\f{U'}$ of $x$ where this family of sections behaves as a global frame  for $\bb{E}|_{\f{U'}}$ (this is a consequence of the fact that the linear group ${\rm GL}_{2m}(\C)$ is open). In order to have a $\tau$-invariant neighborhood it is enough to consider $\f{U}:=\f{U'}\cap\tau(\f{U'})$. Moreover, since $X$ is assumed to be connected,  $\f{U}$ is at least locally connected and so it can be chosen sufficiently small to be connected around the fixed point $x$.
The  family of sections $\{s_1,s_2,\ldots,s_{2m-1},s_{2m}\}$ provides a global $\rr{Q}$-frame for $\bb{E}|_{\f{U}}$, hence Theorem \ref{theo:triv_glob}
assures the existence of a $\rr{Q}$-isomorphism $h$ between 
$\bb{E}|_{\f{U}}=\pi^{-1}(\f{U})$ and the product $\rr{Q}$-bundle $\f{U}\times\C^{2m}$.

In the case $x\neq\tau(x)$ we can start with the construction of a global $\rr{Q}$-frame defined on the closed set $Y:=\{x,\tau(x)\}$.
To do this let us consider a nowhere vanishing section $\tilde{s}_1\in\Gamma(\bb{E}|_Y)$ defined by a pair of non-zero vectors $\tilde{s}_1(x)\in \bb{E}_x$ and $\tilde{s}_1(\tau(x))\in\bb{E}_{\tau(x)}$. The section $\tilde{s}_2:=\tau_\Theta(\tilde{s}_1)$ is a well defined nowhere vanishing element of 
$\Gamma(\bb{E}|_Y)$ which is independent of $\tilde{s}_1$ for the same argument sketched in Section \ref{sect:Qs-vb_sect}.
After iterating the procedure $m-1$ times by adding at each step a new independent $\rr{Q}$-pair, one ends up eventually with a global $\rr{Q}$-frame for $\bb{E}|_Y$. At this point the proof  proceeds exactly as before: the extension Lemma \ref{lemma:ext}  and a continuity argument assure the existence of a $\tau$-invariant neighborhood $\f{U}\supset Y$ such that $\bb{E}|_{\f{U}}$ admits a global $\rr{Q}$-frame.  Clearly  we can  choose $\f{U}$ sufficiently localized around $x$ and $\tau(x)$ in such a way that the requirements in the claim are fulfilled.\qed

\medskip

One of the consequences of Proposition \ref{prop:loc_triv} is  that it allows us to define $\tau$-invariant \emph{partitions of unity} subordinate to  a given  $\tau$-invariant  covering $\{\f{U}_i\}$
associated with a $\rr{Q}$-trivialization of the $\rr{Q}$-bundle;  $\tau$-invariant {partitions of unity} can be then used to build \emph{equivariant} Hermitian  metrics compatible with the \virg{Quaternionic} structure (\cf \cite[Remark 4.11]{denittis-gomi-14}). The same result can be achieved also with a direct average of the metric. Let $(\bb{E},\Theta)$ be a $\rr{Q}$-bundle over the involutive space $(X,\tau)$ and consider the set $\bb{E}\times_X \bb{E}:=\{(p_1,p_2)\in\bb{E}\times \bb{E}\ |\ \pi(p_1)=\pi(p_2)\}$
associated with the underlying complex vector bundle $\pi:\bb{E}\to X$. A Hermitian  metric is a map $\rr{m}':\bb{E}\times_X \bb{E}\to\C$ which is a positive-definite Hermitian form on each fiber.
By a standard result \cite[Chapter I, Theorem 8.7]{karoubi-97} we know that each complex vector bundle over a paracompact base space admits a 
Hermitian  metric. Moreover if $\rr{m}'$ and $\rr{m}''$ are two different Hermitian  metrics for $\pi:\bb{E}\to X$ there exists an isomorphism $f:\bb{E}\to\bb{E}$ such that $\rr{m}'(p_1,p_2)=\rr{m}''(f(p_1),f(p_2))$ \cite[Chapter I, Theorem 8.8]{karoubi-97}. This mans that the choice of a Hermitian  metric is essentially unique.
A Hermitian  metric compatible with the \virg{Quaternionic} structure must verify the condition $\rr{m}(\Theta(p_1),\Theta(p_2))=\rr{m}(p_2,p_1)$ for all $(p_1,p_2)\in\bb{E}\times_X \bb{E}$. Such a metric is called \emph{equivariant}. With respect to an equivariant metric the involution $\Theta$ acts as an \virg{anti-unitary} map. The existence of an equivariant metric $\rr{m}$
follows directly from the existence of any Hermitian  metric $\rr{m}'$ by means of the average procedure 
$$
\rr{m}\big(p_1,p_2\big)\;:=\;\frac{1}{2}\Big[\rr{m}'\big(p_1,p_2\big)\;+\rr{m}'\big(\Theta(p_2),\Theta(p_1)\big)\;\Big]\;,\qquad\quad (p_1,p_2)\in\bb{E}\times_X \bb{E}\;.
$$
Also in this case an equivariant generalization of \cite[Chapter I, Theorem 8.8]{karoubi-97} assures that two equivariant metrics for the 
$\rr{Q}$-bundle $(\bb{E},\Theta)$ are related  by a $\rr{Q}$-isomorphism $f:\bb{E}\to\bb{E}$. Summarizing one has:

\begin{proposition}[Equivariant metric]
\label{rk:eq_metric}
Each $\rr{Q}$-bundle $(\bb{E},\Theta)$ 
over an involutive space $(X,\tau)$ 
such that $X$  verifies (at least) condition 0 of Definition \ref{ass:2}
 admits an equivariant Hermitian  metric which is essentially  unique up to $\rr{Q}$-isomorphisms.
\end{proposition}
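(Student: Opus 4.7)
The proposition has two parts — existence and uniqueness of an equivariant Hermitian metric — and the body of text immediately preceding the statement essentially dictates the strategy. I would split the argument accordingly.

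For \textbf{existence}, I would start from any Hermitian metric $\rr{m}'$ on the underlying complex vector bundle $\bb{E}\to X$, whose existence is guaranteed by \cite[Chapter I, Theorem 8.7]{karoubi-97} since $X$ is paracompact (this is part of condition 0.\ in Definition \ref{ass:2}). I would then define the averaged form
$$
\rr{m}(p_1,p_2)\;:=\;\tfrac12\bigl[\rr{m}'(p_1,p_2)\;+\;\rr{m}'(\Theta(p_2),\Theta(p_1))\bigr]\,,\qquad (p_1,p_2)\in \bb{E}\times_X\bb{E}\,,
$$
and verify the four required properties on each fiber: (a) complex linearity in the first slot, using that $\Theta$ is anti-linear so that the conjugation introduced by $\Theta$ in the second summand is compensated by the conjugate linearity of $\rr{m}'$ in the second slot; (b) Hermitian symmetry $\overline{\rr{m}(p_2,p_1)}=\rr{m}(p_1,p_2)$, which follows from a direct computation; (c) positivity $\rr{m}(p,p)=\tfrac12[\rr{m}'(p,p)+\rr{m}'(\Theta p,\Theta p)]\geqslant 0$, strict unless $p=0$ (use that $\Theta$ is a homeomorphism, so $\Theta p=0\iff p=0$); (d) equivariance
$$
\rr{m}(\Theta p_1,\Theta p_2)\;=\;\tfrac12\bigl[\rr{m}'(\Theta p_1,\Theta p_2)+\rr{m}'(\Theta^2 p_2,\Theta^2 p_1)\bigr]\;=\;\tfrac12\bigl[\rr{m}'(\Theta p_1,\Theta p_2)+\rr{m}'(p_2,p_1)\bigr]\;=\;\rr{m}(p_2,p_1)\,,
$$
where I used $\Theta^2=-\n{1}$ to kill the pair of minus signs in the second summand.

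For \textbf{uniqueness}, I would follow the classical Karoubi argument in its $\rr{Q}$-equivariant form. Given two equivariant Hermitian metrics $\rr{m}_1,\rr{m}_2$, the non-degeneracy of $\rr{m}_1$ produces a unique bundle endomorphism $A\in\Gamma(\mathrm{End}(\bb{E}))$ characterized by $\rr{m}_2(p_1,p_2)=\rr{m}_1(Ap_1,p_2)$, which is pointwise self-adjoint and positive with respect to $\rr{m}_1$. The new ingredient with respect to \cite[Chapter I, Theorem 8.8]{karoubi-97} is to check that $A$ commutes with $\Theta$: this follows from the equivariance of both $\rr{m}_1$ and $\rr{m}_2$ together with the defining identity for $A$, by a short symbolic manipulation using $\Theta^{-1}=-\Theta$. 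Once this is known, the positive square root $A^{1/2}$ (defined fiberwise by the continuous functional calculus) is again a self-adjoint positive endomorphism that commutes with $\Theta$, since the functional calculus is preserved by $\rr{Q}$-equivariance; hence $f:=A^{1/2}$ is a $\rr{Q}$-isomorphism of $(\bb{E},\Theta)$ satisfying $\rr{m}_2(p_1,p_2)=\rr{m}_1(f(p_1),f(p_2))$.

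The main obstacle, and the only non-routine step, is the verification that $A$ commutes with $\Theta$, and that $\rr{Q}$-equivariance passes through the positive square root. The first point requires being careful with the fact that $\Theta$ is anti-linear while $A$ is complex linear, so the computation involves both the Hermitian symmetry of the metrics and the relation $\Theta^2=-\n{1}$; the second point is transparent once one observes that any polynomial in $A$ with real coefficients commutes with $\Theta$, and uniform approximation of $\sqrt{\cdot}$ on the (compact) spectrum of $A$ by such polynomials extends the property to $A^{1/2}$.
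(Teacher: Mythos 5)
Your proposal follows the same route as the paper: existence via the identical averaging formula applied to an arbitrary Hermitian metric (whose existence comes from paracompactness and \cite[Chapter I, Theorem 8.7]{karoubi-97}), and uniqueness via the equivariant version of \cite[Chapter I, Theorem 8.8]{karoubi-97}. The paper merely asserts that this equivariant generalization holds, whereas you correctly supply the missing details --- the commutation of $A$, and hence of $A^{1/2}$ via real-coefficient polynomial approximation, with the anti-linear map $\Theta$ --- and your verifications are sound.
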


The main implication  of Proposition \ref{rk:eq_metric} is that the problem of the classification of \virg{Quaternionic}  vector bundles coincides with the problem of the classification of \virg{Quaternionic}  vector bundles
endowed with an equivariant Hermitian  metric.
For this reason, we tacitly assume hereafter  that:
 \begin{assumption}\label{ass:metric}
Each $\rr{Q}$-bundle is endowed with an equivariant Hermitian  metric and vector bundle maps 
between $\rr{Q}$-bundles are assumed to be metric-preserving (\ie isometries).
\end{assumption}
%

\subsection{Homotopy classification of \virg{Quaternionic} vector bundles}\label{sect:Q-vb_homotopy}
The assumption that the base space $X$ is compact    allows us to extend  usual homotopy properties valid  for complex vector bundles  to the category of $\rr{Q}$-bundles. Given two involutive spaces  $(X_1,\tau_1)$ and $(X_2,\tau_2)$ 
we say that a continuous map $\varphi:X_1\to X_2$ is  \emph{equivariant} if and only if $\varphi\circ\tau_1=\tau_2\circ \varphi$. An \emph{equivariant homotopy} between  equivariant maps $\varphi_0$ and $\varphi_1$ is a continuous map $F:[0,1]\times X_1\to X_2$ such that  $\varphi_t(\cdot):=F(t,\cdot)$ is  equivariant for all $t\in[0,1]$. The set of the equivalence classes of equivariant homotopic maps between $(X_1,\tau_1)$ and $(X_2,\tau_2)$ will be denoted by $[X_1,X_2]_{\Z_2}$.
\begin{theorem}[Homotopy property]
\label{theo:equi_homot1}
Let $(X_1,\tau_1)$ and $(X_2,\tau_2)$ be two involutive spaces with  $X_1$ verifies (at least) condition 0 of Definition \ref{ass:2}.  Let $(\bb{E},\Theta)$ be a  $\rr{Q}$-bundle of rank  $m$ over $(X_2,\tau_2)$ and $F:[0,1]\times X_1\to X_2$ an equivariant homotopy between the equivariant maps $\varphi_0$ and $\varphi_1$. Then, the pullback bundles $\varphi_t^\ast\bb{E}\to X_1$ have an induced $\rr{Q}$-structure for all $t\in[0,1]$ and
$\varphi_0^\ast\bb{E}\simeq \varphi_1^\ast\bb{E}$ in  ${\rm Vec}_{\rr{Q}}^m(X_1,\tau_1)$.
\end{theorem}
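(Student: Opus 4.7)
The plan is to adapt the classical Husemoller proof of homotopy invariance of complex vector bundles to the \virg{Quaternionic} setting. Let $I := [0,1]$ carry the trivial involution, and set $\bb{F} := F^* \bb{E}$; the equivariance of $F$ together with the \virg{Quaternionic} structure $\Theta$ on $\bb{E}$ endows $\bb{F}$ with a canonical $\rr{Q}$-structure over the involutive space $(I \times X_1, \mathrm{id}_I \times \tau_1)$. Writing $\iota_t : X_1 \hookrightarrow I \times X_1$ for the slice $x \mapsto (t,x)$ and $\pi_1 : I \times X_1 \to X_1$ for the projection, one has $\varphi_t^* \bb{E} = \iota_t^* \bb{F}$ as $\rr{Q}$-bundles. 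The theorem therefore reduces to the key lemma: \emph{there exists a $\rr{Q}$-isomorphism $\bb{F} \simeq \pi_1^* \iota_0^* \bb{F}$}. Granted this, $\iota_t^* \bb{F} \simeq (\pi_1 \circ \iota_t)^* \iota_0^* \bb{F} = \iota_0^* \bb{F}$ for every $t \in I$; evaluating at $t=1$ yields $\varphi_1^* \bb{E} \simeq \varphi_0^* \bb{E}$.

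I would prove the lemma in two steps. \emph{Step 1 (uniform local $\rr{Q}$-triviality along $I$).} For every $x \in X_1$ I produce a $\tau_1$-invariant open neighborhood $\f{U}$ of $\{x, \tau_1(x)\}$ such that $\bb{F}|_{I \times \f{U}}$ is $\rr{Q}$-trivial. For each $t \in I$, Proposition \ref{prop:loc_triv} applied at $(t,x) \in I \times X_1$ furnishes a $\tau_1$-invariant neighborhood $\f{U}_t$ of $\{x, \tau_1(x)\}$ and an open interval $J_t \ni t$ with $\bb{F}|_{J_t \times \f{U}_t}$ $\rr{Q}$-trivial. Compactness of $I$ yields a finite subcover $J_{t_1}, \ldots, J_{t_n}$; setting $\f{U} := \bigcap_k \f{U}_{t_k}$ (still $\tau_1$-invariant) and partitioning $I$ as $0 = s_0 < s_1 < \ldots < s_n = 1$ with each $[s_{k-1}, s_k]$ contained in some $J_{t_{i(k)}}$, each piece $\bb{F}|_{[s_{k-1}, s_k] \times \f{U}}$ is $\rr{Q}$-trivial. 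Beginning from an arbitrary $\rr{Q}$-frame on the invariant slice $\{0\} \times \f{U}$, Lemma \ref{lemma:ext} propagates it across $[s_0, s_1] \times \f{U}$; matching at each $\{s_k\} \times \f{U}$ and iterating produces a global $\rr{Q}$-frame on $I \times \f{U}$, whence $\rr{Q}$-triviality by Theorem \ref{theo:triv_glob}.

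\emph{Step 2 (partition-of-unity retraction).} Using compactness of $X_1$, select a finite $\tau_1$-invariant open cover $\{\f{U}_j\}_{j=1}^N$ of the kind produced in Step 1, together with a $\tau_1$-invariant partition of unity $\{\rho_j\}$ subordinate to it (obtained by averaging an ordinary subordinate partition under $\tau_1$, which is legitimate because the cover is $\tau_1$-invariant). Set $u_n := \rho_1 + \ldots + \rho_n$, so $u_0 \equiv 0$, $u_N \equiv 1$, and each $u_n : X_1 \to I$ is $\tau_1$-invariant. Following Husemoller's scheme (\textit{Fibre Bundles}, Ch.~4), define the equivariant reparametrization $h_n(t, x) := (\min(t, 1 - u_n(x)), x)$, so $h_0 = \mathrm{id}$ and $h_N(t, x) = (0, x)$. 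Because $u_n - u_{n-1} = \rho_n$ is supported in $\f{U}_n$, the $\rr{Q}$-trivialization of $\bb{F}|_{I \times \f{U}_n}$ from Step 1 manufactures a canonical $\rr{Q}$-isomorphism $h_{n-1}^* \bb{F} \to h_n^* \bb{F}$ (the \emph{identity in the fiber direction} of the trivialization, which is automatically equivariant because $\mathrm{id}_I \times \tau_1$ acts trivially on the $I$-factor and commutes with the trivialization). Composing these $N$ isomorphisms gives $\bb{F} = h_0^* \bb{F} \simeq h_N^* \bb{F} = \pi_1^* \iota_0^* \bb{F}$, as required.

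\emph{The main obstacle} is verifying that the locally-defined reparametrization isomorphisms in Step 2 patch into a globally well-defined $\rr{Q}$-morphism on the overlaps $\f{U}_j \cap \f{U}_k$. The essential point is that any two $\rr{Q}$-trivializations of the same region differ by a continuous $\tau_1$-equivariant gauge transformation with values in the quaternionic unitary group, and under such gauges the \emph{identity in the $t$-direction} is intrinsically defined; consequently the morphism $h_{n-1}^* \bb{F} \to h_n^* \bb{F}$ is independent of the choice of local trivialization and the gluing is automatic.
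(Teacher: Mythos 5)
Your argument is essentially correct, but it takes a different and far more explicit route than the paper. The paper's own proof is a two-line sketch: it invokes the equivariant generalization of Atiyah--Bott's Proposition~1.3 (where one extends the tautological isomorphism section of the bundle ${\rm Hom}(\pi^*\bb{F}_t,\bb{F})$ from a slice $X_1\times\{t\}$ to a neighbourhood and uses compactness of $[0,1]$), and only spells out the new ingredient, namely the induced $\rr{Q}$-structure $\Theta^*_t=\tau_1\times\Theta$ on the pullback. You instead carry out the Husemoller-style proof in full: local $\rr{Q}$-triviality over invariant strips $I\times\f{U}$, an invariant partition of unity, and the telescoping reparametrizations $h_n$. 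Your version buys a self-contained argument in which the equivariance of every step is checked by hand; the paper's route is shorter because the section-extension Lemma~\ref{lemma:ext} already does the equivariant bookkeeping. Both need compactness of $X_1$ and the $\Z_2$-CW hypothesis only through local $\rr{Q}$-triviality, so the hypotheses match.

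Two points deserve repair. First, in Step~1 the extension Lemma~\ref{lemma:ext} extends sections from a closed invariant subset but does not preserve linear independence, so it cannot by itself propagate a \emph{frame} across $[s_{k-1},s_k]\times\f{U}$; since that strip is already $\rr{Q}$-trivial, you should instead transport the frame on $\{s_{k-1}\}\times\f{U}$ constantly in the $t$-direction of the chosen trivialization, which manifestly stays a frame. Second, the justification in your final paragraph is not correct as stated: a $\rr{Q}$-gauge transformation $g:I\times\f{U}_n\to\n{U}(2m)$ need not be constant in $t$, so conjugating the ``identity in the $t$-direction'' produces $g(t',x)\,g(t,x)^{-1}\neq\n{1}$ in general, and the morphism $h_{n-1}^*\bb{F}\to h_n^*\bb{F}$ genuinely depends on the chosen trivialization. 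Fortunately no such independence is needed: at each stage $n$ only the single set $\f{U}_n$ is involved, the map is declared to be the identity off ${\rm supp}(\rho_n)$, and on the overlap one has $h_{n-1}=h_n$, where the trivialization-defined transport is the identity for \emph{any} choice of trivialization. That simpler observation is what makes the gluing automatic.
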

\proof[
sketch of]
This theorem can be proved by a suitable equivariant generalization of \cite[Proposition 1.3]{atiyah-bott-64}. The only new point is the 
$\rr{Q}$-structure on $\varphi_t^\ast\bb{E}\to X_1$.
By definition  $\varphi_t^\ast\bb{E}|_x:= \{x\}\times\bb{E}|_{\varphi_t(x)}$ for all $x\in X_1$. Hence, the equivariance of $\varphi_t$  implies  $\varphi_t^\ast\bb{E}|_{\tau_1(x)}= \{\tau_1(x)\}\times \bb{E}|_{\tau_2(\varphi_t(x))}$. Then, the pullback construction induces 
 a $\rr{Q}$-structure on $\varphi_t^\ast\bb{E}$  by the map $\Theta_t^\ast$ that acts \emph{anti}-linearly between the fibers $\varphi_t^\ast\bb{E}|_x$ and $\varphi_t^\ast\bb{E}|_{\tau_1(x)}$ as the  product $\tau_1\times\Theta$. \qed

\medskip

Theorem \ref{theo:equi_homot1} is the starting point for a 
homotopy classification   of $\rr{Q}$-bundles. Complex and real vector bundles are classified by the set of homotopy equivalent maps from the base space to the complex or real Grassmann manifold, respectively \cite{milnor-stasheff-74}. A similar result holds true also for \virg{Real}-bundles provided that the Grassmann manifold is endowed with a suitable involution and the homotopy equivalence is restricted to equivariant maps \cite{edelson-71} (see also \cite[Section 4.4]{denittis-gomi-14}). In this section we provide a similar result for  
$\rr{Q}$-bundles of even rank.

\medskip

We recall that the \emph{Grassmann manifold} is defined as
$$
G_m(\C^\infty)\;:=\;\bigcup_{n=m}^{\infty}\;G_m(\C^n)\;,
$$
where,
for each pair $m\leqslant n$, $G_m(\C^n)\simeq\n{U}(n)/\big(\n{U}(m)\times \n{U}(n-m)\big)$ is the set of $m$-dimensional (complex) subspaces of $\C^n$. 
The space $G_m(\C^n)$ can be endowed with the structure of a finite CW-complex, making it into
a closed (\ie compact without boundary) manifold  of (real) dimension $2m(n- m)$.
 The inclusions $\C^n \subset \C^{n+1} \subset\ldots$  yield inclusions $G_m(\C^n)\subset G_m(\C^{n+1})\subset\ldots$ and one can equip $G_m(\C^\infty)$
with the direct limit topology. The resulting space $G_m(\C^\infty)$ has the structure of an infinite CW-complex which is, in particular, paracompact and path-connected. Following \cite{lawson-lima-filho-michelsohn-05} or \cite{biswas-huisman-hurtubise-10}, in case of even dimension one can endow $G_{2m}(\C^\infty)$ with an involution of quaternionic-type in the following way:
let
$\Sigma=\langle{\rm v}_1,{\rm v}_2,\ldots{\rm v}_{2m-1},{\rm v}_{2m}\rangle_\C$ be any
 $2m$-plane in $G_{2m}(\C^{2n})$  generated by the basis $\{{\rm v}_1,{\rm v}_2,\ldots{\rm v}_{2m-1},{\rm v}_{2m}\}$ and define $\rho({\Sigma})\in G_{2m}(\C^{2n})$ as  the $2m$-plane spanned by $\langle Q\overline{\rm v}_1,Q\overline{\rm v}_2,\ldots,Q\overline{\rm v}_{2m-1},Q\overline{\rm v}_{2m}\rangle_{\C}$ where $\overline{\rm v}_j$ is the complex conjugate of  ${\rm v}_j$
and  $Q$ is the $2n\times 2n$ matrix \eqref{eq:Q-mat}. Clearly, the definition of 
$\rho(\Sigma)$ does not depend on the choice of a particular basis and one can immediately check that the map $\rho:G_m(\C^n)\to G_m(\C^n)$  is an involution that makes the pair $(G_{2m}(\C^{2n}),\rho)$ into an involutive space. 
Since all the inclusions $G_{2m}(\C^{2n})\hookrightarrow G_{2m}(\C^{2(n+1)})\hookrightarrow\ldots$ are equivariant, the involution extends to 
 the infinite Grassmann manifold in such a way that $\hat{G}_{2m}(\C^\infty)\equiv(G_{2m}(\C^\infty),\varrho)$ becomes an involutive space. Let $\Sigma=\rho(\Sigma)$ be a fixed point of $\hat{G}_{2m}(\C^\infty)$. Since $\rho$ acts on vectors as a quaternionic structure one has $\Sigma\simeq \n{H}^m$ (\cf Remark \ref{rk:quat_str}).
More precisely, if $\Sigma$ is $\rho$-invariant we can find a basis of $\langle{\rm v}_1,{\rm v}_2,\ldots{\rm v}_{2m-1},{\rm v}_{2m}\rangle_\C$ made by quaternionic pairs $({\rm v}_{2k-1},{\rm v}_{2k}:=Q\overline{\rm v}_{2k-1})$, $k=1,\ldots,m$ which leads to
$\Sigma=\langle{\rm v}_1,{\rm v}_3,\ldots{\rm v}_{2m-3},{\rm v}_{2m-1}\rangle_{\n{H}}$ (one has ${\rm v}_{2k}={\rm j}{\rm v}_{2k-1}$ with respect to the left quaternionic multiplication). 
Let  $G_m(\n{H}^n)\simeq\n{S}p(n)/\big(\n{S}p(m)\times \n{S}p(n-m)\big)$ be  the set of 
$m$-dimensional quaternionic hyperplanes passing through the origin of
$\n{H}^n$. As for the complex case we can define the \emph{quatrnionic} Grassmann manifold as the inductive limit $G_m(\n{H}^\infty):=\bigcup_{n=m}^{\infty}\;G_m(\n{H}^n)$.
This space has the structure of an infinite CW-complex. In particular it is paracompact and path-connected.
If $\Sigma$ is a fixed point of $\hat{G}_{2m}(\C^\infty)$ under the involution $\varrho$ then the  map $G_{m}(\n{H}^\infty)\to \hat{G}_{2m}(\C^\infty)$ given  by
$\langle{\rm v}_1,\ldots,{\rm v}_m\rangle_\n{H}\mapsto \langle{\rm v}_1, Q\overline{\rm v}_1,\ldots,{\rm v}_{m}, Q\overline{\rm v}_m\rangle_\C$ is an embedding of 
$G_{m}(\n{H}^\infty)$ onto the fixed point set of $\hat{G}_{2m}(\C^\infty)$. This shows that $G_{m}(\n{H}^\infty)\simeq \hat{G}_{2m}(\C^\infty)^\rho$.

\medskip

Each manifold $G_{m}(\C^{n})$ is the base space of a {canonical} rank $m$ complex vector bundle $\pi:\bb{T}_{m}^n\to G_m(\C^n)$ where the total space $\bb{T}_m^n$ consists of all pairs $(\Sigma,{\rm v})$ with $\Sigma\in G_m(\C^n)$  and ${\rm v}$ any vector in $\Sigma$ and the bundle projection is
 $\pi(\Sigma,{\rm v})=\Sigma$. Now, when  $n$ tends to infinity, the same construction leads to 
 the \emph{tautological} $m$-plane bundle $\pi:\bb{T}_m^\infty\to G_m(\C^\infty)$. 
This vector bundle is the universal object which classifies  complex vector bundles in the sense that any rank $m$ complex vector bundle $\bb{E}\to X$ can be realized, up to isomorphisms, as the pullback of  $\bb{T}_m^\infty$ with respect to a \emph{classifying map} $\varphi : X \to G_m(\C^\infty)$, that is $\bb{E}\simeq \varphi^\ast \bb{T}_m^\infty$.
Since pullbacks of homotopic maps yield isomorphic vector bundles (\emph{homotopy property}), the isomorphism class of  $\bb{E}$ only depends on the homotopy class  of $\varphi$.  This leads to the fundamental result 
 $
{\rm Vec}^m_\C(X) \simeq[X , G_m(\C^\infty)]
$ where in the right-hand side  there is the set of homotopy equivalence classes of  maps between $X$ and $G_m(\C^\infty)$.
This classical result can be extended to the category of even rank \virg{Quaternionic} vector bundles provided that the total space $\bb{T}_{2m}^\infty$ is endowed with a $\rr{Q}$-structure compatible with the involution $\rho$ on the Grassmann manifold
$G_{2m}(\C^\infty)$. This can be done by means of the 
\emph{anti}-linear map $\Xi:\bb{T}_{2m}^\infty\to\bb{T}_{2m}^\infty$
   defined by $\Xi:(\Sigma,{\rm v})\mapsto \big(\rho(\Sigma),Q\overline{\rm v}\big)$. The relation 
$\pi\circ\Xi=\rho\circ{\pi}$ can be easily verified, therefore  $(\bb{T}_{2m}^\infty,\Xi)$ is a $\rr{Q}$-bundle over the involutive space  $\hat{G}_{2m}(\C^\infty)\equiv ({G}_{2m}(\C^\infty),\rho)$. This  is the  \emph{universal} object for the  homotopy classification of $\rr{Q}$-bundles:

\begin{theorem}[Homotopy classification]\label{theo:honotopy_class_Q}
Let $(X,\tau)$ be an involutive space and assume that $X$  verifies (at least) condition 0  of Definition \ref{ass:2}.
 Each rank $2m$ $\rr{Q}$-bundle $(\bb{E},\Theta)$ over $(X,\tau)$ can be obtained, up to isomorphisms, as a pullback $\bb{E}\simeq \varphi^\ast \bb{T}_{2m}^\infty$ with respect to a map $\varphi:X\to \hat{G}_{2m}(\C^\infty)$ which is equivariant, $\varphi\circ\tau=\rho\circ \varphi$. Moreover, the homotopy property implies that $(\bb{E},\Theta)$ depends only on the equivariant homotopy class of $\varphi$, \ie one has  the isomorphism
$$
{\rm Vec}^{2m}_\rr{Q}(X,\tau)\;\simeq\;[X,\hat{G}_{2m}(\C^\infty)]_{\Z_2}\;.
$$
\end{theorem}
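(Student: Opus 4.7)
The plan is to mimic the standard Grassmann-manifold classification of complex vector bundles, keeping track of the $\rr{Q}$-structure at every step. Since $X$ is compact, by Proposition \ref{prop:loc_triv} there is a finite $\tau$-invariant open cover $\{\f{U}_1,\ldots,\f{U}_k\}$ on each of which $\bb{E}$ admits a local $\rr{Q}$-frame. Using an arbitrary partition of unity $\{\rho_i'\}$ subordinate to $\{\f{U}_i\}$ and averaging via $\rho_i:=\tfrac{1}{2}(\rho_i'+\rho_i'\circ\tau)$, one obtains a $\tau$-invariant partition of unity still subordinate to $\{\f{U}_i\}$ (which remains $\tau$-invariant). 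Multiplying each local $\rr{Q}$-frame by $\sqrt{\rho_i}$ and re-indexing produces a finite family of global sections $\{t_1,t_2,\ldots,t_{2N-1},t_{2N}\}\subset\Gamma(\bb{E})$ with $N=mk$ which (i) spans each fiber and (ii) is arranged in $\rr{Q}$-pairs, $t_{2\ell}=\tau_\Theta(t_{2\ell-1})$.

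Next, fix an equivariant Hermitian metric $\rr{m}$ (Proposition \ref{rk:eq_metric}) and define
$$
\Phi\;:\;\bb{E}\;\longrightarrow\;X\times\C^{2N}\;,\qquad \Phi(p)\;:=\;\Big(\pi(p),\big(\overline{\rr{m}(t_j(\pi(p)),p)}\big)_{j=1}^{2N}\Big)\;.
$$
By construction $\Phi$ is fibrewise $\C$-linear and fibrewise injective (since $\{t_j(x)\}$ spans $\bb{E}_x$), hence realizes $\bb{E}$ as a rank-$2m$ subbundle of the trivial bundle $X\times\C^{2N}$. A direct computation using $t_{2\ell}=\tau_\Theta(t_{2\ell-1})$ and the identity $\rr{m}(\Theta p_1,\Theta p_2)=\rr{m}(p_2,p_1)$ shows that $\Phi$ intertwines $\Theta$ with the standard $\rr{Q}$-structure $\Theta_0(x,v)=(\tau(x),Q\overline{v})$ on $X\times\C^{2N}$. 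Sending each $x\in X$ to the $2m$-plane $\Phi(\bb{E}_x)\subset\C^{2N}$ then defines an equivariant map $\varphi:X\to(G_{2m}(\C^{2N}),\rho)\hookrightarrow \hat{G}_{2m}(\C^\infty)$ such that $\bb{E}\simeq\varphi^{\ast}\bb{T}_{2m}^{\infty}$ in the $\rr{Q}$-category.

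Well-definedness and injectivity of the resulting map into $[X,\hat{G}_{2m}(\C^\infty)]_{\Z_2}$ are then handled by a concatenation argument: if $\{t_j\}$ and $\{t_j'\}$ are two such generating $\rr{Q}$-families (for the same bundle, or for two $\rr{Q}$-isomorphic bundles), the families $(\cos(\pi s/2)\,t_j,\sin(\pi s/2)\,t_j')$ for $s\in[0,1]$ produce generating $\rr{Q}$-families in $\C^{2N}\oplus\C^{2N'}$ interpolating equivariantly between the corresponding classifying maps; together with the homotopy invariance of pullbacks under equivariant homotopies (Theorem \ref{theo:equi_homot1}) this gives the required equivariant homotopy $\varphi\sim\varphi'$. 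Surjectivity is immediate from Theorem \ref{theo:equi_homot1} applied to $\varphi^{\ast}\bb{T}_{2m}^{\infty}$.

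The main technical point, and the place where one must be careful, is the equivariance bookkeeping in the construction of $\Phi$: one has to choose both the cover and the local $\rr{Q}$-frames in a way compatible with $\tau$, ensure that the averaged partition of unity preserves the support condition, and verify that the ordering into $\rr{Q}$-pairs $(t_{2\ell-1},t_{2\ell})$ is precisely what makes $\Phi$ intertwine $\Theta$ with $\Theta_0$ through the block matrix $Q$ of \eqref{eq:Q-mat}. Every other step is a routine transcription of the classical classifying-space argument.
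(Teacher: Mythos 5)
Your proposal is correct and follows essentially the same route as the paper, which only sketches the argument and defers to the equivariant adaptation of the classical Gauss-map construction in Husemoller (Ch.~3, \S 5); you have simply carried out that adaptation explicitly, and the key points (the $\tau$-averaged partition of unity, the $\rr{Q}$-pair ordering making $\Phi$ intertwine $\Theta$ with $\Theta_0$ via the block matrix $Q$, and the real-coefficient interpolation of Gauss maps for injectivity) all check out. No gap.
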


\proof[
sketch of]
The proof of this theorem is a direct equivariant adaption of  standard arguments for the complex case (\cf \cite{milnor-stasheff-74,husemoller-94}) and it is not new in the literature (\eg \cite[Theorem 11.2]{lawson-lima-filho-michelsohn-05} or \cite[Section 4.2]{biswas-huisman-hurtubise-10}). Just for  sake of completeness let us comment that Theorem \ref{theo:equi_homot1} assures that each equivariant pullback of the tautological vector bundle $\bb{T}_{2m}^\infty$ provides a $\rr{Q}$-bundle and that homotopic equivalent pullbacks provide isomorphic $\rr{Q}$-bundles. Therefore, one has only to show that each  $\rr{Q}$-bundle can be realized via an equivariant pullback.
The easiest way to prove this point is to adapt equivariantly the construction in \cite[Chapter 3, Section 5]{husemoller-94}.
\qed

\begin{remark}\label{rk:iso_categ}{\upshape
If one consider the trivial involutive space $(X,{\rm Id}_X)$, then  equivariant maps $\varphi:X\to \hat{G}_{2m}(\C^\infty)$ are characterized  by $\varphi(x)=\rho(\varphi(x))$ for all $x\in X$.
Since the fixed point set of $G_{2m}(\C^\infty)$ is parameterized by $G_m(\n{H}^\infty)$ one has that $[X,\hat{G}_{2m}(\C^\infty)]_{\Z_2}\simeq[X,G_m(\n{H}^\infty)]$. 
However, $G_m(\n{H}^\infty)$ is the classifying space for quaternionic vector bundles \cite[Chapter 8, Theorem 6.1]{husemoller-94} hence
${\rm Vec}^{2m}_{\rr{Q}}(X,{\rm Id}_X)\simeq{\rm Vec}^m_{\n{H}}(X)$ in agreement with
Proposition \ref{prop:Qv1.0}.
}\hfill $\blacktriangleleft$
\end{remark}

\noindent
The homotopy classification provided by  Theorem \ref{theo:honotopy_class_Q} 
can be directly used to classify $\rr{Q}$-bundles over the involutive space $\tilde{\n{S}}^1\equiv(\n{S}^1,\tau)$.   
\begin{proposition}
\label{prop_(ii)}
$
{\rm Vec}^{2m}_{\rr{Q}}(\n{S}^1, \tau)\;=\;0\;.
$
\end{proposition}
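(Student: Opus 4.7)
The plan is to reduce the statement, via the homotopy classification of $\rr{Q}$-bundles (Theorem \ref{theo:honotopy_class_Q}), to proving that every equivariant map $\varphi\colon \tilde{\n{S}}^1 \to \hat{G}_{2m}(\C^\infty)$ is equivariantly nullhomotopic, and then to exploit the simple $\Z_2$-CW structure of $\tilde{\n{S}}^1$ together with connectivity properties of the classifying Grassmannians.

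First I would unpack what an equivariant map out of $\tilde{\n{S}}^1$ looks like. Write $\n{S}^1 = \n{D}^1_+ \cup \n{D}^1_-$ as the union of the upper and lower closed semicircles, which meet in the two fixed points $x_\pm = (\pm 1, 0)$ and are swapped by $\tau$. An equivariant $\varphi$ is then completely determined by its restriction to $\n{D}^1_+$, which is a continuous path $\gamma\colon [0,1] \to G_{2m}(\C^\infty)$ from $\varphi(x_+)$ to $\varphi(x_-)$; equivariance at the fixed points forces the endpoints $\gamma(0),\gamma(1)$ to lie in $G_{2m}(\C^\infty)^\rho = G_m(\n{H}^\infty)$. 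Conversely, every such constrained path yields an equivariant map by setting $\varphi(y) = \rho(\gamma(\tau(y)))$ on $\n{D}^1_-$, and equivariant homotopies translate to homotopies of paths whose endpoints are allowed to roam within $G_m(\n{H}^\infty)$.

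Next I would invoke the standard identifications $G_{2m}(\C^\infty) \simeq BU(2m)$ and $G_m(\n{H}^\infty) \simeq BSp(m)$, together with the fact that both $U(2m)$ and $Sp(m)$ are connected. This immediately gives $\pi_1\bigl(G_{2m}(\C^\infty)\bigr) = \pi_0(U(2m)) = 0$ and $\pi_0\bigl(G_m(\n{H}^\infty)\bigr) = 0$. Given $\gamma$ as above, path-connectedness of $G_m(\n{H}^\infty)$ lets me choose an auxiliary path $\delta$ in $G_m(\n{H}^\infty)$ from $\gamma(1)$ back to $\gamma(0)$; the concatenation $\gamma \cdot \delta$ is then a loop in the simply connected space $G_{2m}(\C^\infty)$, hence contractible rel basepoint. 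Such a nullhomotopy deforms $\gamma$ through paths (with endpoints in $G_m(\n{H}^\infty)$ throughout) to $\bar\delta$, which lies entirely in $G_m(\n{H}^\infty)$; finally, path-connectedness of $G_m(\n{H}^\infty)$ lets me shrink $\bar\delta$ to a constant path within that subspace.

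The main technical point — and the only genuine subtlety — is the bookkeeping to ensure that every intermediate homotopy keeps the endpoints inside $G_m(\n{H}^\infty)$, which is exactly the condition needed to translate back into an \emph{equivariant} homotopy on $\tilde{\n{S}}^1$. Equivalently, this entire argument amounts to the vanishing of the relative homotopy set $\pi_1\bigl(G_{2m}(\C^\infty),\, G_m(\n{H}^\infty)\bigr)$, which follows at once from the long exact sequence of the pair given the two connectivity facts above. Once this is established, $[\tilde{\n{S}}^1, \hat{G}_{2m}(\C^\infty)]_{\Z_2} = 0$, and Theorem \ref{theo:honotopy_class_Q} yields the claim.
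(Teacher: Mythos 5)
Your proposal is correct and follows essentially the same route as the paper: both reduce the claim via Theorem \ref{theo:honotopy_class_Q} to showing $[\tilde{\n{S}}^1,\hat{G}_{2m}(\C^\infty)]_{\Z_2}$ is a single point, using exactly the connectivity inputs $\pi_0(G_{2m}(\C^\infty))\simeq\pi_1(G_{2m}(\C^\infty))\simeq 0$ and $\pi_0(G_{2m}(\C^\infty)^\rho)\simeq\pi_0(G_m(\n{H}^\infty))\simeq 0$ together with the $\Z_2$-cell structure of $\tilde{\n{S}}^1$ (two fixed $0$-cells, one free $1$-cell). The only difference is that the paper black-boxes this step by citing the $\Z_2$-homotopy reduction lemma of \cite{denittis-gomi-14}, whereas you carry out that reduction explicitly for this one free $1$-cell, which amounts to the vanishing of $\pi_1\bigl(G_{2m}(\C^\infty),G_m(\n{H}^\infty)\bigr)$.
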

\proof
In view of Theorem \ref{theo:honotopy_class_Q} it is enough to  show that $[\tilde{\n{S}}^1,\hat{G}_{2m}(\C^\infty)]_{\Z_2}$ reduces to the equivariant homotopy class  of the constant map.  This fact  is a consequence of a general result called \emph{$\Z_2$-homotopy reduction}
\cite[Lemma 4.26]{denittis-gomi-14} which is based on the $\Z_2$-skeleton decomposition of $\tilde{\n{S}}^1$.  The involutive space $\tilde{\n{S}}^1$ has two fixed cells of dimension $0$ and one free cell of dimension 1 (\cf \cite[Example 4.20]{denittis-gomi-14}). Moreover,  $\pi_0(G_{2m}(\C^\infty))\simeq \pi_1(G_{2m}(\C^\infty))\simeq 0$ and the identification  $G_{2m}(\C^\infty)^\rho\simeq G_m(\n{H}^\infty)$ also implies $\pi_0(G_{2m}(\C^\infty)^\rho)\simeq0$.
These data are sufficient for the application of the $\Z_2$-homotopy reduction Lemma.
\qed


\subsection{Stable range condition}
\label{ssec:stab_reng_equi_constr}

In topology,  spaces which are homotopy equivalent to CW-complexes are very important. A similar notion can be extended to $\Z_2$-spaces like
 $\tilde{\n{S}}^d\equiv(\n{S}^d,\tau)$ or $\tilde{\n{T}}^d\equiv(\n{T}^d,\tau)$. These  spaces have 
  the structure of a CW-complex with respect to a skeleton decomposition made by cells of various dimension that carry a $\Z_2$-action. 
Such $\Z_2$-cells can be only of two types: They are \emph{fixed} if the action of $\Z_2$ is trivial or are \emph{free} if they have no fixed points. We refer to \cite[Section 4.5]{denittis-gomi-14} for a precise definition of the notion of 
$\Z_2$-CW-complex (see also \cite{matumoto-71,allday-puppe-93}). Moreover, the  $\Z_2$-CW-complex structure of  
$\tilde{\n{S}}^d$ and $\tilde{\n{T}}^d$ is explicitly described in  \cite[Example 4.20 and Example 4.20]{denittis-gomi-14}, respectively. We point out that  this construction is modelled after the usual definition of CW-complex, replacing the \virg{point} by \virg{$\Z_2$-point}. For this reason (almost) all topological and homological properties of CW-complexes have their \virg{natural} counterparts in the equivariant setting.

\medskip

This notion of $\Z_2$-CW-complex  plays an important role 
in the proof of the  next result which is the equivariant generalization of  \cite[Chapter 2, Theorem 7.1]{husemoller-94} to the category of even rank $\rr{Q}$-bundles.
\begin{proposition}[existence of a global $\rr{Q}$-pair of section]
\label{prop:stab_ran_Q}
Let $(X,\tau)$ be an involutive space such that $X$ has a finite $\Z_2$-CW-complex decomposition with fixed cells only in dimension 0. 
Let us denote with $d$ the dimension of $X$ (\ie the maximal dimension of the cells in the decomposition of $X$). Let $(\bb{E},\Theta)$ be a $\rr{Q}$-vector bundle over $(X,\tau)$ with fiber of rank $2m$. If $d\leqslant 4m-3$ there exists a pair of sections $(s_1,s_2)\in\Gamma(\bb{E})$ which is a global $\rr{Q}$-pair in the sense of Definition \ref{def:Q_pair}.  
\end{proposition}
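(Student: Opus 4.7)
The plan is to carry out an equivariant obstruction-theory argument over the $\Z_2$-CW-complex structure of $X$, constructing a nowhere-vanishing section $s_1 \in \Gamma(\bb{E})$ such that $s_1$ and $\tau_\Theta(s_1)$ are everywhere $\C$-linearly independent; the pair $(s_1, s_2 := \tau_\Theta(s_1))$ then provides the desired global $\rr{Q}$-pair. At a fixed point $x \in X^\tau$ the independence of $s_1(x)$ and $\Theta(s_1(x))$ is automatic from $\Theta^2 = -\n{1}_{\bb{E}_x}$ (the same computation recalled in Section \ref{sect:Qs-vb_sect}); at a point $y$ with $\tau(y) \neq y$ the independence of $s_1(y)$ and $\Theta(s_1(\tau(y)))$ instead becomes a nontrivial constraint coupling the values of $s_1$ at $y$ and at $\tau(y)$, and it is this coupling that drives the stable-range bound.

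I would proceed by induction on the $\Z_2$-skeleta. On the $0$-skeleton, I pick $s_1(x)$ to be any nonzero vector at each fixed 0-cell, and for each pair of free 0-cells $\{x_0, \tau(x_0)\}$ I pick $s_1(x_0)$ and $s_1(\tau(x_0))$ nonzero in such a way that $\Theta(s_1(\tau(x_0)))$ does not lie in $\C\, s_1(x_0) \subset \bb{E}_{x_0}$, which is possible because $\dim_\C \bb{E}_{x_0} = 2m \geq 2$. Since by hypothesis no fixed cell has dimension $\geq 1$, the inductive step consists, for each free pair of $k$-cells $(D^k, \tau(D^k))$ with $1 \leq k \leq d$, of two successive extensions. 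First, extend $s_1$ over the interior of $D^k$ as a nowhere-zero section of the (trivial) bundle $\bb{E}|_{D^k}$; the obstruction lies in $\pi_{k-1}(\C^{2m} \setminus \{0\}) \simeq \pi_{k-1}(\n{S}^{4m-1})$ and vanishes for $k \leq 4m-1$. Second, extend $s_1$ over the interior of $\tau(D^k)$ subject to the constraint that at each $y \in D^k$ the value $s_1(\tau(y)) \in \bb{E}_{\tau(y)}$ avoids the complex line $\C\, \Theta^{-1}(s_1(y))$; the relevant fiber bundle over the contractible disk $D^k$ is trivial with typical fiber $\C^{2m} \setminus \C \simeq \n{S}^{4m-3}$, so the obstruction lies in $\pi_{k-1}(\n{S}^{4m-3})$ and vanishes precisely when $k - 1 < 4m - 3$, i.e.\ when $k \leq 4m - 3$.

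The stable-range hypothesis $d \leq 4m - 3$ is exactly what kills the second (and strictly more restrictive) obstruction on every cell in the skeleton, so the induction terminates and produces the required $s_1 \in \Gamma(\bb{E})$. The main technical point is the second extension: one must recognize that the presence of the already-defined $s_1$ on $D^k$ forbids an entire complex line (not merely the origin) of values for $s_1$ on the partner cell $\tau(D^k)$, which is what produces the drop in connectivity from $\n{S}^{4m-1}$ to $\n{S}^{4m-3}$ and hence the sharp bound $d \leq 4m-3$ in place of the non-equivariant bound $d \leq 4m-1$ for the mere existence of a nowhere-vanishing section. Once $s_1$ has been so constructed, setting $s_2 := \tau_\Theta(s_1)$ yields the global $\rr{Q}$-pair.
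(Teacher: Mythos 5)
Your proposal is correct and follows essentially the same route as the paper: induction over the $\Z_2$-skeleton, with a two-step extension on each free cell whose obstructions lie in $\pi_{k-1}(\n{S}^{4m-1})$ and $\pi_{k-1}(\n{S}^{4m-3})$ respectively, the second giving the bound $d\leqslant 4m-3$. The only (cosmetic) difference is that the paper extends $s_2=\tau_\Theta(s_1)$ over the same half-cell avoiding the complex line spanned by the already-extended $s_1$, whereas you extend $s_1$ over the partner cell $\tau(D^k)$ avoiding $\C\,\Theta^{-1}(s_1(y))$ — these are equivalent under $\tau_\Theta$, and your explicit treatment of independence at free $0$-cells is if anything slightly more careful than the paper's.
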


\proof

 The \emph{zero} section $s_0(x)=0\in\bb{E}_x$ for all
$x\in X$  is  $\Theta$-invariant since $\Theta$ is an anti-linear isomorphism in each fiber. Let $\bb{E}^{\times}\subset\bb{E}$ be the subbundle of nonzero vectors. The fibers $\bb{E}^{\times}_x$ are all isomorphic to $(\C^{2m})^\times:=\C^{2m}\setminus\{0\}$
which is a $2(2m-1)$-connected space (\ie the first $2(2m-1)$  homotopy groups  vanish identically). The anti-involution $\Theta$ endows $\bb{E}^{\times}$ with a $\rr{Q}$-structure over  $(X,\tau)$. A $\rr{Q}$-pair of sections $(s_1,s_2:=\tau_\Theta(s_1))$ of $\bb{E}^{\times}$ can be seen as an 
everywhere-nonzero (\ie global) $\rr{Q}$-pair for the $\rr{Q}$-vector bundle $(\bb{E},\Theta)$. We will show that if $d\leqslant 4m-3$ such a $\rr{Q}$-pair of sections of  $\bb{E}^{\times}$ always exists.

We prove the claim by induction on the dimension of the skeleton.
 This is the case for $X^0$ which is a finite collection of fixed points $\{x_j\}_{j=1,\ldots,N_0}$ and conjugated pairs $\{x_j, \tau(x_j)\}_{j=1,\ldots,\tilde{N}_0}$.
  In this case a global $\rr{Q}$-pair can be defined as in the proof of Proposition \ref{prop:loc_triv}. On the fixed points $x_j=\tau(x_j)$ one sets a pair of vectors
   $(s'_1(x_j),s'_2(x_j))\subset\bb{E}^{\times}_x$, with $s'_2(x_j):=\tau_\Theta(s'_1)(x_j)$. This pair is automatically independent.
 For each free pairs $\{x_j, \tau(x_j)\}$ one starts with a section $s'_1:=(s'_1(x_j), s'_1(\tau(x_j)))\in \bb{E}^{\times}_x\times \bb{E}^{\times}_{\tau(x)}$ and, as usual, one defines $s'_2:=\tau_\Theta(s'_1)$. In this way $(s_1',s_2')$ is a $\rr{Q}$-pair of sections for $\bb{E}^{\times}_x\cup \bb{E}^{\times}_{\tau(x)}$.
 Assume now that the claim is true for the $\Z_2$-CW-subcomplex $X^{j-1}$ of dimension $j-1$
 for all  $1 \leqslant j \leqslant d$. By the inductive hypothesis we have a $\rr{Q}$-pair of sections $(s'_1,s_2')$ of the restricted bundle $\bb{E}^{\times}|_{X^{j-1}}$.
Let $Y\subset X$ be a free $j$-cell of $X$ with equivariant attaching map $\phi:\Z_2\times \n{D}^j\to Y\subset X$.
The pullback bundle $\phi^\ast(\bb{E}^{\times})\to \Z_2\times \n{D}^j$ has a $\rr{Q}$-structure since $\phi$ is equivariant
and it is locally $\rr{Q}$-trivial.
The $\rr{Q}$-pair $(s_1',s_2')$ defines a $\rr{Q}$-pair $(\sigma'_1,\sigma'_2)$  on $\phi^\ast(\bb{E}^{\times})|_{\Z_2\times\partial \n{D}^j}$ by $\sigma'_j:=s'_j\circ\phi$.

Since $\n{D}^j$ is contractible we know from Theorem \ref{theo:equi_homot1}
that
 $\phi^\ast(\bb{E}^{\times})|_{\Z_2\times \n{D}^j}$ is $\rr{Q}$-isomorphic to $(\Z_2\times \n{D}^j)\times (\C^{2m})^\times$
 (endowed with the standard trivial anti-involution $\Theta_0$, even if the particular form of the involution is not important for the rest of the proof). Then, 
the $\rr{Q}$-pair $(\sigma'_1,\sigma'_2)$
 defined on $\Z_2\times\partial \n{D}^j$ can be  identified with a pair of independent  equivariant maps  $\Z_2\times\partial \n{D}^j\to (\C^{2m})^\times$.
Because $j - 1 \le 2(2m-1)$ and $\pi_{j-1}((\C^{2m})^\times) \simeq 0$, the restriction  of the map induced by $\sigma'_1$ to $\{ 1 \} \times \partial \n{D}^j \to (\C^{2m})^\times$  prolongs to a map $\{ 1 \} \times \n{D}^j \to (\C^{2m})^\times$. At this point  $\sigma'_2$ can be seen as a map $\Z_2\times\partial \n{D}^j\to (\C^{2m}/\langle\sigma'_1\rangle_\C)^\times\simeq(\C^{2m-1})^\times$. Since 
$j - 1 \le 2(2m-2)$ and $\pi_{j-1}((\C^{2m-1})^\times) \simeq 0$, also the restriction  of the map induced by $\sigma'_2$ to $\{ 1 \} \times \partial \n{D}^j \to (\C^{2m})^\times$  prolongs to a map $\{ 1 \} \times \n{D}^j \to (\C^{2m})^\times$.
Moreover, these two prolonged maps are independent by construction. Using the equivariant constraints $\sigma'_2(-1,x):=(\Theta_0\sigma_1)(1,x)$ and $\sigma'_1(-1,x):=-(\Theta_0\sigma_2)(1,x)$
one obtains a $\rr{Q}$-pair of maps $\Z_2\times \n{D}^j \to (\C^m)^\times$. 
This prolonged maps yields a $\rr{Q}$-pair of sections $(\sigma_1,\sigma_2)$ of $\phi^\ast(\bb{E}^{\times})$. Using the natural morphism $\hat{\phi}:\phi^\ast(\bb{E}^{\times})\to \bb{E}^{\times}$ over $\phi$ (defined by the pullback construction) we have a unique $\rr{Q}$-pair of sections $(s_1^Y,s_2^Y)$ of $\bb{E}^{\times}|_{\overline{Y}}$ defined by $\hat{\phi}\circ\sigma_j=s_j^Y\circ \phi$
such that $s_j^Y=s'_j$ on $X^{j-1}\cap \overline{Y}$. 
Now, one defines a global $\rr{Q}$-pair $(s_1,s_2)$ of $\bb{E}^{\times}|_{X^{j-1} \cup Y}$
 by the requirements that $s_j|_{X^{j-1}}\equiv s'_j$  and $s_j|_Y \equiv s_j^Y$ for the free $j$-cell $Y$.   By the weak topology property of CW-complex, the $s_j$'s are also continuous. This argument applies to every  other free $j$-cell, and the claim is true on $X^j$ and eventually on $X^d = X$.
  \qed

\begin{remark}{\upshape
In principle the condition that the involutive space $(X,\tau)$ must have fixed cells only in dimension 0 should be removed
since fibers over fixed points $x=\tau(x)$ are isomorphic to $\n{H}^m$
(\cf Proposition \ref{prop:Qv1.0}) and 
 $\bb{E}^{\times}_x\simeq\n{H}^m\setminus\{0\}$  is  $2(2m-1)$-connected as well.
However, for the aims of this work we do not need this kind of generalization.
}\hfill $\blacktriangleleft$
\end{remark}

\noindent
The next theorem provides the stable range decomposition for \virg{Quaterninic} vector bundles.

\begin{theorem}[Stable range]
\label{theo:stab_ran_Q}
Let $(X,\tau)$ be an involutive space such that $X$ has a finite $\Z_2$-CW-complex decomposition of dimension $d$ with fixed cells only in dimension 0. 
Each rank $2m$ $\rr{Q}$-vector bundle $(\bb{E},\Theta)$  over $(X,\tau)$ such that $d\leqslant 4m-3$
splits as
\beql{eq:stab_rank_Q}
\bb{E}\;\simeq\;\bb{E}_0\;\oplus\;(X\times\C^{2(m-\sigma)})
\eeq
where $\bb{E}_0$ is a $\rr{Q}$-vector bundle  over $(X,\tau)$, $X\times\C^{2(m-\sigma)}\to X$ is the trivial product $\rr{Q}$-bundle  over $(X,\tau)$  and $\sigma:=[\frac{d+2}{4}]$ (here $[x]$ denotes the integer part of $x\in\R$).
\end{theorem}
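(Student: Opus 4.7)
The plan is to prove the splitting by induction on the rank, repeatedly applying Proposition \ref{prop:stab_ran_Q} to split off a trivial rank-$2$ $\rr{Q}$-summand until the rank condition $d\leqslant 4m'-3$ (where $2m'$ is the current rank) can no longer be guaranteed. The iteration will be made possible by the existence of an equivariant Hermitian metric (Proposition \ref{rk:eq_metric}), which lets us take $\rr{Q}$-equivariant orthogonal complements.

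\emph{Base step.} Starting from $(\bb{E},\Theta)$ of rank $2m$ with $d\leqslant 4m-3$, Proposition \ref{prop:stab_ran_Q} produces a global $\rr{Q}$-pair $(s_1,s_2)\in\Gamma(\bb{E})$. Fiberwise, $\{s_1(x),s_2(x)\}$ are linearly independent and span a $2$-plane $\bb{L}_x\subset \bb{E}_x$ invariant under $\Theta$ in the sense $\Theta(\bb{L}_x)=\bb{L}_{\tau(x)}$ (by construction $\tau_\Theta(s_1)=s_2$ and $\tau_\Theta(s_2)=-s_1$). These sub-spaces assemble into a rank-$2$ $\rr{Q}$-subbundle $\bb{L}\subset \bb{E}$ which, by Theorem \ref{theo:triv_glob}, is $\rr{Q}$-isomorphic to the product $\rr{Q}$-bundle $X\times\C^2$.

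\emph{Inductive step.} Fix an equivariant Hermitian metric $\rr{m}$ on $\bb{E}$, so that $\Theta$ acts as an anti-unitary between conjugate fibers: $\rr{m}(\Theta p_1,\Theta p_2)=\overline{\rr{m}(p_1,p_2)}$. Then the fiberwise orthogonal complement $\bb{L}^\perp$ is again a $\rr{Q}$-subbundle, since for $w\in\bb{L}_x^\perp$ and any $v\in\bb{L}_{\tau(x)}=\Theta(\bb{L}_x)$ one has $\rr{m}_{\tau(x)}(\Theta w,v)=\overline{\rr{m}_x(w,\Theta^{-1}v)}=0$. Thus $\bb{E}\simeq\bb{L}\oplus\bb{L}^\perp\simeq (X\times\C^2)\oplus\bb{L}^\perp$ in the $\rr{Q}$-category, with $\bb{L}^\perp$ a $\rr{Q}$-bundle of rank $2(m-1)$. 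If $d\leqslant 4(m-1)-3$, Proposition \ref{prop:stab_ran_Q} applies again to $\bb{L}^\perp$, and we iterate.

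\emph{Counting.} After $k$ iterations, $\bb{E}\simeq\bb{E}^{(k)}\oplus(X\times\C^{2k})$ with $\bb{E}^{(k)}$ of rank $2(m-k)$, and the $k$-th step required $d\leqslant 4(m-k+1)-3=4m-4k+1$. The largest admissible $k$ is therefore $k_{\max}=m-\lceil (d-1)/4\rceil=m-\lfloor (d+2)/4\rfloor=m-\sigma$, which corresponds to a residual $\rr{Q}$-bundle $\bb{E}_0:=\bb{E}^{(k_{\max})}$ of rank $2\sigma$. Setting this $\bb{E}_0$ and identifying $\C^{2k_{\max}}=\C^{2(m-\sigma)}$ yields the desired splitting \eqref{eq:stab_rank_Q}. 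The only delicate points are the verification that $\bb{L}^\perp$ inherits a $\rr{Q}$-structure (which follows cleanly from anti-unitarity of $\Theta$) and the elementary but easy-to-miscount arithmetic identity $m-\lceil (d-1)/4\rceil=m-[(d+2)/4]$ that determines the sharp value of $\sigma$.
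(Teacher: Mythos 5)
Your proof is correct and follows essentially the same route as the paper: both arguments obtain a global $\rr{Q}$-pair from Proposition \ref{prop:stab_ran_Q}, split off the trivial rank-$2$ $\rr{Q}$-summand it generates, and iterate while the dimension bound permits; your arithmetic $k_{\max}=m-\lceil(d-1)/4\rceil=m-[(d+2)/4]$ is right and is in fact more explicit than the paper's ``iterating this procedure''. The only difference is cosmetic: the paper realizes the complementary summand as the cokernel of the monomorphism $X\times\C^2\to\bb{E}$ and invokes the splitting theorem for bundles over compact spaces, whereas you take the $\rr{Q}$-invariant orthogonal complement with respect to the equivariant metric directly — these are the same construction.
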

\proof
By Proposition \ref{prop:stab_ran_Q} there is a global $\rr{Q}$-pair of sections $(s_1,s_2)\subset\Gamma(\bb{E})$.
This sections determines a monomorphism $f:X\times\C^2\to \bb{E}$ given by $f(x,(a_1,a_2)):=a_1s_1(x)+a_2s_2(x)$. This monomorphism is equivariant, \ie
$f(\tau(x),(-\overline{a}_2,\overline{a}_1))=-\overline{a}_2s_1(\tau(x))+\overline{a}_1s_2(\tau(x))=\Theta(a_1s_1(x)+a_2s_2(x))$. Let $\bb{E}'$ be the \emph{cokernel} of $f$ in $\bb{E}$, namely
$\bb{E}'$ is the quotient of $\bb{E}$ by the  relation: $p\sim p'$  if $p$ and $p'$ are in the same fiber of $\bb{E}$ and if $p-p'\in{\rm Im}(f)$. The map $\bb{E}'\to X$ is a vector bundle of rank $2(m-1)$ (\cf \cite[Chapter 3, Corollary 8.3]{husemoller-94}) which inherits a $\rr{Q}$-structure from $\bb{E}$ and the equivariance of $f$.
Since $X$ is  compact, by \cite[Chapter 3, Theorem 9.6]{husemoller-94} there is an isomorphism of $\rr{Q}$-bundles between 
$\bb{E}$ and $\bb{E}'\oplus(X\times\C^2)$. If $d\leqslant 4(m-1)-3$ one can repeat the argument for $\bb{E}'$ and by iterating this procedure  one gets eventually \eqref{eq:stab_rank_Q}.
\qed

\begin{corollary}
\label{corol_(iii)}
Let $(X,\tau)$ be an involutive space such that $X$ has a finite $\Z_2$-CW-complex decomposition of dimension $2\leqslant d\leqslant 5$  with fixed cells only in dimension 0. 
Then 
$$
{\rm Vec}^{2m}_{\rr{Q}}(X, \tau)\;\simeq\; {\rm Vec}^2_{\rr{Q}}(X, \tau)\qquad\quad \forall\ m\in\N\;.
$$
\end{corollary}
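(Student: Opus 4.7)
The strategy is to apply Theorem \ref{theo:stab_ran_Q} and reduce every rank-$2m$ $\rr{Q}$-bundle to a rank-$2$ one plus a trivial $\rr{Q}$-summand. For $d\in\{2,3,4,5\}$ the integer $\sigma=\bigl[\tfrac{d+2}{4}\bigr]$ appearing there equals $1$, so the non-trivial summand $\bb{E}_0$ supplied by that theorem has rank $2\sigma=2$; moreover the range condition $d\leqslant 4m-3$ is fulfilled as soon as $m\geqslant 2$. The case $m=1$ is tautological, so assume $m\geqslant 2$ and consider the \emph{stabilization map}
$$
\beta_m\;:\;{\rm Vec}^2_{\rr{Q}}(X,\tau)\;\longrightarrow\;{\rm Vec}^{2m}_{\rr{Q}}(X,\tau)\;,\qquad[\bb{E}_0]\;\longmapsto\;\bigl[\bb{E}_0\oplus(X\times\C^{2(m-1)})\bigr]\;,
$$
where $X\times\C^{2(m-1)}$ carries the product $\rr{Q}$-structure. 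This map is well defined because Whitney sum respects $\rr{Q}$-isomorphisms, and Theorem \ref{theo:stab_ran_Q} shows at once that it is surjective: every $[\bb{E}]\in{\rm Vec}^{2m}_{\rr{Q}}(X,\tau)$ is the image of its rank-$2$ \virg{irreducible part} produced by the stable-range decomposition.

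The injectivity of $\beta_m$ is the substantive point and rests on a cancellation statement: if $\bb{E}_0,\bb{E}'_0$ are rank-$2$ $\rr{Q}$-bundles such that $\bb{E}_0\oplus(X\times\C^{2(m-1)})\simeq\bb{E}'_0\oplus(X\times\C^{2(m-1)})$ as $\rr{Q}$-bundles, then already $\bb{E}_0\simeq\bb{E}'_0$. I would establish this by adapting the skeleton induction used to prove Proposition \ref{prop:stab_ran_Q}. Given a $\rr{Q}$-isomorphism $\varphi$ between the two direct sums, the image $\varphi(X\times\C^{2(m-1)})\subset\bb{E}'_0\oplus(X\times\C^{2(m-1)})$ is a trivial $\rr{Q}$-sub-bundle specified by a global $\rr{Q}$-frame of $2(m-1)$ everywhere-independent sections, and so is the standard inclusion of the second factor. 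The plan is to construct a $1$-parameter family of global $\rr{Q}$-frames of $\bb{E}'_0\oplus(X\times\C^{2(m-1)})$ interpolating between the two. As in the proof of Proposition \ref{prop:stab_ran_Q} the family is built cell by cell on the $\Z_2$-CW decomposition of $X$: on fixed $0$-cells it is a direct interpolation in the fibre, while on a free $j$-cell ($1\leqslant j\leqslant d\leqslant 4m-3$) the extension problem reduces to the vanishing of $\pi_{j-1}$ of a complex Stiefel-type fibre, the same connectivity estimate $j-1\leqslant 2(2m-1)$ that powers Proposition \ref{prop:stab_ran_Q}. The resulting homotopy of $\rr{Q}$-frames integrates to a $\rr{Q}$-automorphism of $\bb{E}'_0\oplus(X\times\C^{2(m-1)})$ sending $\varphi(X\times\C^{2(m-1)})$ onto the standard trivial factor; passing to the quotient by this common trivial sub-$\rr{Q}$-bundle delivers the required $\rr{Q}$-isomorphism $\bb{E}_0\simeq\bb{E}'_0$.

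The main obstacle lies squarely in this injectivity step: one has to verify that the equivariant connectivity estimates used to produce a single $\rr{Q}$-frame in Proposition \ref{prop:stab_ran_Q} extend to \emph{homotopies} between two given $\rr{Q}$-frames at each stage of the skeleton induction, so that the interpolating family genuinely exists and respects the Kramers pairing. The well-definedness and surjectivity of $\beta_m$ are, by contrast, immediate consequences of Theorem \ref{theo:stab_ran_Q}.
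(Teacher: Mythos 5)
Your overall strategy --- surjectivity of the stabilization map $\beta_m$ from Theorem \ref{theo:stab_ran_Q} plus a cancellation argument for injectivity --- is the natural one, and since the paper states the corollary without proof the whole weight falls on the cancellation step; that is exactly where your argument has a genuine gap. The connectivity estimate you quote is the one for \emph{extending} a section over a free $j$-cell (obstruction in $\pi_{j-1}$ of the fibre), but for a \emph{vertical homotopy} between two given $\rr{Q}$-frames the relevant cell of $X\times[0,1]$ is a $j$-cell of $X$ times $[0,1]$, of dimension $j+1$, so the obstruction lives in $\pi_{j}$ of the fibre, not $\pi_{j-1}$. Concretely, interpolating between two $\rr{Q}$-pairs of independent sections of a rank-$2m'$ bundle requires, for the second member of the pair, $\pi_j\big((\C^{2m'-1})^\times\big)=\pi_j(\n{S}^{4m'-3})=0$ for all $j\leqslant d$, i.e.\ $d\leqslant 4m'-4$ --- one less than the range $d\leqslant 4m'-3$ in which Proposition \ref{prop:stab_ran_Q} produces a single frame.

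With the corrected estimate your induction does close for $2\leqslant d\leqslant 4$: every cancellation stage, including the last one from rank $4$ down to rank $2$, satisfies $d\leqslant 4m'-4$. For $d=5$, however, the final stage (cancelling a trivial $\rr{Q}$-pair inside a rank-$4$ bundle) would need $\pi_5(\n{S}^{5})=0$, which is false; this is precisely the equivariant counterpart of the classical failure of cancellation for rank-$2$ complex bundles in dimension $5$, where ${\rm Vec}^2_{\C}(\n{S}^5)\simeq\pi_4(\n{U}(2))\simeq\Z_2$ maps to ${\rm Vec}^3_{\C}(\n{S}^5)\simeq\pi_4(\n{U}(3))=0$. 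So your proof establishes the corollary only for $d\leqslant 4$, which is in fact all that the paper ever uses (Theorems \ref{theorem:inv_inject2} and \ref{theo:injectD4_weak_version}); at $d=5$ your argument yields surjectivity of $\beta_m$ and a reduction of injectivity to a single residual obstruction, but not the stated bijection, and closing that case would require either computing the obstruction class or an argument independent of the skeleton induction.
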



\section{The FKMM-invariant}
\label{sec:FKMM-inv}

In an unpublished work 
M. Furuta, Y. Kametani, H. Matsue, and N. Minami
proposed  a topological invariant capable to classify \virg{Quaternionic} vector bundles over certain involutive spaces, provided that certain conditions are met \cite{furuta-kametani-matsue-minami-00}. 
Interestingly, this object was originally introduced to classify  $\rr{Q}$-bundles on $\tilde{\n{T}}^4$.
We present in this  section  a more general and slightly different definition for this invariant recognizing, of course, that the original ideas contained in \cite{furuta-kametani-matsue-minami-00} has been of inspiration to us.

\subsection{A short reminder of the equivariant Borel cohomology}
\label{subsec:borel_cohom}

The proper cohomology theory for the analysis of vector bundles  in the category of spaces with involution is the \emph{equivariant cohomolgy} introduced by  A.~Borel in \cite{borel-60}. This cohomology plays an important role for the classification of \virg{Real} vector bundles \cite{denittis-gomi-14} and we will show that it is also relevant for the study of \virg{Quaternionic} vector bundles. A short   self-consistent summary of this cohomology theory can be found in \cite[Section 5.1]{denittis-gomi-14}.
For an  introduction to the subject  we refer to
\cite[Chapter 3]{hsiang-75} and \cite[Chapter 1]{allday-puppe-93}.

\medskip

Since we need this tool we briefly recall the main steps of the
{Borel construction} for the equivariant cohomology. 
The \emph{homotopy quotient} of an involutive space   $(X,\tau)$ is the orbit space
$$
{X}_{\sim\tau}\;:=\;X\times\hat{\n{S}}^\infty /( \tau\times \vartheta)\;.
$$
Here $\vartheta$ is the \emph{antipodal map} on the infinite sphere $\n{S}^\infty$ 
(\cf \cite[Example 4.1]{denittis-gomi-14}) and $\hat{\n{S}}^\infty$ is used for the pair $(\n{S}^\infty,\vartheta)$.
The product space $X\times{\n{S}}^\infty$ (forgetting for a moment the $\Z_2$-action) has the \emph{same} homotopy type of $X$ 
since $\n{S}^\infty$ is contractible. Moreover, since $\vartheta$ is a free involution,  also the composed involution $\tau\times\vartheta$ is free, independently of $\tau$.
Let $\s{R}$ be any commutative ring (\eg, $\R,\Z,\Z_2,\ldots$). The \emph{equivariant cohomology} ring 
of $(X,\tau)$
with coefficients
in $\s{R}$ is defined as
$$
H^\bullet_{\Z_2}(X,\s{R})\;:=\; H^\bullet({X}_{\sim\tau},\s{R})\;.
$$
More precisely, each equivariant cohomology group $H^j_{\Z_2}(X,\s{R})$ is given by the
 singular cohomology group  $H^j({X}_{\sim\tau},\s{R})$ of the  homotopy quotient ${X}_{\sim\tau}$ with coefficients in $\s{R}$ and the ring structure is given, as usual, by the {cup product}.
As the coefficients of
the usual singular cohomology are generalized to {local coefficients} (see \eg \cite[Section 3.H]{hatcher-02} or
\cite[Section 5]{davis-kirk-01}), the coefficients of the Borel's equivariant cohomology are also
generalized to local coefficients. Given an involutive space $(X,\tau)$ let us consider the fundamental group $\pi_1({X}_{\sim\tau})$
and the associated  \emph{group ring} $\Z[\pi_1({X}_{\sim\tau})]$. Each module $\s{Z}$ over the group $\Z[\pi_1({X}_{\sim\tau})]$ is, by definition,
a \emph{local system} on $X_{\sim\tau}$.  Using this local system one defines, as usual, the equivariant cohomology with local coefficients in $\s{Z}$:
$$
H^\bullet_{\Z_2}(X,\s{Z})\;:=\; H^\bullet({X}_{\sim\tau},\s{Z})\;.
$$
We are particularly interested in modules $\s{Z}$ whose underlying groups are identifiable with $\Z$. 
For each involutive space  $(X,\tau)$, there always exists a particular family of local systems $\Z(m)$
labelled by $m\in\Z$. Here
 $\Z(m)\simeq X\times\Z$ denotes the $\Z_2$-equivariant local system on $(X,\tau)$  made equivariant  by the $\Z_2$-action $(x,l)\mapsto(\tau(x),(-1)^ml)$.
Because the module structure depends only on the parity of $m$, we consider only the $\Z_2$-modules ${\Z}(0)$ and ${\Z}(1)$. Since ${\Z}(0)$ corresponds to the case of the trivial action of $\pi_1(X_{\sim\tau})$ on $\Z$ one has $H^k_{\Z_2}(X,\Z(0))\simeq H^k_{\Z_2}(X,\Z)$ \cite[Section 5.2]{davis-kirk-01}.

\medskip

Let us recall two important group isomorphisms  involving the 
first two equivariant cohomology groups. Let $(X,\tau)$ be an involutive space, then
\begin{equation}\label{eq:iso:eq_cohom}
H^1_{\Z_2}(X,\Z(1))\;\simeq\;[X,\n{U}(1)]_{\Z_2}\;,\qquad\qquad H^2_{\Z_2}(X,\Z(1))\;\simeq\;{\rm Vec}_{\rr{R}}^1(X,\tau)\;.
\end{equation}
The first isomorphism \cite[Proposition A.2]{gomi-13} says that the first equivariant cohomology group is isomorphic to the set of $\Z_2$-homotopy classes of equivariant maps $\varphi:X\to\n{U}(1)$ where the involution on $\n{U}(1)$ is induced by the complex conjugation, \ie $\varphi(\tau(x))=\overline{\varphi(x)}$. The second isomorphism is due to B.~Kahn \cite{kahn-59} and 
expresses the equivalence between the Picard group of \virg{Real} line bundles (in the sense of \cite{atiyah-66,denittis-gomi-14}) over  $(X,\tau)$ and the second equivariant cohomology group of this space.
A more modern proof of this result can be found in \cite[Corollary A.5]{gomi-13}.

\medskip

The fixed point subset $X^\tau\subset X$ is closed and $\tau$-invariant and the inclusion $\imath:X^\tau\hookrightarrow X$ extends to an inclusion $\imath:X^\tau_{\sim\tau}\hookrightarrow X_{\sim\tau}$ of the respective homotopy quotients. The \emph{relative} equivariant cohomology can be defined as usual by the identification
$$
H^\bullet_{\Z_2}(X|X^\tau,\s{Z})\;:=\; H^\bullet({X}_{\sim\tau}|X^\tau_{\sim\tau},\s{Z})
$$
and one has a related long exact sequence in cohomology
$$
\ldots\;\longrightarrow\;H^j_{\Z_2}(X|X^\tau,\s{Z})\;\longrightarrow\;H^j_{\Z_2}(X,\s{Z})\;\stackrel{r}{\longrightarrow}\;H^j_{\Z_2}(X^\tau,\s{Z})\;\longrightarrow\;H^{j+1}_{\Z_2}(X|X^\tau,\s{Z})\;\longrightarrow\;\ldots\;
$$
where the map $r:=\imath^*$ restricts  cochains on $X$ to  cochains on $X^\tau$. The $j$-th \emph{cokernel} of $r$ is by definition
$$
{\rm Coker}^j(X|X^\tau,\s{Z})\;:=\;H^j_{\Z_2}(X^\tau,\s{Z})\;/\;r\big(H^j_{\Z_2}(X,\s{Z})\big)\;.
$$

\begin{lemma}\label{lemma:coker}
Let $(X,\tau)$ be an involutive space such that $X^\tau\neq \emptyset$. Then
\beql{eq:coker_iso1}
[X^\tau,\n{U}(1)]_{\Z_2}\;/\;[X,\n{U}(1)]_{\Z_2}\;\simeq\; {\rm Coker}^1(X|X^\tau,\n{Z}(1))
\eeq
where the group action of $[f]\in [X,\n{U}(1)]_{\Z_2}$ on $[g]\in [X^\tau,\n{U}(1)]_{\Z_2}$ is given by multiplication and restriction, namely  $[f]:[g]\mapsto [f|_{X^\tau}g]$.
Moreover, if  $H^2_{\Z_2}(X,\Z(1))=0$  then
$$
[X^\tau,\n{U}(1)]_{\Z_2}\;/\;[X,\n{U}(1)]_{\Z_2}\;\simeq\; H^2_{\Z_2}(X|X^\tau,\n{Z}(1))\;.
$$
\end{lemma}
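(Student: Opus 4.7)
The plan is to derive both isomorphisms by combining the first half of \eqref{eq:iso:eq_cohom} with the long exact sequence of the pair $(X,X^\tau)$ in equivariant Borel cohomology with local coefficients $\Z(1)$. The key observation is that the fixed point set $X^\tau$ carries the trivial $\Z_2$-action, and since the isomorphism \eqref{eq:iso:eq_cohom} is natural with respect to equivariant maps, it applies equally well to $X^\tau$ and converts restriction of cochains into restriction of $\n{U}(1)$-valued maps.

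First, I would establish the naturality square: under the isomorphism $H^1_{\Z_2}(Y,\Z(1))\simeq [Y,\n{U}(1)]_{\Z_2}$ (from \cite[Proposition A.2]{gomi-13}), the restriction homomorphism $r:H^1_{\Z_2}(X,\Z(1))\to H^1_{\Z_2}(X^\tau,\Z(1))$ corresponds to the geometric restriction map $[f]\mapsto [f|_{X^\tau}]$ from $[X,\n{U}(1)]_{\Z_2}$ to $[X^\tau,\n{U}(1)]_{\Z_2}$. Moreover, since $\n{U}(1)$ is an abelian topological group, the set $[Y,\n{U}(1)]_{\Z_2}$ inherits an abelian group structure via pointwise multiplication of representatives, and the isomorphism with $H^1_{\Z_2}(Y,\Z(1))$ is an isomorphism of abelian groups. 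Therefore the quotient $[X^\tau,\n{U}(1)]_{\Z_2}/[X,\n{U}(1)]_{\Z_2}$, with the group action specified in the statement, is by construction isomorphic to the cokernel quotient $H^1_{\Z_2}(X^\tau,\Z(1))/r(H^1_{\Z_2}(X,\Z(1)))={\rm Coker}^1(X|X^\tau,\Z(1))$. This yields \eqref{eq:coker_iso1}.

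For the second assertion, I would invoke the long exact sequence of the pair recalled just above the lemma:
$$
H^1_{\Z_2}(X,\Z(1))\;\stackrel{r}{\longrightarrow}\;H^1_{\Z_2}(X^\tau,\Z(1))\;\stackrel{\delta}{\longrightarrow}\;H^2_{\Z_2}(X|X^\tau,\Z(1))\;\longrightarrow\;H^2_{\Z_2}(X,\Z(1))\;.
$$
Under the hypothesis $H^2_{\Z_2}(X,\Z(1))=0$, the connecting homomorphism $\delta$ becomes surjective, so its image realizes all of $H^2_{\Z_2}(X|X^\tau,\Z(1))$; exactness at $H^1_{\Z_2}(X^\tau,\Z(1))$ gives $\ker\delta=\mathrm{im}\,r$, whence $\delta$ induces an isomorphism $H^1_{\Z_2}(X^\tau,\Z(1))/r(H^1_{\Z_2}(X,\Z(1)))\simeq H^2_{\Z_2}(X|X^\tau,\Z(1))$. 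Combining with \eqref{eq:coker_iso1} yields the second claim.

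The bulk of the argument is essentially formal homological algebra; the only step requiring care is the first one, namely verifying that the cohomological restriction $r$ is truly compatible with the geometric restriction of equivariant $\n{U}(1)$-valued maps at the level of the classifying-space isomorphism. This is a naturality statement for the representability of $H^1_{\Z_2}(-,\Z(1))$ by the equivariant classifying space $\n{U}(1)$ (equipped with complex conjugation as involution), and follows from the functoriality of the Borel construction applied to the inclusion $\imath:X^\tau\hookrightarrow X$ together with the identification of $\n{U}(1)$ as the Eilenberg--MacLane space $K(\Z(1),1)$ in the $\Z_2$-equivariant category.
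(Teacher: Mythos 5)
Your proposal is correct and follows essentially the same route as the paper: the first isomorphism is deduced from the identification $H^1_{\Z_2}(-,\Z(1))\simeq[-,\n{U}(1)]_{\Z_2}$ of \eqref{eq:iso:eq_cohom} (you merely make the naturality of this identification under the inclusion $\imath:X^\tau\hookrightarrow X$ explicit, which the paper leaves implicit), and the second follows from the long exact sequence of the pair together with $H^2_{\Z_2}(X,\Z(1))=0$ and the first isomorphism theorem, exactly as in the paper.
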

\proof[
sketch of] The first isomorphism \eqref{eq:coker_iso1} is a direct consequence of the isomorphism
\eqref{eq:iso:eq_cohom} proved in \cite[Proposition A.2]{gomi-13}. Under the extra assumption  $H^2_{\Z_2}(X,\Z(1))=0$ the long exact sequence in cohomology provides the exact sequence
$$
H^1_{\Z_2}(X,\n{Z}(1))\;\stackrel{r}{\longrightarrow}\;H^1_{\Z_2}(X^\tau,\n{Z}(1))\;\stackrel{s}{\longrightarrow}\;H^{2}_{\Z_2}(X|X^\tau,\n{Z}(1))\;\longrightarrow\;0\;.
$$ 
Since, ${\rm Ker}(s)={\rm Im}(r)$ and ${\rm Im}(s)=H^{2}_{\Z_2}(X|X^\tau,\n{Z}(1))$ one deduces from the homomorphism
theorem 
$$
{\rm Im}(s)\;\simeq\;H^1_{\Z_2}(X^\tau,\n{Z}(1))\;/\;{\rm Ker}(s)
$$
the required isomorphism ${\rm Coker}^1(X|X^\tau,\n{Z}(1))\simeq H^{2}_{\Z_2}(X|X^\tau,\n{Z}(1))$.
\qed

\medskip

In many situations of interest the cokernel in \eqref{eq:coker_iso1} has a very simple form. In the  cases of the TR-spheres $\tilde{\n{S}}^d$ described in Definition \ref{def:AII_sys}
one has that
\beql{eq:coker_TR_sphe}
{\rm Coker}^1\big(\tilde{\n{S}}^d|(\tilde{\n{S}}^d)^\tau,\n{Z}(1)\big)\;\simeq\;H^{2}_{\Z_2}\big(\tilde{\n{S}}^d|(\tilde{\n{S}}^d)^\tau,\n{Z}(1)\big)\;\simeq\;\Z_2 \qquad\quad\ \ \forall\ d\geqslant 2\;.
\eeq
The first isomorphism is a consequence of $H^{2}_{\Z_2}(\tilde{\n{S}}^d,\n{Z}(1))=0$ (\cf \cite[eq. 5.26]{denittis-gomi-14}) while the second isomorphism is justified in Proposition \ref{prop:cond_coker}. For the  TR-tori $\tilde{\n{T}}^d$ the evaluation of the cokernel depends on the dimension according to the formula
\beql{eq:coker_TR_tori}
{\rm Coker}^1\big(\tilde{\n{T}}^d|(\tilde{\n{T}}^d)^\tau,\n{Z}(1)\big)\;\simeq\;H^{2}_{\Z_2}\big(\tilde{\n{T}}^d|(\tilde{\n{T}}^d)^\tau,\n{Z}(1)\big)\;\simeq\;\Z_2^{2^d-(d+1)} \qquad\quad\ \ \forall\ d\geqslant 1\;
\eeq
where we used the convention $\Z_2^0\equiv\{0\}$.
Again the first isomorphism follows from $H^{2}_{\Z_2}(\tilde{\n{T}}^d,\n{Z}(1))=0$ (\cf \cite[eq. 5.19]{denittis-gomi-14}) while the second isomorphism is proved in  Proposition \ref{prop:coker_tori}.

\subsection{The determinant construction}
\label{subsec:det_construct}

In order to define the {FKMM-invariant} we need the notion of \emph{determinat line bundle} associated with a (complex) vector bundle. Let $\bb{V}$ a complex vector space of dimension $n$. The \emph{determinant} of  $\bb{V}$ is by definition ${\rm det}(\bb{V}):=\bigwedge^n\bb{V}$ where the symbol  $\bigwedge^n$ denotes the top exterior power of $\bb{V}$ (\ie the skew-symmetrized $n$-th tensor power of $\bb{V}$). This is a complex vector space of dimension one.
If $\bb{W}$ is a second vector space of same dimension $n$
and $T:\bb{V}\to \bb{W}$ is a linear map then there is a naturally associated map ${\rm det}(T):{\rm det}(\bb{V})\to{\rm det}(\bb{W})$ which in the special case $\bb{V}= \bb{W}$
 coincides with the multiplication by the determinant of the endomorphism $T$.
 This determinant construction
is a functor from the category of vector spaces  to itself
and by a standard argument \cite[Chapter 5, Section 6]{husemoller-94} it induces a functor on the category of complex vector bundles over an arbitrary space $X$. Given a rank $n$ complex vector bundle $\bb{E}\to X$, one defined the associated determinant line bundle ${\rm det}(\bb{E})\to X$ as the rank 1 complex vector bundle with fiber description 
\beql{eq:fib_descr}
 {\rm det}(\bb{E})_x\;=\; {\rm det}(\bb{E}_x)\qquad\quad x\in X \;.
\eeq
 If $\{s_1,\ldots,s_n\}$ is a local trivializing frame for $\bb{E}$ over the open set $\f{U}\subset X$ then ${\rm det}(\bb{E})$ is trivialized over the same open set $\f{U}$ by the section $s_1\wedge\ldots\wedge s_n$. For each map $\varphi:X\to Y$ one has the isomorphism ${\rm det}(\varphi^*(\bb{E}))\simeq \varphi^*({\rm det}(\bb{E}))$ which is a special case of the compatibility between pullback and tensor product operations.
Finally, if $\bb{E}=\bb{E}_1\oplus\bb{E}_2$ in the sense of the
Whitney sum then ${\rm det}(\bb{E})={\rm det}(\bb{E}_1)\otimes {\rm det}(\bb{E}_2)$.

\medskip

If $(\bb{E},\Theta)$ is a rank $n$ $\rr{Q}$-bundle over
the involutive space $(X,\tau)$ then the associated determinant line bundle ${\rm det}(\bb{E})$ inherits an involutive structure given by the map ${\rm det}(\Theta)$ which acts \emph{anti}-linearly between the fibers ${\rm det}(\bb{E})_x$ and ${\rm det}(\bb{E})_{\tau(x)}$ according to ${\rm det}(\Theta)(p_1\wedge\ldots\wedge p_n)= \Theta(p_1)\wedge\ldots\wedge \Theta(p_n)$. Clearly ${\rm det}(\Theta)^2$ is a fiber preserving map which coincides with the multiplication by $(-1)^n$. Hence:

\begin{lemma}\label{lemma:R_Q_det_bun}
Let $(\bb{E},\Theta)$ be a rank $n$ $\rr{Q}$-bundle over
 $(X,\tau)$ and $({\rm det}(\bb{E}),{\rm det}(\Theta))$ the associated determinant line bundle endowed with  the involutive structure ${\rm det}(\Theta)$.

\begin{enumerate}
\item[(i)] If $n=2m$ then  $({\rm det}(\bb{E}),{\rm det}(\Theta))$ is a \virg{Real} line bundle over $(X,\tau)$;\vspace{1.3mm}
\item[(ii)] If $n=2m+1$ then  $({\rm det}(\bb{E}),{\rm det}(\Theta))$ is a \virg{Quaternionic} line bundle over $(X,\tau)$.
\end{enumerate}
\end{lemma}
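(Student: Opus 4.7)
The plan is essentially to verify the three defining axioms for a \virg{Real} or \virg{Quaternionic} line bundle applied to $({\rm det}(\bb{E}),{\rm det}(\Theta))$, the only non-trivial point being the computation of ${\rm det}(\Theta)^2$ on fibers. The axioms analogous to $(\rr{Q}_1)$ and $(\rr{Q}_2)$ (equivariance of the projection and fiberwise anti-linearity) are inherited automatically from the corresponding properties of $\Theta$ through the functoriality of the top exterior power construction: indeed, the fiber description \eqref{eq:fib_descr} together with the defining formula
$$
{\rm det}(\Theta)(p_1\wedge\ldots\wedge p_n)\;:=\;\Theta(p_1)\wedge\ldots\wedge \Theta(p_n)
$$
shows that ${\rm det}(\Theta)$ maps the fiber ${\rm det}(\bb{E})_x$ to ${\rm det}(\bb{E})_{\tau(x)}$ and that the resulting map is anti-linear, since for any $\lambda\in\C$ one has $\Theta(\lambda p_1)\wedge\Theta(p_2)\wedge\ldots\wedge\Theta(p_n)=\overline{\lambda}\;\Theta(p_1)\wedge\ldots\wedge\Theta(p_n)$.

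It remains to compute ${\rm det}(\Theta)^2$ fiberwise. Since $\dim_{\C}{\rm det}(\bb{E})_x=1$, every element of this fiber is (up to rescaling) decomposable, so it suffices to evaluate ${\rm det}(\Theta)^2$ on an element of the form $p_1\wedge\ldots\wedge p_n$. Using the property $\Theta^2|_{\bb{E}_x}=-\n{1}_{\bb{E}_x}$ from axiom $(\rr{Q}_3)$, we obtain
$$
{\rm det}(\Theta)^2(p_1\wedge\ldots\wedge p_n)\;=\;\Theta^2(p_1)\wedge\ldots\wedge\Theta^2(p_n)\;=\;(-p_1)\wedge\ldots\wedge(-p_n)\;=\;(-1)^n\;p_1\wedge\ldots\wedge p_n\;.
$$
Hence ${\rm det}(\Theta)^2$ acts as the multiplication by $(-1)^n$ on each fiber.

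The two cases of the statement now follow immediately. When $n=2m$ one has $(-1)^n=+1$, so ${\rm det}(\Theta)^2=\n{1}$ and $({\rm det}(\bb{E}),{\rm det}(\Theta))$ satisfies the involutive axiom of a \virg{Real} line bundle in the sense of \cite{atiyah-66}. When $n=2m+1$ one has $(-1)^n=-1$, so ${\rm det}(\Theta)^2=-\n{1}$, and ${\rm det}(\Theta)$ meets the \emph{anti}-involutive axiom $(\rr{Q}_3)$, producing a \virg{Quaternionic} line bundle. There is no real obstacle here; the result is essentially a bookkeeping consequence of the functoriality of the determinant and of the sign produced by distributing $n$ copies of $-1$ across an $n$-fold wedge product.
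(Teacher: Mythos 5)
Your proof is correct and follows exactly the route the paper takes: the paper states, in the paragraph immediately preceding the lemma, that ${\rm det}(\Theta)^2$ acts fiberwise as multiplication by $(-1)^n$ and derives the two cases from the parity of $n$, which is precisely your computation written out in full. No gaps; your version merely makes explicit the anti-linearity and equivariance checks that the paper leaves implicit.
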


\noindent
We recall once more that the adjective   \virg{Real} is used in the sense of \cite{atiyah-66,denittis-gomi-14}.

\begin{remark}[Metric, line bundle, circle bundle]\label{rk:circ_bund}
{\upshape
Let $(\bb{E},\Theta)$ be a $\rr{Q}$-bundle over
 $(X,\tau)$ of even degree. According to Assumption \ref{ass:metric}, $\bb{E}$  carries an equivariant Hermitian metric $\rr{m}$ that fixes a unique Hermitian metric $\rr{m}_{\rm det}$ on ${\rm det}(\bb{E})$ which is equivariant with respect to the induced $\rr{R}$-structure
${\rm det}(\Theta)$. More explicitly, if $(p_i,q_i)\in \bb{E}|_x\times\bb{E}|_x$, $i=1,\ldots,2m$ then,
$$
\rr{m}_{\rm det}(p_1\wedge\ldots\wedge p_{2m},q_1\wedge\ldots\wedge q_{2m})\;:=\;\prod_{i=1}^{2m}\rr{m}(p_i,q_i)\;.
$$
The $\rr{R}$-line bundle $({\rm det}(\bb{E}),{\rm det}(\Theta))$ endowed with the equivariant Hermitian metric $\rr{m}_{\rm det}$  is $\rr{R}$-trivial if and only if there exists an isometric $\rr{R}$-isomorphism with $X\times\C$ , or equivalently, if and only if
 there exists a global $\rr{R}$-section $s:X\to {\rm det}(\bb{E})$ of unit length (\cf \cite[Theorem 4.8]{denittis-gomi-14}). Let us introduce the \emph{circle bundle} $\n{S}({\rm det}(\bb{E})):=\{p\in{\rm det}(\bb{E})\ |\  \rr{m}_{\rm det}(p,p)=1\}$. Then, the  
$\rr{R}$-triviality of ${\rm det}(\bb{E})$ is equivalent to the existence of an $\rr{R}$-section for the circle bundle $\n{S}({\rm det}(\bb{E}))\to X$.
}\hfill $\blacktriangleleft$
\end{remark}

\begin{corollary}\label{cor:triv_det_01}
Let $(\bb{E},\Theta)$ be a rank $n$ $\rr{Q}$-bundle over
$(X,\tau)$ and $({\rm det}(\bb{E}),{\rm det}(\Theta))$ the associated determinant line bundle endowed with  the involutive structure ${\rm det}(\Theta)$. If $X^\tau\neq\emptyset$ and $H^2_{\Z_2}(X,\Z(1))=0$ then there exists a global trivializing map
$$
h_{\rm det}\;:\;{\rm det}(\bb{E})\;\longrightarrow\; X\;\times \C 
$$
which is equivariant in the sense that $h_{\rm det}\circ {\rm det}(\Theta)=\Upsilon_0\circ h_{\rm det}$ where $\Upsilon_0$ is the standard   $\rr{R}$-structure on the product bundle 
$X\;\times \C$ defined by $\Upsilon_0(x,{\rm v}):=(\tau(x),\overline{{\rm v}})$. \end{corollary}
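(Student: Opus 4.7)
The plan is to combine Lemma \ref{lemma:R_Q_det_bun} with the Kahn-type isomorphism in \eqref{eq:iso:eq_cohom}, and then invoke Remark \ref{rk:circ_bund} to get the isometric equivariant trivialization. The whole argument should be short since the heavy lifting has been done earlier.

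First I would argue that the rank $n$ is necessarily even. The standing hypothesis (condition 0 of Definition \ref{ass:2}) guarantees that $X$ is path-connected, and since $X^\tau\neq\emptyset$ Proposition \ref{prop:Qv0} forces $n=2m$. Consequently, by part (i) of Lemma \ref{lemma:R_Q_det_bun}, the pair $({\rm det}(\bb{E}),{\rm det}(\Theta))$ is a \virg{Real} line bundle over $(X,\tau)$.

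Next I would use the Kahn isomorphism from \eqref{eq:iso:eq_cohom}, namely
$$
{\rm Vec}^{1}_{\rr{R}}(X,\tau)\;\simeq\;H^{2}_{\Z_2}(X,\Z(1))\;,
$$
together with the hypothesis $H^{2}_{\Z_2}(X,\Z(1))=0$ to conclude that every \virg{Real} line bundle over $(X,\tau)$ is $\rr{R}$-trivial. In particular, ${\rm det}(\bb{E})$ is $\rr{R}$-isomorphic to the product $\rr{R}$-line bundle $(X\times\C,\Upsilon_{0})$ with $\Upsilon_{0}(x,{\rm v})=(\tau(x),\overline{{\rm v}})$, which yields an equivariant trivialization $h_{\rm det}:{\rm det}(\bb{E})\to X\times\C$.

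Finally, by Assumption \ref{ass:metric} and the discussion in Remark \ref{rk:circ_bund}, the equivariant Hermitian metric on $\bb{E}$ induces an equivariant Hermitian metric $\rr{m}_{\rm det}$ on ${\rm det}(\bb{E})$, and the essential uniqueness of equivariant metrics (Proposition \ref{rk:eq_metric}) applied in the $\rr{R}$-category allows one to choose the trivialization $h_{\rm det}$ so that it preserves metrics, \ie so that it is an isometric $\rr{R}$-isomorphism. There is no genuine obstacle here: the only mild subtlety is the reduction to the even-rank case needed to place ourselves in the \virg{Real} (rather than \virg{Quaternionic}) regime of Lemma \ref{lemma:R_Q_det_bun}, which is immediate from $X^\tau\neq\emptyset$.
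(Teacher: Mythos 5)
Your argument is correct and follows exactly the paper's own route: even rank via Proposition \ref{prop:Qv0}, the \virg{Real} structure on the determinant via Lemma \ref{lemma:R_Q_det_bun}, and $\rr{R}$-triviality from the Kahn isomorphism \eqref{eq:iso:eq_cohom} together with $H^2_{\Z_2}(X,\Z(1))=0$. The closing remark on choosing $h_{\rm det}$ to be isometric is a harmless addition consistent with Assumption \ref{ass:metric}, but not needed for the statement itself.
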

\proof
Since  $X^\tau\neq\emptyset$
 Proposition \ref{prop:Qv0} assures that $(\bb{E},\Theta)$ has even rank and Lemma \ref{lemma:R_Q_det_bun} implies that $({\rm det}(\bb{E}),{\rm det}(\Theta))$ is an 
$\rr{R}$-line bundle. Therefore, the condition $H^2_{\Z_2}(X,\Z(1))=0$  implies the $\rr{R}$-triviality of $({\rm det}(\bb{E}),{\rm det}(\Theta))$
as showed by \eqref{eq:iso:eq_cohom}. This means the existence of a global $\rr{R}$-trivialization $h_{\rm det}$. \qed

\medskip

\noindent
The restriction of the map $h_{\rm det}$ to $X^\tau$ provides a trivialization for the restricted line bundle  ${\rm det}(\bb{E})|_{X^\tau}$ and, as a consequence of the fiber description \eqref{eq:fib_descr}, one has
\beql{eq:triv_1}
h_{\rm det}|_{X^\tau}\;:\; {\rm det}(\bb{E}|_{X^\tau})\;\longrightarrow\; X^\tau\;\times\; \C\;. 
\eeq
\begin{lemma}
\label{lemma:R_Q_det_bun2}
Let $(\bb{E},\Theta)$ be a $\rr{Q}$-bundle over a space $X$ with trivial involution $\tau={\rm Id}_X$. Then, the associated determinant line bundle ${\rm det}(\bb{E})$ is a trivial $\rr{R}$-bundle which admits a unique  \emph{canonical} trivializing map
\beql{eq:triv_001}
{\rm det}_{X}\;:\;{\rm det}(\bb{E})\;\longrightarrow\; X\;\times \C 
\eeq
such that ${\rm det}_{X}\circ {\rm det}(\Theta)=\Upsilon_0\circ {\rm det}_{X}$. 
Moreover, the map ${\rm det}_{X}$ can be chosen to be metric-preserving and  it leads to a unique \emph{canonical} $\rr{R}$-section $s_{X}:X\to\n{S}({\rm det}(\bb{E}))$ defined by
$$
s_{X}(x)\;:=\;{\rm det}_{X}^{-1}(x,1)\;,\qquad\quad\forall x\in X\;.
$$
\end{lemma}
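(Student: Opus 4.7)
The plan is to reduce everything to a pointwise construction, then show it varies continuously in $x$, and finally exhibit uniqueness. Since $\tau = \mathrm{Id}_X$, each fiber $\bb{E}_x$ is a complex vector space endowed with the quaternionic structure $J_x := \Theta|_{\bb{E}_x}$. By Remark \ref{rk:quat_str} the complex rank must be even, say $n = 2m$, and ${\rm det}(\Theta)$ is \emph{involutive} on ${\rm det}(\bb{E})$ because $(-1)^{2m} = 1$; so by Lemma \ref{lemma:R_Q_det_bun} item (i), $({\rm det}(\bb{E}), {\rm det}(\Theta))$ is an $\rr{R}$-line bundle.

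Next I would define $s_X$ pointwise. Fix $x \in X$ and pick any $\rr{Q}$-frame $\{v_1,\ldots,v_{2m}\}$ of $\bb{E}_x$ orthonormal with respect to the equivariant metric $\rr{m}$ (such frames exist by Remark \ref{rk:quat_str} and Assumption \ref{ass:metric}), satisfying $v_{2k} = J_x v_{2k-1}$. Set
\[
  s_X(x) \;:=\; v_1 \wedge v_2 \wedge \cdots \wedge v_{2m-1}\wedge v_{2m} \;\in\; {\rm det}(\bb{E}_x).
\]
A direct computation using antilinearity of $\Theta$ and $\Theta^2 = -1$ gives ${\rm det}(\Theta)(s_X(x)) = (-1)^m (-1)^m s_X(x) = s_X(x)$ ($m$ sign flips from the relation $\Theta v_{2k} = -v_{2k-1}$, and $m$ transpositions to reorder each pair). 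Moreover $s_X(x)$ has unit length in the induced metric $\rr{m}_{\rm det}$. The key algebraic point is \emph{well-definedness}: a second orthonormal $\rr{Q}$-frame differs from the first by a change-of-basis matrix $U \in \n{U}(2m)$ which intertwines the standard quaternionic structure, i.e.\ $U \in \n{U}(2m)^\mu \simeq \n{U}_{\n{H}}(m) = \n{S}p(m)$. By the observation at the end of Remark \ref{rk:quat_str} one has $\n{S}p(m) \subset \n{SU}(2m)$, hence $\det(U) = 1$ and $s_X(x)$ does not depend on the choice of $\rr{Q}$-frame.

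Continuity of $s_X$ follows from local $\rr{Q}$-triviality. Given $x_0 \in X$, Proposition \ref{prop:loc_triv} supplies a $\tau$-invariant (here: arbitrary) neighbourhood $\f{U} \ni x_0$ and a metric-preserving $\rr{Q}$-trivialization $h : \pi^{-1}(\f{U}) \to \f{U} \times \C^{2m}$ intertwining $\Theta$ and $\Theta_0$. Pulling the standard $\rr{Q}$-frame $\{\ttb{e}_1,\ldots,\ttb{e}_{2m}\}$ of $\f{U}\times\C^{2m}$ back via $h^{-1}$ yields a continuous $\rr{Q}$-frame $\{s_1,\ldots,s_{2m}\}$ on $\f{U}$, and by well-definedness $s_X|_{\f{U}} = s_1 \wedge \cdots \wedge s_{2m}$, which is manifestly continuous. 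Thus $s_X$ is a global continuous $\rr{R}$-section of $\n{S}({\rm det}(\bb{E}))$.

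Finally, since $s_X$ is everywhere nonzero, the formula ${\rm det}_X(s_X(x) \cdot \lambda) := (x,\lambda)$ for $\lambda \in \C$ defines a trivializing map ${\rm det}_X : {\rm det}(\bb{E}) \to X \times \C$; it is metric preserving (because $\|s_X(x)\| = 1$) and satisfies ${\rm det}_X \circ {\rm det}(\Theta) = \Upsilon_0 \circ {\rm det}_X$ because ${\rm det}(\Theta)(s_X(x)) = s_X(x)$. For uniqueness, any other such isometric equivariant trivialization differs from ${\rm det}_X$ by a continuous map $X \to \n{U}(1)$ which is $\Upsilon_0$-equivariant, hence takes values in $\{\pm 1\}$; the normalization of $s_X$ as $v_1 \wedge \cdots \wedge v_{2m}$ (rather than its negative) singles out the $+1$ component and fixes the canonical choice. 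The main obstacle in the argument is precisely the well-definedness step, and it is resolved by the identification $\n{S}p(m) \subset \n{SU}(2m)$; everything else is a continuity/normalization bookkeeping.
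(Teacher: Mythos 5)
Your proof is correct and follows essentially the same route as the paper's: both arguments reduce to the observation from Remark \ref{rk:quat_str} that $\n{U}(2m)^\mu\subset{\rm S}\n{U}(2m)$, so that any change of (orthonormal) $\rr{Q}$-frame has determinant $1$, which makes the wedge of a fiberwise $\rr{Q}$-frame independent of choices; the paper packages this as canonical local trivializations ${\rm det}_{\f{U}_\alpha}$ glued via transition functions with ${\rm det}(g_{\alpha\beta})=1$, while you define the section pointwise and use local $\rr{Q}$-triviality only to check continuity. The difference is purely organizational.
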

\proof
Since  $X$ has trivial involution  $\bb{E}$ has even rank (Proposition \ref{prop:Qv0}) and  $({\rm det}(\bb{E}),{\rm det}(\Theta))$ is an 
$\rr{R}$-line bundle (Lemma \ref{lemma:R_Q_det_bun}). Let $\{\f{U}_\alpha\}$ be a cover of $X$ associated with a system of local trivializations $h_\alpha:\bb{E}|_{\f{U}_\alpha}\to \f{U}_\alpha\times\C^{2m}$ such that $h_\alpha\circ\Theta=\Theta_0\circ h_\alpha$
where $\Theta_0(x,{\rm v})=(x,Q\overline{\rm v})$  is the standard trivial $\rr{Q}$-structure  on the   product bundle $\f{U}_\alpha\times\C^{2m}$ (the matrix $Q$ is defined  in \eqref{eq:Q-mat}).
Such a trivialization exists in view of Proposition \ref{prop:loc_triv}. 
Associated with each $h_\alpha$ there is a local trivialization ${\rm det}(h_{\alpha}):{\rm det}(\bb{E})|_{\f{U}_\alpha}\to \f{U}_\alpha\times \C$ for the restricted  determinant line bundle ${\rm det}(\bb{E}|_{\f{U}_\alpha})={\rm det}(\bb{E})|_{\f{U}_\alpha}$.
The map $h_\alpha$ is usually not unique: a choice for $h_\alpha$ is equivalent to a choice of a 
global $\rr{Q}$-frame (\cf Definition \ref{def:Q_pair}) $\{s^{(\alpha)}_1,s^{(\alpha)}_2,s^{(\alpha)}_3,s^{(\alpha)}_4\ldots,s^{(\alpha)}_{2m-1},s^{(\alpha)}_{2m}\}\subset\Gamma(\bb{E}|_{\f{U}_\alpha})$  such that
$
s^{(\alpha)}_{j}(x)\;:=\;h_{\alpha}^{-1}(x,{\rm e}_j)
$.
Different $\rr{Q}$-frames lead to different trivializations  $h_{\alpha}$ and each pair of $\rr{Q}$-frames is related by a gauge transformation $f_\alpha$. If one assumes that $\bb{E}$ is endowed with an invariant Hermitian metric  (\cf Proposition\ref{rk:eq_metric}) each $\rr{Q}$-frame can be chosen to be orthonormal and so $f_\alpha:\f{U}_\alpha\to\n{U}(2m)$. 
The $\rr{Q}$-structure and the fact that $X$ has trivial involution imply that $f_\alpha(x)=-Q\overline{f_\alpha(x)}Q$ for all $x\in \f{U}_\alpha$, namely $f_\alpha:\f{U}_\alpha\to\n{U}(2m)^\mu\subset {\rm S}\n{U}(2m)$ (\cf Remark \ref{rk:quat_str}). 
The trivialization ${\rm det}(h_{\alpha})$ is uniquely specified by the \virg{Real} section $s^{(\alpha)}:=s^{(\alpha)}_1\wedge s^{(\alpha)}_2\wedge\ldots\wedge s^{(\alpha)}_{2m-1}\wedge s^{(\alpha)}_{2m}$. If $h_{\alpha}$ and $h_{\alpha}'$ are two different trivializations for $\bb{E}|_{\f{U}_\alpha}$ related by the gauge transformation $f_\alpha$ then 
${\rm det}(h_{\alpha})$ and ${\rm det}(h'_{\alpha})$ are related by ${\rm det}(f_\alpha)=1$, namely 
${\rm det}(h_{\alpha})={\rm det}(h'_{\alpha})$. In this sense the local trivializations of ${\rm det}(\bb{E})$ are canonical and we write ${\rm det}_{\f{U}_\alpha}:{\rm det}(\bb{E})|_{\f{U}_\alpha}\to \f{U}_\alpha\times \C$ for the unique trivializing map.

The topology of the vector bundle $\bb{E}$ is uniquely determined
by the set of \emph{transition functions} $g_{\alpha\beta}:\f{U}_\alpha\cap\f{U}_\beta\to\n{U}({2m})$ associated with the maps
$h_\alpha\circ h_\beta^{-1}$ (which are well defined on $\f{U}_\alpha\cap\f{U}_\beta$). The equivariance of the maps $h_\alpha$ implies $g_{\alpha\beta}:\f{U}_\alpha\cap\f{U}_\beta\to\n{U}({2m})^\mu$. The determinant line bundle ${\rm det}(\bb{E})$ is completely specified by the set of transition functions ${\rm det}(g_{\alpha\beta}):\f{U}_\alpha\cap\f{U}_\beta\to\n{U}(1)$. Since ${\rm det}(g_{\alpha\beta})=1$ for all $\alpha$ and $\beta$ the local canonical  trivializations ${\rm det}_{\f{U}_\alpha}
$ glue together to give rise to the unique  global canonical trivialization \eqref{eq:triv_001}.
\qed

\medskip
\noindent
If $X^\tau\neq \emptyset$   the restricted vector bundle $\bb{E}|_{X^\tau}\to{X^\tau}$ can be seen as a $\rr{Q}$-bundle over a space with trivial involution and Lemma \ref{lemma:R_Q_det_bun2} provides the canonical trivialization
\beql{eq:triv_2}
{\rm det}_{X^\tau}\;:\;{\rm det}(\bb{E}|_{X^\tau})\;\longrightarrow\; X^\tau\;\times \C 
\eeq
for the restricted determinant line bundle ${\rm det}(\bb{E}|_{X^\tau})$ which is \virg{a priori} different from \eqref{eq:triv_1}.

\subsection{Construction of the FKMM-invariant}
\label{subsec:constr-FKMM_inv}
In this section we construct the  \emph{FKMM-invariant}
associated to a  \virg{Quaternionic} vector bundle $(\bb{E},\Theta)$ over an involutive space $(X,\tau)$.
Although  a more general approach is possible \cite{denittis-gomi-14-gen} (\cf also Remark \ref{rk:gen_FKMM_inv})  we decided, for pedagogical reasons, to present here
a construction which is specific for a particular (albeit sufficiently large) class
of $\rr{Q}$-bundles.
\begin{definition}[$\rr{Q}$-bundles of FKMM-type]
\label{def:FKMM-type}
A $\rr{Q}$-bundle $(\bb{E},\Theta)$ over the involutive space $(X,\tau)$ is  of \emph{FKMM-type} if $X^\tau\neq\emptyset$ and if the associated \virg{Real} determinant line bundle $({\rm det}(\bb{E}),{\rm det}(\Theta))$ is $\rr{R}$-trivial. The property to be of FKMM-type is an isomorphism invariant and we use the notation
$$
{\rm Vec}_{{\rm FKMM}}^{2m}(X,\tau)\;\subseteq {\rm Vec}_{\rr{Q}}^{2m}(X,\tau)
$$
for the set of equivalence classes of FKMM $\rr{Q}$-bundles of rank $2m$.
\end{definition}

\noindent
For certain  involutive spaces $(X,\tau)$  all possible $\rr{Q}$-bundles are of FKMM-type.

\begin{proposition}
\label{prop:FKMM_type_space}
Let $(X,\tau)$ be an FKMM-space in the sense of Definition \ref{ass:2}. Then
$$
{\rm Vec}_{{\rm FKMM}}^{2m}(X,\tau)\;= {\rm Vec}_{\rr{Q}}^{2m}(X,\tau)\qquad\quad \forall\ \ m\in\N\;.
$$
\end{proposition}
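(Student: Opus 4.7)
The plan is to show that for an FKMM-space, the hypothesis defining an FKMM-type bundle (namely $\rr{R}$-triviality of the determinant line bundle) is automatically satisfied by \emph{every} $\rr{Q}$-bundle. The proof is essentially a direct chain of implications from the defining conditions of an FKMM-space together with the results already established, so I would expect no substantial obstacle.

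First, I would invoke condition 1 of Definition \ref{ass:2}, which guarantees $X^\tau \neq \emptyset$. Combined with the path-connectedness of $X$ (part of condition 0), Proposition \ref{prop:Qv0} forces the rank of any $\rr{Q}$-bundle $(\bb{E},\Theta)$ over $(X,\tau)$ to be even, say $2m$. This is already implicit in the notation ${\rm Vec}_{\rr{Q}}^{2m}(X,\tau)$, but it is the first ingredient making the next step applicable.

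Next, since the rank is even, Lemma \ref{lemma:R_Q_det_bun}(i) tells us that the associated determinant line bundle $({\rm det}(\bb{E}),{\rm det}(\Theta))$ is a \virg{Real} line bundle over $(X,\tau)$. At this point I would apply the Kahn isomorphism recorded in \eqref{eq:iso:eq_cohom},
\[
{\rm Vec}_{\rr{R}}^{1}(X,\tau)\;\simeq\;H^{2}_{\Z_2}(X,\Z(1))\;,
\]
which identifies the Picard group of \virg{Real} line bundles with the second equivariant Borel cohomology group with local coefficients $\Z(1)$.

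Finally, condition 2 of Definition \ref{ass:2} states precisely that $H^{2}_{\Z_2}(X,\Z(1))=0$, so ${\rm Vec}_{\rr{R}}^{1}(X,\tau)=0$. This means every \virg{Real} line bundle over $(X,\tau)$ is $\rr{R}$-trivial; in particular, $({\rm det}(\bb{E}),{\rm det}(\Theta))$ is $\rr{R}$-trivial. Together with $X^\tau\neq\emptyset$, this is exactly the condition in Definition \ref{def:FKMM-type} for $(\bb{E},\Theta)$ to be of FKMM-type. Since this holds for every $(\bb{E},\Theta)\in{\rm Vec}_{\rr{Q}}^{2m}(X,\tau)$, we conclude the claimed equality ${\rm Vec}_{{\rm FKMM}}^{2m}(X,\tau)={\rm Vec}_{\rr{Q}}^{2m}(X,\tau)$ for all $m\in\N$.
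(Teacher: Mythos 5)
Your proof is correct and follows essentially the same chain as the paper's own argument: $X^\tau\neq\emptyset$ and path-connectedness force even rank via Proposition \ref{prop:Qv0}, Lemma \ref{lemma:R_Q_det_bun} gives the \virg{Real} structure on the determinant line bundle, and the vanishing of $H^2_{\Z_2}(X,\Z(1))$ together with the Kahn isomorphism in \eqref{eq:iso:eq_cohom} yields $\rr{R}$-triviality, hence the FKMM-type condition. No issues.
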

\proof
Since $X^\tau\neq\emptyset$ the admissible $\rr{Q}$-bundles have even rank (\cf Proposition \ref{prop:Qv0}). This implies that the associated determinant line bundles carry  an $\rr{R}$-structure (\cf Lemma \ref{lemma:R_Q_det_bun}). Finally, the condition $H^2_{\Z_2}(X,\Z(1))=0$ and the (second) isomorphism in \eqref{eq:iso:eq_cohom} assure the $\rr{R}$-triviality of  each $\rr{R}$-line bundle over $(X,\tau)$. 
\qed

\medskip

Let $(\bb{E},\Theta)$ be a $\rr{Q}$-bundle of FKMM-type over the involutive space $(X,\tau)$ and consider the restricted determinant line bundle ${\rm det}(\bb{E}|_{X^\tau})\to X^\tau$. This line bundle is $\rr{R}$-trivial and according to Lemma \ref{lemma:R_Q_det_bun2} it admits a canonical trivialization \eqref{eq:triv_2}. On the other hand, the full determinant line bundle ${\rm det}(\bb{E})\to X$ is $\rr{R}$-trivial 
by assumption and so (as in Corollary \ref{cor:triv_det_01})
there exists a global trivialization $h_{\rm det}:{\rm det}(\bb{E})\to X\times \C$ which restricts to a trivialization for ${\rm det}(\bb{E}|_{X^\tau})$ as in \eqref{eq:triv_1}.
 If one fixes an equivariant Hermitian metric on $\bb{E}$
 the maps ${\rm det}_{X^\tau}$ and  $h_{\rm det}|_{X^\tau}$  can be chosen to be isometries with respect to the standard  Hermitian metric on the product bundle.
This implies that the difference 
$$
h_{\rm det}|_{X^\tau}\circ {\rm det}_{X^\tau}^{-1}\;:\: X^\tau\;\times\; \C\;\longrightarrow\; X^\tau\;\times\; \C
$$
identifies a  map 
$$
\omega_{\bb E}\;:\;X^\tau\;\longrightarrow\;\n{U}(1)
$$
such that $(h_{\rm det}|_{X^\tau}\circ {\rm det}_{X^\tau}^{-1})(x,\lambda)=(x,\omega_{\bb E}(x)\lambda)$
for all $(x,\lambda)\in X^\tau \times  \C$.
The equivariance property $(h_{\rm det}|_{X^\tau}\circ {\rm det}_{X^\tau}^{-1})\circ\Upsilon_0 =\Upsilon_0\circ(h_{\rm det}|_{X^\tau}\circ {\rm det}_{X^\tau}^{-1})$ implies that $\omega_{\bb E}$ is  equivariant with respect to the involution on $\n{U}(1)$ given by the complex conjugation, \ie 
$\omega_{\bb E}(\tau(x))=\overline{\omega_{\bb E}(x)}$. Since $\omega_{\bb E}$ is defined on the fixed point set $X^\tau$ and the invariant subset of $\n{U}(1)$  is $\{-1,+1\}$ one has that
\beql{eq:fkmm_inv1}
\omega_{\bb E}\;:\;X^\tau\;\longrightarrow\;\{-1,+1\}\simeq\Z_2\;,
\eeq
namely $\omega_{\bb E}\in{\rm Map}(X^\tau,\Z_2)\simeq [X^\tau,\n{U}(1)]_{\Z_2}$. Considering that the canonical trivialization ${\rm det}_{X^\tau}$ is unique,
the construction of  $\omega_{\bb E}$ 
only depends on the choice of $h_{\rm det}$. This freedom is equivalent to the choice of a global equivariant gauge transform $f:X\to \n{U}(1)$ that affects $\omega_{\bb E}$ by multiplication and restriction (as in Lemma \ref{lemma:coker}). 
Moreover, only the homotopy class $[f]\in[X,\n{U}(1)]_{\Z_2}$ is relevant.

\begin{definition}[FKMM-invariant, {\cite{furuta-kametani-matsue-minami-00}}]
\label{def:FKMM-invariant}
To each
  $\rr{Q}$-bundle $(\bb{E},\Theta)$  of {FKMM-type} is associated the class
$$
\kappa(\bb{E})\;:=\;[\omega_{\bb E}]\;\in\;[X^\tau,\n{U}(1)]_{\Z_2}/[X,\n{U}(1)]_{\Z_2}\;.
$$
We say that $\kappa(\bb{E})$ is the \emph{FKMM-invariant} of the  $\rr{Q}$-bundle $(\bb{E},\Theta)$ .
\end{definition}

\begin{remark}\label{rk:FKMM_section}{\upshape
The FKMM-invariant can be introduced also from the point of view  of sections. Since ${\rm det}(\bb{E})$ is $\rr{R}$-trivial by assumption, the set of  $\rr{R}$-sections of the circle bundle $\n{S}({\rm det}(\bb{E}))$
is non-empty and any two of such
$\rr{R}$-sections  are related by the multiplication by a $\Z_2$-equivariant map $u:X\to\n{U}(1)$.
Each $\rr{R}$-section $t:X\to \n{S}({\rm det}(\bb{E}))$ restricts to a
section $t|_{X^\tau}$
 of  
$ \n{S}({\rm det}(\bb{E}|_{X^\tau}))$. We can compare this section with 
the unique \emph{canonical} section $s_{X^\tau}$ constructed in Lemma \ref{lemma:R_Q_det_bun2}. The difference between  them
$t|_{X^\tau}=\omega_{\bb{E}}\cdot s_{X^\tau}$,
is specified by a $\Z_2$-equivariant map
$\omega_{\bb{E}}:X^\tau\to\n{U}(1)$ that coincides with the one introduced in Definition \ref{def:FKMM-invariant}.
 This equivalent description helps us to understand the meaning of the FKMM-invariant. Indeed, each equivariant section  of the \virg{Real} circle bundle $\n{S}({\rm det}(\bb{E}))$ defines, by restriction, an equivariant section over $X^\tau$. On the other hand, according to the topology of $X^\tau$ (for instance when there are several disconnected components) there may exist equivariant sections over $X^\tau$  that are not obtainable as the restriction of global equivariant sections over $X$. In a certain sense, it is exactly this redundancy which is measured by the FKMM-invariant.
 }
 \hfill $\blacktriangleleft$
\end{remark}

\begin{remark}[A more general definition of the FKMM-invariant]\label{rk:gen_FKMM_inv}{\upshape
The-FKMM invariant can be formulated in a more general
setting.  The key observation is that the cohomology group $H^2_{\Z_2}(X|X^\tau, \Z(1))$ can be realized as
isomorphism classes of pairs consisting of a \virg{Real} line bundles on $(X,\tau)$ and
a nowhere vanishing section on $X^\tau$. Then, in view of Lemma \ref{lemma:R_Q_det_bun2},
a generalized version of the FKMM-invariant can be defined for \emph{every} $\rr{Q}$-bundle $(\bb{E},\Theta)$
on \emph{every} involutive space $(X,\tau)$ as the element in $H^2_{\Z_2}(X|X^\tau, \Z(1))$
represented by the pair $({\rm det} (E), s_{X^\tau})$. 
The details of this construction will be given in a
future work \cite{denittis-gomi-15}.
 }
 \hfill $\blacktriangleleft$
\end{remark}

\noindent
The main properties of the FKMM-invariant are listed in the following theorem.

\begin{theorem}
\label{theo:FKMM_propert}
The FKMM-invariant is  well defined in the sense that if 
$(X,\tau)$ is an involutive space such that $X^\tau\neq\emptyset$ and $(\bb{E}_1,\Theta_1)$ and $(\bb{E}_2,\Theta_2)$ are two isomorphic $\rr{Q}$-bundles of FKMM-type over $(X,\tau)$
then $\kappa(\bb{E}_1)=\kappa(\bb{E}_2)$. Moreover:
\begin{enumerate}
\vspace{1.3mm}
\item[(i)] The FKMM-invariant is \emph{natural}, meaning that if $(\bb{E},\Theta)$  is a $\rr{Q}$-bundle of  FKMM-type over $(X,\tau_X)$ 
and
$f:(Y,\tau_Y)\to (X,\tau_X)$ is an equivariant map such that $f(Y)\cap X^{\tau_X}\neq\emptyset$
then $\kappa(f^\ast(\bb{E}))=f^\ast(\kappa(\bb{E}))$;

\vspace{1.3mm}
\item[(ii)] If $(\bb{E},\Theta)$ is $\rr{Q}$-trivial then $\kappa(\bb{E})=+1$;
\vspace{1.3mm}
\item[(iii)] If $(\bb{E}_1,\Theta_1)$ and $(\bb{E}_2,\Theta_2)$ are  two  $\rr{Q}$-bundles of {FKMM-type} over $(X,\tau)$ then 
$$\kappa(\bb{E}_1\oplus\bb{E}_2)\;=\;\kappa(\bb{E}_1)\cdot\kappa(\bb{E}_2)\;.$$
\end{enumerate}
Finally,  if $(X,\tau)$ is a FKMM-space then

\begin{enumerate}
\vspace{1.3mm}
\item[(iv)] $
\kappa(\bb{E})\;\in\;H^{2}_{\Z_2}(X|X^\tau,\n{Z}(1))\;.
$
\end{enumerate}
\end{theorem}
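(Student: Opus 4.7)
The plan is to verify each property in turn, with the main effort spent on well-definedness; items (i)--(iv) then follow by functorial manipulations together with Lemma \ref{lemma:coker}. For well-definedness, given a $\rr{Q}$-isomorphism $f:\bb{E}_1\to\bb{E}_2$ (taken to be metric-preserving, by Assumption \ref{ass:metric}), the induced $\rr{R}$-isomorphism ${\rm det}(f):{\rm det}(\bb{E}_1)\to{\rm det}(\bb{E}_2)$ restricts on $X^\tau$ to a map that intertwines the canonical trivializations ${\rm det}^{(1)}_{X^\tau}$ and ${\rm det}^{(2)}_{X^\tau}$ furnished by Lemma \ref{lemma:R_Q_det_bun2}. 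This is because, locally at a fixed point, $f$ corresponds to a quaternionic unitary matrix in $\n{U}(2m)^\mu\subset{\rm S}\n{U}(2m)$ (see Remark \ref{rk:quat_str}), which has unit determinant. Consequently, if $h^{(2)}_{\rm det}$ is any global $\rr{R}$-trivialization of ${\rm det}(\bb{E}_2)$, then $h^{(1)}_{\rm det}:=h^{(2)}_{\rm det}\circ{\rm det}(f)$ is one of ${\rm det}(\bb{E}_1)$ producing the \emph{same} $\omega$-function on $X^\tau$. Since the ambiguity in the choice of $h_{\rm det}$ is a global equivariant gauge in $[X,\n{U}(1)]_{\Z_2}$ acting on $\omega_{\bb{E}}$ by multiplication and restriction, the class $\kappa(\bb{E})\in[X^\tau,\n{U}(1)]_{\Z_2}/[X,\n{U}(1)]_{\Z_2}$ is well defined.

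For naturality (i), the canonical isomorphism ${\rm det}(f^\ast\bb{E})\simeq f^\ast{\rm det}(\bb{E})$, the fact that the pullback of a global $\rr{R}$-trivialization is again a global $\rr{R}$-trivialization, and the compatibility of pullback with the canonical trivializations on fixed loci imply $\omega_{f^\ast\bb{E}}=f^\ast\omega_{\bb{E}}$ on $Y^{\tau_Y}$ (the assumption $f(Y)\cap X^{\tau_X}\neq\emptyset$ ensures $f^\ast\bb{E}$ remains of FKMM-type). For triviality (ii), when $(\bb{E},\Theta)$ is the product $\rr{Q}$-bundle, ${\rm det}(\bb{E})=X\times\C$ carries the standard $\rr{R}$-structure and the identity trivialization agrees on all of $X$ (in particular on $X^\tau$) with the canonical trivialization of Lemma \ref{lemma:R_Q_det_bun2}, so $\omega_{\bb{E}}\equiv 1$. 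Additivity (iii) is immediate from the natural isomorphism ${\rm det}(\bb{E}_1\oplus\bb{E}_2)\simeq{\rm det}(\bb{E}_1)\otimes{\rm det}(\bb{E}_2)$: both arbitrary global and canonical trivializations factor as tensor products, so the corresponding $\omega$-functions multiply pointwise on $X^\tau$.

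Finally, for (iv), the FKMM-space hypothesis $H^2_{\Z_2}(X,\Z(1))=0$ is exactly the additional assumption in Lemma \ref{lemma:coker} that yields the isomorphism $[X^\tau,\n{U}(1)]_{\Z_2}/[X,\n{U}(1)]_{\Z_2}\simeq H^{2}_{\Z_2}(X|X^\tau,\n{Z}(1))$, so $\kappa(\bb{E})$ is realized as an element of the relative equivariant cohomology group. The hard part of this program is the step underlying well-definedness: verifying that ${\rm det}(f)|_{X^\tau}$ acts as the identity with respect to the canonical trivializations, which amounts to identifying the fiberwise restriction of a $\rr{Q}$-isomorphism over a fixed point with a symplectic matrix and invoking the inclusion $\n{S}p(m)\simeq\n{U}(2m)^\mu\subset{\rm S}\n{U}(2m)$.
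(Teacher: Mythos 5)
Your proposal is correct and follows essentially the same route as the paper: well-definedness is reduced, exactly as in the paper, to the fact that a $\rr{Q}$-isomorphism over a fixed point is represented by a matrix in $\n{U}(2m)^\mu\subset{\rm S}\n{U}(2m)$ (the argument underlying Lemma \ref{lemma:R_Q_det_bun2}), so that ${\rm det}(f)$ intertwines the canonical trivializations, while (i)--(iii) follow from functoriality of the determinant and (iv) from Lemma \ref{lemma:coker}. Your write-up merely makes explicit the step the paper delegates to "the argument of the proof of Lemma \ref{lemma:R_Q_det_bun2}".
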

\proof
If $(\bb{E}_1,\Theta_1)\simeq(\bb{E}_2,\Theta_2)$
as $\rr{Q}$-bundles then 
the argument of the proof of Lemma \ref{lemma:R_Q_det_bun2} shows that
the two canonical trivializations for 
${\rm det}(\bb{E}_1|_{X^\tau})$ and ${\rm det}(\bb{E}_2|_{X^\tau})$ coincide (up to a suitable identification). Then $\omega_{\bb{E}_1}$ and $\omega_{\bb{E}_2}$
may differ only for the multiplication by a gauge transformation $[f]\in[X,\n{U}(1)]_{\Z_2}$ which connects the two global trivializations of the isomorphic $\rr{R}$-line bundles 
${\rm det}(\bb{E}_1)\simeq{\rm det}(\bb{E}_2)$.
This implies $\kappa(\bb{E}_1)=\kappa(\bb{E}_2)$ by construction. 

(i) The condition $f(Y)\cap X^{\tau_X}\neq\emptyset$ implies $Y^{\tau_Y}\neq\emptyset$. Moreover, the global  $\rr{R}$-trivialization
$h_{\rm det}:{\rm det}(\bb{E})\to X\times\C$ 
induces the global  $\rr{R}$-trivialization
$f^*(h_{\rm det}):{\rm det}(f^*(\bb{E}))\to Y\times\C$ 
for  ${\rm det}(f^*(\bb{E}))\simeq f^*({\rm det}(\bb{E}))$.
Then, also $f^*(\bb{E})$ is a $\rr{Q}$-bundle of FKMM-type
and $\kappa(f^\ast(\bb{E}))$ is well defined. The relations 
$f^*(h_{\rm det})=h_{\rm det}\circ {\rm det}(\hat{f})$ and 
${\rm det}_{Y^{\tau_Y}}={\rm det}_{X^{\tau_X}} \circ {\rm det}(\hat{f})$ imply $\kappa(f^\ast(\bb{E}))=f^\ast(\kappa(\bb{E}))$ (here $\hat{f}$ denotes the canonical morphism between  total spaces induced by $f$). 

(ii)
In this case $\omega_{\bb{E}}$ is the constant map on $X^\tau$ with value $+1$. 

(iii) It follows by construction from the functorial relation ${\rm det}(\bb{E}_1\oplus\bb{E}_2)\simeq{\rm det}(\bb{E}_1)\otimes{\rm det}(\bb{E}_2)$.

(iv) This is a consequence of  Lemma \ref{lemma:coker}.
\qed

\section{Classification in dimension $d\leqslant 3$}
The aim of this section is to classify the sets ${\rm Vec}^{2m}_\rr{Q}({\n{S}}^d,\tau)$ and  ${\rm Vec}^{2m}_\rr{Q}({\n{T}}^d,\tau)$
of isomorphism classes of \virg{Quaternionic} vector bundles over the involutive spaces described in Definition \ref{def:AII_sys} in the low dimensional regime $d\leqslant 3$. Since we already know the classification for $d=1$ (\cf Proposition \ref{prop_(ii)}) we can restrict our attention to $d=2,3$. The case $d=4$ will be considered separately in Section \ref{sec:d=4}.

\medskip

In some generality, the analysis which provides the desired classification can be extended to involutive spaces which share many of the features of the  TR-spaces $\tilde{\n{T}}^d$ and $\tilde{\n{S}}^d$. These are 
exactly the FKMM-spaces of Definition \ref{ass:2}.
When $(X,\tau)$ is a FKMM-space of \virg{low dimension}, namely 
with  free $\Z_2$-cells of  dimension not exceeding $d=3$,  Corollary \ref{corol_(iii)} and  Theorem \ref{theo:FKMM_propert} imply that the 
 following facts hold true:
\begin{itemize}
\vspace{1.3mm}
\item ${\rm Vec}^{2m}_\rr{Q}(X,\tau)\simeq {\rm Vec}^{2}_\rr{Q}(X,\tau)$ for all $m\in\N$;
\vspace{1.3mm}
\item The FKMM-invariant associated with an element  $[\bb{E}]\in{\rm Vec}^{2m}_\rr{Q}(X,\tau)$ coincides with the FKMM-invariant associated with its non-trivial part  $[\bb{E}']\in{\rm Vec}^{2}_\rr{Q}(X,\tau)$;
\vspace{1.3mm}
\item The FKMM-invariant provides a map $\kappa:{\rm Vec}^{2}_\rr{Q}(X,\tau)\to H^{2}_{\Z_2}(X|X^\tau,\n{Z}(1))$.
\end{itemize}
The main result we want to prove is  that the map $\kappa$ is in effect an isomorphism  in many cases of interest like $\tilde{\n{T}}^d$ or $\tilde{\n{S}}^d$.

\subsection{Injectivity of the FKMM-invariant}
\label{sec:injFKMM-inv}
The first step of our analysis is to show that the map
\begin{equation}\label{eq:inv_inj}
\kappa\;:\;{\rm Vec}^{2m}_\rr{Q}(X,\tau)\;\longrightarrow\; H^{2}_{\Z_2}(X|X^\tau,\n{Z}(1))
\end{equation}
is injective if $(X,\tau)$ is an FKMM-space.  We start with an important (and quite general) technical result.

\begin{lemma}[\cite{furuta-kametani-matsue-minami-00}]\label{lemma:inv_inject1}
Let $(\bb{E}_1,\Theta_1)$ and $(\bb{E}_2,\Theta_2)$ be  rank 2 \virg{Quaternionic} vector bundles of FKMM-type over the involutive space $(X,\tau)$. Assume also the equality $\kappa(\bb{E}_1)=\kappa(\bb{E}_2)$ of the respective  FKMM-invariants  and the existence of a  $\rr{Q}$-isomorphism $F:\bb{E}_1|_{X^\tau}\to \bb{E}_2|_{X^\tau}$. Then:

\begin{enumerate}
\vspace{1.3mm}
\item[(i)] There exists an $\rr{R}$-isomorphism $f:{\rm det}(\bb{E}_1)\to{\rm det}(\bb{E}_2)$ such that $f|_{X^\tau}={\rm det}(F)$;

\vspace{1.3mm}
\item[(ii)] The set
$$
\f{SU}(\bb{E}_1,\bb{E}_2,f)\;:=\;\bigsqcup_{x\in X}\;\left\{G_x:\bb{E}_1|_x\to\bb{E}_2|_x\ \left|\ 
\begin{aligned}
&\text{vector bundle isomorphism}\\
&\text{such that}\ {\rm det}(G_x)=f|_x\\
\end{aligned}
\right. \right\}
$$
defines a locally trivial fiber bundle over $X$ with typical fiber ${\rm S}\n{U}(2)$;
\vspace{1.3mm}
\item[(iii)] There is a natural involution on $\f{SU}(\bb{E}_1,\bb{E}_2,f)$ covering $\tau$. Moreover, on the fixed points $x\in X^\tau$
this involution is identifiable  with the involutive map $\mu:U\mapsto -Q\overline{U}Q$  on ${\rm S}\n{U}(2)$ defined in Remark \ref{rk:quat_str};
\vspace{1.3mm}
\item[(iv)] The set of equivariant sections of $\f{SU}(\bb{E}_1,\bb{E}_2,f)$ is in bijection with the set of $\rr{Q}$-isomorphisms
$G:\bb{E}_1\to\bb{E}_2$  such that ${\rm det}(G)=f$.
\end{enumerate}
\end{lemma}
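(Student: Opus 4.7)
The plan is to establish the four claims in order, using the relationship between the FKMM-invariant and the equivariant trivialization of the determinant line bundle.

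For part (i), I would start by noting that since $F\colon \bb{E}_1|_{X^\tau}\to \bb{E}_2|_{X^\tau}$ is a $\rr{Q}$-isomorphism, its fiberwise restrictions at fixed points lie in $\n{U}(2)^\mu=\n{S}p(1)\subset\n{S}\n{U}(2)$ (see Remark \ref{rk:quat_str}), hence $\det(F)$ is identically $1$ with respect to the canonical trivializations $\det_{X^\tau}$ provided by Lemma \ref{lemma:R_Q_det_bun2}. Next, I would choose global $\rr{R}$-trivializations $h_{\det,i}\colon \det(\bb{E}_i)\to X\times\C$ (for $i=1,2$) coming from the FKMM-type hypothesis; their restrictions to $X^\tau$ differ from the canonical trivializations by the equivariant maps $\omega_{\bb{E}_i}\colon X^\tau\to\n{U}(1)$. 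The tentative isomorphism $f_0:=h_{\det,2}^{-1}\circ h_{\det,1}\colon \det(\bb{E}_1)\to\det(\bb{E}_2)$ then restricts on $X^\tau$ to $\omega_{\bb{E}_1}\cdot\omega_{\bb{E}_2}^{-1}\cdot\det(F)$ in the canonical trivializations. The equality $\kappa(\bb{E}_1)=\kappa(\bb{E}_2)$ yields an equivariant gauge $u\colon X\to\n{U}(1)$ with $u|_{X^\tau}=\omega_{\bb{E}_1}\cdot\omega_{\bb{E}_2}^{-1}$; setting $f:=u^{-1}\cdot f_0$ produces the required $\rr{R}$-isomorphism with $f|_{X^\tau}=\det(F)$.

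For part (ii), I would observe that the set $\f{SU}(\bb{E}_1,\bb{E}_2,f)$ can be realized as a closed subspace of the bundle $\Hom(\bb{E}_1,\bb{E}_2)$ cut out by the condition $\det=f$. Choosing local trivializations of $\bb{E}_1$ and $\bb{E}_2$ over a common open set $\f{U}\subset X$ identifies the fibers of $\f{SU}$ with the set $\{U\in\n{G}\n{L}_2(\C)\mid\det U=c(x)\}$ for some $\n{U}(1)$-valued function $c$; dividing out a fixed square root of $c$ reduces this to $\n{S}\n{L}_2(\C)\simeq\n{S}\n{U}(2)$ (up to homotopy; by Assumption \ref{ass:metric} I work with unitary trivializations so the fiber is literally $\n{S}\n{U}(2)$). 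Local triviality then follows from local triviality of the underlying vector bundles and of $\det(\bb{E}_i)$.

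For part (iii), I would define $\tilde{\Theta}\colon \f{SU}(\bb{E}_1,\bb{E}_2,f)\to\f{SU}(\bb{E}_1,\bb{E}_2,f)$ by
\[
\tilde{\Theta}(G_x)\;:=\;\Theta_2\circ G_x\circ\Theta_1^{-1}\;\colon\;\bb{E}_1|_{\tau(x)}\longrightarrow\bb{E}_2|_{\tau(x)}.
\]
This map is complex-linear on fibers (two anti-linear maps compose to a linear one), covers $\tau$, and verifies $\tilde{\Theta}^2=\mathrm{Id}$ using $\Theta_i^2=-\n{1}$. The identity $\det(\tilde{\Theta}(G_x))=\det(\Theta_2)\circ f|_x\circ\det(\Theta_1)^{-1}=f|_{\tau(x)}$ follows from the $\rr{R}$-equivariance $f\circ\det(\Theta_1)=\det(\Theta_2)\circ f$, so $\tilde{\Theta}$ preserves $\f{SU}$. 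At a fixed point $x\in X^\tau$, passing to local $\rr{Q}$-trivializations in which $\Theta_i$ becomes $v\mapsto Q\overline{v}$ and using $Q^{-1}=-Q$, a direct computation gives $\tilde{\Theta}(U)=-Q\overline{U}Q=\mu(U)$.

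For part (iv), a section of $\f{SU}(\bb{E}_1,\bb{E}_2,f)$ is by definition a vector bundle isomorphism $G\colon \bb{E}_1\to\bb{E}_2$ with $\det(G)=f$. Equivariance under $\tilde{\Theta}$ reads $G_{\tau(x)}=\Theta_2\circ G_x\circ\Theta_1^{-1}$, which is precisely the defining relation $G\circ\Theta_1=\Theta_2\circ G$ of a $\rr{Q}$-morphism; the correspondence is tautological and bijective. The main obstacle is part (i): one must be careful that the gauge transformation supplied by the equality of FKMM-invariants can be chosen so as to align $f$ exactly with $\det(F)$ on $X^\tau$, rather than merely up to a further equivariant gauge; the trick of multiplying $f_0$ by $u^{-1}$ is what makes this work.
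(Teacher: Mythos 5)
Your proposal is correct and follows essentially the same route as the paper's proof: part (i) is the same gauge-adjustment argument (the paper phrases it with global $\rr{R}$-sections $t^{(j)}$ and the canonical sections $s^{(j)}_{X^\tau}$ rather than with trivializations, but the content is identical), and your involution $\Theta_2\circ G_x\circ\Theta_1^{-1}$ coincides with the paper's $-\Theta_2|_x\circ G_x\circ\Theta_1|_{\tau(x)}$ since $\Theta_1^{-1}=-\Theta_1$. Parts (ii)--(iv) likewise match the paper's argument.
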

\proof (i) By Lemma \ref{lemma:R_Q_det_bun2} we know that there exist two unique \emph{canonical} $\rr{R}$-sections $s_{X^\tau}^{(j)}:X^\tau\to{\rm det}(\bb{E}_j)|_{X^\tau}$, $j=1,2$  and from their construction it results   evident that ${\rm det}(F)\circ s_{X^\tau}^{(1)}=s_{X^\tau}^{(2)}$. According to Remark \ref{rk:FKMM_section}, the FKMM-invariant $\kappa(\bb{E}_j)$ is representable as a $\Z_2$-equivariant map $\omega_{\bb{E}_j}:X^\tau\to\n{U}(1)$ such that $\omega_{\bb{E}_j}\cdot s_{X^\tau}^{(j)}=t^{(j)}|_{X^\tau}$ where $t^{(j)}:X\to {\rm det} (\bb{E}_j)$ is any arbitrarily chosen global $\rr{R}$-section. The assumption    
$\kappa(\bb{E}_1)=\kappa(\bb{E}_2)$ implies the existence of a $\Z_2$-equivariant map $u:X\to\n{U}(1)$ such that $\omega_{\bb{E}_1}=u|_{X^\tau}\cdot\omega_{\bb{E}_2}$. Moreover, since the $\rr{R}$-line bundles ${\rm det}(\bb{E}_j)\to X$ are both trivial there is an $\rr{R}$-isomorphism $g:{\rm det}(\bb{E}_1)\to {\rm det}(\bb{E}_2)$ 
 such that $g\circ t^{(1)}=t^{(2)}$. Then, by combining these facts
  one deduces the existence of an $\rr{R}$-isomorphism $f:{\rm det}(\bb{E}_1)\to {\rm det}(\bb{E}_2)$  given by $f(p)=u(x)\, g(p)$ for all $p\in{\rm det}(\bb{E}_1)|_x$ such that  $f|_{X^\tau}\circ s_{X^\tau}^{(1)}= s_{X^\tau}^{(2)}$. The uniqueness of the  canonical sections implies $f|_{X^\tau}={\rm det}(F)$.
 
 (ii) Let ${\rm Hom}(\bb{E}_1,\bb{E}_2)\to X$ be the \emph{homomorphism vector bundle} (\cf \cite[Chapter 5, Section 6]{husemoller-94}). The inclusion $\f{SU}(\bb{E}_1,\bb{E}_2,f)\subset {\rm Hom}(\bb{E}_1,\bb{E}_2)$ provides a topology for $\f{SU}(\bb{E}_1,\bb{E}_2,f)$. Moreover, if $\f{U}$ is a trivializing neighborhood such that ${\rm Hom}(\bb{E}_1,\bb{E}_2)|_{\f{U}}\simeq\f{U}\times\n{U}(2)$ (we are assuming that a Hermitian metric has been fixed), then
$\f{SU}(\bb{E}_1,\bb{E}_2,f)|_{\f{U}}\simeq\f{U}\times{\rm S}\n{U}(2)$ 
due to the requirement $\det G = f$ in the definition of $\f{SU}(\bb{E}_1,\bb{E}_2,f)$.
 
(iii) A natural involution $\hat{\Theta}$ on  $\f{SU}(\bb{E}_1,\bb{E}_2,f)$ which relates $\tau$-conjugated fibers   is given by the collection of maps
$\f{SU}(\bb{E}_1,\bb{E}_2,f)|_x\ni G_x\mapsto -\Theta_2|_x\circ G_x\circ\Theta_1|_{\tau(x)}\in \f{SU}(\bb{E}_1,\bb{E}_2,f)|_{\tau(x)}$.

(iv) Follows from (ii) and (iii).
\qed

\medskip

\noindent
Though $\f{SU}(\bb{E}_1,\bb{E}_2,f)\to X$ is a fiber bundle with typical fiber ${\rm S}\n{U}(2)$, it is not a \emph{principal} ${\rm S}\n{U}(2)$-bundle. This object plays an essential role in the proof of  one of our main results:

\proof[
of Theorem \ref{theorem:inv_inject2}]
Due to the low dimension assumption (\cf Corollary \ref{corol_(iii)}) and the properties of the FKMM-invariant (\cf Theorem \ref{theo:FKMM_propert}) it is enough to prove the claim only for the case of rank 2 $\rr{Q}$-bundles (\ie $m=1$).

Let $(\bb{E}_1,\Theta_1)$ and $(\bb{E}_2,\Theta_2)$ be  rank 2 $\rr{Q}$-bundles  over the FKMM-space $(X,\tau)$. 
(We notice that as a consequence of Proposition \ref{prop:FKMM_type_space} the two $\rr{Q}$-bundles  are of FKMM-type.
Since $X^\tau$ is a finite collection of points, the restricted bundles  $\bb{E}_j|_{X^\tau}$, $j=1,2$ are both $\rr{Q}$-trivial   and so we can set a $\rr{Q}$-isomorphism
$F:\bb{E}_1|_{X^\tau}\to\bb{E}_2|_{X^\tau}$ induced by these trivializations.  If  $\kappa(\bb{E}_1)=\kappa(\bb{E}_2)$,  Lemma \ref{lemma:inv_inject1} allows us to introduce  the fiber bundle $\f{SU}(\bb{E}_1,\bb{E}_2,f)\to X$. To complete the proof of the injectivity, it is enough to prove that there exists a global $\Z_2$-equivariant section of  $\f{SU}(\bb{E}_1,\bb{E}_2,f)$. This fact can be proved exactly as in 
Proposition \ref{prop:stab_ran_Q}
by using the fact that the fibers of $\f{SU}(\bb{E}_1,\bb{E}_2,f)$ are 2-connected as a consequence of 
 $\pi_k({\rm S}\n{U}(2))=0$ for $k=0,1,2$.

Let us  describe first the group structure of ${\rm Vec}^{2}_\rr{Q}(X,\tau)$.
For isomorphism classes $[\bb{E}_1], [\bb{E}_2]\in {\rm Vec}^{2}_\rr{Q}(X,\tau)$, we define
the addition by $[\bb{E}_1] + [\bb{E}_2] := [\bb{E}]$, where $\bb{E}$ is a rank 2 $\rr{Q}$-bundle such that
$\bb{E}_1 \oplus \bb{E}_2 \simeq \bb{E} \oplus (X\times \C^2)$ and $X\times \C^2$ stands for the trivial  product $\rr{Q}$-bundle. The existence of $\bb{E}$ is guaranteed by Theorem \ref{theo:stab_ran_Q}. The class $[X\times \C^2]$ plays the role of  the neutral element and,
in view of Theorem \ref{theo:FKMM_propert},
 the FKMM-invariant
acts as a (multiplicative) morphism $\kappa([\bb{E}])=\kappa([\bb{E}_1]) \cdot\kappa([\bb{E}_2])$. This morphism is well defined since the value of the FKMM-invariant depends only on the equivalence class 
of a $\rr{Q}$-bundle and
$$
\kappa(\bb{E})\;=\;\kappa(\bb{E}\oplus (X\times \C^2))\;=\;\kappa(\bb{E}_1 \oplus \bb{E}_2 )\;=\;
\kappa(\bb{E}_1) \cdot\kappa( \bb{E}_2 )\;.
$$
Moreover, if  $\bb{E}'$ is a second rank 2 $\rr{Q}$-bundle such that
$\bb{E}_1 \oplus \bb{E}_2 \simeq \bb{E}' \oplus (X\times \C^2)$ the above computation shows that 
$\kappa(\bb{E})=\kappa(\bb{E}')$ and the injectivity of $\kappa$ implies that $\bb{E}\simeq \bb{E}'$ are in the same equivalence class. 
To define the inverse element in ${\rm Vec}^{2}_\rr{Q}(X,\tau)$ we observe that under the hypothesis above 
 for each $\rr{Q}$-bundle $(\bb{E},\Theta)$ there exists a  $\rr{Q}$-bundle $(\bb{E}',\Theta')$ such that $\bb{E}\oplus \bb{E}'\simeq X\times\C^4$. In this case the injectivity of $\kappa$ allows us to define $[\bb{E}']=-[\bb{E}]$. The construction of  $\bb{E}'$ is quite explicit: The complex vector bundle $\bb{E}\to X$ is trivial (\cf Proposition \ref{propos:Q_struc_lod_d}) and so it admits a global frame $\{t_1,t_2\}\in\Gamma(\bb{E})$
subjected to a $\Theta$-action of type \eqref{eq:mat_W010}. Let $\bb{E}'\to X$ be the rank 2 $\rr{Q}$-bundle generated by the global frame $t'_j:=\tau_\Theta(t_j)=\Theta\circ t_j\circ\tau$, $j=1,2$ and endowed with the $\rr{Q}$-structure induced by $\Theta$. The sum  $\bb{E}\oplus \bb{E}'$ is $\rr{Q}$-trivial since the collection $\{s_1=t_1,s_2=t'_1,s_3=t_2, s_4=t'_2\}$ provides a global $\rr{Q}$-frame. Finally, a group structure on ${\rm Vec}^{2m}_\rr{Q}(X,\tau)$ can be induced from the group structure of
${\rm Vec}^{2}_\rr{Q}(X,\tau)$  and the isomorphism in Corollary \ref{corol_(iii)}. 
\qed

\begin{remark}\label{rk:obstr1}
{\upshape
Under the hypothesis of Theorem \ref{theorem:inv_inject2} the FKMM-invariant suffices to establish the non-triviality of a given \virg{Quaternionic} vector bundle. Hence,  as a consequence of Theorem \ref{theo:triv_glob},
we deduce that the FKMM-invariant can be interpreted as the (first) topological obstruction for the existence of a 
global $\rr{Q}$-frame. This is exactly the point of view explored in \cite{porta-graf-13} in the particular case of the involutive space $\tilde{\n{T}}^2$.
}
 \hfill $\blacktriangleleft$
\end{remark}

\subsection{General structure of low-dimensional $\rr{Q}$-bundles}
\label{sec:gen_struct_low_dim}
 The low dimension assumption $d\leqslant 3$ provides a certain number of simplifications in the description of \virg{Quaternionic} vector bundles. 

 \begin{proposition}
 \label{propos:Q_struc_lod_d}
Let $(X,\tau)$ be an involutive space  such that  $X$  verifies (at least) condition 0 of Definition \ref{ass:2}
and has cells of dimension not bigger than $d=3$. Assume also  
$X^\tau\neq\emptyset$. If $(\bb{E},\Theta)$ is a (even rank) $\rr{Q}$-bundle over  $(X,\tau)$ of FKMM-type, then the underlying  vector bundle $\bb{E}\to X$ is trivial in the category of complex vector bundles.
 \end{proposition}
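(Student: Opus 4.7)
\proof[{Proof proposal}]
The strategy is to pass to the determinant line bundle and then invoke the standard low-dimensional classification of complex vector bundles. Since $X^\tau\neq\emptyset$, Proposition \ref{prop:Qv0} forces the rank of $\bb{E}$ to be even, say $2m$, with $2m\geqslant 2$. The assumption that $(\bb{E},\Theta)$ is of FKMM-type means exactly that the associated \virg{Real} determinant line bundle $({\rm det}(\bb{E}),{\rm det}(\Theta))$ is $\rr{R}$-trivial. By forgetting the $\rr{R}$-structure (as in Proposition \ref{prop:Qv101} applied to the \virg{Real} category), ${\rm det}(\bb{E})$ becomes a trivial complex line bundle. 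Hence $c_1({\rm det}(\bb{E}))=0$, and since $c_1(\bb{E})=c_1({\rm det}(\bb{E}))$, we get $c_1(\bb{E})=0$ in $H^2(X,\Z)$.

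Next I would reduce the triviality of $\bb{E}$ to the vanishing of $c_1(\bb{E})$ via obstruction theory, using $\dim X\leqslant 3$ and $2m\geqslant 2$. The obstructions to trivializing a rank $2m$ complex vector bundle over a CW-complex $X$ of dimension $d$ live successively in the groups $H^{k+1}(X,\pi_k(\n{U}(2m)))$ for $0\leqslant k\leqslant d-1$. For $2m\geqslant 2$ and $k\leqslant 2$ one has
\[
\pi_0(\n{U}(2m))\;=\;0,\qquad \pi_1(\n{U}(2m))\;=\;\Z,\qquad \pi_2(\n{U}(2m))\;=\;0,
\]
so in dimension at most $3$ the only possibly non-vanishing obstruction sits in $H^2(X,\Z)$ and is precisely $c_1(\bb{E})$. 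Since $c_1(\bb{E})=0$, all obstructions vanish and $\bb{E}$ admits a global trivialization as a complex vector bundle.

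An equivalent, more concrete way to package the last step (which I would probably prefer, since it avoids invoking the obstruction machinery) is to argue via the splitting principle in the stable range. For $\dim X\leqslant 3$ and rank $\geqslant 2$, any complex vector bundle $\bb{E}$ over $X$ admits a decomposition $\bb{E}\simeq {\rm det}(\bb{E})\oplus(X\times\C^{2m-1})$ as complex vector bundles; this follows from \cite[Chapter 9, Theorem 1.2]{husemoller-94} applied in the stable range $\dim X< 2(2m-1)+1$. Combined with the triviality of ${\rm det}(\bb{E})$ established above, this yields $\bb{E}\simeq X\times\C^{2m}$ as required.

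The only potentially delicate point is the bookkeeping around forgetting the $\rr{R}$-structure on ${\rm det}(\bb{E})$: one must be sure that $\rr{R}$-triviality really implies complex triviality of the underlying line bundle. This however is immediate, because an $\rr{R}$-isomorphism ${\rm det}(\bb{E})\to X\times\C$ is in particular a complex vector bundle isomorphism after forgetting the equivariant datum, exactly as recorded in Proposition \ref{prop:Qv101} for the \virg{Quaternionic} case (the analogous statement for \virg{Real} bundles is established verbatim).
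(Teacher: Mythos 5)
Your argument is correct and follows essentially the same route as the paper's: both reduce the claim to $c_1(\bb{E})=c_1({\rm det}(\bb{E}))=0$, obtained from the $\rr{R}$-triviality of the determinant line bundle guaranteed by the FKMM-type hypothesis (the paper phrases this through the vanishing of the \virg{Real} Chern class $\tilde{c}_1$ and the forgetting map, while you simply forget the $\rr{R}$-structure directly, which is equally valid), and then invoke the classification of complex vector bundles over a CW-complex of dimension $\leqslant 3$ by $c_1$, which you justify by obstruction theory where the paper just cites it. The only blemish is the stable-range inequality in your alternative packaging, which for $m=1$ reads $\dim X<3$ and would wrongly exclude $d=3$; the correct condition for splitting off a trivial line from a rank-$r$ complex bundle is $\dim X\leqslant 2r-1$, which does hold down to $r=2$ here, so the conclusion stands.
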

\proof[
sketch of] 
Due to the low dimension assumption $\bb{E}\to X$ is trivial (as a complex vector bundle) if and only if its first Chern class $c_1(\bb{E})\in H^2(X,\Z)$ is trivial (\cf \cite[Section 3.3]{denittis-gomi-14}). Moreover, one of the main properties of the determinant construction is $c_1(\bb{E})=c_1({\rm det}(\bb{E}))$. Since $(\bb{E},\Theta)$ is of FKMM-type, it follows that ${\rm det}(\bb{E})\to X$ is an $\rr{R}$-trivial line bundle
and so the first \virg{Real} Chern class is trivial,  $\tilde{c}_1({\rm det}(\bb{E}))=0$  \cite[Theorem 5.1]{denittis-gomi-14}. Finally, under the map that  forgets the \virg{Real}-structure one has that $\tilde{c}_1({\rm det}(\bb{E}))=0$ implies ${c_1}({\rm det}(\bb{E}))=0$ (\cf \cite[Section 5.6]{denittis-gomi-14}).
\qed

\medskip

As a consequence of this result each $\rr{Q}$-bundle $(\bb{E},\Theta)$ of FKMM-type 
over a base space of dimension not bigger than 3
can be endowed with a {global frame} of  sections $\{t_1,\ldots,t_{2m}\}\subset\Gamma(\bb{E})$. However, this frame is not in general a $\rr{Q}$-frame in the sense of Definition \ref{def:Q_pair} and the possibility to deform $\{t_1,\ldots,t_{2m}\}$ into a $\rr{Q}$-frame is linked to the 
$\rr{Q}$-triviality of $(\bb{E},\Theta)$ as a \virg{Quaternionic} vector bundle. The $\rr{Q}$-structure acts on the frame $\{t_1,\ldots,t_{2m}\}$ via the map $\tau_\Theta$ (\cf Section \ref{sect:Qs-vb_sect})
\beql{eq:mat_W010}
\tau_\Theta(t_i)(x)\;=\;(\Theta \circ t_i)(\tau(x))\;=\;\sum_{j=1}^{2m}w_{ji}(\tau(x))\; t_j(x)\; \qquad\quad i=1,\ldots,2m\;,\ \ \ x\in X\;,
\eeq
where $w_{ji}(x)\in\C$ are   components of a matrix-valued map $w:X\to{\rm Mat}_{2m}(\C)$. The relation $\tau_\Theta^2=-{\rm Id}$ implies 
\beql{eq:mat_W1}
\sum_{k=1}^{2m}\overline{w_{ki}(\tau(x))}\;w_{jk}(x)\;=\;-\delta_{i,j}\;,\qquad\quad \forall\ x\in X\;.
\eeq
Without loss of generality, one can assume that   $(\bb{E},\Theta)$ is endowed with an equivariant Hermitian  metric $\rr{m}$ (\cf Proposition \ref{rk:eq_metric}) with respect to which the frame $\{t_1,\ldots,t_{2m}\}$ is orthonormal. In this case the $\rr{Q}$-structure of $(\bb{E},\Theta)$ is encoded
 in a map $w:X\to\n{U}(2m)$ with components
\beql{eq:mat_W010bis}
w_{ji}(x)\;:=\;\rr{m}\big(t_j(\tau(x)),\tau_\Theta(t_i)(\tau(x))\big)\;=\;\rr{m}\big(t_j(\tau(x)),\Theta\circ t_i(x)\big)
\eeq
and the relation \eqref{eq:mat_W1} reads 
\beql{eq:mat_W2}
w(\tau(x))\;=\;-w^t(x),\qquad\quad \forall\ x\in X\;
\eeq
where $w^t$ is the \emph{transpose} of the matrix $w$. In particular, \eqref{eq:mat_W2} implies the \emph{skew}-symmetric relation $w(x)=-w^t(x)$
on the fixed point set $x\in X^\tau$.

\medskip

If one uses the  (orthonormal) frame  $\{t_1,\ldots,t_{2m}\}$ as a basis for the trivialization of the vector bundle 
one ends with the identification
 $\bb{E}\simeq\ X\times\C^{2m}$ and the  $\rr{Q}$-structure $\Theta: X\times\C^{2m}\to X\times\C^{2m}$ is 
fixed   by 
\beql{eq:mat_W3}
\Theta\;:\; (x,{\rm v})\;\longrightarrow \;(\tau(x),w(x)\;\overline{\rm v})\;.
\eeq
Therefore,
 in view of Proposition \ref{propos:Q_struc_lod_d}, the most general \virg{Quaternionic} vector bundle of FKMM-type over an involutive base space $(X,\tau)$ of dimension not bigger than $3$ is represented by a product  $X\times\C^{2m}$ endowed with a map \eqref{eq:mat_W3} associated with a matrix-valued continuous function $w:X\to\n{U}(2m)$ which verifies  \eqref{eq:mat_W2}. The topology of such a $\rr{Q}$-bundle is 
 entirely contained in the map $w$. For instance the question about the  $\rr{Q}$-triviality can be rephrased in the equivalent question about  the existence of a map $u:X\to \n{U}(2m) $
such that $u^*(x)\cdot w(\tau(x))\cdot \overline{u(\tau(x))}=Q$ for all $x\in X$, where the matrix $Q$ is the one defined in \eqref{eq:Q-mat}.

\medskip

Also the FKMM-invariant can be reconstructed from $w$. This is linked to the fact that the induced \virg{Real} structure on the determinant line bundle ${\rm det}(\bb{E})\simeq\ X\times\C^{1}$ is specified by the map 
\beql{eq:mat_W3'}
{\rm det}(\Theta)\;:\; (x,\lambda)\;\longrightarrow \;(\tau(x),{\rm det}(w)(x)\;\overline{\lambda})\;.
\eeq
where ${\rm det}(w):X\to\n{U}(1)$ inherits from \eqref{eq:mat_W2} the property ${\rm det}(w)(x)={\rm det}(w)(\tau(x))$ for all $x\in X$.
Moreover, since $({\rm det}(\bb{E}), {\rm det}(\Theta))$ is $\rr{R}$-trivial by assumption there  exists a map $q_w:X\to \n{U}(1)$
such that ${\rm det}(w)(x)=q_w(x)q_w(\tau(x))$ for all $x\in X$. The choice of $q_w$ is not unique. In fact, if $\varepsilon:X\to\n{U}(1)$
is a $\Z_2$-equivariant map $\varepsilon(\tau(x))=\overline{\varepsilon(x)}$ then $q'_w(x):=\varepsilon(x)q(x)$ also verifies ${\rm det}(w)(x)=q'_w(x)q'_w(\tau(x))$. Moreover, all the possible choices for $q_w$ are related by a gauge transformation of this type.
On the fixed points  $x\in X^\tau$ the matrix $w(x)$ is skew-symmetric and so ${\rm det}(w)(x)={\rm Pf}[w(x)]^2$
 where the symbol ${\rm Pf}[A]$ is used for the \emph{Pfaffian} of the skew-symmetric matrix $A$ (\cf \cite[Appendix C, Lemma 9]{milnor-stasheff-74}).   Therefore,  the map $w:X\to\n{U}(2m)$ and the notion of Pfaffian fix  a well defined map  
${\rm Pf}_w: X^\tau\to\n{U}(1)$ given by ${\rm Pf}_w(x):={\rm Pf}[w(x)]$. Observing that on $X^\tau$ the  maps
$q_w$ and ${\rm Pf}_w$ can differ only by a sign, we can set a \emph{sign map}
 $\rr{d}_{w}:X^\tau\to\{\pm 1\}$  by 
\beql{eq:sign_map_fkm}
 \rr{d}_w(x)\;:=\;\frac{q_w(x)}{{\rm Pf}_w(x)}\;\qquad\qquad x\in X^\tau\;.
\eeq
By construction,  the map $ \rr{d}_w$ is defined only up to the choice of a gauge transformation $\varepsilon$ and so it can be identified with  a representative for a class $[\rr{d}_w]$ in the cokernel $[X^\tau,\n{U}(1)]_{\Z_2}/[X,\n{U}(1)]_{\Z_2}$.

\begin{proposition}[FKMM-invariant via the sign map]
\label{propo:FKMM-sign}
Let $(X,\tau)$ be an involutive space  
such that  $X$  verifies (at least) condition 0 of Definition \ref{ass:2}
and has cells of dimension not bigger than $d=3$.
 Assume also  
$X^\tau\neq\emptyset$. Let $\bb{E}\simeq X\times\C^{2m}$ endowed with a $\rr{Q}$-structure $\Theta$ of type \eqref{eq:mat_W3}
associated to a map $w:X\to\n{U}(2m)$ which verifies \eqref{eq:mat_W2}. Then
$$
\kappa(\bb{E})\;=\;[\rr{d}_w]
$$
where the sign map  $\rr{d}_w$ is the one defined by \eqref{eq:sign_map_fkm}.
\end{proposition}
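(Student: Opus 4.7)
The plan is to run the general construction of the FKMM-invariant from Section \ref{subsec:constr-FKMM_inv} using the explicit data provided by the global frame $\{r_1,\ldots,r_{2m}\}$ associated with $\bb{E}\simeq X\times\C^{2m}$, and then to identify the resulting map $\omega_{\bb{E}}$ with the sign map $\rr{d}_w$. Under this identification, ${\rm det}(\bb{E})\simeq X\times\C$ is trivialized as a complex line bundle by $r_1\wedge\ldots\wedge r_{2m}$, and the induced $\rr{R}$-structure is ${\rm det}(\Theta)(x,\lambda)=(\tau(x),{\rm det}(w)(x)\overline{\lambda})$ as in \eqref{eq:mat_W3'}. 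Via Remark \ref{rk:FKMM_section}, $\kappa(\bb{E})$ is represented by the ratio between the restriction to $X^\tau$ of any global $\rr{R}$-section $t:X\to\n{S}({\rm det}(\bb{E}))$ and the canonical section $s_{X^\tau}$ provided by Lemma \ref{lemma:R_Q_det_bun2}.

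The first step is to build the global $\rr{R}$-section. I would set
\[
t(x)\;:=\;q_w(x)\,(r_1\wedge\ldots\wedge r_{2m})(x)\;,
\]
and verify that ${\rm det}(\Theta)(t(x))=t(\tau(x))$ directly from the defining identity ${\rm det}(w)(x)=q_w(x)\,q_w(\tau(x))$ and the fact that $q_w$ is $\n{U}(1)$-valued.

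The second and main step is to express the canonical section $s_{X^\tau}$ in the same frame. Fix $x\in X^\tau$, so that $w(x)=-w^t(x)$, and choose a $\rr{Q}$-frame $\{s_1,\ldots,s_{2m}\}$ of $\bb{E}|_x\simeq\C^{2m}$. Collecting these vectors as columns of a matrix $P\in\n{U}(2m)$, the defining conditions $\Theta s_{2j-1}=s_{2j}$ and $\Theta s_{2j}=-s_{2j-1}$ translate into the single matrix identity $w(x)\bar{P}=P\,Q$, and since $P$ is unitary this is equivalent to $w(x)=P\,Q\,P^t$. The classical Pfaffian identity (\cite[Appendix C, Lemma 9]{milnor-stasheff-74}) then yields
\[
{\rm Pf}_w(x)\;=\;{\rm Pf}\bigl[P\,Q\,P^t\bigr]\;=\;\det(P)\cdot{\rm Pf}[Q]\;,
\]
so that, up to the fixed global sign ${\rm Pf}[Q]=\pm1$, one has $\det(P)={\rm Pf}_w(x)$. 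Since $s_1\wedge\ldots\wedge s_{2m}=\det(P)\,(r_1\wedge\ldots\wedge r_{2m})(x)$, this yields $s_{X^\tau}(x)=\pm\,{\rm Pf}_w(x)\,(r_1\wedge\ldots\wedge r_{2m})(x)$.

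Comparing both expressions at a fixed point gives $t|_{X^\tau}(x)=\pm(q_w(x)/{\rm Pf}_w(x))\,s_{X^\tau}(x)=\pm\rr{d}_w(x)\,s_{X^\tau}(x)$, hence the representative $\omega_{\bb{E}}$ coincides with $\rr{d}_w$ up to the global sign ${\rm Pf}[Q]$; such a constant $\{\pm1\}$-valued map on $X$ is automatically $\Z_2$-equivariant and therefore trivial modulo $[X,\n{U}(1)]_{\Z_2}$, so $\kappa(\bb{E})=[\rr{d}_w]$ in the cokernel $[X^\tau,\n{U}(1)]_{\Z_2}/[X,\n{U}(1)]_{\Z_2}$. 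The residual ambiguity in the choice of $q_w$ matches exactly this quotient: any other choice $q'_w=\varepsilon\cdot q_w$ with $\varepsilon:X\to\n{U}(1)$ equivariant replaces $\rr{d}_w$ by $\varepsilon|_{X^\tau}\cdot\rr{d}_w$. The crux of the argument is the Pfaffian identity $\det(P)=\pm{\rm Pf}_w(x)$, which is the algebraic heart of the statement; once this is in place the rest of the proof amounts to a direct bookkeeping of the canonical identifications of ${\rm det}(\bb{E})$ with the trivial line bundle.
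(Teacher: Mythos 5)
Your proof is correct and follows essentially the same route as the paper's: both compare the global \virg{Real} section determined by $q_w$ with the canonical section of ${\rm det}(\bb{E}|_{X^\tau})$ and read off $\rr{d}_w=q_w/{\rm Pf}_w$ as the representative of $\kappa(\bb{E})$ via Remark \ref{rk:FKMM_section}. Your explicit verification that the canonical section equals $\pm\,{\rm Pf}_w(x)\,(r_1\wedge\ldots\wedge r_{2m})(x)$, via $w(x)=PQP^t$ and the identity ${\rm Pf}[PQP^t]=\det(P)\,{\rm Pf}[Q]$, actually fills in a step the paper asserts without computation, and the harmless constant ${\rm Pf}[Q]=\pm1$ is correctly absorbed into the quotient by $[X,\n{U}(1)]_{\Z_2}$.
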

\proof
Let us consider the section $s_{X^\tau}:X^\tau\to X^\tau\times\C^1$ defined by $s_{X^\tau}(x):=(x, {\rm Pf}_w(x))$. It is not difficult to check that this section is \virg{Real}. Indeed
$$
({\rm det}(\Theta)\circ s_{X^\tau}\circ\tau)(x)\;=\;{\rm det}(\Theta)(x, {\rm Pf}_w(x))\;=\;(x, {\rm det}(w)(x)\;{\rm Pf}_w(x)^{-1})\;=\; s_{X^\tau}(x)\;.
$$
This section agrees with the \emph{canonical} section for the restricted $\rr{R}$-bundle $X^\tau\times\C^1$ described in Lemma \ref{lemma:R_Q_det_bun2}. On the other side, the map $t:X\to  X\times \C^1$ given by $t(x):=(x, q_w(x))$ provides a global  \virg{Real}-section
 since
 $$
({\rm det}(\Theta)\circ t \circ\tau)(x)\;=\;{\rm det}(\Theta)(\tau(x), q_w(\tau(x)))\;=\;(x, {\rm det}(w)(x)\;q_w(\tau(x))^{-1})\;=\; t(x)\;.
$$
Evidently, the difference between the  {canonical} section  $s_{X^\tau}$ and the restricted section $t|_{X^\tau}$ is expressed exactly by the sign map 
$\rr{d}_w$. This implies, in view of Remark \ref{rk:FKMM_section}, that   $\rr{d}_w$ provides a representative for the FKMM-invariant 
of   $(\bb{E},\Theta)$, namely 
$\kappa(\bb{E})=[\rr{d}_w]$.
\qed

\begin{remark}[The sign map by Fu, Kane \& Mele]\label{rk:FKMM_w_det}{\upshape
The sign map \eqref{eq:sign_map_fkm} was introduced for the first time by L. Fu, C. L. Kane and E. J. Mele in a series of works concerning \virg{Quaternionic} vector bundles over TR-tori $\tilde{\n{T}}^d$ with $d=2$ \cite{kane-mele-05,fu-kane-06}  and $d=3$ \cite{fu-kane-mele-07}.
In these works the topology of the $\rr{Q}$-bundle is investigated through the properties of a matrix-valued map $w$ defined similarly to \eqref{eq:mat_W010bis} and a related  topological invariant that agrees (morally) with our sign map \eqref{eq:sign_map_fkm}. However, 
there is a difference:  the map $q_w$ (that expresses the $\rr{R}$-triviality of the determinat line bundle) is replaced by a less natural and more ambiguous \emph{branched} function $\sqrt{{\rm det}(w)}$. Let us point out that, albeit apparently similar, our definition of the sign map is much more general and it applies to all $\rr{Q}$-bundles of FKMM-type over base spaces of dimension not bigger that 3 (which in turns implies that the underlying complex
vector bundles are trivial).}
 \hfill $\blacktriangleleft$
\end{remark}

\subsection{Classification for TR-spheres}

This section is devoted to the justification of the following fact:
\begin{proposition}
\label{propos_classif_low_S}
Let $\tilde{\n{S}}^d\equiv({\n{S}}^d,\tau)$ 
be the  TR-sphere of Definition \ref{def:AII_sys}. Then,  for all $m\in\N$ there is a group isomorphism
$$
{\rm Vec}^{2m}_\rr{Q}({\n{S}}^d,\tau)\;\stackrel{\kappa}{\simeq}\;H^{2}_{\Z_2}\big(\tilde{\n{S}}^d|(\tilde{\n{S}}^d)^\tau,\n{Z}(1)\big)\;\simeq\;\Z_2\;\qquad\quad\ \text{if}\ \ d=2,3\;
$$
which is established  by the FMMM-invariant $\kappa$.
\end{proposition}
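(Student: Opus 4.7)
The plan is to combine the injectivity of $\kappa$, already granted by Theorem \ref{theorem:inv_inject2} since $(\n{S}^d,\tau)$ is a FKMM-space, with a single explicit non-trivial example. The target group equals $\Z_2$ by \eqref{eq:coker_TR_sphe} (later justified in Proposition \ref{prop:cond_coker}), and Corollary \ref{corol_(iii)} collapses the rank dependence to $m=1$ for $2\leqslant d\leqslant 5$. Thus it will be enough to exhibit, for each $d\in\{2,3\}$, one rank--$2$ $\rr{Q}$-bundle on $\tilde{\n{S}}^d$ whose FKMM-invariant is the non-zero element of $\Z_2$; the group structure constructed in the proof of Theorem \ref{theorem:inv_inject2} will then upgrade the resulting bijection to a group isomorphism.

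I would first settle $d=3$ by a direct construction. Through the parametrisation $u=k_0\n{1}+\ii(k_1\sigma_x+k_2\sigma_y+k_3\sigma_z)$ the sphere $\n{S}^3$ is identified with $\n{S}\n{U}(2)$; under this identification the TR-involution becomes the adjoint $u\mapsto u^\ast$ and the fixed-point set is $\{\pm\n{1}\}$. On the trivial complex bundle $\bb{E}:=\n{S}^3\times\C^2$ I would put the antilinear map
\[
\Theta\;:\;(u,{\rm v})\;\longmapsto\;\bigl(u^\ast,\;w(u)\,\overline{\rm v}\bigr)\;,\qquad w(u)\;:=\;uQ\;,
\]
with $Q$ the matrix defined in \eqref{eq:Q-mat}. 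The symplectic identity $u^T Q u=Q$ satisfied by every $u\in\n{S}\n{U}(2)\simeq\n{S}p(1)$ rewrites as $u^\ast Q=Qu^T$; together with the elementary relations $u^T\overline{u}=\n{1}$ and $Q^2=-\n{1}$, it yields at the same time the anti-symmetry condition \eqref{eq:mat_W2} for $w$ and the crucial identity $\Theta^2=-\n{1}$. Hence $(\bb{E},\Theta)$ is a genuine rank--$2$ $\rr{Q}$-bundle over $\tilde{\n{S}}^3$, automatically of FKMM-type by Proposition \ref{prop:FKMM_type_space}.

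The FKMM-invariant is then read off via the sign-map formula of Proposition \ref{propo:FKMM-sign}. Since $\det w(u)=\det(u)\det(Q)\equiv 1$, one may choose $q_w\equiv 1$; at the fixed points $w(\pm\n{1})=\pm Q$ has Pfaffians of opposite sign, so $\rr{d}_w$ takes opposite values at $\n{1}$ and $-\n{1}$. The image of $[\tilde{\n{S}}^3,\n{U}(1)]_{\Z_2}$ inside $[\{\pm\n{1}\},\n{U}(1)]_{\Z_2}\simeq\Z_2\oplus\Z_2$ is the diagonal subgroup (the two constants $\pm 1$ realise it, and the cokernel must be $\Z_2$), so this non-diagonal sign pattern represents the non-trivial class; in particular $\kappa(\bb{E})\neq 0$. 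For $d=2$ I would pull the bundle back along the equivariant embedding $\iota:\tilde{\n{S}}^2\hookrightarrow\tilde{\n{S}}^3$ defined by $k_3=0$; because $\iota$ sends the two fixed points of $\tilde{\n{S}}^2$ bijectively onto those of $\tilde{\n{S}}^3$, the naturality statement of Theorem \ref{theo:FKMM_propert}(i) forces $\kappa(\iota^\ast\bb{E})\neq 0$ as well. The only non-routine step in this plan is the verification of $u^\ast Q=Qu^T$, which is nothing but the algebraic fingerprint of the identification $\n{S}\n{U}(2)\simeq\n{S}p(1)$; everything else is bookkeeping through the definitions in Sections \ref{subsec:constr-FKMM_inv} and \ref{sec:gen_struct_low_dim}.
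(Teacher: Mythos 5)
Your proposal is correct and follows essentially the same route as the paper: injectivity from Theorem \ref{theorem:inv_inject2}, the computation $H^{2}_{\Z_2}\big(\tilde{\n{S}}^d|(\tilde{\n{S}}^d)^\tau,\n{Z}(1)\big)\simeq\Z_2$ from \eqref{eq:coker_TR_sphe}, reduction to rank $2$ via Corollary \ref{corol_(iii)}, and surjectivity by exhibiting an explicit non-trivial $w:\n{S}^d\to\n{U}(2)$ whose FKMM-invariant is detected by the sign map of Proposition \ref{propo:FKMM-sign}. The only (harmless) variations are that your $d=3$ generator $w(u)=uQ$ is the paper's matrix \eqref{eq:ex_d3_w} repackaged through the identification $\n{S}^3\simeq{\rm S}\n{U}(2)$, and that you obtain the $d=2$ case by equivariant restriction to the equator plus naturality instead of writing down the second matrix \eqref{eq:ex_d2_w} directly.
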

\proof
Since the second isomorphism has already been established in  \eqref{eq:coker_TR_sphe}, it remains only to justify the first  isomorphism.
 Theorem \ref{theorem:inv_inject2} already assures  the injectivity of the group morphism $\kappa$. Hence, we need only to show that both
 ${\rm Vec}^{2m}_\rr{Q}({\n{S}}^2,\tau)$ and ${\rm Vec}^{2m}_\rr{Q}({\n{S}}^3,\tau)$ have a non-trivial element.  Corollary \ref{corol_(iii)} assures that it is enough to show the existence of  non-trivial  $\rr{Q}$-bundle of rank 2.  A specific realization of 
  these  non-trivial elements  is presented below.
\qed

\medskip

Let us start with the two-sphere $\tilde{\n{S}}^2$ endowed with the  TR involution $\tau(k_0,k_1,k_2)=(k_0,-k_1,-k_2)$. Following the arguments in Section \ref{sec:gen_struct_low_dim} we can represent any rank two \virg{Quaternionic} vector bundle $(\bb{E},\Theta)$ over $\tilde{\n{S}}^2$ by a product bundle $\bb{E}\equiv \n{S}^2\times\C^2$ on which the action of $\Theta$ is specified by a map $w:\n{S}^2\to\n{U}(2)$
such that $w(k_0,-k_1,-k_2)=-w^t(k_0,k_1,k_2)$ for all $k=(k_0,k_1,k_2)\in\n{S}^2$ according to the relation \eqref{eq:mat_W2}. One possible realization for $w$ is
\beql{eq:ex_d2_w}
w(k_0,k_1,k_2)\;:=\;
\left(\begin{array}{cc}
k_1+\ii k_2 & +k_0 \\
-k_0 & k_1-\ii k_2
\end{array}\right)\;.
\eeq
We can compute the FKMM-invariant of this $\rr{Q}$-bundle by using the sign map as in Proposition \ref{propo:FKMM-sign}.
Since ${\rm det}(w)$ is constantly 1 we can chose $q_w(k)=1$ for all $k\in\n{S}^2$. On the other side, on the fixed points $k_\pm:=(\pm1,0,0)$ 
a simple computation shows that ${\rm Pf}_w(k_\pm)=\pm1$. Therefore, the sign map $\rr{d}_w(k_\pm)=\pm1$ provides a representative for the non trivial element in $[(\tilde{\n{S}}^2)^\tau,\n{U}(1)]_{\Z_2}\;/\;[\tilde{\n{S}}^2,\n{U}(1)]_{\Z_2}\simeq\Z_2$.
Equivalently, $\kappa(\bb{E})=[\rr{d}_w]$ can be identified with the non trivial element $-1\in\Z_2$ showing that the $\rr{Q}$-bundle $(\bb{E},\Theta)$ associated with $w$ is non-trivial.

\medskip

For the three-sphere $\tilde{\n{S}}^3$  with  involution $\tau(k_0,k_1,k_2,k_3)=(k_0,-k_1,-k_2,-k_3)$ the argument is similar.
A non-trivial $\rr{Q}$-structure over $\bb{E}\equiv \n{S}^3\times\C^2$ can be  induced by the map $w:\n{S}^3\to\n{U}(2)$
\beql{eq:ex_d3_w}
w(k_0,k_1,k_2,k_3)\;:=\;
\left(\begin{array}{cc}
k_1+\ii k_2 & \ii k_3+k_0 \\
\ii k_3-k_0 & k_1-\ii k_2
\end{array}\right)\;.
\eeq
It is straightforward to check that this map verifies  \eqref{eq:mat_W2}.
As before, we can compute the FKMM-invariant by means of the sign map as in Proposition \ref{propo:FKMM-sign}.
Again ${\rm det}(w)(k)=1$ allows us to set $q_w(k)=1$. Moreover, on the fixed points $k_\pm:=(\pm1,0,0,0)$ 
one checks  ${\rm Pf}_w(k_\pm)=\pm1$. Therefore,   $\rr{d}_w(k_\pm)=\pm1$ provides a representative for the non trivial element in $[(\tilde{\n{S}}^3)^\tau,\n{U}(1)]_{\Z_2}\;/\;[\tilde{\n{S}}^3,\n{U}(1)]_{\Z_2}\simeq\Z_2$  showing that the associated $\rr{Q}$-bundle can not be trivial.

\medskip

For later use in Section \ref{sec:d=4}, we derive  here a consequence of Proposition \ref{propos_classif_low_S}. To state it we need   the sphere $\hat{\n{S}}^2\equiv(\n{S}^2,\vartheta)$ endowed with the \emph{antipodal} action $\vartheta:k\mapsto -k$. We know that the group $[\hat{\n{S}}^2, \n{U}(1)]_{\Z_2} \simeq H^1_{\Z_2}(\hat{\n{S}}^2, \Z(1)) \simeq \Z_2$ is generated by the constant maps with values $\{\pm 1\} \subset \n{U}(1)$
\cite{denittis-gomi-14}.

\begin{corollary}
\label{corollary:classification_on_3_sphere}
Let $\hat{\n{S}}^2$ be the sphere endowed with the antipodal action $k\mapsto -k$. Then the determinant $\det : \n{U}(2) \to \n{U}(1)$ induces an isomorphism of groups
$$
\det\; :\;  [\hat{\n{S}}^2,\n{U}(2)]_{\Z_2}\;\longrightarrow\;
[\hat{\n{S}}^2, \n{U}(1)]_{\Z_2}\; \simeq\; \Z_2
$$
where the space $\n{U}(1)$ is endowed with the involution given by the complex conjugation and $\n{U}(2)$ with the involution $\mu$ described in Remark \ref{rk:quat_str}.
\end{corollary}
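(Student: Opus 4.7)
The plan is to verify three things in sequence: that $\det$ descends to a group homomorphism $\det_\ast$ on equivariant homotopy classes, that $\det_\ast$ is surjective, and that $\det_\ast$ is injective. For well-definedness, I would use $\det Q = 1$ and $Q^2 = -\n{1}$ to compute $\det(\mu(U)) = \overline{\det U}$ (equivariance) and $\mu(UV) = \mu(U)\mu(V)$ (multiplicativity of $\mu$). The latter identity guarantees that the pointwise product of two $\mu$-equivariant maps is again $\mu$-equivariant, so both sides carry the pointwise group structure inherited from $\n{U}(n)$, and $\det_\ast$ is a group homomorphism.

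For surjectivity, the identity element is realized by the constant equivariant map $I_2$, since $\mu(I_2)=I_2$ and $\det I_2 = 1$. The non-trivial class in $[\hat{\n{S}}^2,\n{U}(1)]_{\Z_2}\simeq\Z_2$, namely the constant map $-1$, I would realize by the Hermitian unitary
$$
\phi(k_1,k_2,k_3)\;:=\;\begin{pmatrix} k_1 & k_2+\ii k_3 \\ k_2-\ii k_3 & -k_1 \end{pmatrix}
$$
defined on $\hat{\n{S}}^2$. A direct matrix computation shows that $\phi$ is unitary (using $k_1^2+k_2^2+k_3^2=1$), that $\phi(-k)=-\phi(k)$ and $-Q\overline{\phi(k)}Q=-\phi(k)$ (hence $\phi(-k)=\mu(\phi(k))$), and that $\det\phi\equiv-(k_1^2+k_2^2+k_3^2)=-1$.

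For injectivity I would work with the $\Z_2$-equivariant fibration $\n{SU}(2)\hookrightarrow\n{U}(2)\xrightarrow{\det}\n{U}(1)$. The key preliminary observation is that $\mu$ acts trivially on $\n{SU}(2)$: writing $U=\bigl(\begin{smallmatrix}a & b \\ -\bar b & \bar a\end{smallmatrix}\bigr)$ with $|a|^2+|b|^2=1$, a short computation gives $-Q\overline{U}Q=U$. Consequently any equivariant map $\hat{\n{S}}^2\to\n{SU}(2)$ descends to an ordinary map $\hat{\n{S}}^2/\Z_2=\R P^2\to\n{SU}(2)\simeq\n{S}^3$, and since $\n{S}^3$ is $2$-connected while $\R P^2$ is $2$-dimensional, cellular approximation forces $[\hat{\n{S}}^2,\n{SU}(2)]_{\Z_2}\simeq[\R P^2,\n{S}^3]=0$. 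Now if $\phi:\hat{\n{S}}^2\to\n{U}(2)$ is equivariant with $\det\phi$ equivariantly null-homotopic via $H:I\times\hat{\n{S}}^2\to\n{U}(1)$, I would lift $H$ to an equivariant homotopy $\tilde H:I\times\hat{\n{S}}^2\to\n{U}(2)$ starting at $\phi$; the endpoint $\tilde H(1,\cdot)$ then lands in $\n{SU}(2)$ and is equivariantly null-homotopic by the previous step, whence $\phi\simeq I_2$ equivariantly.

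The main technical point is the equivariant homotopy lifting step. For the principal $\n{SU}(2)$-bundle $\det$ the ordinary HLP holds; to upgrade it to an equivariant HLP I would exploit that $\hat{\n{S}}^2$ admits a free $\Z_2$-CW decomposition, so that a lift can be built cell by cell on a fundamental domain using the non-equivariant HLP and then propagated equivariantly over the whole sphere. Every other ingredient reduces either to direct matrix computation in $\n{U}(2)$ or to classical low-dimensional connectivity arguments on $\R P^2$ and $\n{S}^3$.
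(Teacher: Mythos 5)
Your proof is correct, but it takes a genuinely different route from the paper's. The paper does not prove injectivity directly: it observes that $H_+\cap H_-=\hat{\n{S}}^2$ for the two equivariantly contractible hemispheres $H_\pm\subset\tilde{\n{S}}^3$, so the equivariant clutching construction gives a bijection ${\rm Vec}^{2}_{\rr{Q}}(\tilde{\n{S}}^3)\simeq[\hat{\n{S}}^2,\n{U}(2)]_{\Z_2}$; since the left-hand side is already known to be $\Z_2$ from the FKMM-invariant classification (Proposition \ref{propos_classif_low_S}), the source group has order two and it only remains to exhibit one equivariant $\varphi$ with $[\det\varphi]=-1$, which the paper does with the matrix $\ii\left(\begin{smallmatrix} k_1+\ii k_2 & -k_3\\ k_3 & k_1-\ii k_2\end{smallmatrix}\right)$ (your Hermitian $\phi$ serves the same purpose). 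You instead compute $[\hat{\n{S}}^2,\n{U}(2)]_{\Z_2}$ from scratch: the key observations that $\mu$ fixes $\n{SU}(2)=\n{U}(2)^\mu$ pointwise, that the free action on $\hat{\n{S}}^2$ reduces $[\hat{\n{S}}^2,\n{SU}(2)]_{\Z_2}$ to $[\R P^2,\n{S}^3]=0$, and the equivariant homotopy lifting along $\det:\n{U}(2)\to\n{U}(1)$ (legitimate here precisely because the action on the source is free, so the lifting problem descends to an ordinary one over the quotient) together give injectivity directly. What each buys: the paper's argument is shorter but leans on the whole FKMM machinery behind Proposition \ref{propos_classif_low_S}; yours is self-contained and elementary, and in fact anticipates the fibration-plus-skeleton technique the paper itself deploys later for $[\hat{\n{S}}^3,\n{U}(2)]_{\Z_2}$ in Lemma \ref{lemma:D4_05}. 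The only step you should flesh out in a final write-up is the equivariant HLP, but your reduction to a free $\Z_2$-CW structure (equivalently, to sections of the associated $\n{U}(2)$-bundle over $\R P^2$) is the standard and correct way to do it.
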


\proof
Let $H_\pm := \{ (k_0, k_1, k_2, k_3) \in \tilde{\n{S}}^3 |\ \pm k_0 \ge 0 \}$. The intersection $H_+ \cap H_-$ is exactly $\hat{\n{S}}^2$. Because each $H_\pm$ is equivariantly contractible, the \emph{clutching construction} \cite[Lemma 1.4.9]{atiyah-67} adapted to $\rr{Q}$-bundles leads to a bijection ${\rm Vec}^{2}_{\rr{Q}}(\hat{\n{S}}^3) \simeq [\hat{\n{S}}^2, \n{U}(2)]_{\Z_2}$. Hence, Proposition \ref{propos_classif_low_S} implies that $[\hat{\n{S}}^2, \n{U}(2)]_{\Z_2} \cong \Z_2$, at least abstractly. To complete the proof, it suffices to verify that there exists a $\Z_2$-equivariant map $\varphi : \hat{\n{S}}^2 \to \n{U}(2)$ such that $\det\varphi :  \hat{\n{S}}^2 \to \n{U}(1)$ is non-trivial in $[\hat{\n{S}}^2, \n{U}(1)]_{\Z_2} \simeq \Z_2$. An example is provided by
$$
\varphi(k_1, k_2, k_3)\;:
= \;
\ii
\left(
\begin{array}{cc}
k_1 + \ii k_2 & - k_3 \\
k_3 & k_1 - \ii k_2
\end{array}
\right),
$$
whose determinant is evidently the constant map with value  $-1$.
\qed

\subsection{Connection with the Fu-Kane-Mele invariant}\label{subsec_Fu-Kane-Mele}
In this section we will give a deeper look to the classification of $\rr{Q}$-bundles over sufficiently \virg{nice} involutive spaces  of dimension $d=2$. With this we mean that:
 \begin{assumption}\label{ass:2.3}
Let $(X,\tau)$  be an involutive space such that:
\begin{enumerate}
\item[0.]  $X$ is a closed (compact without boundary) and  oriented $2$-dimensional manifold;\vspace{1.3mm}
\item[1.] The fixed point set $X^\tau\neq\emptyset$  consists of a finite number of points;
\vspace{1.3mm}
\item[2.]  The involution $\tau:X\to X$ is smooth and preserves the orientation.
\end{enumerate}
\end{assumption}

\noindent
We point out that both the  involutive spaces $\tilde{\n{S}}^2$ and $\tilde{\n{T}}^2$ fulfill the previous hypothesis.
Moreover, spaces of this type share with $\tilde{\n{S}}^2$ and $\tilde{\n{T}}^2$ the following general features:
\begin{proposition}
\label{propos:d2manif_FKMM}
Let $(X,\tau)$  be an involutive space which verifies Assumption \ref{ass:2.3}, then
$$
H^2_{\Z_2}(X,\Z(1))\;\simeq\; 0\;,\qquad\quad H^2_{\Z_2}(X|X^\tau,\n{Z}(1))\;\simeq\; \Z_2\;.
$$
Moreover, the number of fixed points is even, \ie $X^\tau=\{x_1,\ldots,x_{2n}\}$.
\end{proposition}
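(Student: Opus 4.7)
The plan is to handle the three claims in turn, starting with the elementary parity count and then handling both cohomological identities through an equivariant Mayer--Vietoris decomposition adapted to the fixed locus.

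First I would use the Lefschetz fixed point theorem to show $\#X^\tau$ is even. Because $\tau$ is smooth, orientation-preserving, and has only isolated fixed points, the differential $d\tau_x$ at each $x\in X^\tau$ is an orientation-preserving linear involution of $T_xX\simeq\R^2$ with no nonzero fixed vector, so $d\tau_x=-\mathrm{id}$ and the local Lefschetz index equals $\mathrm{sgn}\det(I-d\tau_x)=+1$. Consequently $\#X^\tau=L(\tau)=2-\tr(\tau^*|H^1(X,\Q))$, using that $\tau^*=\mathrm{id}$ on $H^0$ (connectedness) and on $H^2$ (orientation-preservation). Since the trace of an involution on a $2g$-dimensional vector space equals $2\dim V^+-2g$, it is even, and so is $\#X^\tau$.

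For both cohomological claims I would apply equivariant Mayer--Vietoris to the decomposition $X=U\cup V$, where $U=\bigsqcup_{i=1}^{2n}D_i$ is the disjoint union of small $\tau$-invariant open disk neighborhoods of the fixed points (with $\tau|_{D_i}$ the $180^\circ$-rotation around $x_i$) and $V=X\setminus X^\tau$ (on which $\tau$ acts freely). Each $D_i$ equivariantly retracts to $x_i$, while the free actions on $V$ and $U\cap V$ reduce their equivariant cohomology to ordinary cohomology of the quotients with a rank-one local system $\mathcal L$. Since $V/\tau$ is an open orientable surface (hence homotopy equivalent to a graph), $(U\cap V)/\tau$ is a disjoint union of $2n$ circles, and $H^2(B\Z_2,\Z^-)=0$, the three degree-$2$ terms $H^2_{\Z_2}(U,\Z(1))$, $H^2_{\Z_2}(V,\Z(1))$, $H^2_{\Z_2}(U\cap V,\Z(1))$ all vanish. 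The key local computation is that $H^1_{\Z_2}(U,\Z(1))\to H^1_{\Z_2}(U\cap V,\Z(1))$ is an isomorphism $\Z_2^{2n}\xrightarrow{\sim}\Z_2^{2n}$: on each component it factors as the pullback along the classifying map $\n{S}^1=(\partial D_i)/\tau\to B\Z_2$ of the non-trivial double cover, which up to homotopy is the inclusion of the $1$-skeleton $\R P^1\hookrightarrow\R P^\infty$ and induces an isomorphism on $H^1$ with the appropriate twisted coefficients. The resulting Mayer--Vietoris sequence
\[
0 \to H^1_{\Z_2}(X,\Z(1)) \to H^1_{\Z_2}(U)\oplus H^1_{\Z_2}(V) \to H^1_{\Z_2}(U\cap V) \to H^2_{\Z_2}(X,\Z(1)) \to 0
\]
has a surjective middle arrow (already achieved by the $U$-summand alone), forcing $H^2_{\Z_2}(X,\Z(1))=0$.

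For the final identification, Lemma \ref{lemma:coker} combined with the vanishing just proved yields
\[
H^2_{\Z_2}(X|X^\tau,\Z(1)) \simeq [X^\tau,\n{U}(1)]_{\Z_2}\big/[X,\n{U}(1)]_{\Z_2} \simeq \Z_2^{2n}\big/\mathrm{Im}(r),
\]
where $r:H^1_{\Z_2}(X,\Z(1))\to H^1_{\Z_2}(X^\tau,\Z(1))\simeq\Z_2^{2n}$ is the restriction. The Mayer--Vietoris identification of $H^1_{\Z_2}(X,\Z(1))$ as a fibred product turns $\mathrm{Im}(r)$ into the image of $H^1_{\Z_2}(V,\Z(1))\to H^1_{\Z_2}(U\cap V,\Z(1))\simeq\Z_2^{2n}$, which I would compute via the long exact sequence of the pair $(\bar V/\tau,\partial\bar V/\tau)$ together with Poincar\'e--Lefschetz duality on the compact oriented surface-with-boundary $\bar V/\tau$ (orientability being inherited from $X$ through the orientation-preservation of $\tau$). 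The main obstacle is this last step: one must verify that the boundary-to-top-degree connecting map $H^1(\partial\bar V/\tau,\mathcal L)\to H^2(\bar V/\tau,\partial;\mathcal L)\simeq\Z_2$ pairs all $2n$ boundary circles with a single integral class, so that $\mathrm{Im}(r)$ is exactly the index-$2$ parity kernel $\{\prod_{i=1}^{2n}\epsilon_i=+1\}\subset\Z_2^{2n}$; quotienting by this image then gives the desired $\Z_2$.
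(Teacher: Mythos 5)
Your proposal is correct, and it reaches all three conclusions by routes that differ in part from the paper's. For the parity of $\#X^\tau$ you use the Lefschetz--Hopf theorem (each isolated fixed point of an orientation-preserving involution has $d\tau_x=-\mathrm{id}$ and local index $+1$, while $\tr(\tau^*|H^1)$ is even), whereas the paper makes $\tau$ holomorphic for an invariant metric and invokes Riemann--Hurwitz for the branched double cover $X\to X/\tau$; both are sound, and yours avoids complex structures entirely. For $H^2_{\Z_2}(X,\Z(1))=0$ your Mayer--Vietoris over $U=\bigsqcup D_i$, $V=X\setminus X^\tau$ is essentially the paper's decomposition $\{X',\f{D}\}$; you compute the pieces more directly (an open surface is homotopy equivalent to a graph, $H^2(B\Z_2,\Z(1))=0$, and $H^1(\R P^1,\Z^-)\to H^1(\R P^\infty,\Z^-)$ via the $1$-skeleton), while the paper routes through genus counts on $X'/\tau$ and the forgetful exact sequence. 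For the relative group the two arguments genuinely diverge: the paper uses excision to reduce to $H^2_{\Z_2}(X'|\partial X',\Z(1))$ and identifies the forgetful map $H^2_{\Z_2}(X'|\partial X',\Z)\to H^2(X'|\partial X',\Z)$ as multiplication by $2$ via the honest double covering; you instead use Lemma~\ref{lemma:coker} to reduce to $\mathrm{Coker}(r)$ and compute it from the pair $(\bar V/\tau,\partial)$ with Lefschetz duality. One remark that removes your self-declared "main obstacle": for \emph{this} proposition you do not need to know that $\mathrm{Im}(r)$ is precisely the parity kernel. Exactness of $H^1(\bar V/\tau;\mathcal L)\to H^1(\partial;\mathcal L)\xrightarrow{\delta}H^2(\bar V/\tau,\partial;\mathcal L)\to H^2(\bar V/\tau;\mathcal L)=0$ together with $H^2(\bar V/\tau,\partial;\mathcal L)\simeq H_0(\bar V/\tau;\mathcal L)\simeq\Z_2$ (the local system being nontrivial since $X'$ is connected) already gives $\mathrm{Coker}\simeq\Z_2$, whatever index-two subgroup the image happens to be. The finer identification of that subgroup as $\{\prod_i\epsilon_i=+1\}$ is exactly the content of Proposition~\ref{propos:d2manif_FKMM_02}, which the paper obtains not by analyzing $\delta$ but by exhibiting the explicit bundles $\bb{E}_{\bb{L}}$ of Lemma~\ref{lemma:2d_manifA} and using the resulting symmetry among the fixed points.
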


\noindent
We postpone the technical proof of this result  to Appendix \ref{sec:app_coker}. Here, we are mainly interested in the following consequences of Proposition \ref{propos:d2manif_FKMM}: if the space $(X,\tau)$ verifies Assumption \ref{ass:2.3} then it is a \emph{bona fide} FKMM-space and \virg{Quaternionic} vector bundles over it are classified by  the  FKMM-invariant which takes values in $\Z_2$.

\medskip

Let $\bb{L}\to X$ be a complex line bundle over an involutive space $(X,\tau)$. It is well known that the topology of 
$\bb{L}$   is fully specified by the first Chern class $c_1(\bb{L})\in H^2(X,\Z)$. Moreover, if $(X,\tau)$ is as in Assumption \ref{ass:2.3}, then   $c_1(\bb{L})$ is completely specified by the integer $C(\bb{L}):=\langle c_1(\bb{L});[X]\rangle\in\Z$ called \emph{(first) Chern number}. In the last equation the brackets $\langle\cdot ; \cdot\rangle$
denote the pairing between cohomology classes in $H^2(X,\Z)$ with the generator of the homology $[X]\in H_2(X)$ usually called the \emph{fundamental class}. This pairing can be also understood (when possible) as the integration  of the de Rham form which represents $c_1(\bb{L})$ over $X$. Let $\overline{\bb{L}}$ be the \emph{conjugate} line bundle of $\bb{L}$, namely 
$\overline{\bb{L}}$ and  $\bb{L}$ agree as (rank 2) real vector bundles but they have opposite complex structures.
It's well known that $c_1(\overline{\bb{L}})=-c_1(\bb{L})$ \cite[Lemma 14.9]{milnor-stasheff-74}. The involution $\tau:X\to X$ can be used to define the pullback line bundle $\tau^*(\bb{L})$. 
Since $\tau$ is an orientation-preserving involution it induces a map of degrees 1 in homology, hence $c_1(\tau^*(\bb{L}))=\tau^*c_1({\bb{L}})=c_1({\bb{L}})$. Given a line bundle $\bb{L}$ over $(X,\tau)$ we can build (via the Whitney sum) the rank 2 complex vector bundle
\beql{eq:classif_2d_summ}
\bb{E}_{\bb{L}}\;:=\; \tau^*(\bb{L})\;\oplus\;\overline{\bb{L}}\;.
\eeq
The first Chern class of $\bb{E}_{\bb{L}}$ vanishes identically since $c_1(\bb{E}_{\bb{L}})=c_1(\tau^*(\bb{L}))+c_1(\overline{\bb{L}})=c_1(\bb{L})-c_1(\bb{L})$. This agrees with the fact that $\bb{E}_{\bb{L}}$ admits a standard $\rr{Q}$-structure $\Theta$ which can be defined as follow: first of all notice that 
$\bb{E}_{\bb{L}}|_x\simeq\bb{L}|_{\tau(x)}\oplus\overline{\bb{L}}|_x$, then each point $p\in \bb{E}_{\bb{L}}|_x$ has the form $p=(l_1,\overline{l_2})$ with $l_1\in \bb{L}|_{\tau(x)}$ and $l_2\in {\bb{L}}|_x$
and the bar denotes the inversion of the complex structure (\ie the complex conjugation) in each fiber ${\bb{L}}|_x$. Accordingly, we can define the anti-linear anti-involution $\Theta$ between $\bb{E}_{\bb{L}}|_x$ and $\bb{E}_{\bb{L}}|_{\tau(x)}$ by $\Theta:(l_1,\overline{l_2})\mapsto (-l_2,\overline{l_1})$.

\begin{lemma}
\label{lemma:2d_manifA}
Let $(X,\tau)$ be an involutive space which verifies Assumption \ref{ass:2.3} and $\bb{L}\to X$ a complex line bundle with 
(first) Chern number $C=1$. Consider the  rank 2 \virg{Quaternionic} vector bundle $(\bb{E}_{\bb{L}},\Theta)$ associated with $\bb{L}$ by the construction \eqref{eq:classif_2d_summ}. For each fixed point $x_j\in X^\tau$, $j=1,\ldots,2n$, let us define a map $\phi_j:X^\tau\to\{\pm 1\}$ by $\phi_j(x_i):=1-2\delta_{i,j}$. Then, 
$$
\kappa(\bb{E}_{\bb{L}})\;=\;[\phi_1]\;=\;\ldots\;=\;[\phi_{2n}]\;\in\; [X^\tau,\n{U}(1)]_{\Z_2}/[X,\n{U}(1)]_{\Z_2}\;\simeq\;\Z_2\;,
$$
namely
all these maps  $\phi_j$ are representatives for the FKMM-invariant of $\bb{E}_{\bb{L}}$. Moreover,  $\kappa(\bb{E}_{\bb{L}})$ coincides with the non-trivial element $-1\in \Z_2$ showing that the $\rr{Q}$-bundle $(\bb{E}_{\bb{L}},\Theta)$ is non-trivial.
\end{lemma}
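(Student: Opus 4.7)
The plan is twofold: first show that all $\phi_j$ represent the same non-trivial class in $[X^\tau,\n{U}(1)]_{\Z_2}/[X,\n{U}(1)]_{\Z_2}\simeq\Z_2$, and then compute $\kappa(\bb{E}_{\bb{L}})$ and match it with this class.

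For the equivalence of the $\phi_j$'s: since $X^\tau$ consists of $2n$ isolated points with trivial involution, I have $[X^\tau,\n{U}(1)]_{\Z_2}\simeq\{\pm 1\}^{2n}$, and the \emph{parity} map $\pi:\{\pm 1\}^{2n}\to\Z_2$ counting $-1$'s modulo $2$ is a surjective homomorphism. A close reading of the proof of Proposition \ref{propos:d2manif_FKMM} (in Appendix \ref{sec:app_coker}) shows that the image of $[X,\n{U}(1)]_{\Z_2}$ under restriction to $X^\tau$ is exactly $\ker\pi$; hence $\pi$ descends to an isomorphism of the cokernel with $\Z_2$. Each $\phi_j$ has exactly one $-1$, so $\pi(\phi_j)=1$ and the classes $[\phi_1],\ldots,[\phi_{2n}]$ all coincide with the unique non-trivial element.

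For the FKMM computation, I use the sign-map formula of Proposition \ref{propo:FKMM-sign}. By Proposition \ref{propos:Q_struc_lod_d} the complex vector bundle underlying $\bb{E}_{\bb{L}}$ is trivial (consistent with $c_1(\tau^*\bb{L}\oplus\overline{\bb{L}})=0$), so I can choose a global frame $\{t_1,t_2\}$ and encode the $\rr{Q}$-structure $\Theta$ via a map $w:X\to\n{U}(2)$ satisfying \eqref{eq:mat_W2}. A concrete construction proceeds from a smooth section $s$ of $\bb{L}$ with a single zero of index $+1$ at some $p\notin X^\tau$: away from $\tau(p)$ and $p$ the sections $\tau^*s/|\tau^*s|$ and $\bar{s}/|s|$ trivialize $\tau^*\bb{L}$ and $\overline{\bb{L}}$ respectively, and a small equivariant perturbation produces a global frame together with an explicit $w$. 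From such a frame one reads off $\det(w)$, a choice of equivariant square-root $q_w$ with $\det(w)(x)=q_w(x)q_w(\tau(x))$, and the Pfaffians $\mathrm{Pf}_w(x_i)$, yielding the representative $\rr{d}_w=q_w/\mathrm{Pf}_w$ of $\kappa(\bb{E}_{\bb{L}})$.

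The main obstacle is to show that the parity of $\rr{d}_w$ is odd, equivalently equal to $c_1(\bb{L})\bmod 2=1$. Heuristically, $q_w$ is a ``half-determinant'' whose global existence forces a discrepancy with $\mathrm{Pf}_w$ at an odd number of fixed points, reflecting the odd degree of $\bb{L}$. The cleanest verification is the universal model $X=\tilde{\n{S}}^2$ with $\bb{L}=\s{O}(1)$, where by comparison with \eqref{eq:ex_d2_w} one checks $\rr{d}_w(\pm 1,0,0)=\pm 1$, giving odd parity and matching $\phi_1$ (or $\phi_2$) in agreement with Step 1. The general case then follows from combining this model computation with naturality of $\kappa$ under equivariant maps: tensoring $\bb{L}$ by a line bundle of even degree does not change the class of $\bb{E}_{\bb{L}}$ modulo the trivial $\rr{Q}$-bundle (as the extra piece is $\rr{Q}$-trivial), so the FKMM invariant depends only on $c_1(\bb{L})\bmod 2$, completing the identification $\kappa(\bb{E}_{\bb{L}})=[\phi_j]=-1\in\Z_2$.
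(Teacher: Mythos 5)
Your proposal has two genuine gaps, and both amount to assuming facts that the paper derives \emph{from} this lemma. In Step 1 you assert that the image of the restriction $[X,\n{U}(1)]_{\Z_2}\to[X^\tau,\n{U}(1)]_{\Z_2}\simeq\{\pm 1\}^{2n}$ is exactly the kernel of the parity map $\pi$. The appendix proof of Proposition \ref{propos:d2manif_FKMM} only shows that this image has index $2$; it does not identify \emph{which} index-$2$ subgroup it is, and there are many index-$2$ subgroups containing the restriction $(-1,\dots,-1)$ of the constant map (e.g.\ the kernel of $(\epsilon_1,\dots,\epsilon_{2n})\mapsto\epsilon_1\epsilon_2$). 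The identification of the image with $\ker\pi$ — equivalently the statement $[\phi_1]=\dots=[\phi_{2n}]$ — is precisely Proposition \ref{propos:d2manif_FKMM_02}, whose proof in the paper invokes the present lemma; so Step 1 is circular. In Step 2 the reduction to $\tilde{\n{S}}^2$ is not actually carried out: the claim that tensoring $\bb{L}$ by an even-degree line bundle changes $\bb{E}_{\bb{L}}$ only by a $\rr{Q}$-trivial Whitney summand is false as stated ($\bb{E}_{\bb{L}\otimes\bb{L}'}=\tau^*(\bb{L}\otimes\bb{L}')\oplus\overline{\bb{L}\otimes\bb{L}'}$ does not split off $\bb{E}_{\bb{L}}$ as a direct summand), and the assertion that $\kappa(\bb{E}_{\bb{L}})$ depends only on $C \bmod 2$ is Corollary \ref{corol:Fu-Kane-Mele_formula}, again a consequence of this lemma. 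The naturality route would require an explicit equivariant map $f:X\to\tilde{\n{S}}^2$ together with an isomorphism between $\bb{E}_{\bb{L}}$ and the pullback of the model bundle, which you do not construct.

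The paper closes both gaps with a single construction: it excises small invariant disks $D_1,\dots,D_{2n}$ around the fixed points (slice theorem) and, for each $j$, builds by equivariant clutching a $\rr{Q}$-bundle $\bb{E}_j=(X'\times\C^2)\cup_{\Psi_j}(\f{D}\times\C^2)$ whose clutching function is twisted only on $\partial D_j$. An explicit global $\rr{R}$-section of $\det(\bb{E}_j)$ then shows directly that $\kappa(\bb{E}_j)=[\phi_j]$, while a Chern-number comparison of the underlying line bundles gives $\bb{E}_j\simeq\bb{E}_{\bb{L}}$ for every $j$; the equalities $[\phi_1]=\dots=[\phi_{2n}]=\kappa(\bb{E}_{\bb{L}})$ follow from the well-definedness of $\kappa$, and non-triviality follows because the classes $[\phi_j]$ generate the cokernel, which is $\Z_2$ by Proposition \ref{propos:d2manif_FKMM}. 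To salvage your outline you would need an independent proof that every equivariant map $X\to\n{U}(1)$ takes the value $-1$ at an even number of fixed points, or else localize the twist of $\bb{L}$ at a single fixed point as the paper does.
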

\proof 
Let $X^\tau:=\{x_1,\ldots,x_{2n}\}$ be the fixed point set of $(X,\tau)$. 
 As a consequence of the so-called \emph{slice Theorem} \cite[Chapter I, Section 3]{hsiang-75}  a neighborhood of each $x_j\in X^\tau$ can be identified with an open subset around the origin of $\R^2$ endowed with the involution given by the reflection $x\mapsto -x$. Let $D_j\subset X$ be a 
 \emph{disk} (under this identification) 
around $x_j$. More precisely, $D_j$ is an invariant
set $\tau(D_j)=D_j$ in which $x_j$ is the only fixed point; it is closed, $\Z_2$-contractible and with boundary $\partial D_j\simeq\n{S}^1$. Moreover, without loss of generality,
we can choose sufficiently small disks
    in such a way that $D_i\cap D_j=\emptyset$ if $i\neq j$.
 Let us set  $\f{D}:=\bigcup_{i=1}^{2n}D_i$ and $X':=X\setminus{\rm Int}(\f{D})$. By construction $X'$ is a manifold with boundary $\partial\f{D}\simeq\bigcup_{i=1}^{2n}\n{S}^1_i$ on which the involution $\tau$ acts freely.
 
Since the action $\tau:\partial D_j\to \partial D_j$ is free we can fix an  isomorphism $\tilde{\psi}_j:\partial D_j\simeq\n{S}^1\ni\theta\mapsto\expo{\ii\theta}\in\n{U}(1)$ which is \emph{antipodal}-equivariant, \ie $\tilde{\psi}_j\circ \tau=\expo{\ii\pi}\tilde{\psi}_j=-\tilde{\psi}_j$.
Each of these isomorphisms defines  a map $\psi_j:\partial\f{D}\simeq\n{S}^1\ni\theta\mapsto\expo{\ii\theta}\in\n{U}(1)$ given by  $\psi_j|_{\partial D_j}=\tilde{\psi}_j$ and $\psi_j|_{\partial D_i}\equiv1$ for all $i\neq j$. Finally, we can set maps $\Psi_j:\f{D}\to\n{U}(2)$  by
$$
\Psi_j(x)\;:=\;
\left(\begin{array}{cc}\psi_j(\tau(x)) & 0 \\0 & \overline{\psi_j(x)}\end{array}\right)
$$
which verify the equivariance condition $\Psi_j\circ\tau=\mu\circ \Psi_j$ where  the involution $\mu$ on $\n{U}(2)$
has been defined in Remark \ref{rk:quat_str}. It follows  that ${\rm det} (\Psi_j)(x)=-1$ if $x\in {\partial D_j}$ and ${\rm det} (\Psi_j)(x)=+1$ if $x\in\f{D}\setminus{\partial D_j}$.

 Associated with each $\Psi_j$ we can construct a $\rr{Q}$-bundle
 $$
 \bb{E}_j\;:=\;(X'\times\C^2)\;\cup_{\Psi_j}\;(\f{D}\times\C^2)
 $$
 by means of the equivariant version of the \emph{clutching construction} \cite[Lemma 1.4.9]{atiyah-67} based on the equivariant map ${\Psi_j}$. By construction
$\bb{E}_j$ turns out to be
 isomorphic to the $\rr{Q}$-bundle $\bb{E}_{\bb{L}_j}:=\tau^*(\bb{L}_j)\oplus\overline{\bb{L}_j}$ where the line bundle $\bb{L}_j=(X'\times\C)\;\cup_{\psi_j}\;(\f{D}\times\C)
$ is realized with the clutching construction based on the map $\psi_j$. Let $c_1(\bb{L}_j)$ be the Chern class of the line bundle  
 $\bb{L}_j\to X$. Since the clutching around $x_j$ (and only around this point)  is done with a phase-function of the type $D_j\simeq\n{S}\ni\theta\to\expo{\ii\theta}\in\n{U}(1)$ it follows that the first Chern number obtained from the integration of $c_1(\bb{L}_j)$ is $C_j=1$. In particular, it follows that   $\bb{L}_j\simeq \bb{L}$ for each $j=1,\ldots,2n$. This also implies $\bb{E}_j\simeq \bb{E}_{\bb{L}}$ for each $j=1,\ldots,2n$.
 
 To finish the proof we need only to show that   the FKMM-invariant of $\bb{E}_j$ is represented by $\phi_j$.
 To see this we choose an  equivariant nowhere vanishing  section of $t:X\to{\rm det}(\bb{E}_j)$. Sections of this type are in one-to-one correspondence with a pair of equivariant maps $u_{X'}:X'\to\n{U}(1)$ and $u_{\f{D}}:\f{D}\to\n{U}(1)$ such that $u_{X'}={\rm det} (\Psi_j)\cdot u_{\f{D}}$ on $X'\cap\f{D}$. As a specific choice, we can take 
 $u_{X'}\equiv 1$ and $u_{\f{D}}(x)=-1$ if $x\in {\partial D_j}$ and $u_{\f{D}}(x)=+1$ if $x\in\f{D}\setminus{\partial D_j}$. This choice shows that $\phi_j$ expresses the difference between the restricted section $t|_{X^\tau}$ and the (constant) canonical section $s_{X^\tau}$ associated with ${\rm det}(\bb{E}_j|_{X^\tau})$.
\qed

\begin{corollary}
\label{coroll:d2manif_FKMM}
Let $(X,\tau)$ be an involutive space which verifies Assumption \ref{ass:2.3}. Then ${\rm Vec}^{2m}_{\rr{Q}}(X,\tau)\simeq\Z_2$.
\end{corollary}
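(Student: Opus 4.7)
The plan is to combine the injectivity part of Theorem \ref{theorem:inv_inject2} with the explicit non-trivial example constructed in Lemma \ref{lemma:2d_manifA}, and then reduce the general rank $2m$ case to rank $2$ via Corollary \ref{corol_(iii)}.

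First, I would verify that $(X,\tau)$ is an FKMM-space in the sense of Definition \ref{ass:2}. The space $X$ is closed and connected, and the smoothness of $\tau$ equips it with a $\Z_2$-CW-complex structure (condition 0), while condition 1 is part of Assumption \ref{ass:2.3} and condition 2 is the first assertion of Proposition \ref{propos:d2manif_FKMM}. Moreover, since $X$ has dimension $2$, the free $\Z_2$-cells have dimension at most $2 \leqslant 3$, so Theorem \ref{theorem:inv_inject2} applies and yields an injective group homomorphism
$$
\kappa\;:\;{\rm Vec}^{2}_{\rr{Q}}(X,\tau)\;\longrightarrow\; H^{2}_{\Z_2}(X|X^\tau,\n{Z}(1))\;\simeq\;\Z_2,
$$
the last isomorphism being the second assertion of Proposition \ref{propos:d2manif_FKMM}.

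Second, I would show that $\kappa$ is surjective by exhibiting a $\rr{Q}$-bundle with non-trivial FKMM-invariant. Since $X$ is a closed oriented $2$-manifold, one has $H^{2}(X,\Z)\simeq\Z$ with generator Poincar\'e-dual to the fundamental class $[X]$, and complex line bundles over $X$ are classified by their first Chern class. Hence there exists a complex line bundle $\bb{L}\to X$ with Chern number $C(\bb{L})=\langle c_1(\bb{L});[X]\rangle=1$. Applying Lemma \ref{lemma:2d_manifA} to this $\bb{L}$, the associated rank $2$ \virg{Quaternionic} vector bundle $\bb{E}_{\bb{L}}=\tau^\ast(\bb{L})\oplus\overline{\bb{L}}$ satisfies $\kappa(\bb{E}_{\bb{L}})=-1$, the non-trivial element of $\Z_2$. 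Together with the injectivity, this gives ${\rm Vec}^{2}_{\rr{Q}}(X,\tau)\stackrel{\kappa}{\simeq}\Z_2$.

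Finally, I would promote the statement from rank $2$ to arbitrary even rank $2m$ by invoking Corollary \ref{corol_(iii)}, whose hypotheses are met here: $d=2$ lies in the admissible range $2\leqslant d\leqslant 5$, and the fixed cells (the isolated points of $X^\tau$, which is a $0$-dimensional submanifold by smoothness of $\tau$) appear only in dimension $0$. Hence ${\rm Vec}^{2m}_{\rr{Q}}(X,\tau)\simeq{\rm Vec}^{2}_{\rr{Q}}(X,\tau)\simeq\Z_2$ for every $m\in\N$. There is no real obstacle here beyond organizing the already-established pieces; the only point that requires verifying is the existence of a line bundle of Chern number $1$, which is immediate from Poincar\'e duality on the closed oriented surface $X$.
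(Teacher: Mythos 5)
Your proposal is correct and follows essentially the same route as the paper: the paper's (one-line) proof also combines the injectivity of $\kappa$ from Theorem \ref{theorem:inv_inject2} with the non-trivial element exhibited in Lemma \ref{lemma:2d_manifA}, the target group being $\Z_2$ by Proposition \ref{propos:d2manif_FKMM} and the rank reduction coming from Corollary \ref{corol_(iii)}. You have merely spelled out the verifications that the paper leaves implicit.
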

\proof
It follows from the injectivity of $\kappa$ proved in Theorem \ref{theorem:inv_inject2}.\qed

\begin{corollary}
\label{corol:Fu-Kane-Mele_formula}
Let $(X,\tau)$ be an involutive space which verifies Assumption \ref{ass:2.3} and $\bb{L}\to X$ a complex line bundle with 
(first) Chern number $C\in\Z$. Consider the  rank 2 \virg{Quaternionic} vector bundle $(\bb{E}_{\bb{L}},\Theta)$ associated with $\bb{L}$ by the construction \eqref{eq:classif_2d_summ}.
The topology of   $\bb{E}_{\bb{L}}$  is completely specified by the parity of $C$
through the formula 
$$
\kappa(\bb{E}_{\bb{L}})\;:=\;(-1)^C
$$
which relates the (image of the) FKMM-invariant $\kappa(\bb{E})\in\Z_2$ with the Chern class of $\bb{L}$.
\end{corollary}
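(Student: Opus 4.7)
The plan is to refine the clutching construction used in Lemma \ref{lemma:2d_manifA} so that it accommodates any Chern number $C\in\Z$, and then directly compute the sign map. On the closed oriented surface $X$ a complex line bundle is classified up to isomorphism by its first Chern number, so I am free to choose any convenient model for $\bb{L}$. I would reuse the decomposition $X=X'\cup\f{D}$ from that proof, with $\f{D}:=\bigcup_{i=1}^{2n}D_i$ a disjoint union of $\tau$-invariant disks centred at the fixed points $X^\tau=\{x_1,\ldots,x_{2n}\}$ (the evenness of their number being granted by Proposition \ref{propos:d2manif_FKMM}) and with $\tau$ acting freely on $X'$. Using the antipodal-equivariant identifications $\tilde{\psi}_i:\partial D_i\simeq\n{S}^1$, I concentrate the clutching at $x_1$ by defining $\psi^{(C)}:\partial\f{D}\to\n{U}(1)$ via $\psi^{(C)}|_{\partial D_1}(\expo{\ii\theta}):=\expo{\ii C\theta}$ and $\psi^{(C)}|_{\partial D_j}\equiv 1$ for $j\geqslant 2$. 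The resulting line bundle $\bb{L}:=(X'\times\C)\cup_{\psi^{(C)}}(\f{D}\times\C)$ has first Chern number equal to $C$ by construction.

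The \virg{Quaternionic} vector bundle $\bb{E}_{\bb{L}}=\tau^*(\bb{L})\oplus\overline{\bb{L}}$ then admits the equivariant clutching presentation
$$
\Psi^{(C)}(x)\;:=\;\left(\begin{array}{cc}\psi^{(C)}(\tau(x)) & 0 \\ 0 & \overline{\psi^{(C)}(x)}\end{array}\right),
$$
which satisfies $\Psi^{(C)}\circ\tau=\mu\circ\Psi^{(C)}$ by exactly the matrix identity already used in Lemma \ref{lemma:2d_manifA}. Using the parametrization $\partial D_1\ni x\leftrightarrow\expo{\ii\theta}$ with $\tau:\theta\mapsto\theta+\pi$, a direct computation gives ${\rm det}(\Psi^{(C)})(x)=\psi^{(C)}(\tau(x))\,\overline{\psi^{(C)}(x)}=\expo{\ii C\pi}=(-1)^C$ on $\partial D_1$, while ${\rm det}(\Psi^{(C)})\equiv +1$ on each remaining $\partial D_j$.

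Following the last part of the proof of Lemma \ref{lemma:2d_manifA}, I would then exhibit an equivariant global section of ${\rm det}(\bb{E}_{\bb{L}})$ by choosing $u_{X'}\equiv 1$ on $X'$ and determining $u_{\f{D}}$ on $\f{D}$ via the gluing relation $u_{X'}={\rm det}(\Psi^{(C)})\cdot u_{\f{D}}$ on $\partial\f{D}$. This forces $u_{\f{D}}|_{\partial D_1}\equiv (-1)^C$ and $u_{\f{D}}|_{\partial D_j}\equiv 1$ for $j\geqslant 2$; since these boundary values are real constants they extend as (automatically equivariant) constant functions across the contractible disks, so that $u_{\f{D}}(x_1)=(-1)^C$ and $u_{\f{D}}(x_j)=1$ for $j\geqslant 2$. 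By Remark \ref{rk:FKMM_section} the sign map $X^\tau\to\{\pm 1\}$ produced this way represents $\kappa(\bb{E}_{\bb{L}})$: for $C$ even it is trivial, while for $C$ odd it coincides with the map $\phi_1$ of Lemma \ref{lemma:2d_manifA} and hence represents the non-trivial class in $\Z_2$. Under the identification $\Z_2\simeq\{\pm 1\}$ this yields $\kappa(\bb{E}_{\bb{L}})=(-1)^C$. The only bookkeeping subtlety is checking that the clutching trivialisations around the fixed points put the $\rr{Q}$-structure in the standard form used by Lemma \ref{lemma:R_Q_det_bun2} to fix the canonical section $s_{X^\tau}$, which is immediate from the diagonal shape of $\Psi^{(C)}$.
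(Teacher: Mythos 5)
Your proposal is correct and follows essentially the same route as the paper: the paper's proof of this corollary is precisely "repeat Lemma \ref{lemma:2d_manifA} with the clutching map $\theta\mapsto\expo{\ii C\theta}$ concentrated at one fixed point," yielding a sign map equal to $(-1)^C$ at that point and $+1$ elsewhere. You have merely written out in more detail the computation of ${\rm det}(\Psi^{(C)})$ and the extension of the equivariant section, which the paper leaves implicit.
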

\proof 
We can repeat the same proof of Lemma \ref{lemma:2d_manifA} with respect to a generalized isomorphism  $\tilde{\psi}_j:\partial D_j\simeq\n{S}^1\ni\theta\mapsto\expo{\ii C\theta}\in\n{U}(1)$ such that $\tilde{\psi}_j\circ \tau=\expo{\ii C\pi}\tilde{\psi}_j=(-1)^C \tilde{\psi}_j$. With this choice the line bundle $\bb{L}_j=(X'\times\C)\;\cup_{\psi_j}\;(\f{D}\times\C)$ has Chern number $C$ and the FKMM-invariant of $\bb{E}_{\bb{L}}\simeq \bb{E}_{\bb{L}_j}\simeq \bb{E}_{j}$ is represented by a function $\phi_j:X^\tau\to\{\pm1\}$ such that 
$\phi_j(x_j)=(-1)^C$ and
$\phi_j(x_i)=1$ if $i\neq j$.
\qed

\begin{remark}{\upshape
Corollary \ref{corol:Fu-Kane-Mele_formula} can be considered as the abstract version of the justification of the Quantum Spin Hall Effect given in \cite{kane-mele-05'} (\cf also \cite[eq. (3.26)]{fu-kane-06}). The Chern numbers associated with the two line bundles which define  $\bb{E}_{\bb{L}}$ are opposite in sign and, therefore  define opposite traveling currents. Since these currents carry opposite spins they sum up and produce a non trivial effect which is quantified by the parity of the absolute value of the Chern number carried by each band.
}
 \hfill $\blacktriangleleft$
\end{remark}

A less obvious interesting consequence of Lemma \ref{lemma:2d_manifA} is explored in the following  proposition.

\begin{proposition}
\label{propos:d2manif_FKMM_02}
Let $(X,\tau)$ be an involutive space which verifies Assumption \ref{ass:2.3}. The  isomorphism
\beql{eq:iso_manif_2d}
[X^\tau,\n{U}(1)]_{\Z_2}/[X,\n{U}(1)]_{\Z_2} \;\simeq\;\Z_2
\eeq
 is induced by the map
$$
\Pi\;:\;[X^\tau,\n{U}(1)]_{\Z_2}\;\to\;\Z_2\;,\qquad\quad [\omega]\;\stackrel{\Pi}{\longmapsto}\;\prod_{i=1}^{2n}\omega(x_i)\;.
$$
\end{proposition}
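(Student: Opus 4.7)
The plan is to reduce the statement to Lemma \ref{lemma:2d_manifA} by identifying the source group $[X^\tau,\n{U}(1)]_{\Z_2}$ explicitly, and then comparing the kernel of $\Pi$ with the image of the restriction map $r:[X,\n{U}(1)]_{\Z_2}\to [X^\tau,\n{U}(1)]_{\Z_2}$ through a simple generation argument.

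First, since $\tau$ acts trivially on $X^\tau=\{x_1,\dots,x_{2n}\}$ and the conjugation involution on $\n{U}(1)$ has fixed set $\{\pm 1\}$, evaluation at the fixed points provides a canonical group isomorphism $[X^\tau,\n{U}(1)]_{\Z_2}\simeq\Z_2^{2n}$. Under this identification $\Pi$ is nothing but the product-of-coordinates homomorphism $\Z_2^{2n}\to\Z_2$, whose kernel $H$ is the index-$2$ subgroup of tuples containing an even number of $-1$'s. On the other hand, Proposition \ref{propos:d2manif_FKMM} together with \eqref{eq:iso_manif_2d} shows that $\mathrm{Im}(r)\subseteq\Z_2^{2n}$ is also an index-$2$ subgroup.

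Now I would invoke Lemma \ref{lemma:2d_manifA}, which asserts that for every $j\in\{1,\dots,2n\}$ the class $[\phi_j]$ is the unique nontrivial element of the cokernel $\Z_2^{2n}/\mathrm{Im}(r)\simeq\Z_2$. In particular $[\phi_i]=[\phi_j]$ in the cokernel for all $i,j$, so the pointwise product $\phi_i\cdot\phi_j$—which has $-1$ in exactly the two positions $i$ and $j$ and $+1$ elsewhere—lies in $\mathrm{Im}(r)$ whenever $i\neq j$. An elementary $\F_2$-linear algebra observation shows that the family $\{\phi_i\cdot\phi_j\}_{1\leqslant i<j\leqslant 2n}$ generates $H$, since any even-weight tuple in $\Z_2^{2n}$ can be written as a product of such pair-elements. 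Therefore $H\subseteq\mathrm{Im}(r)$, and since both subgroups have index $2$ in $\Z_2^{2n}$, they coincide: $H=\mathrm{Im}(r)$. This forces $\Pi$ to descend to a well-defined homomorphism $\bar\Pi:\Z_2^{2n}/\mathrm{Im}(r)\to\Z_2$; because $\Pi(\phi_j)=-1$, $\bar\Pi$ is nontrivial, and being a homomorphism $\Z_2\to\Z_2$ it is the claimed isomorphism.

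The main obstacle is the equality $H=\mathrm{Im}(r)$, and the decisive input is Lemma \ref{lemma:2d_manifA}, which has already been established. Thanks to the pair-generating trick, no additional transversality or geometric argument is required; everything else is bookkeeping on the finite group $\Z_2^{2n}$.
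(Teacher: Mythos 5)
Your proof is correct and rests on the same two inputs as the paper's: Proposition \ref{propos:d2manif_FKMM} (which makes ${\rm Im}(r)$ an index-two subgroup of $[X^\tau,\n{U}(1)]_{\Z_2}\simeq\Z_2^{2n}$) and Lemma \ref{lemma:2d_manifA} (which puts all the $[\phi_i]$ in the same nontrivial coset). The only difference is the final piece of bookkeeping: you prove $\ker\Pi={\rm Im}(r)$ directly by observing that the pair-products $\phi_i\cdot\phi_j$ lie in ${\rm Im}(r)$ and generate the even-weight subgroup, whereas the paper parametrizes all characters $\Z_2^{2n}\to\Z_2$ as $\prod_i\epsilon_i^{\sigma_i}$ and uses the permutation symmetry of the $\phi_i$'s to force $\sigma_1=\cdots=\sigma_{2n}=1$; both routes are equally valid and of comparable length.
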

\proof
The isomorphism \eqref{eq:iso_manif_2d} is a consequence of the identification \eqref{eq:coker_iso1} and Proposition \ref{propos:d2manif_FKMM}. Let $\hat{\Phi}:[X^\tau,\n{U}(1)]_{\Z_2}/[X,\n{U}(1)]_{\Z_2} \to\Z_2$
be such an isomorphism and $\Phi:[X^\tau,\n{U}(1)]_{\Z_2}\to\Z_2$ a morphism which generates $\hat{\Phi}$.
Evidently, the action of $[X,\n{U}(1)]_{\Z_2}$ on $[X^\tau,\n{U}(1)]_{\Z_2}$ by restriction is given by elements which are in the kernel of $\Phi$.
 Since
$
[X^\tau,\n{U}(1)]_{\Z_2}\simeq{\rm Map}(X^\tau,\Z_2)\simeq\Z_2^{2n}
$,
where $2n$ is the number of fixed points in $X^\tau$, one has that $\Phi$ can be uniquely represented  as a map
$$
{\Phi}\;:\;\Z_2^{2n}\;\to\;\Z_2\;,\qquad\quad {\Phi}(\epsilon_1,\ldots,\epsilon_{2n})\;:=\;\prod_{i=1}^{2n}(\epsilon_i)^{\sigma_i}\;
$$
where for $\omega\in {\rm Map}(X^\tau,\Z_2)$ one defines 
$\epsilon_i:=\omega(x_i)\in\{\pm1\}$ and $\sigma_i\in\{0,1\}$. As a consequence of Lemma \ref{lemma:2d_manifA}
one has that the value of $\Phi([\phi_i])=\Phi(1,\ldots,1,-1,1,\ldots,1)\in\Z_2$ has to be independent of $i=1,\ldots,2n$ or, equivalently, has to be independent of the position of the unique negative entry $-1$ in the array $(1,\ldots,1,-1,1,\ldots,1)$. This implies that $\sigma_{1}=\ldots=\sigma_{2n}$. Finally, the requirement that $\hat{\Phi}$ has to be an isomorphism fixes $\sigma_i=1$ for all $i=1,\ldots,2n$.
\qed

\medskip

\noindent
The content of Proposition  \ref{propos:d2manif_FKMM_02} is quite relevant since, in combination with Proposition \ref{propo:FKMM-sign}, it provides a complete justification, as well as a generalization,  of the Fu-Kane-Mele formula
for the classification of \virg{Quaternionic} vector bundles over the TR-torus $\tilde{\n{T}}^2$ \cite{fu-kane-06,fu-kane-mele-07}.

\begin{theorem}[Fu-Kane-Mele formula]
\label{theo:Fu-Kane-Mele_formula}
Let $(X,\tau)$ be an involutive space which verifies Assumption \ref{ass:2.3}. The topology of a \virg{Quaternionic} vector bundle $(\bb{E},\Theta)$ over $(X,\tau)$ is completely specified by the (image of the) FKMM-invariant $\kappa(\bb{E})\in\Z_2$ given by the \emph{Fu-Kane-Mele formula}
\beql{eq:Fu-Kane-Mele_formula}
\kappa(\bb{E})\;:=\;\prod_{i=1}^{2n}\rr{d}_w(x_i)
\eeq
where $x_i\in X^\tau$ are the fixed points of $X$ and $\rr{d}_w\in {\rm Map}(X^\tau,\Z_2)$ is the sign map defined by \eqref{eq:sign_map_fkm}.
\end{theorem}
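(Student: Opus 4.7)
The plan is to combine the results already established in the section, assembling them into the desired explicit formula. First I would observe that, since $(X,\tau)$ verifies Assumption~\ref{ass:2.3}, Proposition~\ref{propos:d2manif_FKMM} guarantees that $(X,\tau)$ is an FKMM-space of dimension $d=2$. Therefore Proposition~\ref{prop:FKMM_type_space} ensures that every $\rr{Q}$-bundle $(\bb{E},\Theta)$ over $(X,\tau)$ is automatically of FKMM-type, and Corollary~\ref{coroll:d2manif_FKMM} shows that the FKMM-invariant $\kappa$ is a \emph{complete} topological invariant taking values in $\Z_2$. Thus it remains only to exhibit $\kappa(\bb{E})$ as the product in~\eqref{eq:Fu-Kane-Mele_formula}.

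Next I would invoke Proposition~\ref{propos:Q_struc_lod_d}: because $\dim X=2\leqslant 3$ and $(\bb{E},\Theta)$ is of FKMM-type, the underlying complex vector bundle is trivial. Consequently, as described in Section~\ref{sec:gen_struct_low_dim}, we may represent $(\bb{E},\Theta)$ by $X\times\C^{2m}$ equipped with the $\rr{Q}$-structure~\eqref{eq:mat_W3} encoded by a continuous map $w:X\to \n{U}(2m)$ satisfying the constraint~\eqref{eq:mat_W2}. Applying Proposition~\ref{propo:FKMM-sign} to this presentation, the FKMM-invariant of $\bb{E}$ is represented by the sign map $\rr{d}_w\in\mathrm{Map}(X^\tau,\Z_2)$ of~\eqref{eq:sign_map_fkm}, that is
\[
\kappa(\bb{E})\;=\;[\rr{d}_w]\;\in\;[X^\tau,\n{U}(1)]_{\Z_2}/[X,\n{U}(1)]_{\Z_2}\;\simeq\;\Z_2.
\]

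To conclude, I would apply Proposition~\ref{propos:d2manif_FKMM_02}, which identifies the isomorphism above with the explicit morphism $\Pi:[\omega]\mapsto\prod_{i=1}^{2n}\omega(x_i)$ over the (even) collection of fixed points $X^\tau=\{x_1,\dots,x_{2n}\}$. Applying $\Pi$ to the class $[\rr{d}_w]$ produces exactly the product $\prod_{i=1}^{2n}\rr{d}_w(x_i)\in\Z_2$, which yields the announced formula~\eqref{eq:Fu-Kane-Mele_formula}. Since $\kappa$ is injective on ${\rm Vec}_\rr{Q}^{2m}(X,\tau)$ (by Theorem~\ref{theorem:inv_inject2}) and surjective onto $\Z_2$ (by Corollary~\ref{coroll:d2manif_FKMM}), this invariant completely classifies $(\bb{E},\Theta)$ up to $\rr{Q}$-isomorphism.

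The proof is essentially a bookkeeping exercise once the prior results are in place; the only point that deserves care is checking that the sign map $\rr{d}_w$ used in Proposition~\ref{propo:FKMM-sign} and the evaluation map $\Pi$ of Proposition~\ref{propos:d2manif_FKMM_02} are compatible, i.e.\ that $\Pi$ is well-defined on the quotient $[X^\tau,\n{U}(1)]_{\Z_2}/[X,\n{U}(1)]_{\Z_2}$ when composed with the class of $\rr{d}_w$. This compatibility is already guaranteed by the last assertion of Proposition~\ref{propos:d2manif_FKMM_02}, so no additional argument is required.
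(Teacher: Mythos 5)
Your proposal is correct and follows essentially the same route as the paper's proof: it combines Corollary \ref{coroll:d2manif_FKMM} (completeness of $\kappa$), Proposition \ref{propo:FKMM-sign} (representation of $\kappa(\bb{E})$ by the sign map $\rr{d}_w$), and Proposition \ref{propos:d2manif_FKMM_02} (the explicit isomorphism $\Pi$). The extra verifications you include (that Assumption \ref{ass:2.3} yields an FKMM-space and that the underlying complex bundle is trivial) are correct and merely make explicit what the paper leaves implicit.
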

\proof
Corollary \ref{coroll:d2manif_FKMM} says that the topology of an element of ${\rm Vec}^{2m}_{\rr{Q}}(X,\tau)\simeq\Z_2$ is completely specified by  FKMM-invariant $\kappa(\bb{E})\in[X^\tau,\n{U}(1)]_{\Z_2}/[X,\n{U}(1)]_{\Z_2}
$. In view of  Proposition \ref{propo:FKMM-sign}, the class $\kappa(\bb{E})$ is represented by the sign map  $\rr{d}_w$ and the isomorphism $\Pi:[X^\tau,\n{U}(1)]_{\Z_2}/[X,\n{U}(1)]_{\Z_2}\to\Z_2$
described in Proposition \ref{propos:d2manif_FKMM_02} applied to $\kappa(\bb{E})=[\rr{d}_w]$
gives rise to  formula \eqref{eq:Fu-Kane-Mele_formula}.
\qed

\medskip

\noindent
We point out  that we made a slight abuse of notation in equation \eqref{eq:Fu-Kane-Mele_formula} where more correctly we should write $\Pi(\kappa(\bb{E}))$ instead $\kappa(\bb{E})$. Nevertheless, since there is no risk of confusion,
we prefer to write the Fu-Kane-Mele formula in the simplest and more evocative form \eqref{eq:Fu-Kane-Mele_formula}.

\subsection{Classification for TR-tori}

Let us start with the case $d=2$.

\begin{proposition}
\label{propos_classif_T_2d}
Let  $\tilde{\n{T}}^2\equiv({\n{T}}^2,\tau)$
be the  TR-involutive space described in Definition \ref{def:AII_sys}. Then, for all $m\in\N$ there is a group isomorphism
$$
{\rm Vec}^{2m}_\rr{Q}({\n{T}}^2,\tau)\;\stackrel{\kappa}{\simeq}\;H^{2}_{\Z_2}\big(\tilde{\n{T}}^2|(\tilde{\n{T}}^2)^\tau,\n{Z}(1)\big)\;\simeq\;\Z_2
$$
which is provided by the FMMM-invariant $\kappa$.
\end{proposition}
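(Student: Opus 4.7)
The plan is to reduce the claim to the general classification of \virg{Quaternionic} vector bundles over two-dimensional FKMM-spaces already developed in the preceding subsection, rather than to reprove injectivity and surjectivity by hand.

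First I would verify that the involutive space $\tilde{\n{T}}^2$ satisfies Assumption \ref{ass:2.3}. The torus $\n{T}^2 = \n{S}^1 \times \n{S}^1$ is a compact, oriented, smooth $2$-manifold without boundary. The involution $\tau$ acts diagonally as the product $\tau_1 \times \tau_1$, where $\tau_1(k_0, k_1) = (k_0, -k_1)$ is the orientation-reversing reflection on $\tilde{\n{S}}^1$; since the product of two orientation-reversing diffeomorphisms of $1$-manifolds is an orientation-preserving diffeomorphism of the product, $\tau$ is smooth and orientation-preserving on $\n{T}^2$. Finally, the fixed point set $(\tilde{\n{T}}^2)^\tau = (\tilde{\n{S}}^1)^\tau \times (\tilde{\n{S}}^1)^\tau$ consists exactly of the four time-reversal invariant momenta $\{(\pm 1,0)\} \times \{(\pm 1,0)\}$, so it is non-empty and finite.

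Once this verification is carried out, Corollary \ref{coroll:d2manif_FKMM} applies directly and yields ${\rm Vec}^{2m}_\rr{Q}(\n{T}^2,\tau) \simeq \Z_2$ for every $m \in \N$, the isomorphism being realized by the FKMM-invariant $\kappa$ (injectivity comes from Theorem \ref{theorem:inv_inject2}, the target is $\Z_2$ by Proposition \ref{propos:d2manif_FKMM}, and a non-trivial class is produced by Lemma \ref{lemma:2d_manifA} through $\bb{E}_{\bb{L}} = \tau^*(\bb{L})\oplus\overline{\bb{L}}$ for a line bundle $\bb{L}$ of Chern number one). The second isomorphism in the statement is the specialization $d=2$ of \eqref{eq:coker_TR_tori}, namely $H^2_{\Z_2}(\tilde{\n{T}}^2|(\tilde{\n{T}}^2)^\tau, \Z(1)) \simeq \Z_2^{2^2 - 3} = \Z_2$.

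No genuine obstacle arises: all of the substantive work (cohomological computation, injectivity of $\kappa$, and explicit construction of a non-trivial bundle) has been done in the earlier subsections. The only point that deserves explicit attention is the verification that $\tau$ is orientation-preserving on $\n{T}^2$, which is needed to apply Assumption \ref{ass:2.3} and hence Corollary \ref{coroll:d2manif_FKMM}; everything else then follows mechanically from assembling the previous results.
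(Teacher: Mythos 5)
Your proposal is correct and follows essentially the same route as the paper: the paper itself states that Proposition \ref{propos_classif_T_2d} is a special case of Corollary \ref{coroll:d2manif_FKMM}, and your verification that $\tilde{\n{T}}^2$ satisfies Assumption \ref{ass:2.3} (in particular that the diagonal product of two orientation-reversing reflections is orientation-preserving) is exactly the point needed to invoke it. The paper merely supplements this with an explicit realization of the non-trivial class (via a pullback along a smash-product map $\tilde{\n{T}}^2\to\tilde{\n{S}}^2$), which it explicitly labels as instructive rather than necessary.
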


\medskip

\noindent
Although this result is only a special case of Corollary \ref{coroll:d2manif_FKMM}, we find instructive to show
an explicit realization of a non-trivial rank 2 $\rr{Q}$-bundle over ${\n{T}}^2$. The existence of a non-trivial element and the injectivity of $\kappa$ (\cf Theorem \ref{theorem:inv_inject2}) provide  a complete justification of Proposition \ref{propos_classif_T_2d} along the same line of the proof of 
Proposition \ref
{propos_classif_low_S}. A direct way to produce a non trivial $\rr{Q}$-bundle of rank 2 is to start with a line bundle over $\n{T}^2$ with (first) Chern number $C=1$. Then, the construction  \eqref{eq:classif_2d_summ}  provides the required result in view of Corollary \ref{corol:Fu-Kane-Mele_formula}.
A different way is to start with $\bb{E}'\equiv\n{T}^2\times\C^2$ and to introduce a $\rr{Q}$-structure $\Theta$
by means of a map $w':\n{T}^2\to\n{U}(2)$
which verifies the relation \eqref{eq:mat_W2}.  As usual we parametrize points of $\n{T}^2\simeq\R^2/(2\pi\Z)^2$ with pairs $z:=(\theta_1,\theta_2)\in[-\pi,\pi]^2$. With this choice the involution $\tau$ acts as $\tau(\theta_1,\theta_2)=(-\theta_1,-\theta_2)$ and the four (distinct) fixed points are $z_1:=(0,0)$, $z_2:=(0,\pi)$, $z_3:=(\pi,0)$ and $z_4:=(\pi,\pi)$. A simple way to introduce a non-trivial $\rr{Q}$-structure   on $\bb{E}'$
is to construct an equivariant map $\pi:\tilde{\n{T}}^2\to \tilde{\n{S}}^2$ such that $\pi(z_1)=\pi(z_2)=\pi(z_3)=(+1,0,0)$ and $\pi(z_4)=(-1,0,0)$ and to identify $\bb{E}'$ with the pullback $\pi^*\bb{E}$ where 
$\bb{E}\equiv\n{S}^2\times\C^2$ is the non-trivial $\rr{Q}$-bundle over $\tilde{\n{S}}^2$ with $\rr{Q}$-structure $w$ given by \eqref{eq:ex_d2_w}. In this case the $\rr{Q}$-structure  on $\bb{E}'$ is simply given by $w'=\pi^*w=w\circ \pi$
and 
we can compute the FKMM-invariant of this $\rr{Q}$-bundle by using the Fu-Kane-Mele formula \eqref{eq:Fu-Kane-Mele_formula}, \ie
$$
\kappa(\bb{E}')\;:=\;\prod_{i=1}^{4}\rr{d}_{w'}(z_i)\;=\;\prod_{i=1}^{4}\rr{d}_{w}(\pi(z_i))\;=\;-1.
$$

\begin{remark}[Smash product construction]\label{rk:smash_prod2d}{\upshape
A concrete realization of an equivariant map $\pi:\tilde{\n{T}}^2\ni(\theta_1,\theta_2)\mapsto (k_0,k_1,k_2)\in\tilde{\n{S}}^2$ which 
verifies the properties  required above  is given by
\beql{eq:mapT2toS2}
\begin{aligned}
k_0(\theta_1,\theta_2)\;&:=\; \frac{7+\cos(\theta_1)+\cos(\theta_2)-9\cos(\theta_1)\cos(\theta_2)}{9-\cos(\theta_1)-\cos(\theta_2)-7\cos(\theta_1)\cos(\theta_2)}\\
k_1(\theta_1,\theta_2)\;&:=\; \frac{4\sin(\theta_1)(1-\cos(\theta_2))}{9-\cos(\theta_1)-\cos(\theta_2)-7\cos(\theta_1)\cos(\theta_2)}\\
k_2(\theta_1,\theta_2)\;&:=\; \frac{4\sin(\theta_2)(1-\cos(\theta_1))}{9-\cos(\theta_1)-\cos(\theta_2)-7\cos(\theta_1)\cos(\theta_2)}\;.
\end{aligned}
\eeq
The equivariance of $\pi$ is evident from $k_0(-\theta_1,-\theta_2)=k_0(\theta_1,\theta_2)$ and $k_j(-\theta_1,-\theta_2)=-k_j(\theta_1,\theta_2)$, $j=1,2$. The common denominator in equations \eqref{eq:mapT2toS2} is well defined for all $(\theta_1,\theta_2)\neq(0,0)=z_1$ and on the three fixed points $z_2,z_3,z_4$ one verifies that $k(z_2)=k(z_3)=(+1,0,0)$ and $k(z_4)=(-1,0,0)$. Moreover, with the help of a Taylor expansion one can check 
that the  \eqref{eq:mapT2toS2} are continuous in $z_1$ with value $k(z_1)=(+1,0,0)$. Let $\s{T}_2:=\{(\theta_1,0),(0,\theta_2)\ |\ \theta_1,\theta_2\in[-\pi,\pi]\}\subset \n{T}^2$ be the one-dimensional subcomplex of  
$\n{T}^2$ consisting of two copies of $\n{S}^1$ joined together on the fixed point $z_1=(0,0)$. In the jargon of topology one says that $\s{T}_2=\n{S}^1\vee\n{S}^1$ is a \emph{wedge} sum of two circles. From \eqref{eq:mapT2toS2} it follows that $\pi(\s{T}_2)=(+1,0,0)$, hence $\pi$ corresponds to the (equivariant) projection
$$
\pi\;:\; \n{T}^2\;\longrightarrow\; \n{T}^2/(\n{S}^1\vee\n{S}^1)\;:=\; \n{S}^1\wedge\n{S}^1 \;\simeq\;\n{S}^2
$$
where the symbol $\wedge$ denotes the \emph{smash product} of two topological spaces \cite[Chapter 0]{hatcher-02}.
}
 \hfill $\blacktriangleleft$
\end{remark}

The case $d=3$ is a little more involved.

\begin{proposition}
\label{propos_classif_T_3d}
Let  $\tilde{\n{T}}^3\equiv({\n{T}}^3,\tau)$
be the  TR-involutive space described in Definition \ref{def:AII_sys}. Then, for all $m\in\N$ there is a group isomorphism
$$
{\rm Vec}^{2m}_\rr{Q}({\n{T}}^3,\tau)\;\stackrel{\kappa}{\simeq}\;H^{2}_{\Z_2}\big(\tilde{\n{T}}^3|(\tilde{\n{T}}^3)^\tau,\n{Z}(1)\big)\;\simeq\;\Z_2^4
$$
which is provided by the FMMM-invariant $\kappa$.
\end{proposition}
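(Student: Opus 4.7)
The plan is to reduce the statement to a surjectivity verification. By Theorem \ref{theorem:inv_inject2}, the FKMM-invariant $\kappa$ is already known to be an injective group homomorphism with target $H^2_{\Z_2}(\tilde{\n{T}}^3|(\tilde{\n{T}}^3)^\tau,\n{Z}(1))$, and the latter group is isomorphic to $\Z_2^4$ by \eqref{eq:coker_TR_tori}. Hence it suffices to produce four rank $2$ $\rr{Q}$-bundles over $\tilde{\n{T}}^3$ whose FKMM-invariants generate the target.

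I would construct three \emph{weak} generators and one \emph{strong} generator as follows. For each $i=1,2,3$, the equivariant projection $p_i : \tilde{\n{T}}^3 \to \tilde{\n{T}}^2$ obtained by forgetting the $i$-th coordinate allows us to pull back the non-trivial rank $2$ $\rr{Q}$-bundle $\bb{F}$ produced in Proposition \ref{propos_classif_T_2d}, giving $\bb{E}_i := p_i^\ast \bb{F}$. By the naturality of the FKMM-invariant (Theorem \ref{theo:FKMM_propert}(i)) combined with Proposition \ref{propo:FKMM-sign}, the sign-map representative of $\kappa(\bb{E}_i)$ on the eight fixed points of $\tilde{\n{T}}^3$ is exactly the pullback, via $p_i$, of the non-trivial sign map on the four fixed points of $\tilde{\n{T}}^2$. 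For the strong generator I would generalise the smash-product construction of Remark \ref{rk:smash_prod2d}, producing an equivariant collapse $\pi : \tilde{\n{T}}^3 \to \tilde{\n{S}}^3$ that sends seven of the eight fixed points of $\tilde{\n{T}}^3$ to a single fixed point of $\tilde{\n{S}}^3$ and the remaining one to the other; the bundle $\bb{E}_{\mathrm{str}} := \pi^\ast\bb{F}'$, with $\bb{F}'$ the non-trivial $\rr{Q}$-bundle on $\tilde{\n{S}}^3$ given by \eqref{eq:ex_d3_w}, is then the natural candidate for the strong class.

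Finally, I would verify independence of the four resulting invariants inside $[(\tilde{\n{T}}^3)^\tau,\n{U}(1)]_{\Z_2}/[\tilde{\n{T}}^3,\n{U}(1)]_{\Z_2} \simeq \Z_2^4$. Encoding an equivariant map $(\tilde{\n{T}}^3)^\tau \to \{\pm 1\}$ by its values on the eight fixed points $\{0,\pi\}^3$ identifies the numerator with $\Z_2^8$; the subgroup coming from restrictions of global equivariant maps $\tilde{\n{T}}^3 \to \n{U}(1)$ is generated by the constant $-1$ and by the three restrictions of $\expo{\ii\theta_j}$ for $j=1,2,3$, producing a quotient $\Z_2^8/\Z_2^4 \simeq \Z_2^4$ in accord with \eqref{eq:coker_TR_tori}. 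The sign patterns constructed above, each supported either on a coordinate slice $\{\theta_i = 0\}$ (weak) or at a single vertex of the $3$-cube (strong), can then be checked to form a basis of this quotient. This last combinatorial check on the $3$-cube is the main technical step, even though it is essentially a one-dimension-higher repetition of the argument already carried out in Proposition \ref{propos:d2manif_FKMM_02}; in particular, it recovers the customary decomposition of the $\Z_2^4$ into one strong and three weak $\Z_2$-components alluded to in Remark \ref{rk:FKMM_week_strong}.
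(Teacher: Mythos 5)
Your proposal follows essentially the same route as the paper: injectivity from Theorem \ref{theorem:inv_inject2}, the computation \eqref{eq:coker_TR_tori} for the target, three weak generators pulled back from the non-trivial class on $\tilde{\n{T}}^2$ via the coordinate projections, a strong generator pulled back from $\tilde{\n{S}}^3$ via the collapse $\tilde{\n{T}}^3\to\tilde{\n{T}}^3/\s{T}_3$, and a final linear-independence check of the four sign patterns in $\Z_2^8/\Z_2^4$. The paper makes that last combinatorial step explicit by exhibiting a morphism $\Phi:\Z_2^8\to\Z_2^4$ that kills the restriction subgroup and sends the four representatives to a basis, which is exactly the verification you defer; the argument is correct as proposed.
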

\proof
Theorem \ref{theorem:inv_inject2} establishes the injectivity of the group morphism $\kappa$
and we know from \eqref{eq:coker_TR_tori} the isomorphisms
$$
H^{2}_{\Z_2}\big(\tilde{\n{T}}^3|(\tilde{\n{T}}^3)^\tau,\n{Z}(1)\big)\;\simeq\; H^1_{\Z_2}((\tilde{\n{T}}^3)^\tau,\Z(1))\;/\;r\big(H^1_{\Z_2}(\tilde{\n{T}}^3,\Z(1))\big)\;\stackrel{\hat{\Phi}}{\simeq}\;\Z_2^{4}\;.
$$
We can realize  $\hat{\Phi}$ starting from a morphism $\Phi:H^1_{\Z_2}((\tilde{\n{T}}^3)^\tau,\Z(1))\to \Z_2^{4}$ which acts trivially on $r\big(H^1_{\Z_2}(\tilde{\n{T}}^3,\Z(1))\big)$. Let us build such a morphism $\Phi$.
The fixed point set $(\tilde{\n{T}}^3)^\tau\simeq(\tilde{\n{S}}^1)^\tau\times(\tilde{\n{S}}^1)^\tau\times(\tilde{\n{S}}^1)^\tau$ has eight distinct points which can be labelled with eight vectors ${\rm v}_k:=(v^1_k,v^2_k,v^3_k)\in\Z_2^3$; explicitly 
\beql{eq:fix_point_conv}
\begin{aligned}
{\rm v}_1=(+1,+1,+1)\;,&&{\rm v}_2=(+1,+1,-1)\;,&&{\rm v}_3=(+1,-1,+1)\;,&&{\rm v}_4=(-1,+1,+1)\;,\\
{\rm v}_5=(+1,-1,-1)\;,&&{\rm v}_6=(-1,+1,-1)\;,&&{\rm v}_7=(-1,-1,+1)\;,&&{\rm v}_8=(-1,-1,-1)\;.
\end{aligned}
\eeq
The presence of eight fixed points implies $H^1_{\Z_2}((\tilde{\n{T}}^3)^\tau,\Z(1))\simeq\Z_2^8$ 
\cite[eq. (5.7)]{denittis-gomi-14} and with the help of the  recursive relations \cite[eq. (5.9)]{denittis-gomi-14}
one obtains ${H}^1_{\Z_2}\big(\tilde{\n{T}}^3,\Z(1)\big)\simeq \Z_2\oplus\Z^3$.
The isomorphism ${H}^1_{\Z_2}\big(\tilde{\n{T}}^3,\Z(1)\big)\simeq[\tilde{\n{T}}^3,\n{U}(1)]_{\Z_2}$
shows that the $\Z_2$-summand  is generated by the constant map $\epsilon:\tilde{\n{T}}^3\to - 1$ while the $\Z^3$-summand is spanned by the  three (canonical) projections 
$
\pi_j:\tilde{\n{T}}^3\to \tilde{\n{S}}^1\simeq\n{U}(1)
$.
More precisely, the equality $\tilde{\n{T}}^3=\tilde{\n{S}}^1\times\tilde{\n{S}}^1\times\tilde{\n{S}}^1$ allows us
to label each $k\in \tilde{\n{T}}^3$ as $k:=(z_1,z_2,z_3)$ with $z_j\in {\n{U}(1)}$. Let $z_+=+1$ be one of the two invariant points of $\n{U}(1)\simeq\tilde{\n{S}}^1$ (with respect to the involution given by the complex conjugation). With this notation the  projections $\pi_j$ act as 
$$
\begin{aligned}
\tilde{\n{T}}^3\;\ni\;(z_1,z_2,z_3)\;\stackrel{\pi_1}{\longrightarrow}\;(z_1,+1,+1)\;\equiv\;z_1\;\in\;\tilde{\n{S}}^1\;,\\
\tilde{\n{T}}^3\;\ni\;(z_1,z_2,z_3)\;\stackrel{\pi_2}{\longrightarrow}\;(+1,z_2,+1)\;\equiv\;z_2\;\in\;\tilde{\n{S}}^1\;,\\
\tilde{\n{T}}^3\;\ni\;(z_1,z_2,z_3)\;\stackrel{\pi_3}{\longrightarrow}\;(+1,+1,z_3)\;\equiv\;z_3\;\in\;\tilde{\n{S}}^1\;.\\
\end{aligned}
$$
The map $r:H^1_{\Z_2}(\tilde{\n{T}}^3,\Z(1))\to H^1_{\Z_2}((\tilde{\n{T}}^3)^\tau,\Z(1))\simeq\Z_2^8$ coincides with the evaluations on the fixed point set, hence we have that the image
$r\big(H^1_{\Z_2}(\tilde{\n{T}}^3,\Z(1))\big)$ is generated by the four linearly independent vectors $\phi_0:=r(\epsilon)=(-1,\ldots,-1)$ and $\phi_j:=r(\pi_j):=(\pi_j({\rm v}_1),\ldots,\pi_j({\rm v}_{8}))$, $j=1,2,3$. 

Let $\varphi\in H^1_{\Z_2}((\tilde{\n{T}}^3)^\tau,\Z(1))$ be represented as a map $\varphi:\{{\rm v}_{1},\ldots,{\rm v}_{8}\}\to\Z_2$ and consider the morphism $\Phi: H^1_{\Z_2}((\tilde{\n{T}}^3)^\tau,\Z(1))\to\Z_2^4$ defined  by $\Phi:\varphi\mapsto(\Phi_0(\varphi),\Phi_1(\varphi),\Phi_2(\varphi),\Phi_3(\varphi))$ 
$$
\begin{aligned}
\Phi_1(\varphi)\;:=\;\varphi({\rm v}_1)\; \varphi({\rm v}_2)\; \varphi({\rm v}_3)\; \varphi({\rm v}_5)\;,&\qquad& \Phi_2(\varphi)\;:=\;\varphi({\rm v}_1)\; \varphi({\rm v}_2)\; \varphi({\rm v}_4)\; \varphi({\rm v}_6)\;, \\
\Phi_3(\varphi)\;:=\;\varphi({\rm v}_1)\; \varphi({\rm v}_3)\; \varphi({\rm v}_4)\; \varphi({\rm v}_7)\;,&\qquad&\Phi_0(\varphi)\;:=\;\varphi({\rm v}_2)\; \varphi({\rm v}_3)\; \varphi({\rm v}_4)\; \varphi({\rm v}_8)\;.  \\
\end{aligned}
$$
By construction $\Phi:\phi_j\mapsto (+1,+1,+1,+1)$ for all $j=0,\ldots,3$, hence $r\big(H^1_{\Z_2}(\tilde{\n{T}}^3,\Z(1))\big)$ is in the kernel of $\Phi$. Moreover the images of 
$$
\begin{aligned}
\psi_1\;:=\;(-1,+1,+1,-1,+1,+1,+1,+1)\;,&\qquad& \psi_2\;:=\;(-1,+1,-1,+1,+1,+1,+1,+1)\;,\\
\psi_3\;:=\;(-1,-1,+1,+1,+1,+1,+1,+1)\;,&\qquad& \psi_0\;:=\;(+1,+1,+1,+1,+1,+1,+1,-1)\;,
\end{aligned}
$$
under the map $\Phi$ are 
$$
\begin{aligned}
\Phi(\psi_1)\;=\;(-1,-1,+1,+1)\;,&\qquad&\Phi(\psi_2)\;=\;(-1,+1,-1,+1)\;,\\
\Phi(\psi_3)\;=\;(-1,+1,+1,-1)\;,&\qquad&\Phi(\psi_0)\;=\;(-1,+1,+1,+1)\;,
\end{aligned}
$$
showing that the morphism $\Phi$ induces the isomorphism $\hat{\Phi}$.

 In order to finish the proof, we need only to show the existence of $\rr{Q}$-bundles $\bb{E}_j$ over $\tilde{\T}^3$ such that $\kappa(\bb{E}_j)$ is represented by $\psi_j$ for $j=0,1,\ldots,3$. Let $\bb{E}\to \tilde{\T}^2$ be a non-trivial $\rr{Q}$-bundle classified by $\kappa(\bb{E})=[\rr{d}_{\bb{E}}]$ where the sign function $\rr{d}_{\bb{E}}:(\tilde{\T}^2)^\tau\to\Z_2$  takes values $\rr{d}_{\bb{E}}({\rm w}_1)=-1$, $\rr{d}_{\bb{E}}({\rm w}_s)=+1$, $s=2,3,4$ on the four fixed points  
$$
\begin{aligned}
{\rm w}_1=(+1,+1)\;,&&{\rm w}_2=(+1,-1)\;,&&{\rm w}_3=(-1,+1)\;,&&{\rm w}_4=(-1,-1)\;,
\end{aligned}
$$
of the set $(\tilde{\n{T}}^2)^\tau\simeq(\tilde{\n{S}}^1)^\tau\times(\tilde{\n{S}}^1)^\tau$. Let us consider the three projections
$$
\begin{aligned}
\tilde{\n{T}}^3\;\ni\;(z_1,z_2,z_3)\;\stackrel{\pi_{23}}{\longrightarrow}\;(+1,z_2,z_3)\;\equiv\;(z_2,z_3)\;\in\;\tilde{\n{T}}^2\;,\\
\tilde{\n{T}}^3\;\ni\;(z_1,z_2,z_3)\;\stackrel{\pi_{13}}{\longrightarrow}\;(z_1,+1,z_3)\;\equiv\;(z_1,z_3)\;\in\;\tilde{\n{T}}^2\;,\\
\tilde{\n{T}}^3\;\ni\;(z_1,z_2,z_3)\;\stackrel{\pi_{12}}{\longrightarrow}\;(z_1,z_2,+1)\;\equiv\;(z_1,z_2)\;\in\;\tilde{\n{T}}^2\;.\\
\end{aligned}
$$
The $\rr{Q}$-bundle $\pi_{23}^\ast\bb{E}\to \tilde{\n{T}}^3$ has the FKMM-invariant $\kappa(\pi_{23}^\ast\bb{E})$ which is represented by the map $\pi_{23}^\ast\rr{d}_{\bb{E}}:=\rr{d}_{\bb{E}}\circ \pi_{23}$. Observing that 
$$
\begin{aligned}
\pi_{23}({\rm v}_1)=\pi_{23}({\rm v}_4)={\rm w}_1\;,&\qquad&\pi_{23}({\rm v}_2)=\pi_{23}({\rm v}_6)={\rm w}_2\;,\\
\pi_{23}({\rm v}_3)=\pi_{23}({\rm v}_7)={\rm w}_3\;,&\qquad&\pi_{23}({\rm v}_5)=\pi_{23}({\rm v}_8)={\rm w}_4\;,
\end{aligned}
$$
one verifies that $\pi_{23}^\ast\rr{d}_{\bb{E}}=\psi_1$ and so $\pi_{23}^\ast\bb{E}$ is a representative for a $\rr{Q}$-bundle $\bb{E}_1$ with FKMM-invariant $[\psi_1]$.  In the same way, one checks that
$\pi_{13}^\ast\rr{d}_{\bb{E}}=\psi_2$ and $\pi_{12}^\ast\rr{d}_{\bb{E}}=\psi_3$ in such a way that 
$\pi_{13}^\ast\bb{E}$ is a representative for a $\rr{Q}$-bundle $\bb{E}_2$ with FKMM-invariant $[\psi_2]$
and $\pi_{12}^\ast\bb{E}$ is a representative for a $\rr{Q}$-bundle $\bb{E}_3$ with FKMM-invariant $[\psi_3]$.
To construct the $\rr{Q}$-bundle $\bb{E}_0$ classified by $[\psi_0]$
we  consider the projection $\pi_0:\tilde{\n{T}}^3\to \tilde{\n{S}}^3$ defined by the standard \emph{smash product} construction \cite[Chapter 0]{hatcher-02}
$$
\tilde{\n{S}}^3\;\simeq\;\tilde{\n{T}}^3/\s{T}_3
$$
which produces the sphere $\tilde{\n{S}}^3$ from the torus $\tilde{\n{T}}^3$ collapsing the subcomplex $\s{T}_3\subset \tilde{\n{T}}^3$ to the fixed point $k_\ast\equiv(+1,+1,+1)\in\s{T}_3$ (this is the same construction described in Remark \ref{rk:smash_prod2d} for the case of a two-torus). More precisely,  $\s{T}_3:=\pi_{23}(\tilde{\n{T}}^3)\cup\pi_{13}(\tilde{\n{T}}^3)\cup\pi_{12}(\tilde{\n{T}}^3)$
is a collection of three 2-tori  such that their common intersection is the fixed point $k_\ast$ and each two of them intersect along a circle. This  construction is compatible  with the definition of the involution $\tau$. Let $p_\pm:=(\pm1,0,0,0)$ be the two fixed points of   $\tilde{\n{S}}^3$ and $\bb{E}'\to \tilde{\n{S}}^3$ the non-trivial $\rr{Q}$-bundle classified by a sign map $\rr{d}_{\bb{E}'}(p_\pm)=\pm1$. By construction  $\pi_0({\rm v}_j)=p_+$ for all $j=1,\ldots,7$ but $\pi_0({\rm v}_8)=p_-$ and so $\pi_{0}^\ast\rr{d}_{\bb{E}'}:=\rr{d}_{\bb{E}'}\circ \pi_{0}$ coincides with $\psi_0$. Then,  $\pi_{0}^\ast\bb{E}'$ is a model  for $\bb{E}_0$.\qed

\begin{remark}[Week and strong invariants]\label{rk:FKMM_week_strong}{\upshape
According to the Proposition \ref{propos_classif_T_3d}, each \virg{Quaternionic} vector bundle $(\bb{E},\Theta)$ over $\tilde{\T}^3$ is specified by a  quadruple $(\kappa_0(\bb{E}),\kappa_1(\bb{E}),\kappa_2(\bb{E}),\kappa_3(\bb{E}))\in\Z^4_2$ which provides a representative for the FKMM-invariant $\kappa(\bb{E})$.
The proof of  Proposition \ref{propos_classif_T_3d}, together with the group structure of ${\rm Vec}^{2m}_{\rr{Q}}(\n{T}^3,\tau)$ described in Theorem \ref{theorem:inv_inject2},
gives us also a recipe to compute these numbers using the sign function $\rr{d}_{\bb{E}}:(\tilde{\T}^3)^\tau\to\Z_2$ associated with the $\rr{Q}$-bundle $\bb{E}$.
Let us use the convention \eqref{eq:fix_point_conv} for the fixed point of $\tilde{\T}^3$.
The first three invariants
$$
\kappa_1(\bb{E})\;:=\;\prod_{j\in\{1,2,3,5\}}\;\rr{d}_{\bb{E}}({\rm v}_j)\;,\qquad \kappa_2(\bb{E})\;:=\;\prod_{j\in\{1,2,4,6\}}\;\rr{d}_{\bb{E}}({\rm v}_j)\;,\qquad\kappa_3(\bb{E})\;:=\;\prod_{j\in\{1,3,4,7\}}\;\rr{d}_{\bb{E}}({\rm v}_j)
$$
can be understood as follows: if one considers, for instance, the restricted bundle $\bb{E}|_{\pi_{23}(\tilde{\T}^3)}$
over the two-dimensional torus $\pi_{23}(\tilde{\T}^3)\subset \tilde{\T}^3$ one has that $\kappa_1(\bb{E})$
coincides with the FKMM-invariant of $\bb{E}|_{\pi_{23}(\tilde{\T}^3)}$. Similarly, $\kappa_2(\bb{E})$
and $\kappa_3(\bb{E})$ are the FKMM-invariants of the restrictions $\bb{E}|_{\pi_{13}(\tilde{\T}^3)}$ and $\bb{E}|_{\pi_{23}(\tilde{\T}^3)}$, respectively. These three numbers are called \emph{week} invariants in the jargon of \cite{fu-kane-mele-07} (\cf equation (2), in particular) since they express only two-dimensional properties of the system. From a topological point of view, these three invariants describe the obstruction to extending a $\rr{Q}$-frame from the (equivariant) 1-skeleton of $\tilde{\T}^3$ to its 2-skeleton.
The fourth invariant
$$
\kappa_0(\bb{E})\;:=\;\prod_{j=1}^8\;\rr{d}_{\bb{E}}({\rm v}_j)
$$
is called \emph{strong} \cite[eq. (1)]{fu-kane-mele-07} since it expresses a genuine three-dimensional property of the system. This number can be understood as follows : if the weak invariants are trivial the   $\rr{Q}$-bundle $\bb{E}$ is trivial when restricted to the subcomplex  $\s{T}_3$. Hence $\bb{E}$ is $\rr{Q}$-isomorphic to a $\rr{Q}$-bundle over the sphere $\tilde{\n{S}}^3\simeq\tilde{\n{T}}^3/\s{T}_3$ \cite[Lemma 1.47]{atiyah-67}
with FKMM-invariant $\kappa_0(\bb{E})$. Then, the strong invariant provides the obstruction to extending a  $\rr{Q}$-frame from the subcomplex  $\s{T}_3$ to the full torus $\tilde{\n{T}}^3$.}
 \hfill $\blacktriangleleft$
\end{remark}

\section{Classification in dimension $d=4$}
\label{sec:d=4}
In this section we provide a classification for \virg{Quaternionic} vector bundles over  $\tilde{\n{S}}^4$ and $\tilde{\n{T}}^4$. We prove some general preliminary results  under the following rather restrictive hypothesis:
\begin{assumption}\label{ass:3.1}
Let $(X,\tau)$  be an involutive space such that:
\begin{enumerate}
\item[0.]  $X$ is a closed (compact without boundary) and  oriented $4$-dimensional manifold;\vspace{1.3mm}
\item[1.] The fixed point set $X^\tau=\{x_1,\ldots,x_N\}$  consists of a finite number of points with $N\geqslant 2$;
\vspace{1.3mm}
\item[2.]  The involution $\tau:X\to X$ is smooth and preserves the orientation;\vspace{1.3mm}
\item[3.]  $H_{\Z_2}^2(X,\Z(1))=0$;\vspace{1.3mm}
\item[4.]  There exists a fixed point $x_\ast\in X^\tau$
such that
the space $X_\ast:=X\setminus\{x_\ast\}$ is $\Z_2$-homotopic to a $\Z_2$-CW complex whose cells are of dimension less than 4.
\end{enumerate}
\end{assumption}
\noindent
 First of all we notice that under the above conditions $(X,\tau)$ turns out to be an FKMM-space as in  Definition \ref{ass:2}. Second, the reduced space $X_\ast=X\setminus\{x_\ast\}$ is also an FKMM-space which verifies the conditions of Theorem \ref{theorem:inv_inject2}, and so the map
$$
\kappa:{\rm Vec}^{2m}_\rr{Q}(X_\ast,\tau)\;\longrightarrow\; H^{2}_{\Z_2}(X_\ast|X_\ast^\tau,\n{Z}(1))\;\simeq\;[X_\ast^\tau,\n{U}(1)]_{\Z_2}/[X_\ast,\n{U}(1)]_{\Z_2}
$$
is injective. The fact that  $X_\ast$ is an FKMM-space depends on the following lemma:
\begin{lemma}
\label{lemma:D4_01}
Under Assumption \ref{ass:3.1}
$$
H_{\Z_2}^2(X_\ast,\Z(1))=0\;.
$$
\end{lemma}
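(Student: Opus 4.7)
The plan is to use the long exact sequence in equivariant Borel cohomology associated with the pair $(X, X_\ast)$,
\begin{equation*}
\cdots \;\longrightarrow\; H^2_{\Z_2}(X, X_\ast; \Z(1)) \;\longrightarrow\; H^2_{\Z_2}(X; \Z(1)) \;\longrightarrow\; H^2_{\Z_2}(X_\ast; \Z(1)) \;\longrightarrow\; H^3_{\Z_2}(X, X_\ast; \Z(1)) \;\longrightarrow\; \cdots
\end{equation*}
By Assumption \ref{ass:3.1}(3) the middle group is trivial, so the whole task reduces to proving the vanishing $H^3_{\Z_2}(X, X_\ast; \Z(1)) = 0$.

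To approach this, I would first identify the local equivariant structure of $X$ near $x_\ast$. Since $\tau$ is smooth and orientation preserving and $x_\ast$ is an isolated fixed point on the $4$-manifold $X$, the differential $d\tau|_{x_\ast}$ is an involution on $T_{x_\ast}X \simeq \R^4$ with no non-zero fixed vectors and positive determinant, which forces $d\tau|_{x_\ast} = -\n{1}$. The slice theorem then provides a closed $\tau$-invariant neighborhood $D$ of $x_\ast$ that is $\Z_2$-equivariantly diffeomorphic to a $4$-disk endowed with the antipodal involution, and $D \setminus \{x_\ast\}$ equivariantly deformation retracts onto $\partial D \simeq \hat{\n{S}}^3$. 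Excision then yields the identification $H^*_{\Z_2}(X, X_\ast; \Z(1)) \simeq H^*_{\Z_2}(D, \partial D; \Z(1))$.

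What remains is a direct calculation using the long exact sequence of the pair $(D, \partial D)$. Because $D$ is equivariantly contractible to the fixed point $x_\ast$, one has $H^*_{\Z_2}(D; \Z(1)) \simeq H^*(\R P^\infty; \Z(1))$, the group cohomology of $\Z_2$ with the sign module, taking the values $0, \Z_2, 0, \Z_2, \ldots$ in degrees $0, 1, 2, 3, \ldots$. Since $\Z_2$ acts freely on $\hat{\n{S}}^3$, one similarly has $H^*_{\Z_2}(\hat{\n{S}}^3; \Z(1)) \simeq H^*(\R P^3; \Z(1))$, and the standard twisted cellular chain complex of $\R P^n$ shows that this group is $0, \Z_2, 0, \Z_2$ in degrees $0, 1, 2, 3$. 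The cellular inclusion $\R P^3 \hookrightarrow \R P^\infty$ is an isomorphism on cochain complexes in degrees $\leq 3$, so the restriction map $H^k_{\Z_2}(D; \Z(1)) \to H^k_{\Z_2}(\partial D; \Z(1))$ is an isomorphism for $0 \leq k \leq 3$. Feeding this data into the long exact sequence of $(D, \partial D)$ forces $H^k_{\Z_2}(D, \partial D; \Z(1)) = 0$ for all $0 \leq k \leq 3$, which in particular yields the desired vanishing $H^3_{\Z_2}(X, X_\ast; \Z(1)) = 0$.

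The main technical point lies in the local model at $x_\ast$: the identification $d\tau|_{x_\ast} = -\n{1}$ genuinely uses both the smoothness and orientation preserving hypotheses on $\tau$, since only their combination rules out the alternative smooth involutions on $\R^4$ with isolated fixed point at the origin. Once this local model is available, every remaining step is routine cellular cohomology of real projective spaces with twisted integer coefficients.
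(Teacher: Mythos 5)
Your proof is correct and follows essentially the same route as the paper: both arguments invoke the slice theorem to replace a neighborhood of $x_\ast$ by a disk with the antipodal involution and then compare $X$ with its complement by an exact sequence, the only difference being that the paper uses Mayer--Vietoris for $\{D, X'\}$ while you use the long exact sequence of the pair $(X,X_\ast)$ together with excision and the explicit twisted cohomology of $\R P^3\hookrightarrow\R P^\infty$. (One small aside: in dimension $4$ the isolated fixed point alone already forces $d\tau|_{x_\ast}=-\n{1}$, and orientation preservation then follows rather than being needed.)
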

\proof[
sketch of]
As a consequence of the so-called \emph{slice Theorem} \cite[Chapter I, Section 3]{hsiang-75}  a neighborhood of $x_\ast$ can be identified with an open subset around the origin of $\R^4$ endowed with the involution given by the reflection $x\mapsto -x$. Let $D$ be the unit (closed) ball under this identification. Then $D$ is an invariant
set in which $x_\ast$ is the only fixed point.
The space $X':=X\setminus{\rm Int}(D)$ is $\Z_2$-homotopy equivalent to $X_\ast$. Moreover, $D$ is $\Z_2$-contractible and so with the help of the  Meyer-Vietoris exact sequence for $\{ D, X' \}$ one can prove the isomorphism $H_{\Z_2}^2(X_\ast,\Z(1))\simeq H_{\Z_2}^2(X,\Z(1))$
which concludes the proof.
\qed

\medskip

The following technical lemma will provide us  important information which  turn out to be equivalent to the classification of $\rr{Q}$-bundles over $\tilde{\n{S}}^4$. We recall, following the same notation  of Corollary \ref{corollary:classification_on_3_sphere}, that  $\hat{\n{S}}^3\equiv(\n{S}^3,\vartheta)$ is the three-sphere endowed with the antipodal action $\vartheta:k\mapsto -k$. Moreover, we need also the well-known isomorphism $[\hat{\n{S}}^3,\n{U}(2)] \;\simeq\; \pi_3(\n{U}(2))\;\simeq\;\Z$ given by the \emph{topological degree}.
\begin{lemma}
\label{lemma:D4_05}
Let $\hat{\n{S}}^3$ be the sphere endowed with the antipodal action $k\mapsto -k$ and $[\hat{\n{S}}^3,\n{U}(2)]_{\Z_2}$ the set of $\Z_2$ homotopy equivalent maps with respect to the involution $\mu$ on $\n{U}(2)$ described in Remark \ref{rk:quat_str}.
 Then the  map  $[\hat{\n{S}}^3,\n{U}(2)]_{\Z_2} {\to} [\hat{\n{S}}^3,\n{U}(2)]$ defined by \virg{forgetting} the
involutive structures is an isomorphism. In particular, this leads to a group isomorphism 
$$
\deg \;:\;  [\hat{\n{S}}^3,\n{U}(2)]_{\Z_2}\;\longrightarrow\;
\Z
$$
induced by the topological degree.
\end{lemma}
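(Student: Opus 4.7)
The plan is to reduce the problem to two independent classifications via the $\Z_2$-equivariant fibration $\det : \n{U}(2) \to \n{U}(1)$, exploiting the observation from Remark \ref{rk:quat_str} that the fibre $\n{S}\n{U}(2) = \n{S}p(1)$ coincides with the fixed-point set $\n{U}(2)^\mu$; hence $\mu$ acts \emph{trivially} on the fibre, while it covers the complex conjugation on the base $\n{U}(1)$.

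First I would prove that the forgetful map surjects onto $[\n{S}^3, \n{U}(2)] \simeq \pi_3(\n{U}(2)) \simeq \Z$ by exhibiting an equivariant representative of the generator. The standard inclusion $f : \n{S}^3 \simeq \n{S}p(1) \hookrightarrow \n{U}(2)$ realizes the generator freely, but it fails to be equivariant: one has $f(-q) = -f(q) \ne f(q) = \mu(f(q))$. Its translate $g(q) := \mathrm{diag}(-1,1)\,f(q)$ is, however, equivariant: using $\mu(\mathrm{diag}(a,b)) = \mathrm{diag}(\overline{b},\overline{a})$ one computes $\mu(g(q)) = \mathrm{diag}(1,-1)\,f(q) = g(-q)$. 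Since left multiplication by a constant in the path-connected group $\n{U}(2)$ is a free homotopy equivalence, $g$ and $f$ are freely homotopic, so $g$ provides the sought equivariant lift of the generator.

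The heart of the argument is a parity statement: for every equivariant $g$, the class $\det_*[g] \in [\hat{\n{S}}^3, \n{U}(1)]_{\Z_2} \simeq H^1_{\Z_2}(\hat{\n{S}}^3, \Z(1)) \simeq \Z_2$ equals the free degree of $g$ reduced modulo $2$. When $\det_*[g]$ is trivial, the equivariant homotopy-lifting property of $\det$ deforms $g$ equivariantly into $\n{S}p(1)$; since $\mu$ is trivial there, the deformed map descends to a map $\n{S}^3/\Z_2 \simeq \mathbb{RP}^3 \to \n{S}^3$, classified by its degree in $[\mathbb{RP}^3, \n{S}^3] \simeq \Z$ (by elementary obstruction theory, as $\pi_k(\n{S}^3) = 0$ for $k<3$); pulling back to $\n{S}^3$ then multiplies this degree by $2$, producing only even free degrees. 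When $\det_*[g]$ is non-trivial, analogous lifting yields $g(q) = \mathrm{diag}(-1,1)\,h(q)$ for some $h : \n{S}^3 \to \n{S}p(1)$, whose equivariance forces $h(-q) = -h(q)$; by Borsuk-Ulam, every antipodal self-map of $\n{S}^3$ has odd degree, so the free degree of $g$ is odd.

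Injectivity is then formal: for equivariant $g_0, g_1$ of equal free degree, the pointwise product $g_1 g_0^{-1}$ is equivariant with $\det_* = 0$ and free degree $0$, hence by the analysis above it is equivariantly homotopic to a degree-zero map $\mathbb{RP}^3 \to \n{S}^3$, which is null-homotopic. Thus $g_0 \simeq_{\Z_2} g_1$, and combining the resulting forgetful isomorphism with the classical degree isomorphism $[\n{S}^3, \n{U}(2)] \simeq \pi_3(\n{U}(2)) \simeq \Z$ produces the stated group isomorphism $\deg$. The main obstacle is the parity statement, whose two essential inputs are the equivariant homotopy-lifting property for the $\Z_2$-equivariant fibration $\det$ and the Borsuk-Ulam input for odd degrees of antipodal self-maps; everything else is a direct manipulation of the fibration data.
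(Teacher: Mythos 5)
Your argument is correct, but it takes a genuinely different route from the paper's. The paper works cell by cell: it uses the $\Z_2$-CW decomposition of $\hat{\n{S}}^3$ with one free cell in each dimension, invokes Corollary \ref{corollary:classification_on_3_sphere} to trivialize an equivariant map on the $2$-skeleton when $[\det\varphi]=1$, applies the equivariant homotopy extension property, and then observes that the two free top cells contribute equally to $\pi_3(\n{U}(2))$ (the involution preserves orientation and $\mu$ induces the identity on $\pi_3$), so that $[\varphi]=2[\varphi'|_{{\bf e}_3^+}]$; the parity relation $\epsilon=(-1)^n$ and injectivity both drop out of this doubling. You instead exploit the $\Z_2$-equivariant fibration $\det:\n{U}(2)\to\n{U}(1)$ with $\mu$-fixed fibre $\n{S}\n{U}(2)=\n{U}(2)^\mu$: after an equivariant deformation into a single fibre, an equivariant map either descends to $\R P^3\to\n{S}^3$ (forcing even degree, via Hopf's theorem and the degree-two covering) or equals $\mathrm{diag}(-1,1)$ times an odd self-map of $\n{S}^3$ (forcing odd degree, via Borsuk--Ulam); both proofs land on the same subgroup $\{(\epsilon,n)\ |\ \epsilon=(-1)^n\}$ of \eqref{eq:S3U2}. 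What the paper's route buys is uniformity with its skeleton-by-skeleton machinery and the reuse of Corollary \ref{corollary:classification_on_3_sphere}; what yours buys is independence from the two-dimensional classification and a more conceptual source for the parity. The one step you should make explicit is the equivariant homotopy lifting property for $\det$: it does hold, since $\det$ is a locally trivial $\Z_2$-bundle (equivariant trivializations over $\n{U}(1)\setminus\{\mp 1\}$ are obtained from a branch of $\sqrt{\det}$, using that $A\mapsto(\sqrt{\det A})^{-1}A$ lands in the $\mu$-fixed fibre), but it is the only input you lean on without justification. Your generator $\mathrm{diag}(-1,1)f(q)$ is a correct equivariant representative of the degree-one class and plays exactly the role of the paper's explicit matrix $\varphi$.
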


The  proof of this Lemma is based on the following idea: 
{The determinant $\det : \n{U}(2) \to \n{U}(1)$ induces a homomorphism
\begin{equation}\label{eq:det_S3-S2}
\det : \ [\hat{\n{S}}^3, \n{U}(2)]_{\Z_2} \to [\hat{\n{S}}^3, \n{U}(1)]_{\Z_2}
\simeq H^1_{\Z_2}(\hat{\n{S}}^3, \Z(1)) \simeq \Z_2, 
\end{equation}
where $H^1_{\Z_2}(\hat{\n{S}}^3, \Z(1)) \simeq \Z_2$ has been proved in \cite{denittis-gomi-14}. This group is generated by the constant maps with values $\{\pm 1\}  \subset \n{U}(1)$.
One has the group morphism}
\begin{equation}\label{eq:S3U2_bis}
(\det, \deg) : \ [\hat{\n{S}}^3, \n{U}(2)]_{\Z_2} \longrightarrow
\Z_2 \oplus \Z
\end{equation}
which associates to each $[\varphi]\in [\hat{\n{S}}^3, \n{U}(2)]_{\Z_2}$ the pair $(\epsilon,n)\in \Z_2 \oplus \Z$ where $\epsilon=[\det\varphi]$ and $n=\deg \varphi$. This morphism  turns out to be injective and so it defines an isomorphism between $[\hat{\n{S}}^3, \n{U}(2)]_{\Z_2}$ and a subgroup of $\Z_2 \oplus \Z$ described by
\begin{equation}\label{eq:S3U2}
[\hat{\n{S}}^3, \n{U}(2)]_{\Z_2}\;\stackrel{(\det, \deg)}{\simeq}\{ (\epsilon, n) \in \Z_2 \oplus \Z\ |\ \epsilon = (-1)^n \}\; \simeq\; \Z\;.
\end{equation}
From \eqref{eq:S3U2}  it results evident that the degree completely classifies classes in $[\hat{\n{S}}^3, \n{U}(2)]_{\Z_2}$.

\proof[
of Lemma \ref{lemma:D4_05}]
Let us start by proving that \eqref{eq:S3U2_bis} is injective.
 To this end, we recall the $\Z_2$-CW decomposition of $\hat{\n{S}}^3$ (\cf  \cite[Section 4.5]{denittis-gomi-14}):
$$
\hat{\n{S}}^3 = \tilde{{\bf e} }_0 \cup \tilde{{\bf e}}_1 \cup \tilde{{\bf e}}_2 \cup \tilde{{\bf e}}_3,
$$
where $\tilde{{\bf e}}_d = \Z_2 \times {\bf e}_d = {\bf e}_d^+ \sqcup {\bf e}_d^-$ is a free $\Z_2$-cell, on which the involution acts by exchanging the usual $d$-dimensional cells ${\bf e}_d^\pm$. Let $X_2 = \tilde{{\bf e}}_0 \cup \tilde{{\bf e}}_1 \cup \tilde{{\bf e}}_2$ be the $2$-skeleton of $\hat{\n{S}}^3$, which provides also the  $\Z_2$-CW decomposition of the  sphere $\hat{\n{S}}^2$ with the antipodal free involution. Now, let $\varphi : \hat{\n{S}}^3 \to \n{U}(2)$ be  a $\Z_2$-equivariant map   such that $\epsilon=[\det \varphi] = 1$ in $[\hat{\n{S}}^3, \n{U}(1)]_{\Z_2} \simeq \Z_2$ and $n=\deg \varphi = 0$ in $[\hat{\n{S}}^3, \n{U}(2)] \simeq \Z$. The first assumption leads to $[\det \varphi|_{X_2}] = 1$ in $[X_2, \n{U}(1)]_{\Z_2}$, hence Corollary \ref{corollary:classification_on_3_sphere} 
assures the existence of  an equivariant homotopy between $\varphi|_{X_2}$ and the constant map at   $\n{1}_2 \in \n{U}(2)$. By applying the equivariant homotopy extension property \cite{matumoto-71} to the subcomplex $X_2 \subset \hat{\n{S}}^3$, we get an equivariant homotopy from $\varphi$ to an equivariant map $\varphi' : \hat{\n{S}}^3 \to \n{U}(2)$ such that $\varphi'|_{X_2} = \n{1}_2$. Then, the restriction $\varphi'|_{{\bf e}_3^+}$ 
is identifiable with an element $[\varphi'|_{{\bf e}_3^+}] \in \pi_3(\n{U}(2))$. The involution on $\hat{\n{S}}^3$ preserves the orientation, and the involution $\mu$ on $\n{U}(2)$ induces the identity on $\pi_3(\n{U}(2))$. As a result, $[\varphi'|_{{\bf e}_3^+}] = [\varphi'|_{{\bf e}_3^-}]$ holds true and $[\varphi] \in \pi_3(\n{U}(2))$ is expressed as $[\varphi] = 2 [\varphi'|_{{\bf e}_3^+}]$. Because $\pi_3(\n{U}(2)) \simeq \Z$ has no torsion $[\varphi]=0$ (which is the same of $\deg \varphi = 0$) implies $[\varphi'|_{e_3^+}] = 0$ in $\pi_3(\n{U}(2)) $. Then, there exists   a homotopy from $\varphi'|_{{\bf e}_3^+}$ to the constant map at $\n{1}_2$. By means of the involution on $\hat{\n{S}}^3$, this homotopy extends to an equivariant homotopy from $\varphi'|_{\tilde{\bf e}_3}$ to the constant map at $\n{1}_2$. Again by the equivariant homotopy extension property, we  get an equivariant homotopy from $\varphi'$ to the constant map at $\n{1}_2$.  In summary, we proved that if $(\det, \deg):[\varphi]\mapsto (1,0)$ then $[\varphi]$ is $\Z_2$-homotopy equivalent to the constant map at $\n{1}_2$, namely  the injectivity of $(\det, \deg)$. 
It remains to prove that the image of $(\det, \deg)$ agrees with the subgroup \eqref{eq:S3U2}. We point out that
from the previous argument it follows that
 there is no equivariant map $\varphi : \hat{\n{S}}^3 \to \n{U}(2)$ such that $\epsilon=n = 1$. In fact we already proved that  $\epsilon=[\det\varphi] = 1$ implies that $\varphi$ is of even degree, \ie $n \in 2\Z$.
Finally, the equivariant map $\varphi : \hat{\n{S}}^3 \to \n{U}(2)$ defined by
$$
\varphi(k_1, k_2, k_3, k_4)\;:
=\;
\ii
\left(
\begin{array}{cc}
k_1 + \ii k_2 & -k_3 + \ii k_4 \\
k_3 + \ii k_4 & k_1 - \ii k_2
\end{array}
\right)\;
$$
clearly has $\epsilon=\det\varphi = -1$ and $n=\deg \varphi = 1$ and so it provides a generator for \eqref{eq:S3U2}. 
\qed

\medskip

We are now in  position to prove Theorem \ref{theo:injectD4} under Assumption \ref{ass:3.1}. Before going through the technical part of the proof of this theorem, let us point out that Theorem \ref{theo:injectD4} is true under the hypothesis as stated, which is less restrictive than Assumption \ref{ass:3.1}. The generalized proof makes strong use of the \emph{obstruction theory} and this implies a considerable increasing of technical complications. 
Since the aim of this work is to provide a classification 
in the case of the spaces  $\tilde{\n{S}}^4$ and $\tilde{\n{T}}^4$
(which verify Assumption \ref{ass:3.1}) we opted here for a simplified proof, 
leaving the more general case for a future work \cite{denittis-gomi-15}.

\begin{theorem}[Injective group homomorphism: d=4]
\label{theo:injectD4_weak_version}
Let $(X,\tau)$ be as in Assumption \ref{ass:3.1}. 
Then, the FKMM-invariant $\kappa$ and the second Chern class  $c_2$ define a map
\begin{equation}
(\kappa,c_2)\;:\;{\rm Vec}^{2m}_\rr{Q}(X,\tau)\;\longrightarrow\; H^{2}_{\Z_2}(X|X^\tau,\n{Z}(1))\;\oplus\; H^4(X,\Z)\;\qquad\qquad m\in\N
\end{equation}
that is injective. Moreover, ${\rm Vec}^{2m}_\rr{Q}(X,\tau)$ can be endowed with a group structure in such a way that the pair $(\kappa,c_2)$ sets  an injective group homomorphism. 
\end{theorem}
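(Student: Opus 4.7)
The plan is to reduce the injectivity of $(\kappa, c_2)$ to a local gluing problem near $x_\ast$ whose obstruction is measured by a degree in $\pi_3(\n{U}(2))$, and then to identify that degree with the difference of second Chern classes. First, since $X$ is four-dimensional with fixed cells only in dimension zero (Assumption \ref{ass:3.1}(1)), Corollary \ref{corol_(iii)} reduces everything to the rank $2$ case; both $\kappa$ and $c_2$ pass through the reduction because they are unaffected by adding a trivial product $\rr{Q}$-summand $X \times \C^2$ (using Theorem \ref{theo:FKMM_propert}(ii)--(iii) for $\kappa$, and the Whitney sum formula for $c_2$, where the mixed term vanishes since $c_1(\bb{E}_i) = c_1({\rm det}\,\bb{E}_i) = 0$ by Proposition \ref{prop:FKMM_type_space}). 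So I assume $\bb{E}_1, \bb{E}_2$ are rank $2$ $\rr{Q}$-bundles with matching $\kappa$ and $c_2$.

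Restricting to the punctured space $X_\ast = X \setminus \{x_\ast\}$, Assumption \ref{ass:3.1}(4) combined with Lemma \ref{lemma:D4_01} ensures that $(X_\ast, \tau)$ is an FKMM-space meeting the low-dimensional hypothesis of Theorem \ref{theorem:inv_inject2}. By naturality of the FKMM-invariant (Theorem \ref{theo:FKMM_propert}(i)) and the injectivity supplied by that theorem, there is a $\rr{Q}$-isomorphism $F : \bb{E}_1|_{X_\ast} \to \bb{E}_2|_{X_\ast}$. To promote $F$ to a global isomorphism, I localize at $x_\ast$: via the slice theorem for the smooth orientation-preserving involution $\tau$, pick a $\tau$-invariant closed disk $D \subset X$ modelled equivariantly on the unit ball of $\R^4$ with the antipodal action, so that $\partial D \simeq \hat{\n{S}}^3$ and $X' := X \setminus {\rm Int}(D)$ is a $\Z_2$-deformation retract of $X_\ast$. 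The restrictions $\bb{E}_i|_D$ are $\rr{Q}$-trivial (by Proposition \ref{prop:loc_triv} combined with Theorem \ref{theo:equi_homot1} on the $\Z_2$-contractible $D$); after choosing $\rr{Q}$-trivializations $h_i : \bb{E}_i|_D \to D \times \C^2$, the clutching datum $h_2 \circ F|_{\partial D} \circ h_1^{-1}$ produces a $\Z_2$-equivariant map $\varphi : \hat{\n{S}}^3 \to \n{U}(2)$, with $\n{U}(2)$ carrying the involution $\mu$ of Remark \ref{rk:quat_str}. By Lemma \ref{lemma:D4_05}, the class $[\varphi]$ is fully detected by $\deg\varphi \in \Z$, and $F$ extends to a global $\rr{Q}$-isomorphism $\bb{E}_1 \to \bb{E}_2$ if and only if $\deg\varphi = 0$.

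The main obstacle is the identity $\deg\varphi = \pm\bigl(c_2(\bb{E}_1) - c_2(\bb{E}_2)\bigr)$. I would argue this via the classical clutching description of the second Chern class: interpret $\bb{E}_1$ as the bundle obtained from $\bb{E}_2$ by re-gluing $\bb{E}_2|_{X'}$ with a trivial bundle on $D$ along $\partial D$ via $\varphi$. Then, under excision, the difference $c_2(\bb{E}_1) - c_2(\bb{E}_2) \in H^4(X,\Z)$ is Poincaré dual to $x_\ast$, and pairing with $[X]$ returns the image of $[\varphi]$ under the canonical isomorphism $\pi_3(\n{U}(2)) \simeq \Z$ given by the degree, precisely the one computing $c_2$ on clutched bundles over $S^4$. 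Thus the hypothesis $c_2(\bb{E}_1) = c_2(\bb{E}_2)$ forces $\deg\varphi = 0$, so $\varphi$ admits an equivariant extension to $D$ (Lemma \ref{lemma:D4_05}) and $F$ extends to a global $\rr{Q}$-isomorphism, proving $\bb{E}_1 \simeq \bb{E}_2$.

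Finally, I would endow ${\rm Vec}^{2}_\rr{Q}(X,\tau)$ with the group law from Theorem \ref{theorem:inv_inject2}: set $[\bb{E}_1] + [\bb{E}_2] := [\bb{E}]$ for the (unique, by the injectivity just proved) class $[\bb{E}]$ with $\bb{E}_1 \oplus \bb{E}_2 \simeq \bb{E} \oplus (X \times \C^2)$, existence being guaranteed by the stable range Theorem \ref{theo:stab_ran_Q}, and transport this law to ${\rm Vec}^{2m}_\rr{Q}(X,\tau)$ via Corollary \ref{corol_(iii)}. The pair $(\kappa, c_2)$ is then a group homomorphism because $\kappa$ is multiplicative by Theorem \ref{theo:FKMM_propert}(iii), while the Whitney sum formula for $c_2$ collapses to additivity thanks to the vanishing of the first Chern classes noted in the first paragraph.
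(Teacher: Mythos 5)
Your proposal is correct and follows essentially the same route as the paper: reduction to rank $2$ by the stable range, a $\rr{Q}$-isomorphism over $X'\simeq X_\ast$ supplied by Theorem \ref{theorem:inv_inject2}, the equivariant clutching map $\varphi:\hat{\n{S}}^3\to\n{U}(2)$ controlled by Lemma \ref{lemma:D4_05}, and the identification of $\deg\varphi$ with $c_2(\bb{E}_2)-c_2(\bb{E}_1)$. The only (harmless) divergence is in that last step: you invoke an excision/difference-class argument, whereas the paper first trivializes $\bb{E}_i|_{X'}$ as complex bundles (using $\dim X'\leqslant 3$ and $c_1=0$) to get explicit clutching functions $\varphi_i$ with $\theta=\varphi_2\circ\varphi_1^{-1}$ and then reads off the degrees via the Mayer--Vietoris isomorphism $H^3(Y,\Z)\simeq H^4(X,\Z)$.
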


\proof
The stable range condition implies 
$
{\rm Vec}^{2m}_{\rr{Q}}(X, \tau)\simeq {\rm Vec}^2_{\rr{Q}}(X, \tau)
$
(\cf Corollary \ref{corol_(iii)}).
Moreover, if $\bb{E}\simeq\bb{E}_0\oplus(X\times\C^{2(m-1)})$ as $\rr{Q}$-bundles we obtain from 
Theorem \ref{theo:FKMM_propert} (iii) that $\kappa(\bb{E})=\kappa(\bb{E}_0)$ and from the usual properties of Chern classes that $c_2(\bb{E})=c_2(\bb{E}_0)$. Then, without loss of generality, we can consider only the case $m=1$.

 As in the proof of Lemma \ref{lemma:D4_01}, we set
 $X':=X\setminus{\rm Int}(D)$ 
where $D$ is an invariant ball around the fixed point  $x_\ast\in X^\tau$.
The space $X'$ is $\Z_2$-homotopy equivalent to $X_\ast$ and so, by assumption, it is equivalent to a $\Z_2$-CW complex whose cells are of dimension less or equal to 3.
 By assumption also $(X')^\tau\neq\emptyset$. The inclusion map $\imath: X'\hookrightarrow X$ induces by the naturality of the FKMM-invariant the equalities $\kappa(\bb{E}_i|_{X'})=\kappa(\imath^*\bb{E}_i)=\imath^*\kappa(\bb{E}_i)$, $i=1,2$. Then the coincidence of the FKMM-invariants 
 of $\bb{E}_1$ and $\bb{E}_2$, immediately implies $\kappa(\bb{E}_1|_{X'})=\kappa(\bb{E}_2|_{X'})$ and  Theorem \ref{theorem:inv_inject2} assures the existence of a $\rr{Q}$-isomorphism
$
f': \bb{E}_1|_{X'}\to\bb{E}_2|_{X'}
$.
Since $D$ is $\Z_2$-contractible there are also two $\rr{Q}$-isomorphisms $g_{i}:\bb{E}_i|_{D}\to D\times\C^2$, $i=1,2$. Let $Y:=\partial X'=\partial D$. This space is $\Z_2$-homotopic to a three-sphere $\hat{\n{S}}^3$ endowed with the {antipodal} involution. The composition of $\rr{Q}$-isomorphisms
$$
Y\times\C^2\;\stackrel{g_{1}^{-1}|_{Y}}{\longrightarrow}\; \bb{E}_1|_{Y}\;\stackrel{f'|_{Y}}{\longrightarrow}\; \bb{E}_2|_{Y}\;\stackrel{g_{2}|_{Y}}{\longrightarrow}\; Y\times\C^2
$$
can be expressed by a mapping $(x,{\rm v})\mapsto(x,\theta(x)\cdot Q\cdot\overline{\rm v})$ where $\theta:Y\to\n{U}(2)$ is a $\mu$-equivariant map in the sense of Remark \ref{rk:quat_str}, \ie
 $\theta(\tau(x))=-Q\overline{\theta(x)}Q$.
 In view of the homotopy property for \virg{Quaternionic} vector bundles (\cf Theorem \ref{theo:equi_homot1}), the map $
f': \bb{E}_1|_{X'}\to\bb{E}_2|_{X'}
$ extends to an isomorphism $f:\bb{E}_1\to \bb{E}_2$ if and only if the $\Z_2$-homotopy class $[\theta]\in[Y,\n{U}(2)]_{\Z_2}=[\hat{\n{S}}^3,\n{U}(2)]_{\Z_2}$ is in the image of $[D,\n{U}(2)]_{\Z_2}$ under the restriction morphism.
But we know that $D$ is $\Z_2$-contractible, hence $[D,\n{U}(2)]_{\Z_2}\simeq0$ and so we get a $\rr{Q}$-isomorphism 
which extends $f'$
if and only if 
$[\theta]$ is trivial, or equivalently if and only if ${\rm deg}(\theta)=0$ as a consequence of Lemma \ref{lemma:D4_05}.
In the remaining part we prove  that $c_2(\bb{E}_1)=c_2(\bb{E}_2)$ implies ${\rm deg}(\theta)=0$
 and this  concludes the proof.

We recall that $X'$ has the structure of a CW-complex of dimension less or equal to 3. Then, any complex vector bundle over  $X'$ with associated trivial determinant line bundle is automatically trivial in the complex category (\cf Proposition \ref{propos:Q_struc_lod_d}). This implies the existence of  an isomorphism of complex vector bundles
$
h_1:\bb{E}_1|_{X'}\to X'\times\C^2
$
and we can set a second isomorphism $h_2:\bb{E}_2|_{X'}\to X'\times\C^2$ by $h_2:=h_1\circ (f')^{-1}$.
For each $i=1,2$ the composition
$$
Y\times\C^2\;\stackrel{g_{i}^{-1}|_{Y}}{\longrightarrow}\; \bb{E}_i|_{Y}\;\stackrel{h_i|_{Y}}{\longrightarrow}\; Y\times\C^2
$$
can be expressed by a mapping $(x,{\rm v})\mapsto(x,\varphi_i(x) \cdot {\rm v})$ where $\varphi_i:Y\to\n{U}(2)$. This is nothing but a clutching function of the underlying complex vector bundle $\bb{E}_i$. In view of the Meyer-Vietoris exact sequence 
$$
H^3(X',\Z)\;\oplus\; H^3(D,\Z)\;\stackrel{}{\longrightarrow}\;H^3(Y,\Z)\;\stackrel{}{\longrightarrow}\;H^4(X,\Z)\;\stackrel{}{\longrightarrow}\;0
$$
(we recall $D\simeq\{\ast\}$, $Y=X'\cap D$ and $X=X'\cup D$) we obtain that $H^3(Y,\Z)\simeq H^4(X,\Z)$
which implies that the degree of $\varphi_i$ coincides with the second Chern class of $\bb{E}_i$ {under the choice of an orientation on $D$ compatible with $X$}. By construction, one readily sees  $\theta=\varphi_2\circ\varphi_1^{-1}$ which implies ${\rm deg}(\theta)={\rm deg}(\varphi_2)-{\rm deg}(\varphi_1)=c_2(\bb{E}_2)-c_2(\bb{E}_1)$.
\qed

\medskip
\medskip

\noindent{\bf The case of $\tilde{\n{S}}^4$.}
If we remove from the TR-sphere $\tilde{\n{S}}^4$ one of the two fixed points $k_\pm:=(\pm1,0,0,0)$ we obtain a space which is $\Z_2$-homotopic to $\R^3$ endowed with the involution given by the reflection around the origin $x\mapsto-x$. This space is $\Z_2$-contractible to a point and so Assumption \ref{ass:3.1} is verified. This implies that Theorem \ref{theo:injectD4} applies to $\tilde{\n{S}}^4$, although the use of this result   is not strictly necessary for the classification of 
${\rm Vec}^{2m}_\rr{Q}({\n{S}}^4,\tau)$.

\proof[
of Theorem \ref{theo:AI_classd=4} (i)]
One has the following isomorphisms
$$
{\rm Vec}^{2m}_\rr{Q}({\n{S}}^4,\tau)\;\simeq\;{\rm Vec}^{2}_\rr{Q}({\n{S}}^4,\tau)\;\simeq\; [\hat{\n{S}}^3, \n{U}(2)]_{\Z_2}\;\simeq\;\Z
$$
where the first is a consequence of Corollary \ref{corol_(iii)}, the second follows from an equivariant adaptation of the standard {clutching construction} \cite[Lemma 1.4.9]{atiyah-67} and the last one has been proved in  Lemma \ref{lemma:D4_05}. To finish the proof, one has only to recall that if $[\bb{E}_\varphi]\in {\rm Vec}^{2}_\rr{Q}({\n{S}}^4,\tau)$ is the $\rr{Q}$-bundle associated with $[\varphi]\in [\hat{\n{S}}^3, \n{U}(2)]_{\Z_2}$ then $C=\langle c_2(\bb{E}_\varphi),[\n{S}^4]\rangle=\deg \varphi=n$ and $\kappa(\bb{E}_\varphi)=[\det\varphi]=\epsilon\in\{\pm1\}$.
While the first is a classical identity in the theory of complex vector bundles, the second deserves some comments.
Let $H_\pm := \{ (k_0, k_1, k_2, k_3,k_4) \in \tilde{\n{S}}^4 |\ \pm k_0 \ge 0 \}$ with intersection $H_+ \cap H_-=\hat{\n{S}}^3$. Then, by construction
$$
\bb{E}_\varphi\;\simeq\; (H_+\;\times\;\C^2)\;\cup_{\varphi}\;  (H_-\;\times\;\C^2)
$$
The  set $(\tilde{\n{S}}^4)^\tau=\{(\pm1, 0, 0, 0,0)\}$
has two fixed points and we can set the identity map on $(\tilde{\n{S}}^4)^\tau\times\C^2$ as the trivialization of $\bb{E}_\varphi|_{(\tilde{\n{S}}^4)^\tau}$ and the  induced canonical trivialization ${\rm det}_{(\tilde{\n{S}}^4)^\tau}:{\rm det}(\bb{E}_\varphi|_{(\tilde{\n{S}}^4)^\tau})\to (\tilde{\n{S}}^4)^\tau\times\C$ which 
 obviously agrees with  the identity map on $(\tilde{\n{S}}^4)^\tau\times\C$. On the other hand, 
 $$
{\rm det (\bb{E}_\varphi})\;\simeq\; (H_+\;\times\;\C)\;\cup_{\det\varphi}\;  (H_-\;\times\;\C)
$$
is obtained by gluing together trivial \virg{Real} line bundles over $H_\pm$ by means of the  clutching function $\det\varphi$. Then,
 a global trivialization $h_{\rm det}:{\rm det (\bb{E}_\varphi})\to \tilde{\n{S}}^4\times\C$ can be realized as a pair of $\rr{R}$-isomorphisms $f_\pm: H_\pm \times \C\to H_\pm \times \C$
 such that $f_+|_{\hat{\n{S}}^3}=\det\varphi\cdot f_-|_{\hat{\n{S}}^3}$. Due to equation \eqref{eq:det_S3-S2},
 we can assume that $\det\varphi=\pm1$ as a constant map and this implies that  $f_+|_{\hat{\n{S}}^3}=\pm  f_-|_{\hat{\n{S}}^3}$, accordingly. This difference in sign is exactly the FKMM-invariant.
 \qed

\medskip
\medskip

\noindent{\bf The case of $\tilde{\n{T}}^4$.}
Let us briefly justify why the TR-torus $\tilde{\n{T}}^4$ verifies Assumption \ref{ass:3.1}. Under the identifications $\tilde{\n{T}}^4\simeq\tilde{\n{S}}^1\times\ldots\times\tilde{\n{S}}^1$ and $\tilde{\n{S}}^1\simeq\n{U}(1)$ (endowed with the complex conjugation) we can represent each point of the torus by a quadruple
$(z_1,z_2,z_3,z_4)$ with $z_j\in \n{U}(1)$, and the fixed points are labelled by choosing $z_j\in\{\pm 1\}$. Let $k_\ast:=(+1,+1,+1,+1)$. If one removes $k_\ast$ from $\tilde{\n{T}}^4$ one easily sees that the resulting space is $\Z_2$-homotopy equivalent to the union of four TR-tori $\tilde{\n{T}}^3_j:=\{(z_1,z_2,z_3,z_4)\in \tilde{\n{T}}^4\ |\ z_j=-1\}$ with common intersection at the fixed point $(-1,-1,-1,-1)$, namely
$$
\tilde{\n{T}}^4\setminus\{k_\ast\}\;\simeq\;\s{T}_4\;:=\;\tilde{\n{T}}^3_1\;\cup\;\ldots \;\cup\;\tilde{\n{T}}^3_4
$$
We notice that   $\s{T}_4$ inherits the structure of a $\Z_2$-CW complex from that of each summand $\tilde{\n{T}}^3_j$ and the dimension of the $\Z_2$-cells does not exceed $3$. Then, we showed that  Assumption \ref{ass:3.1} is verified and 
 Theorem \ref{theo:injectD4} applies to $\tilde{\n{T}}^4$.

\proof[
of Theorem \ref{theo:AI_classd=4} (ii)]
We provide here only a sketch of the proof that,   in spirit, is very close to
the proof of Proposition
\ref{propos_classif_T_3d}. The interested reader can possibly complete the details following the scheme below. 
\begin{itemize}
\item The injectivity follows from Theorem \ref{theo:injectD4}, the isomorphism  $H^{2}_{\Z_2}\big(\tilde{\n{T}}^4|(\tilde{\n{T}}^4)^\tau,\n{Z}(1)\big)\simeq\Z_2^{11}$ in equation \eqref{eq:coker_TR_tori} and the isomorphism $H^4(\n{T}^4,\Z)\simeq\Z$ induced by the pairing with the fundamental class $[\n{T}^4]\in H_4(\n{T}^4)$.
\vspace{1.3mm}
\item To prove the isomorphism one  follows the same strategy of the proof of Proposition \ref{propos_classif_T_3d}; More precisely
one shows the existence of elements in  ${\rm Vec}^{2m}_\rr{Q}({\n{T}}^4,\tau)$ (but as usual $m=1$ is enough) which provide a set of generators for the subgroup $\Z_2^{10}\oplus\Z$. 
\vspace{1.3mm}
\item Let $\epsilon=(\epsilon_1,\ldots,\epsilon_{10},\epsilon_{11})$ be the image of $\kappa(\bb{E})$ in $\Z_2^{11}$. One can realize the first six  components $\epsilon_1,\ldots,\epsilon_{6}$ by the pullback
of the (unique) non trivial element of ${\rm Vec}^{2}_\rr{Q}({\n{T}}^2,\tau)$ by means of 
the six
 projections $\pi_{ij}:\tilde{\n{T}}^4\to\tilde{\n{T}}^2$, $i,j=1,\ldots,4$,
  onto the six sub-tori of dimension 2. We refer to these components as \emph{ultra-week}. 
  \vspace{1.3mm}
 \item The next four components $\epsilon_7,\ldots,\epsilon_{10}$
 are given by the   pullback of the  non trivial element of ${\rm Vec}^{2}_\rr{Q}({\n{T}}^3,\tau)$ described by the strong FKMM-invariant $(-1,+1,+1,+1)$ by means of  the four
 projections $\pi_{i}:\tilde{\n{T}}^4\to\tilde{\n{T}}^3$, $i=1,\ldots,4$,
  onto the four sub-tori of dimension 3.   These are the \emph{week} components of $\epsilon$.
 \vspace{1.3mm}
 \item
These 10 components $(\epsilon_1,\ldots,\epsilon_{10})$
generate
a subgroup of $\Z_2^{11}$. Moreover, by construction, all the $\rr{Q}$-bundles associated with these invariants have trivial second Chern classes. Then, we provided the generators for the first summand $\Z_2^{10}$. 
\vspace{1.3mm}
\item Let $\pi_0:\tilde{\n{T}}^4\to\tilde{\n{T}}^4/\s{T}_4\simeq
\tilde{\n{S}}^4$ be the projection obtained according to the usual collapsing construction . The pullback of the  non trivial element in ${\rm Vec}^{2m}_\rr{Q}({\n{S}}^4,\tau)$ provides a $\rr{Q}$-bundle over 
$\tilde{\n{T}}^4$ with second Chern number $C=1$ and  FKMM-invariant
 $(\epsilon_1,\ldots,\epsilon_{10},\epsilon_{11})= (0, ..., 0, 1)$.
\end{itemize}
\qed

\section{Universality of the FKMM-invariant}
\label{subsec:univ-FKMM_inv}

In this section we provide a \virg{universal} interpretation of the
 FKMM-invariant. More precisely, we  show that the FKMM-invariant associated to a  \virg{Quaternionic} vector bundle can be defined \emph{functorially} from the \emph{universal} FKMM-invariant of the classifying \virg{Quaternionic} vector bundle described in Section \ref{sect:Q-vb_homotopy}. In this sense the FKMM-invariant is a \emph{bona fide} characteristic class. 
We point out that the construction of the universal FKMM-invariant requires a generalization and an improvement of the original idea in \cite{furuta-kametani-matsue-minami-00}.

\subsection{Universal FKMM-invariant}

Theorem \ref{theo:honotopy_class_Q} says that each rank $2m$ $\rr{Q}$-bundle $(\bb{E},\Theta)$ over the involutive space $(X,\tau)$ is obtained (up to isomorphisms) as the  pullback of the 
{tautological} $2m$-plane $\rr{Q}$-bundle $(\bb{T}_{2m}^\infty,\Xi)$ over the involutive Grassmannian $\hat{G}_{2m}(\C^\infty)\equiv ({G}_{2m}(\C^\infty),\rho)$. 

\medskip

The determinat construction described in Section \ref{subsec:det_construct} applies  also to the {tautological} $\rr{Q}$-bundle $\bb{T}_{2m}^\infty$ and it defines a line bundle
\beql{eq:univ01}
\pi\;:\;{\rm det}(\bb{T}_{2m}^\infty)\;\longrightarrow\; \hat{G}_{2m}(\C^\infty)
\eeq
which is endowed with a \virg{Real} structure ${\rm det}(\Xi):{\rm det}(\bb{T}_{2m}^\infty)\to{\rm det}(\bb{T}_{2m}^\infty)$ in agreement with Lemma \ref{lemma:R_Q_det_bun}. Moreover, we recall that $\n{S}({\rm det}(\bb{T}_{2m}^\infty))\subset {\rm det}(\bb{T}_{2m}^\infty)$ denotes the circle bundle according to  the notation introduced in Remark \ref{rk:circ_bund}.
The following characterization will play an important role in the sequel.

\begin{lemma}
\label{lemma:caract_tivial}
Let $(X,\tau)$ be an involutive space such that $X$  verifies (at least) condition 0 of Definition \ref{ass:2}
 and $\varphi:X\to \hat{G}_{2m}(\C^\infty)$ a $\Z_2$-equivariant map. Then:
\begin{enumerate}
\vspace{1.3mm}
\item[(i)] 
Global \virg{Real} sections of the circle bundle $\n{S}({\rm det}(\varphi^*\bb{T}_{2m}^\infty))$ are in one-to-one correspondence with 
$\Z_2$-equivariant maps $\tilde{\varphi}:X\to \n{S}({\rm det}(\bb{T}_{2m}^\infty))$ which make the  following diagram
\beql{eq:diag1}
\begin{diagram}
& &\big(\n{S}({\rm det}(\bb{T}_{2m}^\infty)),{\rm det}(\Xi)\big)\\
&\tilde{\varphi}\;\ruTo &\dTo_\pi \\
(X,\tau) &\rTo_{\varphi} & \hat{G}_{2m}(\C^\infty) \\
\end{diagram}
\qquad\qquad (\pi\circ\tilde{\varphi}\;=\;\varphi)
\eeq
 commutative;
\vspace{1.3mm}
\item[(ii)] 
The determinant line bundle ${\rm det}(\varphi^*\bb{T}_{2m}^\infty)$ associated with $\varphi^*\bb{T}_{2m}^\infty$ is $\rr{R}$-trivial if and only if there exists a $\Z_2$-equivariant map $\tilde{\varphi}:X\to \n{S}({\rm det}(\bb{T}_{2m}^\infty))$ such that the  diagram 
 \eqref{eq:diag1} is commutative;
\end{enumerate}
\end{lemma}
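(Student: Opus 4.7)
The plan is to exploit the universal property of pullback bundles in the equivariant setting. The key general fact is that for any fiber bundle $\pi:\bb{F}\to Y$ and any (continuous) map $\varphi:X\to Y$, sections $s:X\to \varphi^*\bb{F}$ are in natural bijection with lifts $\tilde{\varphi}:X\to \bb{F}$ satisfying $\pi\circ\tilde{\varphi}=\varphi$, the bijection being given by $s(x)=(x,\tilde{\varphi}(x))$ under the fiberwise identification $(\varphi^*\bb{F})_x\simeq \{x\}\times \bb{F}_{\varphi(x)}$. I will apply this to $\bb{F}=\n{S}({\rm det}(\bb{T}_{2m}^\infty))$ and keep track of the equivariance.

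For (i), I would first use the fact that the determinant functor commutes with pullback (as recalled in Section \ref{subsec:det_construct}) to identify ${\rm det}(\varphi^*\bb{T}_{2m}^\infty)\simeq \varphi^*{\rm det}(\bb{T}_{2m}^\infty)$; this induces a corresponding identification between the circle bundles $\n{S}({\rm det}(\varphi^*\bb{T}_{2m}^\infty))\simeq \varphi^*\n{S}({\rm det}(\bb{T}_{2m}^\infty))$. Under the bijection above, global sections of $\n{S}({\rm det}(\varphi^*\bb{T}_{2m}^\infty))$ correspond to continuous lifts $\tilde{\varphi}:X\to \n{S}({\rm det}(\bb{T}_{2m}^\infty))$ with $\pi\circ\tilde{\varphi}=\varphi$. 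I then need to match the equivariance conditions. By the construction recalled in the proof of Theorem \ref{theo:equi_homot1}, the $\rr{Q}$-structure on $\varphi^*\bb{T}_{2m}^\infty$ is fiberwise $\tau\times\Xi$, which induces a $\rr{R}$-structure on ${\rm det}(\varphi^*\bb{T}_{2m}^\infty)$ that fiberwise coincides with $\tau\times{\rm det}(\Xi)$. A section $s$ being \virg{Real} means $s\circ\tau={\rm det}(\Theta_\varphi)\circ s$, and under the identification $s(x)=(x,\tilde{\varphi}(x))$ this relation translates precisely to $\tilde{\varphi}\circ\tau={\rm det}(\Xi)\circ\tilde{\varphi}$, i.e.\ to the $\Z_2$-equivariance of $\tilde{\varphi}$. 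This gives the claimed bijection, and the commutativity of diagram \eqref{eq:diag1} is exactly the lifting condition $\pi\circ\tilde{\varphi}=\varphi$.

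For (ii), I would invoke the characterization recalled in Remark \ref{rk:circ_bund}: a \virg{Real} line bundle endowed with an equivariant Hermitian metric is $\rr{R}$-trivial if and only if it admits a global \virg{Real} section of unit length, equivalently a global \virg{Real} section of its associated circle bundle. Combined with part (i), this immediately yields the stated equivalence between $\rr{R}$-triviality of ${\rm det}(\varphi^*\bb{T}_{2m}^\infty)$ and the existence of a $\Z_2$-equivariant lift $\tilde{\varphi}$ fitting into the diagram \eqref{eq:diag1}.

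The argument is essentially formal, so the only delicate point is the careful bookkeeping needed to verify that the equivariance of the section translates to the equivariance of the lift under the pullback identification; this ultimately rests on the fact that ${\rm det}$ is a natural (\emph{i.e.}\ functorial) construction compatible with involutive pullbacks. No extra technical ingredient beyond this naturality and the metric characterization of $\rr{R}$-triviality is required.
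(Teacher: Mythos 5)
Your proposal is correct and follows essentially the same route as the paper's proof: both reduce part (i) to the standard correspondence between sections of a pullback bundle and equivariant lifts (the paper phrases this via the canonical pullback morphism $\hat{\varphi}$, you via the fiberwise identification, which is the same thing), using the naturality of the determinant under pullback, and both deduce part (ii) directly from the characterization of $\rr{R}$-triviality by unit-length \virg{Real} sections in Remark \ref{rk:circ_bund}. No gap to report.
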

\proof(i)
 Since the determinat construction is functorial one has the natural isomorphism ${\rm det}(\varphi^*\bb{T}_{2m}^\infty)\simeq\varphi^*{\rm det}(\bb{T}_{2m}^\infty)$. Let $\hat{\varphi}:\varphi^*{\rm det}(\bb{T}_{2m}^\infty)\to {\rm det}(\bb{T}_{2m}^\infty)$ be the standard morphism associated with the pullback construction; we recall that  $\hat{\varphi}$ establishes fiberwise (metric-preserving) isomorphisms $\varphi^*{\rm det}(\bb{T}_{2m}^\infty)|_x\simeq {\rm det}(\bb{T}_{2m}^\infty)|_{\varphi(x)}$.
 If $s:X\to \n{S}(\varphi^*{\rm det}(\bb{T}_{2m}^\infty))$ is a global $\rr{R}$-section then $\tilde{\varphi}:=\hat{\varphi}\circ s$ is a  $\Z_2$-equivariant map such that $\pi\circ\tilde{\varphi}=\varphi$. Vice versa,  if one has 
a $\Z_2$-equivariant map $\tilde{\varphi}:X\to \n{S}({\rm det}(\bb{T}_{2m}^\infty))$ which verifies $\pi\circ\tilde{\varphi}=\varphi$ then $s(x):=\hat{\varphi}|_{\varphi(x)}^{-1}\circ \tilde{\varphi}(x)$ defines a 
global \virg{Real} section of $\n{S}(\varphi^*{\rm det}(\bb{T}_{2m}^\infty))$. (ii) This is a consequence of the fact that the
$\rr{R}$-triviality of an   $\rr{R}$-line bundle  (endowed with an equivariant metric) is equivalent to the existence of a global \virg{Real} section of the associated circle bundle
(\cf Remark \ref{rk:circ_bund}). \qed

\begin{remark}\label{rk:ambiguity}
{\upshape
It is rather evident that if  $\tilde{\varphi}_1$ and $\tilde{\varphi}_2$ are   $\Z_2$-equivariant maps which make the diagram \eqref{eq:diag1} commutative then $\tilde{\varphi}_1=u\cdot\tilde{\varphi}_2$
for some $\Z_2$-equivariant map $u:X\to\n{U}(1)$.
}\hfill $\blacktriangleleft$
\end{remark}

In order to define a \emph{universal} FKMM-invariant we need to apply the pullback construction to the {tautological} $2m$-plane $\rr{Q}$-bundle $(\bb{T}_{2m}^\infty,\Xi)$ with respect to the map \eqref{eq:univ01} understood as an equivariant map between involutive base spaces. More precisely, we need to use the following diagram
\beql{eq:diag2}
\begin{diagram}
(\pi^*\bb{T}_{2m}^\infty,\pi^*\Xi)                             &         \rTo^{\ \ \ \ \ \ \ \ \hat{\pi}\ \ \ \ \ \ \ \ }        &(\bb{T}_{2m}^\infty,\Xi)\\
       \dTo^{\pi{''}}                                                           &                &\dTo_{\pi'} \\
\big(\n{S}({\rm det}(\bb{T}_{2m}^\infty)),{\rm det}(\Xi)\big) &\rTo_{\pi} & \hat{G}_{2m}(\C^\infty) &.\\
\end{diagram}
\eeq
Just for sake of completeness, we recall that the projection $\pi{''}$ is associated with the projection $\pi'$ by the pullback construction while the maps $\pi'$ and $\pi$ are related by the determinat construction, namely $\pi={\rm det}(\pi')$ if one prefers the functorial notation. 
The peculiar structure of the $\rr{Q}$-bundle $(\pi^*\bb{T}_{2m}^\infty,\pi^*\Xi)$ is compatible with the construction of an FKMM-invariant. First of all, we notice that the fixed point set of the base space  
$\big(\n{S}({\rm det}(\bb{T}_{2m}^\infty)),{\rm det}(\Xi)\big)$ is non-empty; More precisely we will prove in Lemma \ref{lemma:detT_topo} that 
\beql{eq:univ01.01}
\n{S}({\rm det}(\bb{T}_{2m}^\infty))^\Xi\;:=\;\{x\in\n{S}({\rm det}(\bb{T}_{2m}^\infty))\ |\ {\rm det}(\Xi)(x)=x\}\;\simeq\;{G}_{m}(\n{H}^\infty)\;\sqcup\;{G}_{m}(\n{H}^\infty)
\eeq
which shows  that $\n{S}({\rm det}(\bb{T}_{2m}^\infty))^\Xi$ decomposes as the disjoint union of two identical components that are
path-connected $\pi_0({G}_{m}(\n{H}^\infty))=0$. 
The second important ingredient is contained in the next 
result.

\begin{lemma}
\label{lemma:FKMM_type}
The rank $2m$ \virg{Quaternionic} vector bundle $(\pi^*\bb{T}_{2m}^\infty,\pi^*\Xi)$ 
over the involutive space $\big(\n{S}({\rm det}(\bb{T}_{2m}^\infty)),{\rm det}(\Xi)\big)$
is of FKMM-type according to Definition \ref{def:FKMM-type}.
\end{lemma}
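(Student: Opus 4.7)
The proof is a direct application of the characterization provided by Lemma \ref{lemma:caract_tivial}, combined with the fact that the identity map tautologically lifts the projection $\pi$. I would organize the argument as a verification of the two conditions imposed by Definition \ref{def:FKMM-type}: (a) the fixed point set of the base space $\big(\n{S}({\rm det}(\bb{T}_{2m}^\infty)),{\rm det}(\Xi)\big)$ is non-empty, and (b) the \virg{Real} determinant line bundle associated with $(\pi^*\bb{T}_{2m}^\infty,\pi^*\Xi)$ is $\rr{R}$-trivial.

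For (a), the claim is already contained in the identification \eqref{eq:univ01.01}, which I would invoke by referring forward to Lemma \ref{lemma:detT_topo}: the fixed point set of $\big(\n{S}({\rm det}(\bb{T}_{2m}^\infty)),{\rm det}(\Xi)\big)$ is homeomorphic to $G_m(\n{H}^\infty)\sqcup G_m(\n{H}^\infty)$, in particular non-empty.

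For (b), the key observation is that ${\rm det}(\pi^*\bb{T}_{2m}^\infty)\simeq \pi^\ast\,{\rm det}(\bb{T}_{2m}^\infty)$ in view of the functoriality of the determinant construction. Lemma \ref{lemma:caract_tivial}(ii), applied to the involutive space $X=\n{S}({\rm det}(\bb{T}_{2m}^\infty))$ endowed with the involution ${\rm det}(\Xi)$ and to the equivariant map $\varphi=\pi$, reduces the $\rr{R}$-triviality of $\pi^\ast\,{\rm det}(\bb{T}_{2m}^\infty)$ to the existence of a $\Z_2$-equivariant lift $\tilde{\varphi}:\n{S}({\rm det}(\bb{T}_{2m}^\infty))\to \n{S}({\rm det}(\bb{T}_{2m}^\infty))$ with $\pi\circ\tilde{\varphi}=\pi$. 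The obvious choice $\tilde{\varphi}:={\rm Id}_{\n{S}({\rm det}(\bb{T}_{2m}^\infty))}$ does the job: it is equivariant by construction and satisfies the commutativity trivially.

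There is no real obstacle here; the content of the lemma is essentially tautological once \eqref{eq:univ01.01} has been established. What makes the statement non-trivial is rather its consequence, namely that one can define a FKMM-invariant $\kappa(\pi^\ast \bb{T}_{2m}^\infty)\in H^2_{\Z_2}\big(\n{S}({\rm det}(\bb{T}_{2m}^\infty))\big|\n{S}({\rm det}(\bb{T}_{2m}^\infty))^{{\rm det}(\Xi)},\Z(1)\big)$, which will then play the role of the universal class from which every FKMM-invariant of a $\rr{Q}$-bundle of FKMM-type is obtained by pullback along a suitable classifying map (this is the point developed in the subsequent part of Section \ref{subsec:univ-FKMM_inv}).
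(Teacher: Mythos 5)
Your proof is correct and coincides with the paper's own argument: the paper likewise reduces to the $\rr{R}$-triviality of $\pi^\ast\,{\rm det}(\bb{T}_{2m}^\infty)$ via functoriality of the determinant and then applies Lemma \ref{lemma:caract_tivial}(ii) with the identity map as the equivariant lift $\tilde{\varphi}$ (your explicit check of the non-empty fixed point set via Lemma \ref{lemma:detT_topo} is handled in the paper in the text preceding the lemma). The only difference is that the paper additionally records a second, equivalent proof by exhibiting the explicit diagonal section $s_{\rm diag}(x)=(x,x)$ of $\pi^\ast{\rm det}(\bb{T}_{2m}^\infty)$, which it then reuses later to compute $\rr{K}_{\rm univ}$.
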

\proof
The functoriality of the determinant construction implies that ${\rm det}(\pi^*\bb{T}_{2m}^\infty)\simeq \pi^*{\rm det}(\bb{T}_{2m}^\infty)$. To complete the proof we only need to show that  
$\pi^*{\rm det}(\bb{T}_{2m}^\infty)$ is $\rr{R}$-trivial and we will show this fact in two different ways.

The first proof uses Lemma \ref{lemma:caract_tivial} (ii). 
Let us consider the diagram \eqref{eq:diag1} with $\big({\n{S}({\rm det}(\bb{T}_{2m}^\infty))},{\rm det}(\Xi)\big)$ instead $(X,\tau)$
and $\varphi=\pi$. 
The identity  map ${\rm Id}:{\rm det}(\bb{T}_{2m}^\infty)\to {\rm det}(\bb{T}_{2m}^\infty)$ provides a realization for the equivariant map $\tilde{\varphi}$, hence  ${\rm det}(\pi^*\bb{T}_{2m}^\infty)$ is $\rr{R}$-trivial.

The second proof is more direct since it is based on the construction of an explicit trivialization.
The pullback construction applied to  diagram \eqref{eq:diag2} shows that
$$
\pi^*{\rm det}(\bb{T}_{2m}^\infty)\;=\;\left\{(x,y)\in\n{S}({\rm det}(\bb{T}_{2m}^\infty))\times{\rm det}(\bb{T}_{2m}^\infty)\ |\ \pi(x)\;=\;\pi(y) \right\}\;.
$$
Hence, the  \emph{diagonal section}
\beql{eq:can_diag_sec}
s_{\rm diag}\;:\;\n{S}({\rm det}(\bb{T}_{2m}^\infty))\;\longrightarrow\; \pi^*{\rm det}(\bb{T}_{2m}^\infty)\;,\qquad\qquad\quad s_{\rm diag}(x)\;:=\;(x,x)
\eeq
 provides a global equivariant section and defines a global $\rr{R}$-trivialization
$$
h_{\rm diag}\;:\;\pi^*{\rm det}(\bb{T}_{2m}^\infty)\;\longrightarrow\;\n{S}({\rm det}(\bb{T}_{2m}^\infty))\;\times\;\C\;,\qquad\qquad\quad h_{\rm diag}(x,y)\;:=\;(x,\varphi_{\rm diag}(x))
$$
where the
 equivariant map $\varphi_{\rm diag}:{\rm det}(\bb{T}_{2m}^\infty)\to \n{U}(1)$ is specified  by the condition $y=\varphi_{\rm diag}(x)\cdot x$.
\qed

\begin{remark}[Diagonal section]\label{rk:can_sec_circ_bun_univ}
{\upshape
Let us point out that 
the diagonal section $s_{\rm diag}$ in \eqref{eq:can_diag_sec} can be seen as a global $\rr{R}$-section for the circle bundle
$\n{S}(\pi^*{\rm det}(\bb{T}_{2m}^\infty))\to {\n{S}({\rm det}(\bb{T}_{2m}^\infty))}$. Moreover, it is evident that the relation
$$
s_{\rm diag}(x)\;=\;h_{\rm diag}^{-1}(x,1)\qquad\qquad\forall\ x\in \n{S}({\rm det}(\bb{T}_{2m}^\infty))
$$
holds true.
}\hfill $\blacktriangleleft$
\end{remark}

\noindent
Owing to Lemma \ref{lemma:FKMM_type},  the following definition is well posed.

\begin{definition}[Universal FKMM-invariant]
The \emph{universal FKMM-invariant} $\rr{K}_{\rm univ}$ is the FKMM-invariant of the rank $2m$ \virg{Quaternionic} vector bundle $(\pi^*\bb{T}_{2m}^\infty,\pi^*\Xi)$ defined over the involutive space $\big(
\n{S}({\rm det}(\bb{T}_{2m}^\infty)),{\rm det}(\Xi)\big)$, \ie
$$
\rr{K}_{\rm univ}\;:=\;\kappa(\pi^*\bb{T}_{2m}^\infty)\;\in\;[\n{S}({\rm det}(\bb{T}_{2m}^\infty))^\Xi,\n{U}(1)]_{\Z_2}/[\n{S}({\rm det}(\bb{T}_{2m}^\infty)),\n{U}(1)]_{\Z_2}\;.
$$ 
\end{definition}

\subsection{Naturality}

The  FKMM-invariant $\rr{K}_{\rm univ}$ is a universal \emph{characteristic class} for  ${\rm Vec}^{2m}_{{\rm FKMM}}(X,\tau)$ in the sense that it acts \emph{naturally} with respect to the homotopy classification of Theorem \ref{theo:equi_homot1}.

\begin{theorem}[Naturality of $\rr{K}_{\rm univ}$]
\label{theo:nat_univ}
Let $(\bb{E},\Theta)$ be a $\rr{Q}$-bundle of FKMM-type (\cf Definition \ref{def:FKMM-type}) over the
 involutive space $(X,\tau)$. Let $\varphi:X\to \hat{G}_{2m}(\C^\infty)$
 be the $\Z_2$-equivariant map which classifies $(\bb{E},\Theta)$ (up to isomorphisms). Then
\beql{eq:natur_K_uni}
 \kappa(\bb{E})\;=\;\tilde{\varphi}^*(\rr{K}_{\rm univ})
\eeq
where the $\Z_2$-equivariant map $\tilde{\varphi}:X\to \n{S}({\rm det}(\bb{T}_{2m}^\infty))$ verifies $\pi\circ\tilde{\varphi}=\varphi$ according to Diagram \eqref{eq:diag1}. Moreover, equality \eqref{eq:natur_K_uni}
is well defined in the sense that if $\tilde{\varphi}_1,\tilde{\varphi}_2$
are two  $\Z_2$-equivariant maps such that $\pi\circ\tilde{\varphi}_j=\varphi$, $j=1,2$ then $\tilde{\varphi}^*_1(\rr{K}_{\rm univ})=\tilde{\varphi}^*_2(\rr{K}_{\rm univ})$.
 \end{theorem}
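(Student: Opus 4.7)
The plan is to recognize this theorem as a clean consequence of the naturality of the FKMM-invariant (Theorem \ref{theo:FKMM_propert}(i)) applied to the lift $\tilde{\varphi}$, combined with Lemma \ref{lemma:FKMM_type} which ensures that the universal bundle $\pi^\ast\bb{T}_{2m}^\infty$ is itself of FKMM-type, so that $\rr{K}_{\rm univ}$ is defined in the first place.

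First I would invoke Theorem \ref{theo:honotopy_class_Q} to identify $\bb{E}\simeq\varphi^\ast\bb{T}_{2m}^\infty$ as $\rr{Q}$-bundles. The FKMM-type assumption on $\bb{E}$ is precisely the $\rr{R}$-triviality of $\det(\bb{E})\simeq\varphi^\ast{\rm det}(\bb{T}_{2m}^\infty)$, which by Lemma \ref{lemma:caract_tivial}(ii) is equivalent to the existence of an equivariant lift $\tilde{\varphi}:X\to\n{S}({\rm det}(\bb{T}_{2m}^\infty))$ of $\varphi$ through $\pi$. Functoriality of the pullback then produces the chain of $\rr{Q}$-isomorphisms
$$\bb{E}\;\simeq\;\varphi^\ast\bb{T}_{2m}^\infty\;\simeq\;\tilde{\varphi}^\ast\big(\pi^\ast\bb{T}_{2m}^\infty\big).$$

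The next step is to apply Theorem \ref{theo:FKMM_propert}(i) to the equivariant map $\tilde{\varphi}:(X,\tau)\to\big(\n{S}({\rm det}(\bb{T}_{2m}^\infty)),{\rm det}(\Xi)\big)$. The only hypothesis to verify is $\tilde{\varphi}(X)\cap\n{S}({\rm det}(\bb{T}_{2m}^\infty))^\Xi\neq\emptyset$, which is automatic since the FKMM-type assumption forces $X^\tau\neq\emptyset$ and equivariance sends fixed points to fixed points. Naturality then yields
$$\kappa(\bb{E})\;=\;\kappa\big(\tilde{\varphi}^\ast\pi^\ast\bb{T}_{2m}^\infty\big)\;=\;\tilde{\varphi}^\ast\kappa\big(\pi^\ast\bb{T}_{2m}^\infty\big)\;=\;\tilde{\varphi}^\ast\rr{K}_{\rm univ},$$
which is formula \eqref{eq:natur_K_uni}.

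For the well-definedness statement, my plan is to exploit the fact that the very first assertion of Theorem \ref{theo:FKMM_propert} (invariance of $\kappa$ under $\rr{Q}$-isomorphism) already does all the work: for any two equivariant lifts $\tilde{\varphi}_1,\tilde{\varphi}_2$ of $\varphi$, both bundles $\tilde{\varphi}_j^\ast\pi^\ast\bb{T}_{2m}^\infty$ are $\rr{Q}$-isomorphic to $\bb{E}$, so by the identity just proved, $\tilde{\varphi}_1^\ast\rr{K}_{\rm univ}=\kappa(\bb{E})=\tilde{\varphi}_2^\ast\rr{K}_{\rm univ}$. As an independent sanity check I would re-derive this intrinsically using Remark \ref{rk:ambiguity}: writing $\tilde{\varphi}_1=u\cdot\tilde{\varphi}_2$ with $u:X\to\n{U}(1)$ equivariant, this factor $u$ corresponds under Lemma \ref{lemma:caract_tivial}(i) precisely to the gauge freedom in choosing the global $\rr{R}$-trivialization of $\det(\bb{E})$, which is exactly what is modded out by $[X,\n{U}(1)]_{\Z_2}$ in the cokernel defining $\kappa$. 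I do not anticipate genuine obstacles: the whole argument is essentially a bookkeeping exercise in naturality, and the only point requiring a moment's care is confirming that the lift produced by Lemma \ref{lemma:caract_tivial}(ii) indeed lands in the unit circle bundle $\n{S}({\rm det}(\bb{T}_{2m}^\infty))$ rather than in the line bundle ${\rm det}(\bb{T}_{2m}^\infty)$, which is ensured by working throughout with equivariant Hermitian metrics and isometries as dictated by Assumption \ref{ass:metric}.
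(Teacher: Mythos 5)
Your proposal is correct and follows essentially the same route as the paper: the identity $\kappa(\bb{E})=\kappa(\varphi^\ast\bb{T}_{2m}^\infty)=\kappa(\tilde{\varphi}^\ast\pi^\ast\bb{T}_{2m}^\infty)=\tilde{\varphi}^\ast\kappa(\pi^\ast\bb{T}_{2m}^\infty)=\tilde{\varphi}^\ast(\rr{K}_{\rm univ})$ via isomorphism-invariance and pullback-naturality from Theorem \ref{theo:FKMM_propert}, with well-definedness following immediately since both lifts yield bundles isomorphic to $\bb{E}$. Your supplementary check via Remark \ref{rk:ambiguity} matches the paper's own follow-up remark on the gauge ambiguity being absorbed by the quotient over $[X,\n{U}(1)]_{\Z_2}$.
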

\proof
To establish equation \eqref{eq:natur_K_uni} is quite easy, indeed
$$
\kappa(\bb{E})\;=\;\kappa(\varphi^*\bb{T}_{2m}^\infty)\;=\;\kappa(\tilde{\varphi}^*\circ\pi^*\bb{T}_{2m}^\infty)\;=\;\tilde{\varphi}^*\kappa(\pi^*\bb{T}_{2m}^\infty)\;=\;
\tilde{\varphi}^*(\rr{K}_{\rm univ})\;.
$$
In particular, in the first equality we used the invariance of $\kappa$ under isomorphisms and in the third equality we used the naturality of $\kappa$ under pullbacks (\cf Theorem \ref{theo:FKMM_propert}). Of course, the above relation also implies $\tilde{\varphi}_1^*(\rr{K}_{\rm univ})=\kappa(\bb{E})=\tilde{\varphi}_2^*(\rr{K}_{\rm univ})$.
\qed

\begin{remark}{\upshape
The \virg{well-posedness} of definition \eqref{eq:natur_K_uni} can also be established by a direct argument. By definition $\rr{K}_{\rm univ}=\kappa(\pi^*\bb{T}_{2m}^\infty)$ is represented by a $\Z_2$-equivariant map $\omega_{\rm univ}:\n{S}({\rm det}(\bb{T}_{2m}^\infty))^\Xi\to\n{U}(1)$ induced by the  diagonal section
$s_{\rm diag}$
restricted to the fixed point set $\n{S}({\rm det}(\bb{T}_{2m}^\infty))^\Xi$ 
(\cf Remark \ref{rk:FKMM_section}). 
As argued in Remark \ref{rk:ambiguity},
the two pullback sections $\tilde{\varphi}^*_j(s_{\rm diag}):=s_{\rm diag}\circ \tilde{\varphi}_j$, $j=1,2$ are related by the multiplication by a $\Z_2$-equivariant map $u:X\to\n{U}(1)$.
The pullback classes
$\tilde{\varphi}^*_j(\rr{K}_{\rm univ})$ are, by construction, represented by the maps $\tilde{\varphi}^*_j(\omega_{\rm univ}):=\omega_{\rm univ}\circ\tilde{\varphi}_j|_{X^\tau}$ induced by the restricted sections $\tilde{\varphi}^*_j(s_{\rm diag})|_{X^\tau}$ and the difference between these two maps is exactly the multiplication by $u|_{X^\tau}$. However, this ambiguity is removed by the quotient with respect to 
the action of $[X,\n{U}(1)]_{\Z_2}$ which appears in  the definition of the FKMM-invariant.
}\hfill $\blacktriangleleft$
\end{remark}

\subsection{Characterization of the universal FKMM-invariant}

In this section we provide 
 a useful  characterization of the universal invariant $\rr{K}_{\rm univ}$. First of all we need an analysis of the fixed point set of the space $\n{S}({\rm det}(\bb{T}_{2m}^\infty))$ under the involution given by the \virg{Real} structure ${\rm det}(\Xi)$.

\begin{lemma}
\label{lemma:detT_topo}
The following topological isomorphism
$$
\n{S}({\rm det}(\bb{T}_{2m}^\infty))^\Xi\;\simeq\;{G}_{2m}(\C^\infty)^\rho\;\sqcup\;{G}_{2m}(\C^\infty)^\rho
$$
holds true.
\end{lemma}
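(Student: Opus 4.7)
The plan is to view $\n{S}({\rm det}(\bb{T}_{2m}^\infty))$ concretely as pairs $(\Sigma,\lambda)$ with $\Sigma\in G_{2m}(\C^\infty)$ and $\lambda\in{\rm det}(\Sigma)$ of unit norm (with respect to the canonical Hermitian metric inherited from the standard one on $\C^\infty$), and then to analyze the $\Z_2$-action induced by ${\rm det}(\Xi)$ fiberwise over $G_{2m}(\C^\infty)$. A point $(\Sigma,\lambda)$ is fixed precisely when $\rho(\Sigma)=\Sigma$, i.e.\ $\Sigma\in G_{2m}(\C^\infty)^\rho$, and ${\rm det}(\Xi)_{\Sigma}(\lambda)=\lambda$, so projection onto the first factor gives a continuous surjection $\n{S}({\rm det}(\bb{T}_{2m}^\infty))^\Xi\to G_{2m}(\C^\infty)^\rho$.

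Next I would describe each fiber of this surjection. For $\Sigma\in G_{2m}(\C^\infty)^\rho$ the map ${\rm det}(\Xi)_{\Sigma}$ is anti-linear on the one-dimensional complex vector space ${\rm det}(\Sigma)$ and satisfies ${\rm det}(\Xi)_{\Sigma}^2=(-1)^{2m}\n{1}=+\n{1}$ (the induced action on the top exterior power of an even-dimensional fiber squares to $+1$, since $\Xi^2=-\n{1}$). Hence ${\rm det}(\Xi)_{\Sigma}$ is a genuine real structure on the complex line ${\rm det}(\Sigma)$; its fixed set is a real line whose intersection with the unit circle is exactly a pair of antipodal points $\{\pm v_\Sigma\}$. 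This already gives the set-theoretic identification with two copies of $G_{2m}(\C^\infty)^\rho$; what remains is to show that this double cover is topologically trivial.

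For the triviality of the cover I would invoke Lemma \ref{lemma:R_Q_det_bun2} applied to the restricted $\rr{Q}$-bundle $(\bb{T}_{2m}^\infty|_{G_{2m}(\C^\infty)^\rho},\Xi|_{G_{2m}(\C^\infty)^\rho})$, which lives over a space with trivial involution (using the identification $G_{2m}(\C^\infty)^\rho\simeq G_m(\n{H}^\infty)$ reviewed in Section \ref{sect:Q-vb_homotopy}). That lemma produces a unique canonical, metric-preserving $\rr{R}$-section
\[
s_{G_{2m}(\C^\infty)^\rho}\;:\;G_{2m}(\C^\infty)^\rho\;\longrightarrow\;\n{S}({\rm det}(\bb{T}_{2m}^\infty|_{G_{2m}(\C^\infty)^\rho}))
\]
whose image is entirely contained in the fixed-point set $\n{S}({\rm det}(\bb{T}_{2m}^\infty))^\Xi$. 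Setting $s_+:=s_{G_{2m}(\C^\infty)^\rho}$ and $s_-:=-s_{G_{2m}(\C^\infty)^\rho}$, both are continuous equivariant sections landing in the fixed-point set, with disjoint images by the antipodal analysis of the previous paragraph, and together they exhaust every fiber. The resulting continuous bijection $G_{2m}(\C^\infty)^\rho\sqcup G_{2m}(\C^\infty)^\rho\to \n{S}({\rm det}(\bb{T}_{2m}^\infty))^\Xi$, $(x,\pm)\mapsto s_\pm(x)$, is then a homeomorphism (its inverse is the projection together with the locally constant sign that distinguishes $s_+$ from $s_-$).

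The main technical hurdle is ensuring the continuity/triviality step — that is, producing a \emph{global} continuous choice of one of the two antipodal fixed vectors rather than just a local one. This is precisely what Lemma \ref{lemma:R_Q_det_bun2} delivers: its canonicity (the transition functions of the determinant bundle reduce to $1$ on the fixed locus) removes the potential $\Z_2$-monodromy and guarantees that the two sheets are globally distinguishable. Once this is granted, the rest of the argument is formal.
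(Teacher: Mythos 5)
Your proof is correct and follows essentially the same route as the paper: both arguments observe that the fixed points of ${\rm det}(\Xi)$ must lie over ${G}_{2m}(\C^\infty)^\rho$ and then use the canonical trivialization/section of Lemma \ref{lemma:R_Q_det_bun2} over the fixed locus to identify the fiberwise fixed set globally as two disjoint copies of ${G}_{2m}(\C^\infty)^\rho$. Your explicit fiberwise analysis (the anti-linear involution on the determinant line squares to $+1$ for even rank, so its unit-norm fixed vectors form an antipodal pair) is a slightly more detailed unpacking of what the paper obtains directly from the identification $\n{S}({\rm det}(\bb{T}_{2m}^\infty))|_{{G}_{2m}(\C^\infty)^\rho}\simeq {G}_{2m}(\C^\infty)^\rho\times\n{U}(1)$.
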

\proof
By Lemma \ref{lemma:R_Q_det_bun2} the restriction ${\rm det}(\bb{T}_{2m}^\infty)|_{{G}_{2m}(\C^\infty)^\rho
}\to {G}_{2m}(\C^\infty)^\rho
$ is $\rr{R}$-trivial and admits a canonical (metric-preserving) trivialization
$$
{\rm det}_{{G}^\rho}:{\rm det}(\bb{T}_{2m}^\infty)|_{{G}_{2m}(\C^\infty)^\rho}\to{G}_{2m}(\C^\infty)^\rho\times\C 
$$
with related canonical section
\beql{eq:can_sec_s_G}
s_{G^\rho}\;:\;{G}_{2m}(\C^\infty)^\rho\;\longrightarrow\; \n{S}({\rm det}(\bb{T}_{2m}^\infty))|_{{G}_{2m}(\C^\infty)^\rho
}\;\simeq\;{G}_{2m}(\C^\infty)^\rho\times\n{U}(1)
\eeq
defined by $s_{G^\rho}(x)={\rm det}_{{G}^\rho}^{-1}(x,1)
$. Because the bundle projection $\pi:{\rm det}(\bb{T}_{2m}^\infty)\to {G}_{2m}(\C^\infty)$ is equivariant it follows that  $\n{S}({\rm det}(\bb{T}_{2m}^\infty))^\Xi\subset\pi^{-1}({G}_{2m}(\C^\infty)^\rho)\cap\n{S}({\rm det}(\bb{T}_{2m}^\infty))=\n{S}({\rm det}(\bb{T}_{2m}^\infty))|_{{G}_{2m}(\C^\infty)^\rho}$. On the other side  the inclusion
 ${\rm det}(\bb{T}_{2m}^\infty)|_{{G}_{2m}(\C^\infty)^\rho}\subset{\rm det}(\bb{T}_{2m}^\infty)$ restricted at level of sphere-bundle 
 and the isomorphism given by ${\rm det}_{{G}^\rho}$ provide 
\beql{eq:can_sec_s_G2}
\n{S}({\rm det}(\bb{T}_{2m}^\infty))^\Xi\;=\;\big(\n{S}({\rm det}(\bb{T}_{2m}^\infty))|_{{G}_{2m}(\C^\infty)^\rho})^\Xi\;\simeq\; {G}_{2m}(\C^\infty)^\rho\times\{\pm1\}\;.
\eeq
\qed

\medskip

\noindent
Equation \eqref{eq:univ01.01} follows from Lemma \ref{lemma:detT_topo} and the isomorphism ${G}_{2m}(\C^\infty)^\rho\simeq {G}_{m}(\n{H}^\infty)$ discussed in Section \ref{sect:Q-vb_homotopy}.

\begin{theorem}
The following isomorphism
\beql{eq:univ000}
[{\rm det}(\bb{T}_{2m}^\infty)^\Xi,\n{U}(1)]_{\Z_2}/[{\rm det}(\bb{T}_{2m}^\infty),\n{U}(1)]_{\Z_2}\;\simeq\;\Z_2
\eeq
holds true and  under this identification $\rr{K}_{\rm univ}=-1$ corresponds to the non trivial element of $\Z_2$. 
\end{theorem}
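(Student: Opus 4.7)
The plan is to compute both groups appearing in the quotient on the left-hand side of \eqref{eq:univ000} separately, identify the restriction homomorphism between them, and then explicitly read off the class of $\rr{K}_{\rm univ}$ in the resulting cokernel. The computation of the numerator is short, while the denominator requires identifying $\n{S}({\rm det}(\bb{T}_{2m}^\infty))$ with the classifying space $B\n{S}\n{U}(2m)$ and running a Borel spectral sequence.

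First I would handle the numerator. By Lemma \ref{lemma:detT_topo} together with the identification $G_{2m}(\C^\infty)^\rho\simeq G_{m}(\n{H}^\infty)$ from Section \ref{sect:Q-vb_homotopy}, the fixed set $\n{S}({\rm det}(\bb{T}_{2m}^\infty))^\Xi$ is the disjoint union of two copies of the path-connected quaternionic Grassmannian $G_m(\n{H}^\infty)$. Since each component carries the trivial involution and since $\n{U}(1)^{\rm conj}=\{\pm 1\}$, any equivariant map must be locally constant with values in $\{\pm 1\}$ on each component, giving $[\n{S}({\rm det}(\bb{T}_{2m}^\infty))^\Xi,\n{U}(1)]_{\Z_2}\simeq\Z_2\oplus\Z_2$ indexed by the two components.

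Next I would compute the denominator via the homotopy equivalence $\n{S}({\rm det}(\bb{T}_{2m}^\infty))\simeq B\n{S}\n{U}(2m)$, which follows from contractibility of the Stiefel manifold $V_{2m}(\C^\infty)\simeq E\n{U}(2m)$ and the fact that $\n{S}({\rm det}(\bb{T}_{2m}^\infty))=V_{2m}(\C^\infty)/\n{S}\n{U}(2m)$. Since $\n{S}\n{U}(2m)$ is a connected simply connected compact Lie group with $\pi_2=0$, the classifying space $B\n{S}\n{U}(2m)$ is $3$-connected, hence $H^q(\n{S}({\rm det}(\bb{T}_{2m}^\infty)),\Z)=0$ for $q=1,2,3$. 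The Leray--Serre spectral sequence for the Borel fibration $\n{S}({\rm det}(\bb{T}_{2m}^\infty))\to\n{S}({\rm det}(\bb{T}_{2m}^\infty))_{\sim\Xi}\to B\Z_2$ then collapses in total degree $1$ onto the row $q=0$, yielding
$$
[\n{S}({\rm det}(\bb{T}_{2m}^\infty)),\n{U}(1)]_{\Z_2}\;\simeq\; H^1_{\Z_2}(\n{S}({\rm det}(\bb{T}_{2m}^\infty)),\Z(1))\;\simeq\; H^1(B\Z_2,\Z(1))\;\simeq\;\Z_2,
$$
generated by the constant map with value $-1$. Restricting this generator to the fixed set lands on the diagonal element $(-1,-1)\in\Z_2\oplus\Z_2$, so the cokernel is $(\Z_2\oplus\Z_2)/\Delta\simeq\Z_2$, which establishes \eqref{eq:univ000}.

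Finally, to identify $\rr{K}_{\rm univ}$ I would unravel Definition \ref{def:FKMM-invariant} and Remark \ref{rk:FKMM_section} for the bundle $(\pi^*\bb{T}_{2m}^\infty,\pi^*\Xi)$ using the global $\rr{R}$-trivialization $h_{\rm diag}$ from Lemma \ref{lemma:FKMM_type}. Under the canonical trivialization ${\rm det}_{G^\rho}$, a point of the fixed set has the form $(\Sigma,\pm s_{G^\rho}(\Sigma))$, with the sign labelling the two components. The diagonal section \eqref{eq:can_diag_sec} evaluates at such a point to $\pm s_{G^\rho}(\Sigma)$ in the fiber ${\rm det}(\Sigma)$ of the pulled-back line bundle, whereas the canonical section produced by Lemma \ref{lemma:R_Q_det_bun2}---which for the pullback is simply $s_{G^\rho}\circ\pi$---evaluates to $s_{G^\rho}(\Sigma)$. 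Their ratio represents $\rr{K}_{\rm univ}$ and equals $(+1,-1)\in\Z_2\oplus\Z_2$, which is precisely the non-diagonal, hence non-trivial, class modulo $\Delta$.

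The main obstacle I anticipate is the spectral-sequence step: one needs to verify carefully that the $\Z_2$-action on $H^q(\n{S}({\rm det}(\bb{T}_{2m}^\infty)),\Z)$ is compatible with the twist $\Z(1)$ in the relevant low range. This is painless here because these groups vanish for $q=1,2,3$ and the action on $H^0=\Z$ is trivial, but identifying the generator of $H^1(B\Z_2,\Z(1))\simeq\Z_2$ with the constant map $-1$ (rather than some other representative) is the only place where one has to pay attention to signs. Once this is settled, everything else reduces to the bookkeeping of canonical sections through the diagram \eqref{eq:diag2}.
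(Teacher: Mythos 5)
Your proof is correct, and it reaches the paper's conclusion by a partly different route. For the cokernel computation you and the paper agree on the numerator ($[\,\cdot^\Xi,\n{U}(1)]_{\Z_2}\simeq\Z_2\oplus\Z_2$ from the two components of Lemma \ref{lemma:detT_topo}) and on the fact that everything reduces to showing $H^1\big(\n{S}({\rm det}(\bb{T}_{2m}^\infty)),\Z\big)=0$ so that the denominator is just the constant maps $\pm1$ restricting diagonally; but where the paper gets this vanishing from the Gysin sequence of the circle bundle over $G_{2m}(\C^\infty)$ (cup product with $\rr{c}_1$ being injective on $H^0$), you identify $\n{S}({\rm det}(\bb{T}_{2m}^\infty))=V_{2m}(\C^\infty)/\n{S}\n{U}(2m)\simeq B\n{S}\n{U}(2m)$, which is $3$-connected. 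That identification is valid and arguably more conceptual — it gives vanishing through degree $3$ for free — at the cost of invoking the Borel spectral sequence where the paper only needs its Proposition \ref{prop:cond_coker} (c.2). The bigger divergence is in the second claim: the paper's official proof that $\rr{K}_{\rm univ}=-1$ is indirect — once \eqref{eq:univ000} is known, naturality (Theorem \ref{theo:nat_univ}) plus the existence of a non-trivial class in ${\rm Vec}^{2m}_{\rr{Q}}(\tilde{\n{S}}^2)$ forces $\rr{K}_{\rm univ}\neq+1$ — whereas you compute $\omega_{\rm univ}(x,\epsilon)=\epsilon$ directly by comparing the diagonal section with the pullback of $s_{G^\rho}$. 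This direct computation is exactly the supplementary verification the paper appends after its proof, and it is sound provided you note (as the paper does in the proof of Theorem \ref{theo:FKMM_propert} (i)) that the canonical section of Lemma \ref{lemma:R_Q_det_bun2} is natural under equivariant pullback, which is what legitimizes your identification of the canonical section of $\pi^*\bb{T}_{2m}^\infty$ over the fixed set with $s_{G^\rho}\circ\pi$. The only point deserving an explicit sentence is the one you already flag: the generator of $H^1_{\Z_2}\big(\n{S}({\rm det}(\bb{T}_{2m}^\infty)),\Z(1)\big)\simeq E_\infty^{1,0}$ is detected by the edge homomorphism from $H^1_{\Z_2}(\{\ast\},\Z(1))$, which is why it is represented by the constant map $-1$ and hence restricts to the diagonal $(-1,-1)$.
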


\proof
First of all we notice that it is enough to show the isomorphism 
\eqref{eq:univ000} in order to conclude also $\rr{K}_{\rm univ}=-1$.
Indeed, this is a consequence of the naturality property proved in Theorem \ref{theo:nat_univ} and the existence of non-trivial $\rr{Q}$-bundles (\eg ${\rm Vec}^{2m}_{\rr{Q}}(\tilde{\n{S}}^2)\simeq\Z_2$).

As a consequence of the first isomorphism in \eqref{eq:iso:eq_cohom}, the proof of   \eqref{eq:univ000} is equivalent to show
$$
{\rm Coker}^1\big({\rm det}(\bb{T}_{2m}^\infty)|{\rm det}(\bb{T}_{2m}^\infty)^\Xi,\n{Z}(1)\big)\;\simeq\;\Z_2\;.
$$
Due to Lemma \ref{lemma:detT_topo}  we know that ${\rm det}(\bb{T}_{2m}^\infty)$ has two connected components and so
we only need  to prove $H^1({\rm det}(\bb{T}_{2m}^\infty),\Z)=0$ and to apply
 Proposition 
\ref{prop:cond_coker} (c.2). To this end, let us
consider the 
\emph{Gysin sequence} 
$$
H^k({G}_{2m}(\n{C}^\infty),\Z)\;\stackrel{\pi^*}{\longrightarrow}\;H^k({\rm det}(\bb{T}_{2m}^\infty),\Z)
\;\stackrel{\imath_k}{\longrightarrow}\;H^{k-1}({G}_{2m}(\n{C}^\infty),\Z)\;\stackrel{\cup\rr{c}_1}{\longrightarrow}\;H^{k+1}({G}_{2m}(\n{C}^\infty),\Z)\;\stackrel{}{\longrightarrow}\;\ldots
$$
for the complex line bundle $\pi:{\rm det}(\bb{T}_{2m}^\infty)\to {G}_{2m}(\n{C}^\infty)$. The map on the third arrow is justified by the equality $c_1({\rm det}(\bb{T}_{2m}^\infty))=c_1(\bb{T}_{2m}^\infty)$ and by the fact that $c_j(\bb{T}_{2m}^\infty):=\rr{c}_j\in H^{2j}({G}_{2m}(\n{C}^\infty),\Z)$ are, by definition, the generators
of the cohomology ring $H^\bullet({G}_{2m}(\n{C}^\infty),\Z)\simeq\Z[\rr{c}_1,\ldots,\rr{c}_{2m}]$.
When  $k=1$,  the multiplication by $\cup\rr{c}_1$ is an isomorphism, hence $\imath_1=0$.
Since $H^{1}({G}_{2m}(\n{C}^\infty),\Z)=0$, this implies $H^1({\rm det}(\bb{T}_{2m}^\infty),\Z)=0$. 
\qed

\medskip

Although not necessary, we find instructive to compute $\rr{K}_{\rm univ}=-1$ using the definition of the FKMM-invariant.
Let us consider first the restriction of the  diagonal section $s_{\rm diag}$  \eqref{eq:can_diag_sec} on the fixed point set $\n{S}({\rm det}(\bb{T}_{2m}^\infty))^\Xi\simeq {G}_{2m}(\C^\infty)^\rho\times\{\pm1\}$. The last isomorphism says that we can identify each point in ${\n{S}({\rm det}(\bb{T}_{2m}^\infty))}^\Xi$ with a pair $(x,\epsilon)\in {G}_{2m}(\C^\infty)^\rho\times\{\pm1\}$
and in view of the fact that $s_{\rm diag}$ takes values in the circle bundle $\n{S}(\pi^*{\rm det}(\bb{T}_{2m}^\infty))$ (\cf Remark \ref{rk:can_sec_circ_bun_univ}) we can write
\beql{eq:com_can_sec1}
{\n{S}({\rm det}(\bb{T}_{2m}^\infty))}^\Xi\;\ni\;(x,\epsilon) \;\stackrel{s_{\rm diag}}{\longrightarrow}\;\big((x,\epsilon),(x,\epsilon)\big)\;\in\; \n{S}(\pi^*{\rm det}(\bb{T}_{2m}^\infty))\;.
\eeq
On the other side the canonical equivariant section $s_{G^\rho}$ \eqref{eq:can_sec_s_G} can be represented as a map
$$
{G}_{2m}(\C^\infty)^\rho\;\ni\;x \;\stackrel{s_{G^\rho}}{\longrightarrow}\;(x,1)\;\in\; {\n{S}({\rm det}(\bb{T}_{2m}^\infty))}^\Xi
$$
in view of  the equality \eqref{eq:can_sec_s_G2}. The pullback section 
$$
\pi^*s_{G^\rho}\;:\; \n{S}({\rm det}(\bb{T}_{2m}^\infty)|_{{G}_{2m}(\C^\infty)^\rho}\;\longrightarrow\;\pi^*{\rm det}(\bb{T}_{2m}^\infty)
$$
defined by the  diagram \eqref{eq:diag2} is still isometric, and when restricted to ${\n{S}({\rm det}(\bb{T}_{2m}^\infty))}^\Xi$ (\cf equation \ref{eq:can_sec_s_G2}) it leads to
\beql{eq:com_can_sec2}
{\n{S}({\rm det}(\bb{T}_{2m}^\infty))}^\Xi\;\ni\;(x,\epsilon) \;\stackrel{\pi^*s_{G^\rho}}{\longrightarrow}\;\big((x,\epsilon),(x,1)\big)\;\in\; \n{S}(\pi^*{\rm det}(\bb{T}_{2m}^\infty))\;.
\eeq
A comparison between the diagonal section \eqref{eq:com_can_sec1} and the canonical section \eqref{eq:com_can_sec2} shows that the difference between the two is a  map $\omega_{\rm univ}:{\n{S}({\rm det}(\bb{T}_{2m}^\infty))}^\Xi\to\n{U}(1)$ such that $\omega_{\rm univ}(x,\epsilon)=\epsilon$. This map, which represents the universal FKMM-invariant, provides a representative for the non-trivial element in \eqref{eq:univ000}.


\appendix

\section{Condition for a $\Z_2$-value FKMM-invariant}
\label{sec:app_coker}
By construction, the FKMM-invariant $\kappa(\bb{E})$ associated with a $\rr{Q}$-bundle $(\bb{E},\Theta)$ over $(X,\tau)$ takes values in the \emph{cokernel}
$$
{\rm Coker}^1\big(X|X^\tau,\n{Z}(1)\big)\;:=\;H^1_{\Z_2}\big(X^\tau,\n{Z}(1)\big)\;/\;r\big(H^1_{\Z_2}(X,\n{Z}(1))\big)\;.
$$
This fact follows from a comparison between Definition \ref{def:FKMM-invariant} and Lemma \ref{lemma:coker}. It is interesting to have conditions on   $(X,\tau)$ which assure that 
${\rm Coker}^j\big(X|X^\tau,\n{Z}(1)\big)$ reduces to the simplest (non-trivial) abelian group $\Z_2$.

\begin{proposition}
\label{prop:cond_coker}
Let $(X,\tau)$ be an involutive space and assume that:
\begin{itemize}
\item[(a)] $X$ verifies 
 condition 0  of Definition \ref{ass:2};
\item[(b)] $X^\tau\neq\emptyset$ and consists of a finite number $N$ of path-connected components;
\end{itemize}
Under these assumptions a necessary condition for ${\rm Coker}^1\big(X|X^\tau,\n{Z}(1)\big)\simeq\Z_2$ is:
\begin{itemize}
\item[(c.1)] $H^1(X,\Z)\simeq\Z^b$ and $N\leqslant b+2$.
\end{itemize}
and a sufficient condition is:
\begin{itemize}
\item[(c.2)] $b=0$ and $N=2$.
\end{itemize}
\end{proposition}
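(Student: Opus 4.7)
The plan is to bound the cardinality of the image of the restriction map $r\colon H^1_{\Z_2}(X,\Z(1)) \to H^1_{\Z_2}(X^\tau,\Z(1))$, since the argument in the proof of Lemma \ref{lemma:coker} identifies ${\rm Coker}^1(X|X^\tau,\Z(1))$ with $H^1_{\Z_2}(X^\tau,\Z(1))/{\rm Im}(r)$. Because $X^\tau$ has $N$ connected components on which $\tau$ acts trivially, the isomorphism \eqref{eq:iso:eq_cohom} forces any equivariant map $X^\tau\to\n{U}(1)$ to take values in $\{\pm 1\}$, giving $H^1_{\Z_2}(X^\tau,\Z(1)) \simeq \Z_2^N$ with one factor per component; so the condition ${\rm Coker}^1\simeq\Z_2$ amounts to ${\rm Im}(r)$ having index $2$ in $\Z_2^N$. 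Using the same isomorphism again, an element of $H^1_{\Z_2}(X,\Z(1))$ is represented by an equivariant map $\varphi\colon X\to\n{U}(1)$ satisfying $\varphi\circ\tau=\bar\varphi$, and its underlying cohomology class $[\varphi]\in H^1(X,\Z)$ satisfies $\tau^\ast[\varphi]=-[\varphi]$, hence lies in the $(-1)$-eigenspace $H^1(X,\Z)^-\subseteq H^1(X,\Z)\simeq\Z^b$, which is itself free abelian of rank at most $b$.

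The key step is the analysis of the kernel of the forgetful map $F\colon H^1_{\Z_2}(X,\Z(1)) \to H^1(X,\Z)^-$. A class in $\ker F$ is represented by a non-equivariantly null-homotopic equivariant map $\varphi$, which lifts to a continuous $\tilde\varphi\colon X\to\R$ with $\varphi = e^{2\pi i \tilde\varphi}$; the equivariance of $\varphi$ then translates into $\tilde\varphi(x)+\tilde\varphi(\tau x)\in\Z$ as a continuous integer-valued function on $X$, which is forced to be a single constant $k\in\Z$ by the path-connectedness of $X$. Evaluating at any fixed point gives $\varphi(x_j)=(-1)^k$ for every $j=1,\ldots,N$, and the integer $k$ is well-defined modulo $2$. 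Consequently $\ker F \simeq \Z_2$ and $r(\ker F) = \Delta(\Z_2)$, the diagonal subgroup of $\Z_2^N$. Since any two equivariant representatives of the same class in $H^1(X,\Z)^-$ differ by an element of $\ker F$, the map $r$ induces a well-defined group homomorphism $\bar r\colon {\rm Im}(F)\to\Z_2^N/\Delta(\Z_2)\simeq\Z_2^{N-1}$ fitting into a short exact sequence $0\to\Delta(\Z_2)\to{\rm Im}(r)\to{\rm Im}(\bar r)\to 0$.

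Because $\Z_2^{N-1}$ is $2$-torsion, $\bar r$ factors through ${\rm Im}(F)/2\,{\rm Im}(F)$, and this latter group has cardinality at most $2^b$ since ${\rm Im}(F)\subseteq H^1(X,\Z)^-$ is free of rank $\leqslant b$. Therefore $|{\rm Im}(r)| = 2\,|{\rm Im}(\bar r)|\leqslant 2^{b+1}$ and $|{\rm Coker}^1|\geqslant 2^{N-b-1}$; the hypothesis ${\rm Coker}^1\simeq\Z_2$ then forces $N\leqslant b+2$, proving the necessary condition (c.1). For the sufficient condition (c.2), $b=0$ makes $H^1(X,\Z)^-=0$ and hence ${\rm Im}(F)=0$, so $H^1_{\Z_2}(X,\Z(1))=\ker F\simeq\Z_2$; with $N=2$ this gives ${\rm Im}(r)=\Delta(\Z_2)\subset\Z_2^2$, and hence ${\rm Coker}^1\simeq\Z_2^2/\Delta(\Z_2)\simeq\Z_2$.

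The main obstacle is the precise analysis of $\ker F$: the conclusion that the value $\varphi(x_j)$ of a null-homotopic equivariant map is the same sign at every fixed point, rather than possibly varying by component, crucially relies on the global constancy of $x\mapsto\tilde\varphi(x)+\tilde\varphi(\tau x)$, and hence on the path-connectedness of $X$. Without this input, $\ker F$ could map to a larger subgroup of $\Z_2^N$ and the index-two criterion for ${\rm Coker}^1$ would have to be reformulated accordingly.
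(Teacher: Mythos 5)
Your proof is correct and follows the same counting strategy as the paper: identify $H^1_{\Z_2}(X^\tau,\Z(1))\simeq\Z_2^N$, show that ${\rm Im}(r)$ contains the diagonal $\Z_2$ coming from constant maps and that the remaining contribution is controlled by a free abelian group of rank at most $b$ reduced mod $2$, so that $|{\rm Im}(r)|\leqslant 2^{b+1}$ and hence $N\leqslant b+2$; the sufficiency of (c.2) is then immediate. The one genuine difference is how you establish the structure of $H^1_{\Z_2}(X,\Z(1))$: the paper quotes the exact sequence $H^0(X,\Z)\to H^0_{\Z_2}(X,\Z)\to H^1_{\Z_2}(X,\Z(1))\to H^1(X,\Z)$ of Gomi to conclude that the reduced equivariant group injects into $H^1(X,\Z)\simeq\Z^b$, whereas you analyze the kernel of the forgetful map $F$ directly by lifting a null-homotopic equivariant map to $\tilde\varphi:X\to\R$ and using path-connectedness to force $\tilde\varphi+\tilde\varphi\circ\tau$ to be a global integer constant $k$, whence every class in $\ker F$ restricts to the single sign $(-1)^k$ on all of $X^\tau$. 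This is a more elementary and self-contained substitute for the citation, and it buys you exactly what is needed, namely $r(\ker F)=\Delta(\Z_2)$. One small remark: you assert $\ker F\simeq\Z_2$, but your lifting argument only establishes a well-defined surjection $\ker F\to\Z_2$ (injectivity, i.e.\ that $k$ even implies equivariant null-homotopy, is not shown); fortunately your bound on $|{\rm Im}(r)|$ and the computation in case (c.2) only use $r(\ker F)=\Delta(\Z_2)$, so nothing is lost.
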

\proof
First of all, let us recall the exact sequence (\cf Proposition 2.3 in \cite{gomi-13})
$$
H^0_{\Z_2}(X,\Z(1))\;\longrightarrow\;H^0(X,\Z)\;\longrightarrow\;H^0_{\Z_2}(X,\Z)\;\longrightarrow\;H^1_{\Z_2}(X,\Z(1))\;\longrightarrow\;H^1(X,\Z)\;\longrightarrow\;\ldots
$$
Since $X$ is connected and has at least one fixed point, the exact sequence above reduces to
$$
0\;\longrightarrow\;\tilde{H}^1_{\Z_2}(X,\Z(1))\;\longrightarrow\;\tilde{H}^1(X,\Z)\;=\;{H}^1(X,\Z)\;\simeq\;\Z^b
$$
where $\tilde{H}^j$ is the standard notation for the reduced cohomology groups. Therefore, $\tilde{H}^1_{\Z_2}(X,\Z(1))$ is a subgroup of $\Z^b$ and so $\tilde{H}^1_{\Z_2}(X,\Z(1))\simeq \Z^{b_-}$ for some $b_-\leqslant b$. 
Moreover, ${H}^1_{\Z_2}(X^\tau,\Z(1))\simeq \Z_2^N$ with a $\Z_2$ summand for each connected component. This is evident from the isomorphism ${H}^1_{\Z_2}(X^\tau,\Z(1))\simeq{\rm Map}(X^\tau,\Z_2)$ which is a consequence of the first of \eqref{eq:iso:eq_cohom}.
Let us now estimate the size of  ${H}^1_{\Z_2}(X,\Z(1))\simeq \Z_2\oplus \Z^{b_-}$ in ${H}^1_{\Z_2}(X^\tau,\Z(1))\simeq \Z_2^N$ under the restriction map $r$. Evidently,  $r\big(H^1_{\Z_2}(X,\n{Z}(1))\big)\simeq\Z_2^a$ for some integer $0\leqslant a\leqslant N$. On the other hand, the direct summand $H^1_{\Z_2}(\{\ast\},\n{Z}(1))\simeq\Z_2$ in  $H^1_{\Z_2}(X,\n{Z}(1))\simeq[X,\n{U}(1)]_{\Z_2}$
consists of constant maps $X\to\pm 1$. Hence the image of this $\Z_2$ summand under $r$ is a non-trivial subgroup $\Z_2$ in
${H}^1_{\Z_2}(X^\tau,\Z(1))$. The remaining summand $\Z^{b_-}$ in 
${H}^1_{\Z_2}(X,\Z(1))$ has an image under $r$ which can be at most isomorphic to $\Z_2^{b_-}$. Therefore, one has the inequality $a\leqslant 1+b_-$. The condition ${\rm Coker}^1\big(X|X^\tau,\n{Z}(1)\big)\simeq\Z_2$ is equivalent to $a=N-1$ and this equality can be verified only if $N\leqslant b_-+2\leqslant b+2$. In the case $N=2$ and $b=0$ one has ${H}^1_{\Z_2}(X,\Z(1))\simeq\Z_2$ and ${H}^1_{\Z_2}(X^\tau,\Z(1))\simeq \Z_2^2$. Both these groups are represented by 
constant maps with values $\pm 1$ and the action of ${H}^1_{\Z_2}(X,\Z(1))$ on ${H}^1_{\Z_2}(X^\tau,\Z(1))$ is diagonal. This implies that ${\rm Coker}^1\big(X|X^\tau,\n{Z}(1)\big)\simeq\Z_2$.
\qed

\medskip

\begin{table}[htp]
 \label{tab:coker}
 \centering
 \begin{tabular}{|l|c|c|c|}
\hline
\ \ \ \ \ \   $(X,\tau)$ & $N$ & $b$ & $N\leqslant b+2$\\
\hline
\hline
\rule[-2mm]{0mm}{6mm}
$\tilde{\n{S}}^1$&   $2$ & 1 & no \\

 \hline
 \rule[-2mm]{0mm}{6mm}
$\tilde{\n{S}}^d\quad (d\geqslant 2)$&   $2$ & $0$ & yes \\
\hline
\rule[-2mm]{0mm}{6mm}
$\tilde{\n{T}}^2$&   $4$ & $2$ & yes\\
\hline
\rule[-2mm]{0mm}{6mm}
$\tilde{\n{T}}^d\quad (d\geqslant 3)$&   $2^d$ & $d$ & no\\

\hline
\end{tabular}\vspace{1mm}
 \caption{
 {\footnotesize
 According to Preposition \ref{prop:cond_coker} all the TR-spheres $\tilde{\n{S}}^d$  with $d\geqslant 2$ fulfill the strong condition (c.2) which implies that the cokernel is exactly $\Z_2$.
In the case of  TR-tori only $\tilde{\n{T}}^2$ fulfills the necessary condition (c.1) but not the sufficient condition (c.2).}
 }
 \end{table}

\noindent
A formula for the cokernel of   $\tilde{\n{T}}^d$ can be explicitly derived.

\begin{proposition}\label{prop:coker_tori}
Let $\tilde{\n{T}}^d=({\n{T}}^d,\tau)$ the involutive torus described in Definition \ref{def:AII_sys}, then
$$
{\rm Coker}^1\big(\tilde{\n{T}}^d|(\tilde{\n{T}}^d)^\tau,\n{Z}(1)\big)\;\simeq\;\Z_2^{2^d-(d+1)} \qquad\quad\ \ \forall\ d\geqslant 1\;.
$$
\end{proposition}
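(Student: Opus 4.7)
The plan is to compute both groups appearing in the cokernel directly and then identify the image of the restriction map explicitly. First, I would identify the fixed point set: since $\tilde{\n{T}}^d = \tilde{\n{S}}^1 \times \cdots \times \tilde{\n{S}}^1$ and each $\tilde{\n{S}}^1$ has exactly two fixed points $\{\pm 1\} \subset \n{U}(1)$, the set $(\tilde{\n{T}}^d)^\tau$ consists of the $2^d$ vertices of the cube $\{\pm 1\}^d$. Using the first isomorphism in \eqref{eq:iso:eq_cohom} together with the fact that the involution on the fixed point set is trivial, one has
$$
H^1_{\Z_2}((\tilde{\n{T}}^d)^\tau,\Z(1)) \;\simeq\; [(\tilde{\n{T}}^d)^\tau,\n{U}(1)]_{\Z_2} \;\simeq\; {\rm Map}(\{\pm 1\}^d,\Z_2) \;\simeq\; \Z_2^{2^d}.
$$

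Next, I would invoke the structure of $H^1_{\Z_2}(\tilde{\n{T}}^d,\Z(1))$. As already used in the proof of Proposition \ref{propos_classif_T_3d}, a recursive application of the Gysin sequence (\cite[eq. (5.9)]{denittis-gomi-14}) gives $H^1_{\Z_2}(\tilde{\n{T}}^d,\Z(1)) \simeq \Z_2 \oplus \Z^d$, with the $\Z_2$ summand generated by the constant equivariant map $\epsilon:\tilde{\n{T}}^d \to -1 \in \n{U}(1)$ and the $\Z^d$ summand generated by the $d$ canonical projections $\pi_1,\dots,\pi_d:\tilde{\n{T}}^d\to \tilde{\n{S}}^1 \simeq \n{U}(1)$.

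The restriction map $r:H^1_{\Z_2}(\tilde{\n{T}}^d,\Z(1)) \to H^1_{\Z_2}((\tilde{\n{T}}^d)^\tau,\Z(1))$ corresponds to evaluation on fixed points. Under the identifications above it sends $\epsilon$ to the constant function $(\epsilon_1,\dots,\epsilon_d)\mapsto -1$ and sends $\pi_i$ to the $i$-th coordinate function $(\epsilon_1,\dots,\epsilon_d)\mapsto \epsilon_i$. Since the target group is $2$-torsion, $r$ factors through $\Z_2 \oplus \Z_2^d \simeq \Z_2^{d+1}$, and the image $r(H^1_{\Z_2}(\tilde{\n{T}}^d,\Z(1)))$ is the $\F_2$-subspace of ${\rm Map}(\{\pm 1\}^d,\Z_2)$ spanned by the constant $-1$ and the $d$ coordinate functions.

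The main (and only) nontrivial point is to verify that these $d+1$ functions are $\F_2$-linearly independent in $\Z_2^{2^d}$. Suppose
$$
(-1)^a \prod_{i=1}^d \epsilon_i^{a_i} \;=\; +1 \qquad \forall (\epsilon_1,\dots,\epsilon_d)\in\{\pm 1\}^d
$$
for some $a,a_1,\dots,a_d\in\{0,1\}$. Evaluating at $(+1,\dots,+1)$ forces $a=0$; evaluating at the vertex with a single $-1$ in the $j$-th position then forces $a_j=0$ for each $j=1,\dots,d$. Hence the image has exactly $2^{d+1}$ elements and
$$
{\rm Coker}^1(\tilde{\n{T}}^d|(\tilde{\n{T}}^d)^\tau,\Z(1)) \;\simeq\; \Z_2^{2^d}/\Z_2^{d+1} \;\simeq\; \Z_2^{2^d-(d+1)},
$$
which is what we wanted. (For $d=1$ one has $2^1-(1+1)=0$, consistent with the convention $\Z_2^0=\{0\}$.) The only step demanding care is the linear independence just verified; everything else is assembling previously established isomorphisms.
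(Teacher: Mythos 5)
Your proposal is correct and follows essentially the same route as the paper's own proof: compute $H^1_{\Z_2}((\tilde{\n{T}}^d)^\tau,\Z(1))\simeq\Z_2^{2^d}$ and $H^1_{\Z_2}(\tilde{\n{T}}^d,\Z(1))\simeq\Z_2\oplus\Z^d$, identify $r$ with evaluation at the fixed points, and show the images of the constant map $-1$ and the $d$ projections are $\F_2$-linearly independent. The only difference is cosmetic: where the paper appeals to Gaussian elimination for the independence, you verify it explicitly by evaluating at $(+1,\ldots,+1)$ and at the vertices with a single $-1$ entry, which is if anything a cleaner way to finish the same argument.
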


\proof[
sketch of]
One starts with ${H}^1_{\Z_2}\big(\tilde{\n{T}}^d,\Z(1)\big)\simeq \Z_2\oplus\Z^d$ and ${H}^1_{\Z_2}\big((\tilde{\n{T}}^d)^\tau,\Z(1)\big)\simeq \Z_2^{2^d}$ (\eg one can use the recursive relations \cite[eq. (5.9)]{denittis-gomi-14}). Let $0\leqslant a\leqslant 2^d$ be an integer such that
$r\big({H}^1_{\Z_2}\big(\tilde{\n{T}}^d,\Z(1)\big)\big)\simeq\Z_2^a$ where $r$ is the restriction map in cohomology induced by the  inclusion $\imath:(\tilde{\n{T}}^d)^\tau\hookrightarrow \tilde{\n{T}}^d$. To conclude the proof is enough to show that $a=d+1$. Since $\Z_2$ is a field, we can think of $\Z_2\oplus\ldots\oplus\Z_2$ as a vector space over $\Z_2$. We recall that the $\Z_2$-summand in ${H}^1_{\Z_2}\big(\tilde{\n{T}}^d,\Z(1)\big)$ is generated by the constant map $\epsilon:\tilde{\n{T}}^d\to - 1$ and the $\Z^d$-summand by the $d$ canonical projections 
$
\pi_j:\tilde{\n{T}}^d=\tilde{\n{T}}^1\times\ldots\times\tilde{\n{T}}^1\to \tilde{\n{T}}^1\simeq\n{U}(1)
$.
The  set $(\tilde{\n{T}}^1)^\tau$ contains two fixed points which can be parametrized with $\pm1$. This leads to  a bijection between $(\tilde{\n{T}}^d)^\tau$ and $\Z_2^{d}$. Let us fix an order for the $2^{d}$ points in $\Z_2^{d}$, \ie $\Z_2^{d}=\{{\rm v}_1,\ldots,{\rm v}_{2^d}\}$
where ${\rm v}_k:=(v_k^1,\ldots,v_k^d)$ with $v_k^j\in\{\pm1\}$, ($k=1,\ldots,2^d$, $j=1,\ldots,d$).
The map $r$, which coincides with the evaluations on the fixed points,   sends $\epsilon$ to a vector $r(\epsilon)\in \Z_2^{2^d}$ represented as
$r(\epsilon)=(-1,\ldots,-1)$. Similarly the $d$ vectors $r(\pi_j)\in \Z_2^{2^d}$ are given by $r(\pi_j):=(\pi_j({\rm v}_1),\ldots,\pi_j({\rm v}_{2^d}))$
with $\pi_j({\rm v}_k):=v^j_k$. The linear independence of $\{r(\epsilon),r(\pi_j),\ldots,r(\pi_d)\}$ can be checked by the help of the Gauss elimination and this shows that $a=d+1$.
\qed

\medskip

\noindent{\bf The proof of Proposition \ref{propos:d2manif_FKMM}.}
Let $(X,\tau)$  be an involutive space which verifies Assumption \ref{ass:2.3} and  $X^\tau=\{x_1,\ldots,x_N\}$ the fixed point set for some integer  $N>0$. Let us fix some notation: For each $x_j\in X^\tau$ 
let $D_j\subset X$ be a  \emph{disk} (closed, contractible set with boundary $\partial D_j\simeq\n{S}^1$)
 which contains $x_j$. We can choose
 sufficiently small  disks $D_j$ such  that $D_i\cap D_j=\emptyset$ if $i\neq j$ and $\tau(D_j)=D_j$
(this is a consequence of the \emph{slice Theorem} \cite[Chapter I, Section 3]{hsiang-75} as in the proof of Lemma \ref{lemma:2d_manifA}).
  Let us set  $\f{D}:=\bigcup_{i=1}^{N}D_i$ and $X':=X\setminus{\rm Int}(\f{D})$. By construction $X'$ is a manifold with boundary $\partial\f{D}\simeq\bigcup_{i=1}^{N}\n{S}^1_i$ on which the involution $\tau$ acts freely. The orbit space $X/\tau$ has boundary $\partial(X/\tau)=(\partial X)/\tau$.
  
\medskip

As a first step, let us prove that $N=2n$ for some integer $n>0$. From the assumptions it follows that a choice of a Riemannian metric  on $X$  makes it into a Riemann surface (a complex manifold of dimension 1) without boundary. 
By means of the \emph{average} construction we can assume without loss of generality that the metric is $\tau$-invariant. Because $\tau:X\to X$ preserves the orientation, we can think of it as a holomorphic map. Then, the quotient $X/\tau$ also gives rise to a Riemann surface, and the projection $\pi:X\to X/\tau$ is holomorphic. In particular $\pi$ is a ramified double covering with $N$ branching points $x_1,\ldots,x_N$.
The \emph{Riemann-Hurwicz formula} \cite[Chapter 2]{griffiths-harris-78} tells us
$$
4\; g(X/\tau)\;=\; 2\; g(X)\;+\;2\;-\;N\;, 
$$
where $g$ denotes the genus of the related Riemann surface, hence $N=2n$ has to be even.
The Riemann surface $X'$ has genus $g=g(X)$ and $2n$ boundary components, hence
\beql{eq:cohom_app1}
H^k(X',\Z)\;\simeq\;\left\{
\begin{aligned}
&\Z&\quad&\text{if}\ \ k=0&\\
&\Z^{2(g+n)-1}&\quad&\text{if}\ \ k=1&\\
&0&\quad&\text{if}\ \ k\geqslant 2&\;.
\end{aligned}
\right.
\eeq
The Riemann surface $X'/\tau$ is obtained by removing $2n$ disks around the branching points, hence 
$g(X'/\tau)=\frac{1}{2}(g+1-n)$ and it has $2n$ boundary components. This implies
\beql{eq:cohom_app2}
H^k(X'/\tau,\Z)\;\simeq\;\left\{
\begin{aligned}
&\Z&\quad&\text{if}\ \ k=0&\\
&\Z^{g+n}&\quad&\text{if}\ \ k=1&\\
&0&\quad&\text{if}\ \ k\geqslant 2&\;.
\end{aligned}
\right.
\eeq

\medskip

Now, we can prove that $H_{\Z_2}^2(X,\Z(1))=0$. We start with  the exact sequence (\cf Proposition 2.3 in \cite{gomi-13})
$$
H^k(X'/\tau,\Z)\simeq H^k_{\Z_2}(X',\Z) \;\to\;H^k(X',\Z)\;\to\;H^k_{\Z_2}(X',\Z(1))\;\to\;H^{k+1}_{\Z_2}(X',\Z)\simeq H^{k+1}(X'/\tau,\Z)
$$
where we used the fact that the action of $\tau$  on $X'$ is free. From \eqref{eq:cohom_app2} it follows  that $H^k_{\Z_2}(X',\Z(1))\simeq H^k(X',\Z)$ for all $k\geqslant 2$ and so equation \eqref{eq:cohom_app1} implies  
$H^k_{\Z_2}(X',\Z(1))=0$ for all $k\geqslant 2$.  Next, we use the Meyer-Vietoris sequence for $\{X',\f{D}\}$, \ie
$$
H^1_{\Z_2}(X',\Z(1))\oplus H^1_{\Z_2}(\f{D},\Z(1))\;\stackrel{\Delta}{\to}\;H^1_{\Z_2}(X'\cap \f{D},\Z(1))\to H^2_{\Z_2}(X,\Z(1))\to H^2_{\Z_2}(X',\Z(1))\oplus H^2_{\Z_2}(\f{D},\Z(1))
$$
where the \emph{difference} homomorphism $\Delta$ is given by $\Delta(a_{X'},a_{\f{D}})=-a_{X'}|_{X'\cap \f{D}}+a_{\f{D}}|_{X'\cap \f{D}}$ for $a_{X'}\in H^1_{\Z_2}(X',\Z(1))\simeq[X',\n{U}(1)]_{\Z_2}$
and $a_{\f{D}}\in H^1_{\Z_2}(\f{D},\Z(1))\simeq[\f{D},\n{U}(1)]_{\Z_2}$ (here we used the isomorphism \eqref{eq:iso:eq_cohom}). Since each connected component $D_j$ of $\f{D}$ is equivariantly contractible
  $H^1_{\Z_2}(\f{D},\Z(1))\simeq\Z_2^{2n}$ consists of locally constant function with values in $\{\pm 1\}$.
On the other side, the action of $\tau$ is free on $X'\cap \f{D}\simeq \bigsqcup_{j=1}^{2n}\n{S}^1$
and $(X'\cap \f{D})/\tau\simeq \bigsqcup_{j=1}^{2n}\n{S}^1$ since $\tau$ acts isomorphically to the antipodal map on each component. This yields 
$$
H^k_{\Z_2}(X'\cap \f{D},\Z(1))\;\simeq\; H^k(\n{S}^1,\Z(1))^{2n}\;\simeq\;
\left\{
\begin{aligned}
&\Z_2^{2n}&\quad&\text{if}\ \ k=1&\\
&0&\quad&\text{if}\ \ k\neq 1&\;.
\end{aligned}
\right.
$$
Since also $H^1_{\Z_2}(X'\cap \f{D},\Z(1))$ can be represented  by locally constant functions from $X'\cap \f{D}$ to $\{\pm1\}$ it follows that the map $H^1_{\Z_2}(\f{D},\Z(1))\to H^1_{\Z_2}(X'\cap \f{D},\Z(1))$ is surjective.
Moreover, $H^2_{\Z_2}(X',\Z(1))\simeq 0$ and $H^2_{\Z_2}(\f{D},\Z(1))\simeq \bigoplus_{j=1}^{2n} H^2_{\Z_2}(\{\ast\},\Z(1))\simeq 0$ imply $H_{\Z_2}^{2}(X,\Z(1))=0$.

\medskip

As the last step we prove that $H^2_{\Z_2}(X|X^\tau,\n{Z}(1)) \simeq \Z_2$. By the homotopy axiom and the excision axiom for the Borel cohomology theory, we get isomorphisms
$$
H^2_{\Z_2}(X|X^\tau,\n{Z}(1))\; \simeq \;H^2_{\Z_2}(X|\f{D},\n{Z}(1)) \; \simeq \;H^2_{\Z_2}(X'|\partial X',\n{Z}(1))\;.
$$
Let us consider the exact sequence  \cite[Proposition 2.3]{gomi-13}
$$
\Z\;\simeq\;H^2_{\Z_2}(X'|\partial X',\n{Z})\;\stackrel{f}{\to}\;H^2(X'|\partial X',\n{Z})\;\stackrel{}{\to}\;H^2_{\Z_2}(X'|\partial X',\n{Z}(1))\;\stackrel{}{\to}\;H^3_{\Z_2}(X'|\partial X',\n{Z})\;=\;0
$$
associated with a map $f$ which \virg{forgets} the $\Z_2$-action. 
We recall that $\tau$ acts freely on $X'$
and $f$ can be identified with the pullback by the projection $\pi:X' \to X'/\tau$. Then, after an application of the Poincar\'{e} duality and the universal coefficient theorem to \eqref{eq:cohom_app2} we get
$$
H^k_{\Z_2}(X'|\partial X',\n{Z})\;\simeq\; H^k(X'/\tau|\partial (X'/\tau),\n{Z})\;\simeq\;
\left\{
\begin{aligned}
&\Z^{g+n}&\quad&\text{if}\ \ k=1&\\
&\Z&\quad&\text{if}\ \ k=2&\\
&0&\quad&\text{if}\ \ k\neq 1,2&\;.
\end{aligned}
\right.
$$
In the same way, one deduces from \eqref{eq:cohom_app1} that $H^2(X'|\partial X',\n{Z})\simeq\Z$. Because the projection $\pi:X'\to X'/\tau$ is a honest double covering, $f$ is a two-to-one map. This allows us to conclude that 
$H^2_{\Z_2}(X'|\partial X',\n{Z}(1))\simeq\Z_2$.

\section{Spatial parity and quaternionic vector bundles}
\label{sec:inv_symm}

Let us introduce the  \emph{spatial parity} operator 
$\hat{I}$ defined by $(\hat{I}\psi)(x)=\psi(-x)$ for vectors $\psi$ in
 $L^2(\R^d,\dd x)\otimes\C^L$ (continuous case)  or in  $\ell^2(\Z^d)\otimes\C^L$ (periodic case). A Hamiltonian  $\hat{H}$ is \emph{parity-invariant} if $\hat{I}\hat{H}\hat{I}=\hat{H}$.
 If $\hat{H}$ posses also a -TR symmetry $\hat{\Theta}$ then the combination $\hat{\Theta}':=\hat{I}\hat{\Theta}$ provides an anti-linear fiber-preserving map of the associated Bloch-bundle. If it happens that the two symmetries are independently broken but the combination $\hat{\Theta}'$
 survives as a fundamentally symmetry, then Proposition \ref{prop:Qv1.0} assures that the topological phase of $\hat{H}$ is classified by a quaternionic vector bundle.

\medskip

The classification of quaternionic vector bundles in dimension $d=1,\ldots,4$ is almost trivial; In fact \cite[Chapter 9, Theorem 1.2]{husemoller-94} assures that for each rank of the fiber the stable rank condition is verified and just by looking at  the ${K\text{\em Sp}}$-theory (see Table \ref{tab:KR3}.2 \& Table \ref{tab:KR2}.3) one obtains
$$
{\rm Vec}^m_{\n{H}}(\n{T}^d)\;\simeq\;{\rm Vec}^m_{\n{H}}(\n{S}^d)\;\simeq\;
\left\{
\begin{aligned}
&0&&\text{if}\ \ d=1,2,3\\
&\Z&&\text{if}\ \ d=4
\end{aligned}
\right.
\qquad\quad\forall\ m\in\N\;.
$$

\section{An overview to $KQ$-theory}
\label{sect:KQ-theory}

According to  \cite{dupont-69,seymour-73} we denote with  $KQ(X,\tau)$  the \emph{Grothendieck group} 
of $\rr{Q}$-vector bundles over the involutive space $(X,\tau)$. We refer also to \cite[Section 3.6]{luke-mishchenko-98} for pedagogical description of the  $KQ$-theory and its relation with the Atiyah's $KR$-theory. By a restriction to the fixed point set $X^\tau$ (hereafter assumed non empty) one has a homomorphism
$KQ(X,\tau)\to KQ(X^\tau,{\rm Id_X})\simeq K\text{\em Sp}(X^\tau)$, where $K\text{\em Sp}$  denotes the $K$-theory for  vector bundles with quaternionic fiber (see \eg \cite{husemoller-94}).

\medskip

 The \emph{reduced group} $\widetilde{KQ}(X,\tau)$ is the kernel of the homomorphism $KQ(X,\tau)\to KQ(\{\ast\})$ where $\ast\in X$ is a $\tau$-invariant base point. 
When $X$ is compact one has the usual relation
\beql{eq:KQ0}
KQ(X,\tau)\;\simeq\;\widetilde{KQ}(X,\tau)\;\oplus\; KQ(\{\ast\})\;\simeq\;\widetilde{KQ}(X,\tau)\;\oplus\; \Z
\eeq
where we used $KQ(\{\ast\})\simeq K\text{\em Sp}(\{\ast\})\simeq\Z$. Moreover,
 the isomorphism
\beql{eq:KR0X}
\tilde{KR}(X,\tau)\;\simeq\;{\rm Vec}_\rr{R}(X,\tau)\;:=\;\bigcup_{m\in\N}{\rm Vec}^m_\rr{R}(X,\tau)
\eeq
 establishes 
 the fact that $\tilde{KQ}(X,\tau)$ provides the description for $\rr{Q}$-bundles in the stable regime (\ie when the rank of the fiber is assumed to be sufficiently large).

\medskip

As for  the $KR$-theory
also the $KQ$-theory
 can be endowed with a grading structure as follows: first of all one introduces the groups
$$
\begin{aligned}
KQ^{j}(X,\tau)\;&:=\;KQ(X\times\n{D}^{0,j} ; X\times \n{S}^{0,j} ,\tau\times\vartheta)\\
KQ^{-j}(X,\tau)\;&:=\;KQ(X\times\n{D}^{j,0} ; X\times \n{S}^{j,0} ,\tau)
\end{aligned}\;\qquad\qquad j=0,1,2,3,\ldots
$$
where $\n{D}^{p,q}$ and $\n{S}^{p,q}$ are the unit ball and unit sphere in the  space $\R^{p,q}:=\R^p\oplus{\rm i}\R^q$ made involutive by  the complex conjugation (\cf \cite[Exampe 4.2]{denittis-gomi-14}). The relative group $KQ(X;Y,\tau)$ of an involutive space $(X,\tau)$ with respect to a $\tau$-invariant subset $Y\subset X$ is defined as $\widetilde{KQ}(X/Y,\tau)$ and corresponds to the Grothendieck group of $\rr{Q}$-bundles over $X$ which vanish on $Y$. The negative groups $KQ^{-j}$ agree with the usual suspension groups since the spaces 
$\n{D}^{j,0}$ and $\n{S}^{j,0}$ are invariant. The positive groups $KQ^{j}$ are \virg{twisted} suspension groups since
$\n{D}^{0,j}$ and $\n{S}^{0,j}$ are endowed with the  $\Z_2$-action of the antipodal map $\vartheta$. With respect to this grading the $KQ$ groups are 8-periodic, \ie
$$
KQ^{j}(X,\tau)\;\simeq\; KQ^{j+8}(X,\tau)\;,\qquad\qquad j\in\Z\;.
$$
If $X$ has fixed points one can extend the isomorphism \eqref{eq:KQ0} for negative groups:
\beql{eq:KQ1}
KQ^{-j}(X,\tau)\;\simeq\;\widetilde{KQ}^{-j}(X,\tau)\;\oplus\; KQ^{-j}(\{\ast\})\;,\qquad\qquad  j=0,1,2,3,\ldots
\eeq
where $KQ^{-j}(\{\ast\})\simeq K\text{\em Sp}^{-j}(\{\ast\})$ for all $j$. 
\begin{table}[htp]
 \label{tab:KR1}
 \centering
 \begin{tabular}{|c||c|c|c|c|c|c|c|c|}
\hline
   & $j=0$ & $j=1$&$j=2$&$j=3$&$j=4$&$j=5$&$j=6$&$j=7$\\
\hline
 \hline
 \rule[-2mm]{0mm}{6mm}
$KQ^{-j}(\{\ast\})$&   $\Z$ & $0$ & $0$ & $0$ &$\Z$&$\Z_2$&$\Z_2$&$0$\\
\hline
\end{tabular}\vspace{1mm}
 \caption{
 {\footnotesize 
 The table is calculated using  $KQ^{-j}(\{\ast\})\simeq K\text{\em Sp}^{-j}(\{\ast\})\simeq \widetilde{KO}(\n{S}^{j+4})$ \cite[Theorem 5.19, Chapter III]{karoubi-97}.  The Bott periodicity implies $KQ^{-j}(\{\ast\})\simeq KQ^{-j-8}(\{\ast\})$.
 }
 }
 \end{table}

\noindent
Finally, the following connection with $KR$-theory
\beql{eq:KQ1.2}
{KQ}^{j}(X,\tau)\;\simeq\;KR^{j\pm 4}(X,\tau)\;,\quad\quad\; \widetilde{KQ}^{j}(X,\tau)\;\simeq\;\widetilde{KR}^{j\pm 4}(X,\tau)\qquad j\in\Z
\eeq
has been proved in \cite{dupont-69}.

\medskip

We can use equation \eqref{eq:KQ1.2}  to compute the
$KQ$-theory for the involutive sphere $\tilde{\n{S}}^d\equiv({\n{S}}^d,\tau)$ by means of the $KR$-theory. Indeed, one has that
\beql{eq:KQ1.3}
\widetilde{KQ}(\tilde{\n{S}}^d)\;\simeq\;\widetilde{KR}^{4}(\tilde{\n{S}}^d)\;\simeq\;\widetilde{KR}(\tilde{\n{S}}^{d+4})\qquad\quad {\rm mod.}\; 8
\eeq
where in the last isomorphism we used the
{equivariant} reduced suspension formula \cite[eq. (A.6)]{denittis-gomi-14}.

\begin{table}[htp]
 \label{tab:KR3}
 \centering
 \begin{tabular}{|c||c|c|c|c|c|c|c|c|}
\hline
   & $d=1$ & $d=2$&$d=3$&$d=4$&$d=5$&$d=6$&$d=7$&$d=8$\\
\hline
\hline
\rule[-2mm]{0mm}{6mm}
$\widetilde{K}(\n{S}^d)$&   $0$ & $\Z$ & $0$ & $\Z$ &$0$&$\Z$&$0$&$\Z$\\
 \hline
 \rule[-2mm]{0mm}{6mm}
$\widetilde{KQ}(\tilde{\n{S}}^d)$&   $0$ & $\Z_2$ & $\Z_2$ & $\Z$ &$0$&$0$&$0$&$\Z$\\
\hline
\rule[-2mm]{0mm}{6mm}
$\widetilde{K\text{\em Sp}}(\n{S}^d)$&   $0$ & $0$ & $0$ & $\Z$ &$\Z_2$&$\Z_2$&$0$&$\Z$\\
\hline
\end{tabular}\vspace{1mm}
 \caption{
 {\footnotesize
 The reduced $KQ$ groups are computed with the help of the formula \eqref{eq:KQ1.3} and the Table B.3 in \cite{denittis-gomi-14}.
 The reduced $K$ groups for complex and quaternionic vector bundles are
 computed in \cite[Chapter 9, Corollary 5.2]{husemoller-94}.
 We recall the relation $\widetilde{KO}(\n{S}^d)\simeq\widetilde{K\text{\em Sp}}(\n{S}^{d+4})$ (mod. 8) which relates the $K$-theories of real and quaternionic vector bundles.}
 }
 \end{table}

\noindent
In order to compute the $KQ$ groups for TR-tori we can adapt the formula 
\cite[eq. (B.7)]{denittis-gomi-14} with the help of \eqref{eq:KQ1.2} in order to obtain
\beql{eq:KQ4bis}
{KQ}^{-j}(\tilde{\n{T}}^{d})\;\simeq\;
\bigoplus_{n=0}^d\Big({KQ}^{-(j-n)}(\{\ast\})\Big)^{\oplus \;\binom {d} {n}}\;.
\eeq

\begin{table}[htp]
 \label{tab:KR2}
 \centering
 \begin{tabular}{|c||c|c|c|c|c|c|c|c|}
\hline
   & $d=1$ & $d=2$&$d=3$&$d=4$&$d=5$&$d=6$&$d=7$&$d=8$\\
\hline
\hline
\rule[-2mm]{0mm}{6mm}
$\widetilde{K}(\n{T}^d)$&   $0$ & $\Z$ & $\Z^3$ & $\Z^7$ &$\Z^{15}$&$\Z^{31}$&$\Z^{63}$&$\Z^{127}$\\

 \hline
 \rule[-2mm]{0mm}{6mm}
$\widetilde{KQ}(\tilde{\n{T}}^{d})$&   $0$ & $\Z_2$ & $\Z_2^4$ & $\Z\oplus\Z_2^{10}$ &$\Z^5\oplus\Z_2^{20}$&$\Z^{15}\oplus\Z_2^{35}$&$\Z^{35}\oplus\Z_2^{56}$&$\Z^{71}\oplus \Z_2^{84}$\\
\hline
\rule[-2mm]{0mm}{6mm}
$\widetilde{K\text{\em Sp}}(\n{T}^d)$&   $0$ & $0$ & $0$ & $\Z$ &$\Z^5\oplus\Z_2$&$\Z^{15}\oplus\Z_2^{7}$&$\Z^{35}\oplus\Z_2^{28}$&$\Z^{71}\oplus\Z_2^{84}$\\
\hline
\end{tabular}\vspace{1mm}
 \caption{
 {\footnotesize
 The groups $\widetilde{KQ}(\tilde{\n{T}}^d)$ are obtained from equation \eqref{eq:KQ4bis} and the isomorphism \eqref{eq:KQ0}.
 In the quaternionic  case a recursive formula can be derived from the isomorphism 
 ${K\text{\em Sp}}^{-j}(\n{S}^{1}\times Y)\simeq{K\text{\em Sp}}^{-(j+1)}(Y)\;\oplus\; {K\text{\em Sp}}^{-j}(Y)$ .
}
 }
 \end{table}


\medskip
\medskip

\end{document}